\def\R{\mathbb{R}}
\def\C{\mathbb{C}}
\def\P{\mathbb{P}}
\def\A{\mathcal{A}}
\def\D{D \left(0, n^{-\delta} \right)}
\def\Pinf{P^{(\infty)}}
\def\P0{P^{(0)}}
\def\Tr{\mathop{\mathrm{Tr}}\nolimits}
\def\Re{\mathop{\mathrm{Re}}\nolimits}
\def\Im{\mathop{\mathrm{Im}}\nolimits}
\def\supp{\mathop{\mathrm{supp}}\nolimits}
\def\diag{\mathop{\mathrm{diag}}\nolimits}
\def\Kcr{\mathop{K_{\mathrm{cr}}}\nolimits}
\numberwithin{equation}{section}
\newtheorem{theorem}{Theorem}[section]
\newtheorem{lemma}[theorem]{Lemma}
\newtheorem{proposition}[theorem]{Proposition}
\newtheorem{rhp}[theorem]{RH problem}
\theoremstyle{definition}
\theoremstyle{remark}
\newtheorem{remark}[theorem]{Remark}
\newcommand{\ud}{\,\mathrm{d}}
\begin{document}
\title{A critical phenomenon in the two-matrix model in the quartic/quadratic case}
\author{Maurice Duits\footnotemark[1] \and Dries Geudens\footnotemark[2] }

\date{}
\renewcommand{\thefootnote}{\fnsymbol{footnote}}
\footnotetext[1]{Department of Mathematics, Royal Institute of Technology (KTH), Lindstedtsv\"agen 25, SE-10044 Stockholm, Sweden, email: duits\symbol{'100}kth.se. Partially supported by the grant KAW 2010.0063 from the
Knut and Alice Wallenberg Foundation.}
\footnotetext[2]{Department of Mathematics, Katholieke
Universiteit Leuven, Celestijnenlaan 200B, B-3001 Leuven, Belgium.
email: dries.geudens\symbol{'100}wis.kuleuven.be. Research Assistant of the Fund for Scientific
Research - Flanders (Belgium). }

\maketitle

\begin{abstract}
We study a critical behavior for the eigenvalue statistics in the two-matrix model in the quartic/quadratic case. For certain parameters, the eigenvalue distribution for one of the matrices has a limit that  vanishes like a square root in the \emph{interior} of the support. The main result of the paper is  a new kernel that describes the local eigenvalue correlations near that critical point. The kernel is expressed in terms of a $4\times 4$ Riemann-Hilbert problem related to the Hastings-McLeod solution of the Painlev\'e II equation.  We then compare the new kernel with two other critical phenomena that appeared in the literature before. First, we show that the critical kernel that appears in case of quadratic vanishing of the limiting eigenvalue distribution can be retrieved from the new kernel by means of a  double scaling limit. Second, we briefly discuss the relation with the tacnode singularity in non-colliding Brownian motions that was recently analyzed. Although  the limiting density in that model also vanishes like a square root  at a certain interior point, the process  at the local scale is different from the process that we obtain in the two-matrix model. \end{abstract}

\tableofcontents

 \section{Introduction}
The two-matrix model in random matrix theory is  the probability measure
\[
\frac{1}{Z_n}\exp \left(-n\Tr \left(V(M_1)+W(M_2)-\tau M_1 M_2\right) \right)\ud M_1 \ud M_2,
\]
defined on couples $(M_1,M_2)$ of $n\times n$ Hermitian matrices. Here $\tau>0$ is the coupling constant, $Z_n$ is a normalizing constant and $V$ and $W$ are two polynomials such that the density is integrable. In this paper we are interested in the asymptotic behavior of the eigenvalues as $n\to \infty$.

The two-matrix model is commonly assumed to be an effective model to generate new  critical phenomena for the eigenvalue statistics. In particular, it is believed to generate all $(p,q)$ minimal conformal models, whereas in one-matrix models one only obtains the $(p,2)$ minimal models \cite{DaKaKo,Dou,Eyn2}. However, even for non-critical bulk universality few rigorous results have been obtained and a general description of the asymptotic behavior of the spectrum remains an important open question.

Recent progress has been made  in \cite{DGK,DK2, DKM,Mo} where the authors considered  the special case   $W(y)=y^4/4+\alpha y^2/2$ and analyzed the asymptotic behavior of the eigenvalues of $M_1$ when averaged over $M_2$.  It turns out that the mean eigenvalue density for the matrix $M_1$ has a  limit as $n\to \infty$, which can be characterized by a vector equilibrium problem. In particular, in \cite{DGK} the authors showed that for the special case
\begin{align}\label{eq:defVW}
W(y)=y^4/2+\alpha y^2/2, \qquad \text{and} \qquad V(x)=\tfrac12 x^2,
\end{align}
and the special values 
\[
\alpha=-1 \textrm{ and }\tau=1,
\]
the limiting eigenvalue density for $M_1$ vanishes like a square root near the origin, which lies in the bulk of the spectrum, see also Figure \ref{fig: density of critical equilibrium measure}. As a consequence, near the origin the local correlations are no longer in the universality class of the sine kernel and critical behavior is expected.  It is the purpose of this paper to characterize that critical behavior.

The eigenvalues of $M_1$, when averaged over $M_2$, form a determinantal point process with a kernel that is expressed in terms of certain biorthogonal polynomials \cite{EyM,MS}. Our  main result  is a triple scaling limit of that kernel near the critical point. The limiting kernel is expressed in terms of the unique solution to  a $4\times 4$ model Riemann-Hilbert problem for which the associated ODE can be expressed in terms of  the Hastings-McLeod solution to the Painlev\'e II equation.  This is the solution to the Painlev\'e II equation 
\begin{align}\label{eq:PIIeq}q''(\sigma)=2q(\sigma)^3+\sigma q(\sigma),\end{align}
characterized by
\begin{align}\label{eq:HScharach}
q(\sigma)= {\rm Ai}(\sigma)(1+o(1)), \qquad \text{as }\sigma \to +\infty.
\end{align}
In the proof, we follow the approach of \cite{DKM} and perform a Deift/Zhou steepest descent analysis for the Riemann-Hilbert problem characterizing the biorthogonal polynomials that integrate the two-matrix model. The crucial point in our proof is the construction of the local parametrix  around the critical point in terms of the   $4\times 4$ Riemann-Hilbert problem that defines the kernel. To deal with the triple scaling limit, we will also need to modify the $\lambda$-functions that are used in the normalization of the Riemann-Hilbert problem.

The  $4\times 4$ Riemann-Hilbert problem that characterizes the critical kernel is an extension (we have an additional parameter) of a Riemann-Hilbert problem that appeared before in \cite{DKZ}.   In that paper Delvaux, Kuijlaars, and Zhang analyze the local process near a tacnode  in a model of non-colliding Brownian motions (for alternative approaches to the same problem see \cite{AFM,J}). In this situation, the limiting distribution of particles also vanishes like a square root in the interior of its support. However, we will show that at the local scale this process is different from the one we obtain for the two-matrix model. More precisely, the kernel that is of importance in the two-matrix model is constructed in a different way out of the extended (but for a certain choice of parameters same) Riemann-Hilbert problem. Note that   this means in particular that the universality class is not determined by the vanishing exponent only.

We recall that the Hastings-McLeod solution to the Painlev\'e II equation also plays an important role in a different critical phenomenon in random matrix theory.  It  appears  in the one-matrix model when the limiting density of eigenvalues vanishes quadratically in the interior of its support \cite{BI,CK}. In that case the local behavior is expressed in terms of $\Psi$-functions that solve a $2\times 2$ Riemann-Hilbert problem characterizing the Hastings-McLeod solution to the Painlev\'e II equation. However, we stress that this $2\times 2$ Riemann-Hilbert problem is essentially different from the $4\times 4$ Riemann-Hilbert problem that we will use.

It was already observed in \cite{DKZ} that the underlying Riemann-Hilbert problem for the kernel is intimately connected to the Hastings-McLeod solution of the Painlev\'e II equation.  In the context of the tacnode singularity in the Brownian motion model, this remarkable connection agrees with earlier results in \cite{DelvauxK}.  In the critical case for the two-matrix model, we have an even stronger connection. Indeed, the $\alpha \tau$- phase diagram tells us that it is possible to arrive at that situation, when we move away from the critical point along a certain curve. The second main result of the paper is that we can indeed retrieve the kernel associated with quadratic vanishing, by a double scaling limit of the new critical kernel.

\begin{figure}[t]
\centering
\begin{tikzpicture}[xscale=0.7,yscale=0.7]
\draw (0,1.75) node {\includegraphics[width=140pt]{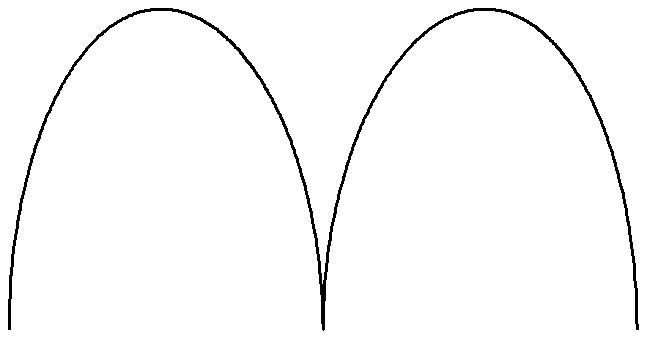}};
\draw[->] (-4,0)--(4,0);
\draw[->] (0,0)--(0,4);
\fill    (0,0) circle (2pt) node[below]{$0$};           
\end{tikzpicture}
\caption{The limiting mean eigenvalue density for critical values $\alpha=-1$, $\tau=1$.}
\label{fig: density of critical equilibrium measure}
\end{figure}

\paragraph{Acknowledgments}
The authors wish to thank  Arno Kuijlaars and Steven Delvaux for many interesting discussions. In particular we are grateful to Steven Delvaux for suggesting an alternative way of defining the conformal map and the parameters in Section \ref{sec:P0} that is more elegant than our original one. The main part of the research was carried out while the first author was an Olga Taussky--John Todd instructor at the California Institute of Technology. The second author was partially supported by the Belgian Interuniversity Attraction Pole P06/02. The authors are also grateful to MSRI in Berkeley for the hospitality and support during the fall of 2010.

\section{Statement of results}

In this section we state our main results. The proofs will be given in the subsequent sections. We start by exploring some preliminary results on the two-matrix model that we will need. See also \cite{DKM} and the references therein.

\subsection{Biorthogonal polynomials}

The two-matrix model can be integrated in terms of biorthogonal polynomials. These are defined as the two families of polynomials $\{p_{k,n}\}_{k=0}^\infty$ and $\{q_{j,n}\}_{j=0}^\infty$, where $p_{k,n}$ and $q_{j,n}$ are monic polynomials of degree $k$ and $j$ respectively, satisfying the orthogonality relation 
\begin{align} \label{eq:biorthogonal}
\int_{\R^2} p_{k,n}(x) q_{j,n}(y) e^{-n\left(V(x)+W(y)-\tau x y \right)} \ \ud x\ud y=\delta_{jk} h_{k,n}^2.
\end{align}
One can show that these polynomials exist and are unique \cite{EMc}. Furthermore, they have properties that are typical for orthogonal polynomials: the zeros are real and simple \cite{EMc} and they satisfy an interlacing property \cite{DGK}. The integrable structure of the polynomials also has been intensively investigated in \cite{BEH1,BEH2,BEH}.
 
The correlation functions for the eigenvalues of $M_1$ and $M_2$ have a determinantal structure and can be expressed in terms of these polynomials \cite{EyM,MS}.  Here we are only interested in the eigenvalues of $M_1$ averaged over $M_2$ and hence we content ourselves by describing the eigenvalues of $M_1$ only. To this end, we define the transformed function $Q_{j,n}(x)$ as
 \begin{align*}
 Q_{j,n}(x)=e^{-n V(x)} \int_\R q_{j,n}(y) e^{-n(W(y)-\tau xy)} \ud y.
 \end{align*} 
 Then it is clear from \eqref{eq:biorthogonal} that $p_{k,n}$ and $Q_{j,n}$  satisfy the orthogonality condition
 \begin{align*}
 \int_\R p_{k,n}(x) Q_{j,n}(x) \ud x=0, \qquad j\neq k.
 \end{align*}
 The fact of the matter now is that the eigenvalues of $M_1$, when averaged over $M_2$, form a determinantal point process  with kernel
 \begin{align} \label{eq:defKn}
 K_n(x_1,x_2)=\sum_{k=0}^{n-1}  \frac{1}{h_{k,n}^2}\, p_{k,n}(x_1) Q_{k,n}(x_2).
 \end{align}
For more details on determinantal point processes we refer to \cite{BorDet,HKPV,Jdet,K,L,Sosh}. 
 Note that in the language of \cite{Bor}, the eigenvalues of $M_1$ form a biorthogonal ensemble. As the eigenvalue statistics are fully described by the kernel $K_n$, it remains to analyze the asymptotic behavior of this kernel near the critical point. 
 
Riemann-Hilbert techniques offer a natural approach to find the asymptotic behavior of $K_n$. We recall that the asymptotic behavior of orthogonal polynomials describing the one-matrix models (unitary ensembles), can be effectively computed using the Deift/Zhou steepest descent method \cite{DZ} on an associated Riemann-Hilbert problem introduced in \cite{FIK}. This procedure was carried out in two important papers \cite{DKMVZ1,DKMVZ2} in the regular situation. See also \cite{Deift} for a detailed discussion. This development had a strong impact on the orthogonal polynomial literature. In many subsequent papers asymptotic results for (multiple) orthogonal polynomials were obtained through Riemann-Hilbert techniques. In particular, this method proved to be very successful to establish various universality results in random matrix theory. 

Also the biorthogonal polynomials describing the two-matrix model can be characterized by a Riemann-Hilbert problem \cite{BEH,EMc,Kapaev,KMcL}. Although the general situation is still open, for the special case of even $V$ and $W(y)=y^4/4+\alpha y^2/2$, the steepest descent method has been successfully carried out for the Riemann-Hilbert problem \cite{KMcL} under the assumption that the limiting eigenvalue density is regular. See \cite{DK2,Mo} for the case $\alpha=0$ and \cite{DKM} for general $\alpha$. We will proceed by discussing some of these results.

\subsection{Limiting mean density} \label{sec: limiting mean density}

From this point we assume that $V(x)$ is an even polynomial with positive leading coefficient and 
\[
W(y)=\frac{y^4}{4}+\alpha \frac{y^2}{2}.
\]
For technical reasons we will also assume that $n\equiv 0 \mod 3$.
It then follows from the results in \cite{DKM} that the mean eigenvalue density for $M_1$ has a  limit  as $n \to \infty$, i.e., there exists an absolutely continuous measure $\mu_1$ on $\R$ such that
\begin{align}\label{eq:averagedensitylimit}
\lim_{n\to \infty} \frac{1}{n} K_n(x,x)=\frac{\ud \mu_1}{\ud x}.
\end{align} 
Moreover, from the results in \cite{DKM} one also deduces  that $\mu_1$ is the limiting normalized counting measure on the zeros of $p_{n,n}$
\begin{align}\label{eq:polynomiallimit}
 \frac{1}{n} \sum_{x \ : \  p_{n,n}(x)=0}  \delta_x \to \mu_1.
\end{align}
This was also established in \cite{DGK} for the special case $V(x)=x^2/2$ using different methods.  A crucial point in the analysis of \cite{DKM} is that the limiting measure $\mu_1$ can be characterized as the first component of a vector of measures $(\mu_1,\mu_2,\mu_3)$ that is the unique minimizer of an energy  functional $E$, which we will now discuss. We need some basic notions from potential theory, for which we refer to the standard work \cite{SaffTotik}.

Let us first introduce the logarithmic energy for a measure $\mu$ on $\C$
\[
I(\mu)= \iint \log \frac{1}{|x-y|} \ud \mu(x) \ud \mu(y).
\]
For two measures $\mu$ and $\nu$ the mutual logarithmic energy is defined as
\[
I(\mu,\nu) = \iint \log \frac{1}{|x-y|} \ud \mu(x) \ud \nu(y).
\]
Then the equilibrium problem is to minimize the energy functional
\begin{multline} \label{eq:energyfunctional}
E(\nu_1,\nu_2,\nu_3)= \sum_{j=1}^3 I(\nu_j)-\sum_{j=1}^2 I(\nu_j,\nu_{j+1}) 
+ \int V_1(x) \ud \nu_1(x) + \int V_3(x) \ud \nu_3(x),
\end{multline}
among all measures $\nu_1,$ $\nu_2,$ and $\nu_3$ that satisfy
\begin{itemize}
\item[(a)] the measures have finite logarithmic energy;
\item[(b)] $\nu_1$ is a measure on $\R$ with $\nu_1(\R)=1$;
\item[(c)] $\nu_2$ is a measure on $i\R$ with $\nu_2(i\R)=2/3$;
\item[(d)] $\nu_3$ is a measure on $\R$ with $\nu_3(\R)=1/3$;
\item[(e)] $\nu_2 \leq \sigma_2$ where $\sigma_2$ is a certain measure on the imaginary axis. 
\end{itemize}
The vector equilibrium problem clearly depends on the input data $V_1$, $V_3$, and $\sigma_2$ that we describe next. We will only deal with the situation $\alpha<0$, which is the situation relevant to us. For $\alpha\geq 0$ we refer to \cite{DKM}.  The external field $V_1$ that acts on $\nu_1$ is given by
\[
V_1(x) = V(x) + \min_{s \in \R}(W(s)-\tau x s). 
\]
Now define 
\begin{equation} \label{eq: def xstar}
x^*(\alpha) = 
\frac{2}{\tau} \left( \frac{-\alpha}{3}\right)^{3/2}, \end{equation}
Then, the function $\R \to \R: s \mapsto W(s) - \tau x s$ has a local maximum and a local minimum besides the global minimum if and only if $x \in (-x^*(\alpha),x^*(\alpha))$. In that case the external field $V_3$ is defined as the positive difference between these local extrema. For all other real values of $x$ we define $V_3(x)=0$. Finally, to describe the measure $\sigma_2$ that acts as a constraint on $\nu_2$ we consider the equation
\[
s^3+\alpha s=\tau z, \qquad \textrm{ with } z \in i\R,
\]
and let $s(z)$ denote its solution with the largest real part for $z \in i\R$. Then
\[
\frac{\textrm d \sigma_2(z)}{|\textrm d z|}= \frac{\tau}{\pi} \Re s(z).
\]
and the support of $\sigma_2$ is $i\R$.

In \cite{DKM} it is proved that the above vector equilibrium problem has a unique minimizer $(\mu_1,\mu_2,\mu_3)$ (see also \cite{HK}) and that \eqref{eq:averagedensitylimit}
 holds. It is also proved that each $\mu_j$ has an analytic density with respect to the Lebesgue measure.

\subsection{Phase diagram}

From now on we also assume that $V(x)=\tfrac12 x^2$, so that we are in the situation given by \eqref{eq:defVW}. In order to get a better understanding of the nature of the critical point that we will consider in the present paper, we will first discuss the behavior of the supports of the measures $\mu_1$, $\sigma_2-\mu_2$, and $\mu_3$ and how they depend on $\alpha$ and $\tau$. As proved in \cite{DGK,DKM}, the supports of the measures $\mu_1$, $\sigma_2-\mu_2$, and $\mu_3$ have the following form
\begin{align*}
    \supp (\mu_1) & =[-\beta_0,-\beta_1] \cup [\beta_1, \beta_0], \\
    \supp(\sigma_2-\mu_2) & =i\R \setminus (-i\beta_2,i\beta_2), \\
    \supp(\mu_3) &= \R \setminus (-\beta_3,\beta_3),
\end{align*}
for some $\beta_0 > \beta_1 \geq 0$, $\beta_2,\beta_3 \geq 0$ that all depend on the values of $\alpha \in \mathbb R$ and $\tau > 0$. We distinguish a number of cases, depending on whether $\beta_1$, $\beta_2$, or $\beta_3$ are equal to
zero, or not. At least one of these is zero, and generically, no two consecutive ones are zero.
The situation in summarized in the phase diagram in the $\alpha \tau$-plane shown in
Figure \ref{fig: phase diagram}.
\begin{description}
\item[Case I:] $\beta_1=0$, $\beta_2>0$, and $\beta_3=0$.
Thus, in this case there are no gaps in the supports of the measures $\mu_1$ and $\mu_3$ on
the real line. The constraint $\sigma_2$ is active along an interval $[-i\beta_2, i\beta_2]$
on the imaginary axis.
\item[Case II:] $\beta_1>0$, $\beta_2 > 0$, and $\beta_3=0$.
In Case II there is a gap in the support of $\mu_1$,  but there is no gap in the support of $\mu_3$,
which is again the full real line. The constraint is active
along an interval along the imaginary axis.
\item[Case III:] $\beta_1>0$, $\beta_2 = 0$, and $ \beta_3>0$.
In Case III there is a gap in the supports of $\mu_1$ and $\mu_3$, but the constraint
on the imaginary axis is not active.
\item[Case IV:] $\beta_1=0$, $\beta_2>0$, and $\beta_3>0$.
In this case the measure $\mu_1$ is still supported on one interval. However there
is a gap $(-\beta_3, \beta_3)$ in the support of $\mu_3$. As in Case I, the constraint
$\sigma_2$ is  active along an interval $[-i\beta_2, i\beta_2]$ on the imaginary axis.
\end{description}
In Figure \ref{fig: phase diagram} we plotted a phase diagram that shows which values of $(\alpha,\tau)$ correspond to the different cases. The different cases are separated by the curves  given by the equations
\begin{align*}
    \tau = \sqrt{\alpha + 2}, \quad -2 \leq \alpha < \infty, \quad \text{and} \quad
    \tau = \sqrt{- \frac{1}{\alpha}}, \quad -\infty < \alpha < 0. \end{align*}
On these curves two of the numbers $\beta_1, \beta_2,$ and $\beta_3$
are equal to zero. For example, on the curve between Case III and Case IV, we
have $\beta_1=\beta_2=0$, while $\beta_3>0$.
Finally, note the multi-critical point $(\alpha,\tau)=(-1,1)$
in the phase diagram, where $\beta_1=\beta_2=\beta_3 = 0$.
All four cases come together at this point in the $\alpha \tau$-plane. This point has our main interest.

As long as we consider points $(\alpha,\tau)$ that are not on the  curves, the local correlations are governed by the sine kernel in the bulk of the spectrum and the Airy kernel at the edge of the spectrum. Critical phenomena occur at the curves that separate the different cases.  

When we cross the line $\tau=\sqrt{\alpha+2}$ for $\alpha>-1$, the support of $\mu_1$ turns from two intervals into one interval. On the critical curve, the intervals meet at the origin and  there the density  vanishes quadratically. This indicates that in a double scaling limit the kernel converges to a kernel that is constructed out of $\Psi$-functions for the Hastings-McLeod solution of the Painlev\'e II equation \cite{BI,CK,CKV}.

When crossing the line $\tau=\sqrt{-1/\alpha}$ for $\alpha<-1$, we again have a transition of two intervals merging at the origin. However, due to the fact that $\sigma-\mu_2$ also closes simultaneously, the vanishing at the origin occurs with an exponent ${1/3}$. This indicates that the local correlations are governed by the Pearcey kernel \cite{BK,BH,BH1,OR,TW}.

The other transitions, represented by the dashed lines in Figure \ref{fig: phase diagram}, do not concern $\mu_1$. They take place on the non-physical sheets of the spectral curve and, therefore, they do not influence the local correlations of the eigenvalues of $M_1$, which are again described by the sine and Airy kernels.

In the next paragraph, we will describe  the limiting process near the point $(\alpha,\tau)=(-1,1)$, where there is a simultaneous transition in the supports of all three measures $\mu_1$, $\sigma-\mu_2$ and $\mu_3$.  
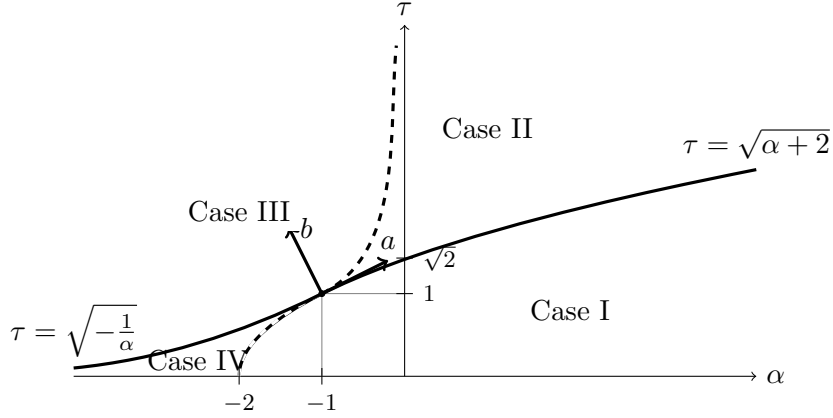
\begin{figure}[t]
\begin{center}
\begin{tikzpicture}[scale=1.1]
\draw[->](0,0)--(0,4.25) node[above]{$\tau$};
\draw[->](-4,0)--(4.25,0) node[right]{$\alpha$};
\draw[help lines] (-1,0)--(-1,1)--(0,1);
\draw[very thick,rotate around={-90:(-2,0)}] (-2,0) parabola (-4.5,6.25)   node[above]{$\tau=\sqrt{\alpha+2}$};
\draw[very thick,dashed,rotate around={-90:(-2,0)},color=white] (-2,0) parabola (-3,1) ;
\draw[very thick,dashed]
(-1,1)..controls (0,1.5) and (-0.2,3).. (-0.1,4);
\draw[very thick]              (-1,1)..controls (-2,0.5) and (-3,0.2).. (-4,0.1) node[above]{$\tau=\sqrt{-\frac{1}{\alpha}}$};
\filldraw  (-1,1) circle (1pt);
\draw[very thick,->] (-1,1)--(-0.2,1.4) node[above]{$a$};
\draw[very thick,->] (-1,1)--(-1.4,1.8) node[right]{$b$};
\draw (0.1,1) node[font=\footnotesize,right]{$1$}--(-0.1,1);
\draw (-1,0.1)--(-1,-0.1) node[font=\footnotesize,below]{$-1$};
\draw (-2,0.1)--(-2,-0.1) node[font=\footnotesize,below]{$-2$};
\draw (0.1,1.43) node [font=\footnotesize,right]{$\sqrt 2$}--(-0.1,1.43);
\draw[very thick] (2,0.8) node[fill=white]{Case I}
                  (-2.5,0.2) node{Case IV}
                  (-2,2) node[fill=white]{Case III}
                  (1,3) node[fill=white]{Case II};
\end{tikzpicture}
\end{center}
\caption{The phase diagram in the $\alpha\tau$-plane: the critical curves $\tau=\sqrt{\alpha+2}$
and $\tau=\sqrt{-\frac{1}{\alpha}}$ separate the four cases.
The cases are distinguished by the fact whether $0$ is in the support of the measures
$\mu_1$, $\sigma_2-\mu_2$, and $\mu_3$, or not.}
\label{fig: phase diagram}
\end{figure}

\subsection{Limiting kernel} \label{subsec: tacnode rhp}

We now come to the main results of the paper. We  study the process near the critical parameters $\tau=1$ and $\alpha=-1$, by means of a triple scaling limit. To this end, we rescale $\alpha$ and $\tau$ near the critical values in the following way
\begin{equation} \label{eq: scaling t tau}
\begin{pmatrix} \alpha \\ \tau \end{pmatrix} = \begin{pmatrix} -1 \\ 1 \end{pmatrix} + a n^{-1/3}\begin{pmatrix} 2 \\ 1 \end{pmatrix} + b n^{-2/3}\begin{pmatrix} -1 \\ 2 \end{pmatrix}, 
\end{equation}
for  $a,b\in \R$.  Note that the vectors $\begin{pmatrix} 2 & 1 \end{pmatrix}^T$ and $\begin{pmatrix} -1 & 2 \end{pmatrix}^T$ are respectively tangent and normal to both critical curves in the point $(\alpha,\tau)=(-1,1)$. We  also scale the space variables with 
\begin{align*}
x=u n^{-2/3}, \qquad \text{and} \qquad y=v n^{-2/3},
\end{align*}
and compute the limiting behavior of $K_n(x,y)$ as $n\to \infty$. 

The limiting kernel  is characterized by the solution to a Riemann-Hilbert problem ($\equiv$ RH problem) that we will first discuss. 
\begin{rhp} \label{rhp: tacnode rhp} Fix parameters $r_1,r_2,s_1,s_2$, and $t$. We search for a $4 \times 4$ matrix-valued function $M(\zeta)$ satisfying
\begin{itemize}
\item[\rm (1)] $M$ is analytic for  $\zeta \in \C \setminus \Sigma_M$;
\item[\rm (2)] $M_+(\zeta)=M_-(\zeta)J_k$, for $\zeta \in \Gamma_k,$ $k=0,\ldots,9$;
\item[\rm (3)] As $\zeta \to \infty$ with $\zeta \in \C \setminus \Sigma_M$ we have
\begin{multline}
M(\zeta)=\left( I+\mathcal O(\zeta^{-1}) \right)B(\zeta) A\\\times
\diag \left( e^{-\psi_2(\zeta)+t\zeta}, e^{-\psi_1(\zeta)-t \zeta}, e^{\psi_2(\zeta)+t \zeta},e^{\psi_1(\zeta)-t \zeta} \right),
\end{multline}
where
\begin{align}
\psi_1(\zeta)&=\frac23 r_1\zeta^{3/2} +2 s_1 \zeta^{1/2}, \label{eq: def psi1}\\
\psi_2(\zeta)&=\frac23 r_2(-\zeta)^{3/2} +2 s_2 (-\zeta)^{1/2} \label{eq: def psi2},
\end{align}
and
\begin{equation} \label{eq: A}
A=\frac{1}{\sqrt 2} \begin{pmatrix} 1 & 0 & -i & 0 \\ 0 & 1& 0&i \\ -i&0&1&0\\0&i&0&1 \end{pmatrix},
\end{equation}
and, finally,
\begin{equation} \label{eq: B}
B(\zeta)=\diag \left((-\zeta)^{-1/4},\zeta^{-1/4},(-\zeta)^{1/4},\zeta^{1/4} \right);
\end{equation}
\item[\rm (4)] $M(\zeta)$ is bounded near $\zeta=0$.
\end{itemize}

The fractional powers in   $\zeta\mapsto \zeta^{3/2}$, $\zeta\mapsto \zeta^{1/2}$ and $\zeta\mapsto \zeta^{\pm 1/4}$ are chosen such that these maps are analytic in $\C\setminus (-\infty,0]$ and taking positive values on the positive part of the real line. The fractional powers in  $\zeta\mapsto (-\zeta)^{3/2}$, $\zeta\mapsto (-\zeta)^{1/2}$ and $\zeta\mapsto (-\zeta)^{\pm 1/4}$ are chosen such that these maps are analytic in $\C\setminus [0,\infty)$ and taking positive values on the negative part of the real line.

The contour $\Sigma_M$ is shown in Figure \ref{fig: contour M} and consists of 10 rays emanating from the origin. The function $M(\zeta)$ makes constant jumps $J_k$ on each of the rays $\Gamma_k$. These rays are determined by two angles $\varphi_1$ and $\varphi_2$ satisfying $0<\varphi_1<\varphi_2<\pi/2$. The half-lines $\Gamma_k,$ $k=0,\ldots,9$, are defined by 
\begin{align*}
\Gamma_0 &= \R^+, & \Gamma_1 &= e^{i\varphi_1} \R^+, & \Gamma_2 &= e^{i\varphi_2} \R^+, \\
&&                 \Gamma_3 &= e^{i(\pi-\varphi_2)} \R^+, & \Gamma_5 &= e^{i(\pi-\varphi_1)} \R^+,    
\end{align*}
and
\[
\Gamma_{5+k}=-\Gamma_k, \qquad k=0,\ldots,4.
\]
All rays are oriented towards infinity.
\end{rhp}
\begin{figure}[t]
\centering
\begin{tikzpicture}[scale=.9]
\begin{scope}[decoration={markings,mark= at position 0.5 with {\arrow{stealth}}}]
\draw[postaction={decorate}]      (0,0)--node[near end, above]{$\Gamma_0$}(4,0) node[right]{$\begin{pmatrix} 0&0&1&0\\0&1&0&0\\-1&0&0&0\\0&0&0&1 \end{pmatrix}$};
\draw[postaction={decorate}]      (0,0)--node[near end, above]{$\Gamma_1$}(3,1) node[above right]{$\begin{pmatrix} 1&0&0&0\\0&1&0&0\\1&0&1&0\\0&0&0&1 \end{pmatrix}$};
\draw[postaction={decorate}]      (0,0)--node[near end, right]{$\Gamma_2$}(1.5,2.5) node[above]{$\begin{pmatrix} 1&0&0&0\\-1&1&0&0\\0&0&1&1\\0&0&0&1 \end{pmatrix}$};
\draw[postaction={decorate}]      (0,0)--node[near end, right]{$\Gamma_3$}(-1.5,2.5) node[above]{$\begin{pmatrix} 1&1&0&0\\0&1&0&0\\0&0&1&0\\0&0&-1&1 \end{pmatrix}$};
\draw[postaction={decorate}]      (0,0)--node[near end, above]{$\Gamma_4$}(-3,1) node[above left]{$\begin{pmatrix} 1&0&0&0\\0&1&0&0\\0&0&1&0\\0&-1&0&1 \end{pmatrix}$} ;
\draw[postaction={decorate}]      (0,0)--node[near end, above]{$\Gamma_5$}(-4,0) node[left]{$\begin{pmatrix} 1&0&0&0\\0&0&0&-1\\0&0&1&0\\0&1&0&0 \end{pmatrix}$};
\draw[postaction={decorate}]      (0,0)--node[near end, above]{$\Gamma_6$}(-3,-1) node[below left]{$\begin{pmatrix} 1&0&0&0\\0&1&0&0\\0&0&1&0\\0&-1&0&1 \end{pmatrix}$} ;
\draw[postaction={decorate}]      (0,0)--node[near end, right]{$\Gamma_7$}(-1.5,-2.5)node[below]{$\begin{pmatrix} 1&-1&0&0\\0&1&0&0\\0&0&1&0\\0&0&1&1 \end{pmatrix}$};
\draw[postaction={decorate}]      (0,0)--node[near end, right]{$\Gamma_8$}(1.5,-2.5)node[below]{$\begin{pmatrix} 1&0&0&0\\1&1&0&0\\0&0&1&-1\\0&0&0&1 \end{pmatrix}$};
\draw[postaction={decorate}]      (0,0)--node[near end, above]{$\Gamma_9$}(3,-1)node[below right]{$\begin{pmatrix} 1&0&0&0\\0&1&0&0\\1&0&1&0\\0&0&0&1 \end{pmatrix}$};
\end{scope}
\end{tikzpicture}
\caption{The jump contour $\Sigma_M$ in the complex $\zeta$-plane and the constant jump matrices $J_k$ on each of the rays $\Gamma_k$, $k=0, \ldots,9$.}
\label{fig: contour M}
\end{figure}
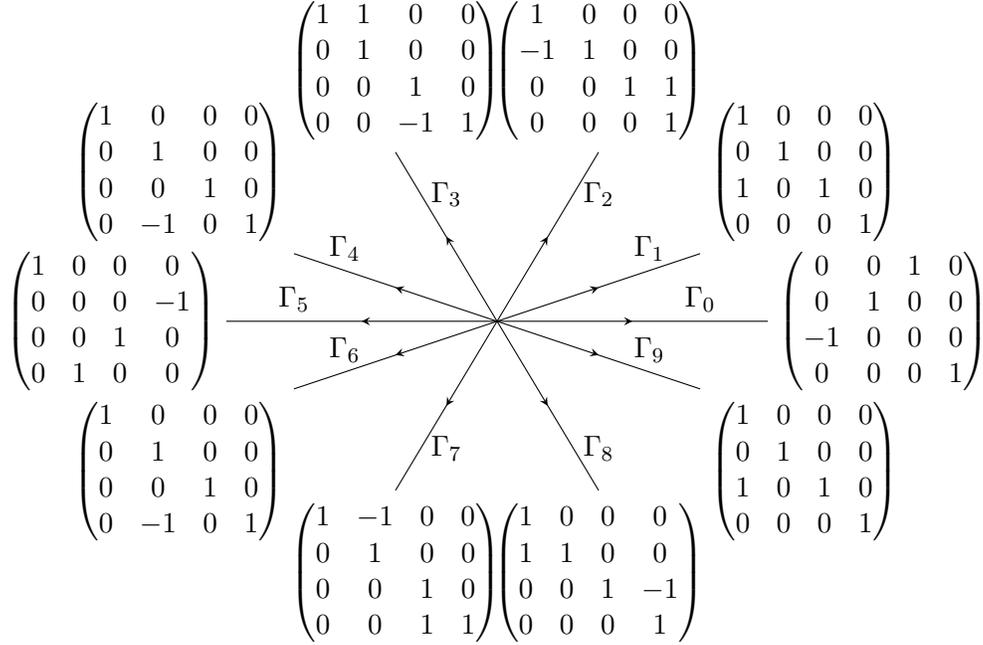

Notations like $M_\pm(z)$ are used throughout the paper and should be understood as follows. Suppose that $z$ belongs to an oriented contour (here $z \in \Sigma_M$). Locally the contour divides the complex plane into two parts. The part on the left (right) side when traversing the contour according to its orientation will be called the + (-) side. Then $M_+(z)$ ($M_-(z)$) denotes the limiting value of $M(\tilde z)$ as $\tilde z$ approaches $z$ from the + (-) side of the contour. 

In case $t=0$, the RH problem \ref{rhp: tacnode rhp} has already appeared before in connection with the tacnode singularity in the model of non-colliding Brownian motions \cite{DKZ}. The following result on the solvability of the RH problem for $M$ was already obtained in \cite{DKZ} for the special situation $t=0$. Here we will also need the statement for $t\in \R$. 

\begin{theorem}\label{th: tacnode rhp}
Assume $r_1=r_2>0$, $s_1=s_2\in \R$, and $t\in \R$. Then RH problem \ref{rhp: tacnode rhp} has a unique solution.
\end{theorem}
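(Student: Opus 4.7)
The plan is to prove uniqueness by a standard Liouville-type argument, and to establish existence by reducing to a Deift/Zhou vanishing lemma, then adapting the argument of \cite{DKZ} to accommodate the extra parameter $t$.

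For uniqueness, suppose $M^{(1)}$ and $M^{(2)}$ both solve RH problem \ref{rhp: tacnode rhp}. One first checks that $\det M^{(j)} \equiv 1$: indeed $\det J_k = 1$ for every $k$, a direct computation gives $\det B \equiv 1$ and $\det A = 1$, and the determinant of the diagonal factor in condition~(3) is identically $1$ (the $\pm\psi_j$ and $\pm t\zeta$ terms cancel in pairs). Hence $\det M^{(j)}$ is entire, bounded at $0$ by condition~(4), and tends to $1$ at infinity, so it equals $1$. In particular $M^{(2)}$ is invertible everywhere, and the ratio $N(\zeta) = M^{(1)}(\zeta) M^{(2)}(\zeta)^{-1}$ is well defined, has no jump across $\Sigma_M$ (the $J_k$ cancel), is bounded at the origin, and satisfies $N(\zeta) = I + \mathcal O(\zeta^{-1})$ as $\zeta \to \infty$ by condition~(3). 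Liouville's theorem yields $N \equiv I$.

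For existence, following the standard Deift/Zhou mechanism, one sets
\[
P_\infty(\zeta) = B(\zeta)\, A \cdot \diag\bigl(e^{-\psi_2(\zeta) + t\zeta},\, e^{-\psi_1(\zeta) - t\zeta},\, e^{\psi_2(\zeta) + t\zeta},\, e^{\psi_1(\zeta) - t\zeta}\bigr)
\]
and seeks $M$ in the form $M = R \cdot P_\infty$. The new unknown $R$ satisfies an equivalent RH problem with $R(\zeta) \to I$ at infinity, jump matrices $\widetilde J_k = (P_\infty)_- J_k (P_\infty)_+^{-1}$, and at worst an integrable singularity at $0$. Because $\Re \psi_j$ has the correct sign pattern on each ray $\Gamma_k$ (by the choice of the angles $\varphi_1, \varphi_2$), the jumps $\widetilde J_k$ decay exponentially to the identity at infinity. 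Standard singular-integral Fredholm theory then reduces existence to a vanishing lemma: any function $M$ satisfying (1), (2), (4) and the stronger asymptotic $M(\zeta) = \mathcal O(\zeta^{-1})\, B(\zeta)\, A\, \diag(\cdots)$ must be identically zero.

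The main obstacle is the vanishing lemma itself. For $t = 0$ this is precisely the result established in \cite{DKZ}, so the task is to show that their argument survives the perturbation by the additional entire factor $\diag(e^{t\zeta}, e^{-t\zeta}, e^{t\zeta}, e^{-t\zeta})$ when $t \in \R$. The DKZ argument relies on the symmetries $\zeta \mapsto \bar \zeta$ and $\zeta \mapsto -\zeta$ of the jump data (which persist when $r_1 = r_2 > 0$ and $s_1 = s_2 \in \R$) and on analyzing an auxiliary matrix built from $M(\zeta)$ and $M(\bar \zeta)^{*}$, whose positive semi-definiteness on $\R$ forces $M$ to vanish. For real $t$, the function $t\zeta$ is entire, real on $\R$ and odd under $\zeta \mapsto -\zeta$, so it respects precisely these symmetries and only contributes bounded oscillatory factors to the key estimates; carrying out the bookkeeping reproduces the vanishing conclusion. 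A softer alternative, which avoids reinspecting the DKZ proof line by line, is a continuation argument: the set $\{t \in \R : \text{the RH problem is solvable}\}$ contains $0$, is open by Fredholm perturbation of the associated singular integral operator, and is closed by a priori bounds on $M$ together with a normal-families argument, hence equals all of $\R$.
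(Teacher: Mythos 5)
Your uniqueness argument is correct and standard ($\det M\equiv 1$ since all jump matrices, $A$, $B$, and the diagonal exponential factor are unimodular, then Liouville applied to $M^{(1)}(M^{(2)})^{-1}$). The existence part, however, rests on a step that fails. Your main route is to adapt the vanishing lemma of \cite{DKZ}, claiming that the extra factor $\diag(e^{t\zeta},e^{-t\zeta},e^{t\zeta},e^{-t\zeta})$ "only contributes bounded oscillatory factors." This is false for real $t$: $e^{t\zeta}$ is oscillatory on $\R$ only when $t$ is purely imaginary. For $t\in\R$ and $\zeta=x$ on the positive real axis, the first and third columns of the prescribed asymptotics carry $e^{\mp\psi_2(x)+tx}$, and since $\psi_2(x)=\frac23 r_2(-x)^{3/2}+2s_2(-x)^{1/2}$ is purely imaginary there, the modulus of these entries is exactly $e^{tx}$ — a genuine real exponential, unbounded in one direction. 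This destroys the integrability and positivity bookkeeping in the DKZ argument (which integrates a positive semi-definite combination of boundary values along $\R$), and it is precisely why the paper states that the vanishing lemma "does not seem to work for $t\in\R$."

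Your fallback continuation argument fails at the closedness step. Openness of the solvable set in $t$ is indeed a soft Fredholm statement, but closedness requires a priori bounds on $M(\cdot;t)$ uniform as $t$ approaches the boundary of the solvable set, and no such bounds come from the RH problem alone: solutions of RH problems of isomonodromic type generically blow up at isolated values of the deformation parameter (poles of the associated Painlev\'e transcendent), so a normal-families argument has nothing to grip. The paper's actual proof supplies exactly this missing input. It writes down an explicit Lax pair $\partial_\zeta M=UM$, $\partial_t M=WM$ whose coefficients are built from the Hastings--McLeod solution $q$ and its Hamiltonian, verifies the compatibility condition, constructs canonical sectorial solutions with the prescribed asymptotics in suitable Stokes sectors, and shows that the Stokes matrices are independent of $t$; since they agree with the prescribed jumps at $t=0$ by \cite{DKZ}, they do so for all $t$. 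Global existence in $t$ is then guaranteed by the theorem of Hastings and McLeod \cite{HMcL} that $q$ has no poles on the real axis — that is the a priori bound your argument silently assumes. Without some such Painlev\'e-type input, neither of your two routes closes.
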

The proof of this theorem will be given in Section \ref{sec:existence}. We remark that we do not prove the existence by a vanishing lemma, which does not seem to work for $t\in \R$. Instead, we prove the existence by combining the existence for $t=0$, as obtained in \cite{DKZ}, with standard considerations from the theory of isomonodromy deformations.

As mentioned in the introduction RH problem \ref{rhp: tacnode rhp} is connected to the Hastings-McLeod solution to the Painlev\'e II equation, see \eqref{eq:PIIeq} and \eqref{eq:HScharach}. This will be made more precise in Section \ref{sec: lax pair} when we discuss the Lax pair behind the RH problem. For now, we only mention that it is possible to retrieve the Hastings-McLeod solution by considering the following limit
\begin{multline*}
 \lim_{\zeta\to \infty} \left(\zeta M(\zeta)\diag \left( e^{\psi_2(\zeta)-t\zeta}, e^{\psi_1(\zeta)+t \zeta}, e^{-\psi_2(\zeta)-t \zeta},e^{-\psi_1(\zeta)+t \zeta} \right)A^{-1}B(\zeta)^{-1}\right)_{1,4} \\=i2^{-1/3}q\left(2^{2/3}(2s-t^2)\right),
\end{multline*}
where $q$ stands for the Hastings-McLeod solution and we assumed $r_1=r_2=1$ and $s_1=s_2=s$.

Now that we have established the existence of the solution of the RH problem, we will define the limiting kernel.  It is convenient to first transform $M$ in the following way
\begin{equation} \label{eq: Mbar definition}
\widetilde M(\zeta)=M(i\zeta)^{-T} C_\pm, \qquad \pm \Im \zeta>0,
\end{equation}
where
\begin{equation} \label{eq: def C+-}
C_+=\begin{pmatrix}  1 & 0&0&0 \\ 0&-1 &0&0 \\ 0&0 &0 &- 1 \\ 0&0&-1&0 \end{pmatrix} \textrm{  and   }
C_-=\begin{pmatrix}  0 &-1&0&0 \\ -1& 0&0&0 \\ 0&0 &-1 & 0 \\ 0&0& 0&1 \end{pmatrix}.
\end{equation}
Naturally, the function $\widetilde M$ is also characterized by a RH problem, that can be easily obtained out of the RH problem for $M$. The precise formulation of that RH problem will be given in Section \ref{sec:modelRHP}, where it is used in the construction of the local parametrix.

We now define $K_{\rm cr}$ by
\begin{align}\label{eq:defKcr}
K_{\rm cr}(u,v;s,t)=
\frac{1}{2 \pi i (u-v)} \begin{pmatrix} -1 & 1 & 0 & 0 \end{pmatrix}  \widetilde M_+^{-1}(v)   \widetilde M_+(u) \begin{pmatrix} 1\\1\\0\\0 \end{pmatrix},
\end{align}
where $\widetilde M_+$ is the $+$-boundary value of $\widetilde M$, as in \eqref{eq: Mbar definition}, with $M(u)$ the unique solution $M$ to RH problem \ref{rhp: tacnode rhp} with parameters 
\begin{align}\label{eq:specialparam}
\begin{cases}
r_1=r_2=1>0,\\
s_1=s_2=s\in \R,\\
t\in \R. 
\end{cases}
\end{align} 
A straightforward calculation shows that kernel $K(u,v)$ depends analytically on $u,v$, also in the origin.

The following theorem is the main result of this paper.
\begin{theorem} \label{th: main theorem}
Let $K_{n}$ be the kernel in \eqref{eq:defKn} describing the eigenvalues of $M_1$ when averaged over $M_2$ with $V$ and $W$ as in \eqref{eq:defVW}. Set  
\begin{equation*} 
\begin{pmatrix} \alpha \\ \tau \end{pmatrix} = \begin{pmatrix} -1 \\ 1 \end{pmatrix} + a n^{-1/3}\begin{pmatrix} 2 \\ 1 \end{pmatrix} + b n^{-2/3}\begin{pmatrix} -1 \\ 2 \end{pmatrix}, 
\end{equation*}
for  $a,b\in \R$. Then for $n\to \infty$ and $n\equiv 0 \mod 6$
\begin{equation*}
\lim_{n \to \infty} \frac{1}{n^{2/3}}K_n\left(\frac{u}{n^{2/3}},\frac{v}{n^{2/3}}\right) =K_{\rm cr}\left(u,v;\tfrac14 (a^2-5b),-a\right),
\end{equation*}
uniformly for $u,v$ in compact subsets of $\R$, where $K_{\rm cr}$ is as in \eqref{eq:defKcr} and \eqref{eq:specialparam}.
\end{theorem}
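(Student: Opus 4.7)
The plan is to perform a Deift/Zhou steepest descent analysis of the $4\times 4$ Riemann-Hilbert problem for the biorthogonal polynomials $p_{k,n},q_{j,n}$ that integrate the two-matrix model, following the strategy of \cite{DKM}, but with three essential modifications dictated by the triple scaling \eqref{eq: scaling t tau}: (i) the $\lambda$-functions used in the opening transformation have to be $n$-dependent; (ii) the local parametrix at the origin must be built from the $4\times 4$ model RH problem \ref{rhp: tacnode rhp}; (iii) the matching of this parametrix with the outer one determines the identification of $(s,t)$ with $(a,b)$.

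First I would perform the standard sequence of transformations $Y \mapsto X \mapsto U \mapsto S$. In the first step, the normalization at infinity is achieved using $\lambda$-functions derived from a vector equilibrium problem, but with $\alpha$ and $\tau$ given by \eqref{eq: scaling t tau} rather than by their critical values $(-1,1)$. These perturbed $\lambda$-functions $\lambda_j^{(n)}$ can be constructed as in Section 2.2 using the spectral curve associated with the perturbed parameters; they will differ from the critical $\lambda$-functions $\lambda_j^{(\mathrm{cr})}$ by terms of order $n^{-1/3}$ and $n^{-2/3}$ that are precisely what produces the functions $\psi_1,\psi_2$ and the $t\zeta$ term in the asymptotics of $M$. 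In the second step I open lenses on the supports of $\mu_1$, $\sigma_2-\mu_2$, and $\mu_3$ to factorize the oscillatory jumps into exponentially decaying ones.

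Next comes the construction of the parametrices. Away from the origin and the soft edges $\pm\beta_0$, one uses the global parametrix $\Pinf$ obtained by solving the pure outer RH problem on the three-sheeted Riemann surface of the spectral curve, exactly as in \cite{DKM}. At $\pm\beta_0$ standard Airy parametrices apply. The new ingredient is the local parametrix $\P0$ on the shrinking disk $\D$, which I would define as
\[
\P0(z) = E(z)\, \widetilde M\bigl(\zeta(z);\, r_1,r_2,s_1,s_2,t\bigr)
\]
where $\widetilde M$ is the transformation \eqref{eq: Mbar definition} of the unique solution $M$ provided by Theorem \ref{th: tacnode rhp}, $\zeta(z)$ is a conformal map of $\D$ onto a neighborhood of $0$ to be determined, and $E(z)$ is an analytic prefactor chosen so that the jumps of $\P0$ match the jumps of $U$ inside $\D$. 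The conformal map $\zeta$ and the parameters $r_1,r_2,s_1,s_2,t$ are fixed by requiring that the exponential factors appearing in the large-$\zeta$ asymptotics of $\widetilde M$ reproduce, after substitution $\zeta = \zeta(z)$ and multiplication by $n$, exactly the differences $n(\lambda_j^{(n)}(z) - \lambda_j^{(\mathrm{cr})}(z))$ arising from the lens-opening. Expanding these differences in powers of $n^{-1/3}$ around $z=0$ using the normal form of the spectral curve at the multi-critical point will produce $r_1=r_2=1$, $t=-a$, and $s_1=s_2=\tfrac{1}{4}(a^2-5b)$; the equalities $r_1=r_2$ and $s_1=s_2$ reflect the $x\leftrightarrow -x$ and $y\leftrightarrow -y$ symmetries of the potentials $V$ and $W$, which are preserved since $n\equiv 0\bmod 6$.

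With both parametrices in hand, the final transformation $S \mapsto R$ produces a RH problem with jumps uniformly close to $I$, of size $O(n^{-1/3})$ on $\partial\D$ and exponentially small elsewhere. Standard small-norm arguments then yield $R = I + O(n^{-1/3})$ uniformly, and unraveling the transformations, together with the identity \eqref{eq:defKcr} written in terms of $\widetilde M$, gives the stated limit of $n^{-2/3}K_n(un^{-2/3},vn^{-2/3})$. The main obstacle will be the matching on $\partial\D$: one must verify that the large-$\zeta$ asymptotics of $\widetilde M$, which are only of the form $(I + O(\zeta^{-1})) B(\zeta) A \cdot (\text{exponentials})$, combine with $E(z)$ to reproduce $\Pinf$ up to an error $O(n^{-1/3})$. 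This matching is what both forces the precise form of $\zeta(z)$ and produces the explicit dependence of $(s,t)$ on $(a,b)$; carrying it out requires a careful local expansion of the perturbed $\lambda$-functions, which is the most delicate computation of the proof.
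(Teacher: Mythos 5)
Your overall architecture coincides with the paper's: the same chain of transformations, a local parametrix at the origin built from $\widetilde M$, and a matching condition that forces $r_1=r_2=1$, $s=\tfrac14(a^2-5b)$, $t=-a$. However, there is a genuine gap in the step you yourself identify as the crux, namely the construction of the $n$-dependent $\lambda$-functions. You propose to take the $\lambda$-functions ``derived from a vector equilibrium problem \ldots with $\alpha$ and $\tau$ given by \eqref{eq: scaling t tau}.'' This would not work as stated. For generic $(a,b)$ the perturbed point $(\alpha,\tau)$ lies strictly inside one of the four phases, so the true equilibrium measures acquire small gaps in their supports, or small intervals where the constraint $\sigma_2$ is saturated or released, with endpoints $\beta_1,\beta_2,\beta_3$ near the origin that are defined only implicitly and depend on $n$ in a case-dependent way. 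The local behavior of those $\lambda$-functions at the origin would then not have the clean $z^{1/2},z,z^{3/2}$ expansion needed to match the exponents $\psi_1,\psi_2$ and the $t\zeta$ term of the model problem, and tracking the moving endpoints through the steepest descent would be intractable. The paper instead introduces \emph{modified} $\xi$- and $\lambda$-functions defined by an explicit deformed spectral curve $(w^2+\gamma^3)^2=zw^3$ with an auxiliary parameter $\gamma(\alpha,\tau)$ chosen (equation \eqref{eq: gamma}) so that the asymptotics at infinity agree with the unmodified ones up to and including the $1/z$ term, while the supports remain the full critical configuration (two intervals touching at $0$, full imaginary axis, full real line). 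The price is that the associated measures may be negative and may violate the constraint near the origin, i.e.\ they solve only a \emph{modified} equilibrium problem; this is harmless because a local parametrix is built there anyway, but it does force the lens-opening estimates to exclude a neighborhood of $0$.

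Related to this, your plan omits the shrinking disk $\D=D(0,n^{-\delta})$ with $\tfrac16<\delta<\tfrac13$ on which $\P0$ is defined. Both bounds are essential: $\delta<\tfrac13$ is needed so that the exponential decay on the lips of the lenses (which is only of size $|z|^{3/2}$ against a perturbation of size $n^{-1/3}|z|^{1/2}$) survives outside the disk, and $\delta>\tfrac16$ is needed so that the $z$-dependent parameters $r_j(z),t(z)$ (which are genuinely nonconstant analytic functions, not the constants you describe) converge to $1$ and $-a$ on the disk. Without the shrinking disk the matching and the lens estimates cannot both be made to work. Finally, a minor point: the parity condition $n\equiv 0\bmod 2$ enters not through the symmetry of $V$ and $W$ but because the local expansions of $\lambda_j$ in different quadrants differ by $\pi i$, so $e^{n\pi i}$ must equal $1$ for the prefactor $E_n$ and the diagonal exponentials to glue analytically.
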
 
To the best of our knowledge, the kernel $K_{\rm cr}$ and the associated process have not appeared in the literature before.

\begin{remark}
The assumption $n\equiv 0 \mod 6$  in Theorem \ref{th: main theorem} is mainly for technical reasons. We recall that  also in \cite{DK2,DKM,Mo} the assumption $n\equiv 0\mod 3$ was used. Here we will need the additional assumption that $n$ is even. However, we believe that these assumptions are for technical reasons only and that with some additional argument this condition can be dropped entirely.
\end{remark}

\subsection{A double scaling limit for $K_{{\rm cr}}$}

The nature of the critical point suggests that it is possible to take interesting limits of the kernel. Indeed, if we walk away to the right starting from the critical point and staying on the curve $\tau=\sqrt{\alpha+2}$, we expect to retrieve the Painlev\'e II kernel that describes the transition from Case I to Case II. In the local variables $a,b$ this corresponds to setting  $b=-a^2/5$ and taking the limit $a\to +\infty$. This leads to $t=-a$ and $s=a^2/2$ as parameters for $K_{\rm cr}$. In fact, to obtain the limit in the most general form  one  needs to perform a double scaling limit. Hence we set
\begin{align}
\begin{cases}
s=\tfrac{1}{2} a^2,\\
t=-a\left(1-\frac{\sigma}{a^2}\right),
\end{cases}
\end{align}
for fixed $\sigma\in \R$ and take the limit $a\to +\infty$.

Consider the following RH problem.
\begin{rhp} \label{rhp: PII rhp}
We look for a $2 \times 2$ matrix-valued function $\Psi(\zeta)$ satisfying
\begin{itemize}
\item[\rm (1)] $\Psi(\zeta)$ is analytic for  $\zeta \in \C \setminus \Sigma_\Psi$;
\item[\rm (2)] $\Psi_+(\zeta)=\Psi_-(\zeta)J_k$, for $\zeta \in \Gamma_k,$ $k=1,\ldots,4$;
\item[\rm (3)] As $\zeta \to \infty$ we have
\[
\Psi(\zeta)=\left( I+ \mathcal O(\zeta^{-1}) \right) \begin{pmatrix} e^{-i \frac43 \zeta^3-i \nu \zeta} & 0 \\ 0& e^{i\frac43 \zeta^3+i \nu \zeta} \end{pmatrix};
\]
\item[\rm (4)] $\Psi(\zeta)$ is bounded near $\zeta=0$.
\end{itemize}
The contour $\Sigma_\Psi=\Gamma_1 \cup \Gamma_2 \cup \Gamma_3 \cup \Gamma_4$, where
\[
\Gamma_1=e^{\pi i/6}\R^+, \quad \Gamma_2=e^{5\pi i/6}\R^+, \quad \Gamma_3=-\Gamma_1, \quad \Gamma_4=-\Gamma_2.
\] 
All rays are oriented towards infinity. See also Figure \ref{fig: contour Psi}.
\end{rhp}
This RH problem was introduced by Flaschka and Newell in \cite{FN}. They showed that from this RH problem one can retrieve the Hastings-McLeod solution for the Painlev\'e II equation. More precisely, define $q(\nu)$ by 
\begin{align}
q(\nu)=\lim_{\zeta\to \infty} \zeta \Psi_{12}(\zeta;\nu) e^{-i \frac43 \zeta^3-i \nu \zeta}, 
\end{align}
where $\Psi_{12}$ is the $12$-entry of $\Psi$, then $q$ solves the Painlev\'e II equation \eqref{eq:PIIeq} and satisfies \eqref{eq:HScharach}. One can show that there exists a unique solution $\Psi$ to RH problem \ref{rhp: PII rhp} if and only if the Hastings-McLeod solution $q$ has no pole at $\nu$. Since it is known that this solution has no poles on the real axis \cite{HMcL}, it follows that $\Psi$ exists for all $\nu \in \R$.

We now define the kernel $K_{\rm PII}$ by 
\begin{equation} \label{eq: PII kernel}
K_{\rm PII}(u,v;\nu)=\frac{1}{2\pi i(u-v)} \begin{pmatrix} 1 & -1 \end{pmatrix} \Psi^{-1}(u;\nu)\Psi(v;\nu) \begin{pmatrix}1 \\ 1\end{pmatrix},
\end{equation}\marginpar{Are $u,v$ interchanged?}
where $\Psi(\zeta,\nu)$ is the unique solution to  RH problem \ref{rhp: PII rhp}. The kernel $K_{{\rm PII}}$ is the kernel that appears as universal object in random matrix theory, when the limiting mean density vanishes quadratically at an interior point of the support, see \cite{BI,CK}.
\begin{figure}[t]
\centering
\begin{tikzpicture}[scale=1]
\begin{scope}[decoration={markings,mark= at position 0.5 with {\arrow{stealth}}}]
\draw[postaction={decorate}]      (0,0)--node[midway, above]{$\Gamma_1$}(1.73,1) node[right]{$J_1=\begin{pmatrix} 1&0 \\ 1&1 \end{pmatrix}$};
\draw[postaction={decorate}]      (0,0)--node[midway, above]{$\Gamma_2$}(-1.73,1) node[left]{$J_2=\begin{pmatrix} 1&0 \\ -1&1 \end{pmatrix}$};
\draw[postaction={decorate}]      (0,0)--node[midway, below]{$\Gamma_3$}(-1.73,-1) node[left]{$J_3=\begin{pmatrix} 1&1 \\ 0&1 \end{pmatrix}$};
\draw[postaction={decorate}]      (0,0)--node[midway, below]{$\Gamma_4$}(1.73,-1) node[right]{$J_4=\begin{pmatrix} 1&-1 \\ 0&1 \end{pmatrix}$};
\end{scope}
\end{tikzpicture}
\caption{The jump contour $\Sigma_\Psi$ in the complex $\zeta$-plane and the constant jump matrices $J_k$ on each of the rays $\Gamma_k$, $k=1, \ldots,4$.}
\label{fig: contour Psi}
\end{figure}
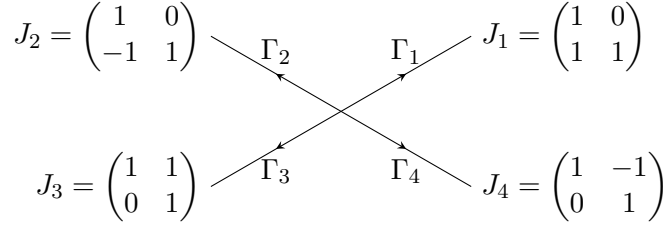

Then we have the following limiting behavior of the critical kernel. 
\begin{theorem} \label{th: PII}
Let $\Kcr$ be as defined in \eqref{eq:defKcr} and $K_{\rm PII}$ as in \eqref{eq: PII kernel}. There exists a function $h$ such that 
\[
\lim_{ a \to +\infty} 2^{5/3} a \frac{h(x,a)}{h(y,a)}\Kcr \left(2^{5/3} ax,2^{5/3} ay;\frac{a^2}{2},-a\left(1-\frac{\sigma}{a^2} \right)\right)=K_{\rm PII}(x,y;2^{5/3}\sigma),
\]
uniformly for $x,y$ in compact sets. 
\end{theorem}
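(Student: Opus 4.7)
The plan is to carry out a Deift/Zhou style asymptotic analysis of the $4\times 4$ RH problem defining $\widetilde M$ as $a\to+\infty$, along the chosen trajectory $(s,t)=(a^{2}/2,\,-a+\sigma/a)$, with spatial variable rescaled by $2^{5/3}a$. Theorem~\ref{th: tacnode rhp} already guarantees existence of $M$ for every $a$ in the range considered, so the only question is its asymptotic shape. The goal is to show that, under this rescaling and after a diagonal gauge transformation, $\widetilde M$ splits asymptotically into a block-diagonal matrix in which one $2\times 2$ block is the identity and the other is the Flaschka--Newell $\Psi$-function for the Hastings--McLeod solution, evaluated at the new spatial variable with parameter $\nu=2^{5/3}\sigma$; the kernel formulas \eqref{eq:defKcr} and \eqref{eq: PII kernel} then convert into each other, with the ratio $h(x,a)/h(y,a)$ coming from the diagonal gauge.

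First I would analyse the exponents $\pm\psi_{j}(\zeta)\mp t\zeta$ appearing in the asymptotic condition of RH problem~\ref{rhp: tacnode rhp}. With $s=a^{2}/2$ and $t=-a(1-\sigma/a^{2})$, the saddles of $\psi_{1}(\zeta)-t\zeta$ satisfy $w^{2}-tw+s=0$ with $w=\zeta^{1/2}$, whose roots are complex and located at $w=a(-1\pm i)/2+\mathcal O(a^{-1})$, i.e.\ at $\zeta=\pm ia^{2}/2+\mathcal O(1)$. Centring a uniformizing variable $\eta$ on the upper saddle via $\zeta^{1/2}=(a/2^{5/6})(\eta-1/\sqrt 2)$, a direct expansion shows that $\psi_{1}(\zeta)-t\zeta$, modulo constants and an affine function of $\zeta$ (to be absorbed into the diagonal gauge), becomes $\tfrac{4}{3}\eta^{3}+2^{5/3}\sigma\,\eta+\mathcal O(a^{-1})$, which is exactly the cubic phase appearing in item~(3) of RH problem~\ref{rhp: PII rhp} at $\nu=2^{5/3}\sigma$. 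The same computation for $\psi_{2}(\zeta)+t\zeta$ yields the conjugate phase on the companion sheet.

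The next step is decoupling. Inspection of Figure~\ref{fig: contour M} shows that the jumps on $\Gamma_{0},\Gamma_{1},\Gamma_{4},\Gamma_{5},\Gamma_{6},\Gamma_{9}$ act within the $\{1,3\}$-block or the $\{2,4\}$-block separately (and the matrix $A$ in (3) of RH problem~\ref{rhp: tacnode rhp} also preserves that block structure), while those on $\Gamma_{2},\Gamma_{3},\Gamma_{7},\Gamma_{8}$ couple the two blocks. I would introduce a diagonal conjugation $\widetilde M\mapsto D(\zeta,a)^{-1}\widetilde M D(\zeta,a)$ with $D$ built from the smooth large parts of $\psi_{2}+t\zeta$ and $\psi_{1}-t\zeta$, and verify by sign analysis of the relevant real parts along each $\Gamma_{k}$ (possibly after slight deformation of the opening angles $\varphi_{1},\varphi_{2}$) that the off-diagonal entries of the coupling jumps decay to zero exponentially as $a\to\infty$. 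A small-norm RH argument then identifies the surviving $2\times 2$ block with $\Psi(\eta;2^{5/3}\sigma)$, the other block being the identity. Substituting into \eqref{eq:defKcr}, the vectors $(-1,1,0,0)$ and $(1,1,0,0)^{T}$ extract one entry from each block at $\eta=x$ and $\eta=y$; the factor $(u-v)^{-1}=(2^{5/3}a(x-y))^{-1}$ combines with the prefactor $2^{5/3}a$ in the theorem to produce the $(x-y)^{-1}$ of \eqref{eq: PII kernel}, while the surviving diagonal entries of $D$ evaluated at $\zeta=iu$ and $\zeta=iv$ produce the scalar ratio $h(x,a)/h(y,a)$.

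The main obstacle I anticipate is twofold. First, the small-norm estimate must be uniform on the entire unbounded contour $\Sigma_{M}$, not merely near the coalescing saddles or the origin; this requires checking signs of $\Re(\psi_{j}(\zeta)\pm t\zeta)$ along each ray and, likely, deforming $\Sigma_{M}$ through the lens-opening framework standard in steepest descent. Second, once the uniformizing variable $\eta$ is introduced, one must verify that the six jumps surviving in the limit assemble correctly (two pairs merging, two becoming redundant) into the four rays of $\Sigma_{\Psi}$ with the exact orientations prescribed by RH problem~\ref{rhp: PII rhp}. The bookkeeping for the branch cuts of $\zeta^{1/2}$, $(-\zeta)^{1/2}$ and for the rotation and reflection encoded in $A$ and $C_{\pm}$ is delicate but essentially routine; once it is settled, the identification of the surviving block with $\Psi$ and the emergence of $h(x,a)/h(y,a)$ are automatic.
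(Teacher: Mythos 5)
Your overall strategy (Deift/Zhou analysis of the $4\times4$ RH problem, asymptotic block decoupling, a local Painlev\'e II parametrix, a small-norm argument) is indeed the route the paper takes, but the step on which everything hinges --- where and how the Flaschka--Newell problem enters --- is localized at the wrong point, and the computation offered in support of it does not hold. With $s=a^2/2$ and $t=-a(1-\sigma/a^2)$ the two roots of $w^2-tw+s=0$ are $w=a(-1\pm i)/2+\mathcal O(a^{-1})$; they are separated by a distance of order $a$, so each is a \emph{simple} saddle, the phase is locally quadratic there, and no Painlev\'e II (coalescence) structure can emerge at these points. Concretely, under your substitution $\zeta^{1/2}=(a/2^{5/6})(\eta-1/\sqrt2)$ the coefficient of $\eta^3$ in $\psi_1(\zeta)-t\zeta=\tfrac23 w^3+2sw-tw^2$ is $\tfrac23 a^3 2^{-5/2}$, i.e.\ of order $a^3$, not $\tfrac43$; the claimed reduction to $\tfrac43\eta^3+2^{5/3}\sigma\eta+\mathcal O(a^{-1})$ is false. (Your map also sends $\zeta=0$ to $\eta=1/\sqrt2$, not to the saddle, so it is not even centered where you say it is.)

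Independently of that computation, the localization is in the wrong place for the kernel: $\Kcr$ in \eqref{eq:defKcr} is evaluated at $\zeta=iu$ with $u=2^{5/3}ax=\mathcal O(a)$, whereas the saddles sit at $\zeta=\pm ia^2/2$. After the natural rescaling $\zeta=a^2z$ (the paper's first transformation \eqref{eq:defM(1)}) the evaluation points tend to $z=0$ while the saddles sit at $z=\pm i/2$; moreover under your uniformization all evaluation points collapse to $\eta=1/\sqrt2$, so no nondegenerate kernel in $x,y$ can be extracted. The correct local variable is \emph{linear} in $\zeta$: the paper deforms four rays to emanate from $\pm1$, introduces $g$-functions \eqref{eq:defgjsteep2} with branch points at $\pm1$ (requiring Airy parametrices there and a nontrivial global parametrix), and at the origin builds a block parametrix out of a rotated Flaschka--Newell problem \emph{and} a rotated sine-model problem. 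The Painlev\'e II phase arises from the identity $a^3(g_2-g_1)/2=\tfrac43\bigl(af_1(z)\bigr)^3-a\nu(z)f_1(z)$ with $f_1(z)=2^{-5/3}z+\mathcal O(z^3)$ and $\nu(0)=2^{5/3}\sigma$ (see \eqref{eq: match1}), i.e.\ from the near-cancellation of the two $g$-functions at $z=0$, not from a saddle of $\psi_1-t\zeta$; this map sends $\zeta=iu$ to $ix$, which is why the limit kernel is a function of $x$ and $y$, and the gauge $h$ is $e^{a^3(g_1+g_2)/2}$ at the evaluation points. As it stands your argument is missing these ingredients and its central expansion fails, so it does not prove the theorem.
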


\subsection{Comparison with tacnode situation in non-colliding Brownian motions}
As mentioned in the introduction, the phenomenon of square root vanishing of a limiting density at an interior point of its support can also be obtained in a model of non-colliding Brownian motions \cite{AFM,DKZ,J}. 

Consider $2n$ particles on the real line, half of which start at time $S=-1$ in position $a$ and the other half start in position $-a$. The walkers perform Brownian motions and are conditioned to return to their starting positions at time $S=1$. Moreover, we condition the particles never to collide. For $n\to \infty$ and a proper scaling of $a$, the paths of the particles in the space-time diagram fill out two ellipses that touch at $S=0$, thus forming a tacnode. The limiting density of particles at $S=0$ is described by two touching semi-circles. Hence, we see square root vanishing at the origin, precisely as in the multi-critical situation in the two-matrix model that we just discussed. In \cite{DKZ} the authors show  (in a more general setup) that in a double scaling limit the process at the local scale near the tacnode is governed by the kernel \begin{align}\label{eq:tacnodekernel}
K_{\rm tac}(u,v;r_1,r_2,s_1,s_2)=\frac{1}{2\pi i(u-v)}\begin{pmatrix} -1 & 0 & 1 & 0 \end{pmatrix}M_+^{-1}(v)M_+(u)\begin{pmatrix} 1\\0\\1\\0 \end{pmatrix},
\end{align}
for $u,v>0$, where $M$ is the solution of the RH problem \ref{rhp: tacnode rhp}. Here $r_j$ and $s_j$ are expressed in terms of the relevant parameters of the problem and $t=0$. The kernel for general $u,v\in \R$ is defined by analytic continuation. 

Although the critical situations in the two different models share the same vanishing exponent, the kernels $K_{\rm tac}$ and $K_{\rm cr}$ differ. This may not be immediately obvious from the representations \eqref{eq:defKcr} and \eqref{eq:tacnodekernel}, but can for example be seen by computing the asymptotic behavior along their diagonals. Based on RH problem \ref{rhp: tacnode rhp} one shows that this asymptotic behavior is different for both kernels, which proves that the processes are indeed different. Indeed, for the kernel in the tacnode situation we have
\begin{align}\label{eq:tacnodekernelexp}
K_{\rm tac}(u,u;r_1,r_2,s_1,s_2)= \frac{r_2\sqrt u}{\pi}-\frac{s_2}{\pi \sqrt u}-\frac{1}{4\pi u}\Re e^{2\psi_2(u)}+\mathcal O(u^{-3/2})
\end{align}
as $u\to +\infty$, where we recall the definition of $\psi_2$ \eqref{eq: def psi2}.
Note that there is no constant term in the expansion and that $1/u$ has a highly oscillatory coefficient of modulus 1. For the kernel appearing in the multi-critical situation in the two-matrix model we have
\begin{align}\label{eq:Kcrexp}
K_{\rm cr}(u,u;s,t)= \frac{\sqrt u}{\sqrt 2 \pi}+\frac{t}{\pi}+\frac{s}{\sqrt 2 \pi \sqrt u}+\mathcal O(u^{-3/2}),
\end{align}
as $u\to +\infty$. Here we do have a constant contribution, but there is no $1/u$ term. Therefore, even for well-chosen parameters, these two expansions cannot be the same and hence the kernels are essentially different. 

The computations that lead to \eqref{eq:tacnodekernelexp} and \eqref{eq:Kcrexp} follow by rather straightforward computations on the asymptotic behavior of the function $M$ and are given in Appendix A for completeness.

\subsection{Overview of  the rest of the paper}

The rest of the paper is devoted to the proof of our main results.  The proof of Theorem \ref{th: main theorem} is based on a Deift/Zhou steepest descent analysis on a RH problem characterizing the biorthogonal polynomials. This RH problem will be introduced in Section 4. The steepest descent analysis will be given in the same section. In this analysis we will make use of certain meromorphic functions on a Riemann surface that we will define first in Section 3.   In Section 5 we prove Theorem \ref{th: tacnode rhp}. The proof of Theorem \ref{th: PII} can be found in Section 6.

\section{Meromorphic functions and  Riemann surfaces} \label{sec: Riemann surface}

In this section we define several meromorphic functions and derive some of their properties  that we will need later on. In particular, we define the $\lambda$-functions that we use in the normalization step in the steepest descent analysis. The definitions as presented here differ at some point from \cite{DKM}, except in the critical situation $\alpha=-1$ and $\tau=1$. To emphasize this difference we speak about modified Riemann surface and modified $\xi$,$\lambda$-functions. These modifications are typical when dealing with double scaling limits, see for example \cite{BK,DKZ,DK,KMW} for the case of more than two sheets.

Throughout this section, we will introduce new functions ($\theta$, $w,$ $\xi$, $\lambda$, $h$, $g$, $H$, $G, \ldots$) and constants ($\gamma$, $c,\ldots$).  Although it is not indicated in the notation, these functions and numbers depend on $\alpha$ and $\tau$. We will always assume that the parameters $\tau$ and $\alpha$ are close to their critical values $\tau=1$ and $\alpha=-1$. When dealing with functions or numbers associated with the critical parameter values we add a star to the notation, thus, we write $w^*$, $\xi^*$, $\lambda^*$, $\gamma^*$, $c^*,\ldots$. 

\subsection{Definition of the functions $\theta_j$} \label{sec: theta}
We start with defining a function $\theta$ on  a three-sheeted Riemann surface. The definition is exactly the same as in \cite{DKM}. 

Consider the equation 
\[s^3+\alpha s=\tau z,\]
for $\alpha <0$. If $z=x\in \R$ and $|x|<x^*(\alpha)=\frac{2}{\tau}\left(\frac{-\alpha}{3}\right)^{3/2}$, then the equation has three real solutions $s_j=s_j(x)$ that we take to be ordered such that
\[W(s_1)-\tau x s_1 \leq W(s_2)-\tau xs_2\leq W(s_3)-\tau x s_3.\]
In other words, the function $W(s)-\tau x z$ has a global minimum at $s_1$, a local minimum at $s_2$ and a local maximum at $s_3$. The functions $x\mapsto s_j(x)$ extend to a meromorphic function in the following way. 

Consider the three-sheeted Riemann surface $\mathcal S$ with sheets
\begin{align}
\left\{
\begin{array}{ll}\mathcal S_1&=\C\setminus i \R\\
\mathcal S_2&=\C\setminus \left((-\infty,-x^*(\alpha)]\cup[x^*(\alpha),\infty)\cup i \R\right)\\
\mathcal S_3&=\C\setminus \left((-\infty,-x^*(\alpha)]\cup[x^*(\alpha),\infty)\right)
\end{array} \right.
\end{align} 
Then the function $s_j$ has an analytic continuation to $S_j$, which we also denote $s_j$. It is straightforward to check that $s_2$ and $s_3$ can also be obtained by analytic continuation of $s_1$ onto $\mathcal S_2$ and $\mathcal S_3$. 

Finally, we define the functions $\theta_j$ by 
\begin{equation} \label{eq: def thetaj}
\theta_j(z)=-W(s_j(z))+\tau z s_j(z), \quad j=1,2,3.
\end{equation}
We will need the asymptotic behavior as $z\to \infty$ that is given by \cite[Lem. 2.4]{DKM}
\begin{align}
\theta_1(z)&= \begin{cases} 
\frac34 (\tau z)^{4/3}-\frac{\alpha}{2}(\tau z )^{2/3}+\frac{\alpha^2}{6}-\frac{\alpha^3}{54}(\tau z )^{-2/3}+\mathcal O\left(z^{-4/3} \right), & \text{in }I \cup IV, \\
\frac34 \omega(\tau z)^{4/3}-\frac{\alpha}{2}\omega^2(\tau z )^{2/3}+\frac{\alpha^2}{6}-\frac{\alpha^3}{54}\omega(\tau z )^{-2/3}+\mathcal O\left(z^{-4/3} \right), & \text{in }II, \\
\frac34 \omega^2(\tau z)^{4/3}-\frac{\alpha}{2}\omega(\tau z )^{2/3}+\frac{\alpha^2}{6}-\frac{\alpha^3}{54}\omega^2(\tau z )^{-2/3}+\mathcal O\left(z^{-4/3} \right), & \text{in }III,
\end{cases} \label{eq: asym theta 1}\\
\theta_2(z)&= \begin{cases} 
\frac34 \omega(\tau z)^{4/3}-\frac{\alpha}{2}\omega^2(\tau z )^{2/3}+\frac{\alpha^2}{6}-\frac{\alpha^3}{54}\omega(\tau z )^{-2/3}+\mathcal O\left(z^{-4/3} \right), & \text{in }I, \\
\frac34 (\tau z)^{4/3}-\frac{\alpha}{2}(\tau z )^{2/3}+\frac{\alpha^2}{6}-\frac{\alpha^3}{54}(\tau z )^{-2/3}+\mathcal O\left(z^{-4/3} \right), & \text{in }II \cup III, \\
\frac34 \omega^2(\tau z)^{4/3}-\frac{\alpha}{2}\omega(\tau z )^{2/3}+\frac{\alpha^2}{6}-\frac{\alpha^3}{54}\omega^2(\tau z )^{-2/3}+\mathcal O\left(z^{-4/3} \right), & \text{in }IV,
\end{cases} \label{eq: asym theta 2}\\
\theta_3(z)&= \begin{cases} 
\frac34 \omega^2(\tau z)^{4/3}-\frac{\alpha}{2}\omega(\tau z )^{2/3}+\frac{\alpha^2}{6}-\frac{\alpha^3}{54}\omega^2(\tau z )^{-2/3}+\mathcal O\left(z^{-4/3} \right), & \text{in }I \cup II, \\
\frac34 \omega(\tau z)^{4/3}-\frac{\alpha}{2}\omega^2(\tau z )^{2/3}+\frac{\alpha^2}{6}-\frac{\alpha^3}{54}\omega(\tau z )^{-2/3}+\mathcal O\left(z^{-4/3} \right), & \text{in }III \cup IV.
\end{cases} \label{eq: asym theta 3}
\end{align}
Here, $I$, $II$, $III$, and $IV$ denote the four open quadrants of the complex plane and $$\omega=e^{2 \pi i /3}.$$

\subsection{Modified Riemann surface $\mathcal R$}

Let us first introduce an auxiliary parameter $\gamma$ that is completely determined by $\alpha$ and $\tau$ but will prove to be convenient for notation. We define $\gamma=\gamma(\alpha,\tau)$ as the solution of
\begin{equation} \label{eq: gamma}
\alpha \tau^{2/3}=\frac{3}{\gamma}-9\gamma^2+5\tau^{4/3}\gamma,
\end{equation}
that tends to 1 as $\tau \to 1$ and $\alpha \to -1$. In the triple scaling limit, i.e. we let $\alpha$ and $\tau$ depend on $n$ as in \eqref{eq: scaling t tau} while $n \to \infty$, we have
\begin{equation} \label{eq: scaling gamma}
\gamma=1+\tfrac13 a n^{-1/3} + \left(  \tfrac{11}{144}a^2 + \tfrac{47}{48}b \right) n^{-2/3} + \mathcal O (n^{-1}).
\end{equation}

Next we introduce a four-sheeted Riemann surface $\mathcal R$ with sheets
\begin{align*}
\mathcal R_1 &= \C \setminus [-c,c], & \mathcal R_2 &= \C \setminus ([-c,c]\cup i\R), \\
\mathcal R_3 &= \C \setminus (\R \cup i \R), & \mathcal R_4 &= \C \setminus \R, 
\end{align*}
where
\begin{equation} \label{eq: c}
c=\frac{16}{3 \sqrt 3}\gamma^{3/2}.
\end{equation}
We connect the sheets $\mathcal R_j$ to each other in the usual crosswise manner along the cuts $[-c,c]$, $\R$, and $i \R$. For example, $\mathcal R_1$ is connected to $\mathcal R_2$ along the cut $[-c,c]$ and so on.  Moreover, the Riemann surface is compactified by adding two points at infinity. The first point $\infty_1$ is added to the first sheet, while the second point $\infty_2$ connects the other sheets. This Riemann surface $\mathcal R$ has genus zero. It is shown in Figure \ref{fig: Riemann surface}. 

Let us now consider the equation 
\begin{equation} \label{eq: algebraic equation for w}
(w^2+\gamma^3)^2=zw^3.
\end{equation}
We claim that this equation defines an algebraic function on the Riemann surface $\mathcal R$.  Indeed, for each $z\in \C$ equation \eqref{eq: algebraic equation for w} has four solutions,  denoted $w_j(z)$, $j=1,2,3,4$. We order the solutions such that
\begin{equation} \label{eq: order}
|w_1(z)| \geq |w_2(z)| \geq |w_3(z)| \geq |w_4(z)|.
\end{equation}
For certain $z\in \C$ this can be done in different ways. In that case we arbitrarily pick a particular order. It follows from the next lemma that the functions $w_j(z)$ can be interpreted as a meromorphic function on the Riemann surface $\mathcal R$. 

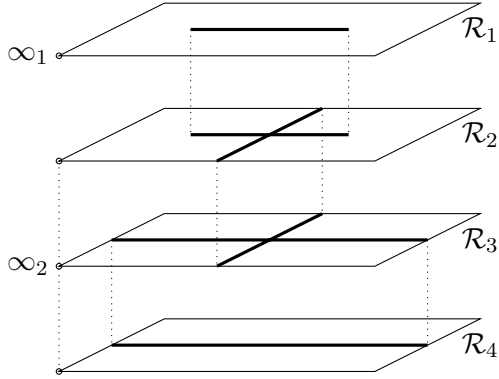
\begin{figure}[t]
\centering
\begin{tikzpicture}[scale=0.7]
\draw (0,0)--(6,0)--(8,1)--(2,1)--cycle 
      (0,-2)--(6,-2)--(8,-1)--(2,-1)--cycle
      (0,-4)--(6,-4)--(8,-3)--(2,-3)--cycle
      (0,-6)--(6,-6)--(8,-5)--(2,-5)--cycle;
\draw[very thick] (2.5,0.5)--(5.5,0.5)
                  (2.5,-1.5)--(5.5,-1.5)
                  (3,-2)--(5,-1)
                  (3,-4)--(5,-3)
                  (1,-3.5)--(7,-3.5)
                  (1,-5.5)--(7,-5.5);
\draw[dotted] (2.5,0.5)--(2.5,-1.5)
              (3,-2)-- (3,-4)
              (1,-3.5)--(1,-5.5)
              (5.5,0.5)--(5.5,-1.5)
              (5,-1)--(5,-3)
              (7,-3.5)--(7,-5.5)
              (0,-2)--(0,-6);
\draw (8,0.5) node{$\mathcal R_1$}
      (8,-1.5) node{$\mathcal R_2$}
      (8,-3.5) node{$\mathcal R_3$}
      (8,-5.5) node{$\mathcal R_4$};
\draw (0,0) circle (1.5pt) node[left]{$\infty_1$}
      (0,-2) circle (1.5pt)
      (0,-4) circle (1.5pt) node[left]{$\infty_2$}
      (0,-6) circle (1.5pt);
\end{tikzpicture}
\caption{Modified Riemann surface for values of the parameters $\alpha$ and $\tau$ close to $\alpha=-1$ and $\tau=1$. Note the points at infinity: $\infty_1$ belongs to $\mathcal R_1$, while $\infty_2$ is  common to the other sheets.}
\label{fig: Riemann surface}
\end{figure}

\begin{lemma} \label{lemma: w meromorf}
The function $w_j(z)$ is analytic on $\mathcal R_j,$ $j=1,2,3,4$, and the following relations hold
\begin{align*}
w_{j,+}(x) &= \overline{w_{j,-}(x)}, && x \in (-c,0)\cup(0,c), \qquad j=1,2, \\
w_{1,\pm}(x) &= w_{2,\mp}(x), && x \in (-c,0)\cup(0,c), \\
w_{j,+}(z) &= -\overline{w_{j,-}(z)}, && z \in i\R \setminus \{0\}, \qquad j=2,3, \\
w_{2,\pm}(z) &= w_{3,\mp}(z), && z \in i\R \setminus \{0\}, \\
w_{j,+}(x) &= \overline{w_{j,-}(x)}, && x \in \R \setminus \{0\}, \qquad j=3,4,\\
w_{3,\pm}(x) &= w_{4,\mp}(x), && x \in \R \setminus \{0\}.
\end{align*}
Here $[-c,c]$ and $\R$ are oriented from left to right and $i\R$ from $-i\infty$ to $i\infty$. Hence, the function $\bigcup_{j=1}^4 \mathcal R_j \to \C:  \mathcal R_j \ni z \mapsto w_j(z)$ has an analytic continuation to a meromorphic function $w: \mathcal R \to \overline \C$. Moreover, we have the symmetry condition
\begin{equation} \label{eq: symmetry w}
w_j(z)=-w_j(-z), \qquad z \in \C \setminus (\R \cup i\R).
\end{equation}
\end{lemma}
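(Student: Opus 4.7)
My plan is to treat \eqref{eq: algebraic equation for w} as the quartic $P(w,z) := w^4 - zw^3 + 2\gamma^3 w^2 + \gamma^6 = 0$ whose coefficients depend holomorphically on $z$, and to track the four roots by their moduli. First I would determine the branch points by solving $P = \partial_w P = 0$ simultaneously. From $\partial_w P = w(4w^2 - 3zw + 4\gamma^3) = 0$ combined with $P=0$ one finds $w^2 = 3\gamma^3$, hence $z = \pm c$ with $c$ as in \eqref{eq: c}. At $z = 0$ the equation collapses to $(w^2+\gamma^3)^2 = 0$ so that $w = \pm i\gamma^{3/2}$ are order-two ramification points over $0$. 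At infinity, one branch satisfies $w \sim z$ while the remaining three satisfy $w^3 \sim \gamma^6/z$, producing a triple ramification point. A Riemann-Hurwitz count then confirms genus zero, with the two points over infinity arranged as a simple point $\infty_1$ and a triple point $\infty_2$, matching Figure \ref{fig: Riemann surface}.

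Next I would match the ordering \eqref{eq: order} with the sheets $\mathcal{R}_1,\dots,\mathcal{R}_4$. The locus where two roots have equal modulus is real-analytic of codimension one, and two symmetries of $P$ pin down where it lies. Since $\gamma$ is real, $P$ has real coefficients when $z \in \R$, so non-real roots come in complex-conjugate pairs of equal modulus; moreover a direct computation gives $\overline{P(-\bar w, z)} = P(w,z)$ for $z \in i\R$, so on the imaginary axis the roots come in pairs $\{w,-\bar w\}$, again of equal modulus. Combining these with the asymptotic behavior at infinity and the location of the branch points forces the modulus coincidences to occur precisely as $|w_1|=|w_2|$ on $[-c,c]$, $|w_2|=|w_3|$ on $i\R$, and $|w_3|=|w_4|$ on $\R\setminus\{0\}$; on $(-c,c)$ the quartic splits into two conjugate pairs so that both $(w_1,w_2)$ and $(w_3,w_4)$ swap simultaneously. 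This exactly reproduces the cut structure of $\mathcal R$.

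With the sheet-to-region assignment in place, all six boundary identities follow from the two involutions. Complex conjugation $w \mapsto \bar w$ on $\R$ yields $w_{j,+} = \overline{w_{j,-}}$ for $j=1,2$ on $(-c,c)\setminus\{0\}$ and for $j=3,4$ on $\R \setminus \{0\}$, while the involution $w \mapsto -\bar w$ on $i\R$ yields $w_{j,+}=-\overline{w_{j,-}}$ for $j=2,3$; the corresponding swap statements $w_{j,\pm}=w_{j+1,\mp}$ are rephrasings of the monodromy encircling the branch points $\pm c$ and $0$. The symmetry \eqref{eq: symmetry w} is automatic from the invariance $P(-w,-z) = P(w,z)$: the map $(z,w) \mapsto (-z,-w)$ preserves $P$ and the modulus of $w$, hence permutes the ordered roots without re-ordering them, giving $w_j(z) = -w_j(-z)$.

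The main obstacle I anticipate is rigorously confining the modulus coincidences to the described arcs, i.e.\ excluding spurious coincidence loci away from $\R \cup i\R$. A clean resolution is to exploit the genus-zero character of $\mathcal R$ and introduce a rational uniformization $t \mapsto (z(t),w(t))$ of the algebraic curve. The equation $|w_i(z)| = |w_j(z)|$ then reduces to an explicit algebraic/analytic equation in $t$ whose solution set can be identified by inspection with $\R \cup i\R$. Since such a uniformization is also required for the construction of the conformal map in Section \ref{sec:P0}, setting it up here would serve both purposes simultaneously.
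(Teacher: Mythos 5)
Your route is genuinely different from the paper's. The paper disposes of the boundary relations in one line by citing \cite{DGK} (Theorem 5.3 and its proof) for the case $\gamma=1$ and then rescaling $w\mapsto \gamma^{3/2}w$, $z\mapsto\gamma^{3/2}z$ to reach general $\gamma$; only the symmetry \eqref{eq: symmetry w} is argued directly there, and on that point your argument is the same as the paper's (both $\{w_j(z)\}$ and $\{-w_j(-z)\}$ solve \eqref{eq: algebraic equation for w}, and both must respect the ordering \eqref{eq: order}). Your self-contained skeleton checks out: the branch points ($w^2=3\gamma^3$ giving $z=\pm c$ with $c$ as in \eqref{eq: c}, the double roots $w=\pm i\gamma^{3/2}$ over $z=0$, the $1+3$ splitting over $\infty$), the Riemann--Hurwitz count giving genus zero, and the two conjugation symmetries are all correct. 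Your claim that on $(-c,0)\cup(0,c)$ the quartic splits into two conjugate pairs can even be made quantitative: for real $w>0$ the function $w\mapsto (w^2+\gamma^3)^2/w^3$ attains its minimum value exactly $c$ at $w=\sqrt3\,\gamma^{3/2}$, so there are no real roots for $0<|x|<c$ and exactly two for $|x|>c$.

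However, the step you yourself flag as the main obstacle is a genuine gap, and it is precisely the first assertion of the lemma (analyticity of $w_j$ on $\mathcal R_j$). The conjugation symmetries show that equal-modulus pairs \emph{exist} on $\R$ and on $i\R$; they neither exclude coincidence loci away from $\R\cup i\R$ nor, by themselves, determine \emph{which} ordered roots participate in each coincidence --- note that near $z=0$ all four roots have modulus close to $\gamma^{3/2}$, so the assignment there is decided only by the $z^{1/2}$ corrections recorded in Lemma \ref{lemma: asymptotics w around 0}. Until this is settled, the sheet-to-region matching in your second paragraph is asserted rather than proved, and with it all six boundary identities. Your proposed repair is the right one and is easier than you suggest: since $z=(w^2+\gamma^3)^2/w^3$, the variable $w$ itself is a rational uniformizer of the curve, the four branches over $z$ are the four preimages of $z$ under this degree-four rational map, and the coincidence loci become explicit curves in the $w$-plane --- this is exactly what Figure \ref{fig: cuts kappa} depicts, and in substance it is what the cited theorem of \cite{DGK} establishes. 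Either carry out that inspection or invoke the citation; as written, the plan does not yet close the argument.
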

\begin{proof}
 For the case $\gamma=1$ the relations on the cuts follow from \cite[Th. 5.3]{DGK} and the proof of that theorem. A rescaling proves the relations for general $\gamma$.
 
Due to symmetry both $\{w_j(z), j=1,2,3,4 \}$ and  $\{-w_j(-z),j=1,2,3,4\}$ solve the algebraic equation \eqref{eq: algebraic equation for w}. The fact that both sets of solutions have to satisfy \eqref{eq: order} proves \eqref{eq: symmetry w}.
\end{proof}

The following lemmas on the asymptotic behavior of $w_j(z)$ for $z\to 0$ and $z\to \infty$ will be important to us.  Recall the notation $\omega=e^{2 \pi i /3}$.

\begin{lemma} \label{lemma: asymptotics w around 0}
The meromorphic function $w$ on $\mathcal R$ has the following asymptotic behavior as $z \to 0$
\begin{align*}
w_1(z)&=
 i\gamma^{3/2}+\tfrac{1}{2}e^{\pi i/4}\gamma^{3/4} z^{1/2}+\tfrac{1}{4}z+\tfrac{7}{64}e^{-\pi i/4} \gamma^{-3/4}z^{3/2} + \mathcal O(z^2), \qquad \text{ in } I \cup II, \\
w_2(z)&=\begin{cases}
 i\gamma^{3/2}+\frac{1}{2}e^{-3\pi i/4} \gamma^{3/4}z^{1/2}+\frac{1}{4}z+\frac{7}{64}e^{3\pi i/4} \gamma^{-3/4}z^{3/2} + \mathcal O(z^2), & \text{ in } III, \\
 i\gamma^{3/2}+\frac{1}{2}e^{\pi i/4} \gamma^{3/4}z^{1/2}+\frac{1}{4}z+\frac{7}{64}e^{-\pi i/4} \gamma^{-3/4}z^{3/2} + \mathcal O(z^2), & \text{ in } IV,
\end{cases} \\
w_3(z)&=\begin{cases}
 i\gamma^{3/2}+\frac{1}{2}e^{\pi i/4} \gamma^{3/4}z^{1/2}+\frac{1}{4}z+\frac{7}{64}e^{-\pi i/4} \gamma^{-3/4}z^{3/2} + \mathcal O(z^2), & \text{ in } III, \\
 i\gamma^{3/2}+\tfrac{1}{2}e^{-3\pi i/4} \gamma^{3/4}z^{1/2}+\tfrac{1}{4}z+\tfrac{7}{64}e^{3\pi i/4} \gamma^{-3/4}z^{3/2} + \mathcal O(z^2), & \text{ in } IV,
\end{cases} \\
w_4(z)&=
 i\gamma^{3/2}+\frac{1}{2}e^{-3\pi i/4} \gamma^{3/4}z^{1/2}+\frac{1}{4}z+\frac{7}{64}e^{3\pi i/4} \gamma^{-3/4}z^{3/2} + \mathcal O(z^2), \qquad \text{ in } I \cup II.
\end{align*}
In the remaining quadrants the expansion is determined by \eqref{eq: symmetry w} and from above.
\end{lemma}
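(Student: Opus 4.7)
\medskip

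The plan is to exploit the algebraic equation \eqref{eq: algebraic equation for w} directly. Setting $z=0$ yields $(w^2+\gamma^3)^2 = 0$, so $w^2 = -\gamma^3$ is a double root at each of $w = \pm i\gamma^{3/2}$. Hence $z=0$ is a branch point for $w$, and around it each local solution admits a Puiseux expansion in $z^{1/2}$. The two double roots give rise to two pairs of branches, accounting for exactly the four sheets of $\mathcal R$.

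To obtain the coefficients, substitute $w = i\gamma^{3/2} + \epsilon$ into \eqref{eq: algebraic equation for w} and rewrite it as
\[
\epsilon^2\bigl(2i\gamma^{3/2}+\epsilon\bigr)^2 \;=\; z\bigl(-i\gamma^{9/2} - 3\gamma^3\epsilon + 3i\gamma^{3/2}\epsilon^2 + \epsilon^3\bigr).
\]
Plugging in the ansatz $\epsilon = \sum_{k\geq 1}\epsilon_k\, z^{k/2}$ and matching powers of $z$ recursively, the leading equation at order $z$ gives $-4\gamma^3\epsilon_1^2 = -i\gamma^{9/2}$, so $\epsilon_1 = \pm\tfrac12 e^{i\pi/4}\gamma^{3/4}$; the order $z^{3/2}$ equation forces $\epsilon_2 = \tfrac14$; the order $z^{2}$ equation then gives $\epsilon_3 = \pm\tfrac{7}{64}e^{-i\pi/4}\gamma^{-3/4}$, and so on, where the two signs of $\epsilon_1$ and $\epsilon_3$ are correlated. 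The analogous substitution $w = -i\gamma^{3/2}+\epsilon$ produces the conjugate expansion at the lower double point, which is consistent with the symmetry \eqref{eq: symmetry w}.

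It remains to assign each local branch to the correct sheet in each quadrant. This is done by combining three pieces of information: (i) the ordering \eqref{eq: order}, which near $z=0$ is dictated by the sign of $\Re\epsilon_1 z^{1/2}$; (ii) the cut structure of Lemma \ref{lemma: w meromorf}, which specifies along which half-axis neighboring sheets are glued and therefore which sheet $w_j$ continues into as $z$ crosses into a new quadrant; and (iii) the symmetry $w_j(z) = -w_j(-z)$, which transports an expansion at $+i\gamma^{3/2}$ to one at $-i\gamma^{3/2}$ via $z\mapsto -z$. In the first quadrant the principal branches of $z^{1/2}$ allow only the two sign choices, which must be distributed to $w_1,w_4$ (converging to $+i\gamma^{3/2}$) and to $w_2,w_3$ (converging to $-i\gamma^{3/2}$) in accordance with the gluing rules; continuing across $i\R$ into quadrants $III,IV$ swaps $w_2\leftrightarrow w_3$, which is precisely the difference between the two rows in the $w_2,w_3$ expansions.

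The main obstacle is the final bookkeeping step: keeping track of how the four sheets are glued together in a neighborhood of $z=0$ and correctly pairing the two possible signs of $\epsilon_1$ with the chosen branch of $z^{1/2}$ in each of the four quadrants. Once this labelling is correct, all higher-order coefficients follow unambiguously from the polynomial recursion above, and the expansions in the remaining quadrants are recovered from \eqref{eq: symmetry w}.
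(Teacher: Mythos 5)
Your approach is the natural one and, as far as I can tell, the one the authors intend: the paper states this lemma without proof, and the Puiseux expansion of $(w^2+\gamma^3)^2=zw^3$ at the double roots $w=\pm i\gamma^{3/2}$ is exactly the right computation. I verified your recursion: order $z$ gives $\epsilon_1^2=\tfrac{i}{4}\gamma^{3/2}$, order $z^{3/2}$ gives $\epsilon_2=\tfrac14$, and order $z^2$ gives $\epsilon_1\epsilon_3=\tfrac{7}{128}$, which reproduces all the stated coefficients and the correlation of signs between $\epsilon_1$ and $\epsilon_3$. (An alternative the paper hints at elsewhere is to do $\gamma=1$ and rescale $w\mapsto\gamma^{3/2}w$, $z\mapsto\gamma^{3/2}z$, but your direct computation is equivalent.)

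One small correction in the labelling step: since the base points $\pm i\gamma^{3/2}$ are purely imaginary, $|{\pm i\gamma^{3/2}}+\delta|^2=\gamma^3\pm2\gamma^{3/2}\Im\delta+|\delta|^2$, so the ordering \eqref{eq: order} near $z=0$ is dictated by the sign of $\Im(\epsilon_1 z^{1/2})$ (with opposite conventions at the two base points), not $\Re(\epsilon_1 z^{1/2})$. With that fix the bookkeeping goes through as you describe: e.g.\ in quadrant $I$ one checks $\Im(\tfrac12 e^{i\pi/4}\gamma^{3/4}z^{1/2})>0$, consistent with $|w_1|\geq|w_4|$, and the gluing relations of Lemma \ref{lemma: w meromorf} together with \eqref{eq: symmetry w} then propagate the labelling to the other quadrants.
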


\begin{lemma} \label{lemma: asymptotics w around infinity}
The meromorphic function $w$ on $\mathcal R$ has the following asymptotic behavior as $z \to \infty$
\begin{align*}
w_1(z)&=z-2 \gamma^3 \tfrac{1}{z}-\tfrac{5\gamma^6}{z^3}+\mathcal O(z^{-5}), \qquad \text{ in }\C, \\
w_2(z)&=\begin{cases}
\gamma^2 z^{-1/3}+\frac{2}{3} \gamma^3 z^{-1}+\frac{7}{9} \gamma^4 z^{-5/3}+ \mathcal O(z^{-7/3}), & \text{ in } I\cup IV, \\
\omega^2 \gamma^2z^{-1/3}+\frac{2}{3}\gamma^3z^{-1}+\frac{7}{9}\omega \gamma^4z^{-5/3}+ \mathcal O(z^{-7/3}), & \text{ in } II, \\
\omega \gamma^2z^{-1/3}+\frac{2}{3}\gamma^3z^{-1}+\frac{7}{9}\omega^2 \gamma^4z^{-5/3}+ \mathcal O(z^{-7/3}), & \text{ in } III, 
\end{cases} \\
w_3(z)&=\begin{cases}
\omega^2 \gamma^2z^{-1/3}+\frac{2}{3}\gamma^3z^{-1}+\frac{7}{9}\omega \gamma^4z^{-5/3}+ \mathcal O(z^{-7/3}), & \text{ in } I, \\
\gamma^2 z^{-1/3}+\frac{2}{3}\gamma^3z^{-1}+\frac{7}{9} \gamma^4z^{-5/3}+ \mathcal O(z^{-7/3}), & \text{ in } II\cup III, \\
\omega \gamma^2z^{-1/3}+\frac{2}{3}\gamma^3z^{-1}+\frac{7}{9}\omega^2 \gamma^4z^{-5/3}+ \mathcal O(z^{-7/3}), & \text{ in } IV, 
\end{cases} \\
w_4(z)&=\begin{cases}
\omega \gamma^2z^{-1/3}+\frac{2}{3}\gamma^3z^{-1}+\frac{7}{9}\omega^2 \gamma^4z^{-5/3}+ \mathcal O(z^{-7/3}), & \text{ in } I \cup II, \\
\omega^2 \gamma^2z^{-1/3}+\frac{2}{3}\gamma^3z^{-1}+\frac{7}{9}\omega \gamma^4z^{-5/3}+ \mathcal O(z^{-7/3}), & \text{ in } III \cup IV.
\end{cases}
\end{align*}
\end{lemma}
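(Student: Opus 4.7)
The plan is to work directly with the defining algebraic equation $(w^2+\gamma^3)^2 = zw^3$ together with Lemma \ref{lemma: w meromorf}. Balancing dominant monomials as $z \to \infty$ gives two regimes: either $w^4 \sim zw^3$, producing a unique large branch with $w \sim z$ over the point $\infty_1 \in \mathcal R_1$, or $\gamma^6 \sim zw^3$, producing three small branches near the ramification point $\infty_2$ with leading behaviour $\omega^k \gamma^2 z^{-1/3}$ for $k=0,1,2$. The symmetry $w_j(-z) = -w_j(z)$ from \eqref{eq: symmetry w} then forces the large branch to be an odd Laurent series in $z$, and the small branches to have expansions involving only the powers $z^{-(2k+1)/3}$, $k\geq 0$.

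For the large branch, I would insert $w_1(z) = z + A_1 z^{-1} + A_3 z^{-3} + \mathcal O(z^{-5})$ into the equation and match the coefficients of $z^2$ and $z^0$: the former gives $A_1 = -2\gamma^3$ and the latter $A_3 = -5\gamma^6$. Since $w_1$ is single-valued and analytic on $\mathcal R_1 = \C \setminus [-c,c]$, this expansion holds in all of $\C$ outside a disk. For the small branches, I would insert $w(z) = a_1 z^{-1/3} + a_2 z^{-1} + a_3 z^{-5/3} + \mathcal O(z^{-7/3})$ and match in succession the coefficients of $z^0$, $z^{-2/3}$ and $z^{-4/3}$, obtaining $a_1^3 = \gamma^6$, $a_2 = \tfrac{2}{3}\gamma^3$ and $a_3 = \tfrac{7}{9}\gamma^4 a_1^{-1}$. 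Each of the three choices $a_1 \in \{\gamma^2, \omega\gamma^2, \omega^2\gamma^2\}$ produces one of the three candidate expansions stated in the lemma.

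The remaining task, and the main subtlety of the proof, is to identify for each $w_j$ ($j=2,3,4$) and each quadrant of $\mathbb C$ which cube root actually occurs. The magnitude ordering \eqref{eq: order} cannot discriminate here, because all three small branches share the same leading modulus $\gamma^2 |z|^{-1/3}$ and even the same first correction $\tfrac{2}{3}\gamma^3 z^{-1}$; the identification is instead forced by the analytic continuation structure of $\mathcal R$. I would start from a reference case, for instance $w_2$ in quadrant $I$: on the real axis $x>c$ the algebraic equation has real coefficients, so exactly one small solution is real, and combining the jumps $w_{1,\pm} = w_{2,\mp}$ on $(0,c)$ with $w_{1,+} = \overline{w_{1,-}}$ one verifies that $w_2$ is analytic through $(c,\infty)$ and real there, which forces $a_1 = \gamma^2$. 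The remaining eleven cases then follow by propagating this identification across the cuts using the jump relations $w_{2,\pm} = w_{3,\mp}$ on $i\R$ and $w_{3,\pm} = w_{4,\mp}$ on $\R$, together with the symmetry $w_j(-z) = -w_j(z)$. The hard part is purely book-keeping, but each jump either shifts the cube root by a power of $\omega$ or introduces a conjugation, so a careful accounting is required to reproduce the pattern stated in the lemma.
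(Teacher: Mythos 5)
The paper states this lemma without proof, so there is nothing to compare against line by line; your route (Puiseux expansion at the two points at infinity, coefficient matching in the algebraic equation $(w^2+\gamma^3)^2=zw^3$, then branch identification) is the natural one and is correct in substance. Your coefficient computations check out: matching $z^2$ and $z^0$ for the large branch gives $A_1=-2\gamma^3$, $A_3=-5\gamma^6$, and matching $z^0$, $z^{-2/3}$, $z^{-4/3}$ for the small branches gives $a_1^3=\gamma^6$, $a_2=\tfrac23\gamma^3$ and $3a_1^2a_3=\tfrac{7}{3}\gamma^6 a_1$, i.e.\ $a_3=\tfrac79\gamma^6 a_1^{-1}=\tfrac79\gamma^4\omega^{2k}$ when $a_1=\gamma^2\omega^k$. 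Note the typo in your formula: $a_3=\tfrac79\gamma^4 a_1^{-1}$ should read $a_3=\tfrac79\gamma^6 a_1^{-1}$; as written it is off by $\gamma^2$, though the three candidate expansions you then list are the correct ones.

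Two remarks on the identification step. First, your claim that the ordering \eqref{eq: order} ``cannot discriminate'' among the small branches is not quite right and in fact undersells the cleanest anchor: for real $x>c$ the three small roots are $\gamma^2\omega^k x^{-1/3}+\tfrac23\gamma^3 x^{-1}+\cdots$, and $|\gamma^2\omega^k x^{-1/3}+\tfrac23\gamma^3x^{-1}|^2$ differs between $k=0$ and $k=\pm1$ at order $x^{-4/3}$ through the cross term $\tfrac43\gamma^5 x^{-4/3}\Re\omega^k$; since $\Re\omega^0=1>-\tfrac12=\Re\omega^{\pm1}$, the real root is strictly the largest in modulus, so the ordering itself forces $w_2(x)=\gamma^2x^{-1/3}+\cdots$ on $(c,\infty)$ with no appeal to the jump relations. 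Second, your derivation of realness of $w_2$ on $(c,\infty)$ from the jumps on $(0,c)$ is incomplete as stated (those relations live on the cut, and transporting the Schwarz symmetry to $(c,\infty)$ needs a reflection/uniqueness argument); the ordering argument above avoids this. Once the anchor is fixed, your propagation plan — analyticity of $w_j$ across the uncut parts of the axes, the relations $w_{2,\pm}=w_{3,\mp}$ on $i\R$ and $w_{3,\pm}=w_{4,\mp}$ on $\R$, and the symmetry $w_j(-z)=-w_j(z)$ together with $(-z)^{-1/3}=e^{\pm i\pi/3}z^{-1/3}$ — does reproduce the stated pattern of cube roots; I spot-checked $w_2$ in $II$ and $w_3$ in $I$ and they come out as in the lemma. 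You leave that bookkeeping unexecuted, which is acceptable for a plan, but it is the only part of the proof that is not yet on paper.
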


\subsection{Modified $\xi$-functions} \label{subsec: xi functions}

We define the modified $\xi$-functions for $j=1,2,3,4$ as
\begin{equation} \label{eq: xi in terms of w}
\xi_j(z) = \frac{w_j(z)^4+(3\gamma^3-1)w_j(z)^2+\tau^{4/3}\gamma^5}{w_j(z)(w_j(z)^2+\gamma^3)}, \qquad z \in \mathcal R_j.
\end{equation}
Then, the analytic extension $\xi: \mathcal R \to \overline \C$ of the function 
\begin{equation} \label{eq: definition xi}
\bigcup_{j=1}^4 \mathcal R_j \to \C:  \mathcal R_j \ni z \mapsto \xi_j(z),
\end{equation}
to the full Riemann surface $\mathcal R$ is meromorphic on $\mathcal R$. This is the content of the following lemma.

\begin{lemma} \label{lemma: xi meromorf}
The function $\xi_j(z)$ is analytic on $\mathcal R_j,$ $j=1,2,3,4$, and the following relations hold
\begin{align}
\label{eq: sym xi 1} \xi_{j,+}(z) &= \overline{\xi_{j,-}(z)}, && z \in (-c,0)\cup(0,c), \qquad j=1,2,\\
\label{eq: sym xi 3} \xi_{1,\pm}(z) &= \xi_{2,\mp}(z), && z \in (-c,0)\cup(0,c), \\
\nonumber \xi_{2,+}(z) &= -\overline{\xi_{2,-}(z)}, && z \in i\R \setminus \{0\}, \qquad j=2,3,\\
\nonumber \xi_{2,\pm}(z) &= \xi_{3,\mp}(z), && z \in i\R \setminus \{0\}, \\
\nonumber \xi_{3,+}(z) &= \overline{\xi_{3,-}(z)}, && z \in \R \setminus \{0\}, \qquad j=3,4,\\
\nonumber \xi_{3,\pm}(z) &= \xi_{4,\mp}(z), && z \in \R \setminus \{0\}.
\end{align}
Moreover, we have the symmetry condition
\begin{equation} \label{eq: symmetry xi}
\xi_j(z)=-\xi_j(-z), \qquad z \in \C \setminus (\R \cup i\R).
\end{equation}
\end{lemma}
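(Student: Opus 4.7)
My plan is to reduce every claim about $\xi$ to the corresponding claim about $w$ established in the preceding lemma. The central observation is that, viewed as a rational function of the variable $w$,
\[
\xi(w) \;=\; \frac{w^4+(3\gamma^3-1)w^2+\tau^{4/3}\gamma^5}{w(w^2+\gamma^3)}
\]
has real coefficients, its numerator is even in $w$ and its denominator is odd in $w$. Hence $\xi(-w)=-\xi(w)$ and $\overline{\xi(w)}=\xi(\bar w)$. The jump and symmetry statements will follow by composing these two algebraic features of $\xi$ with the analogous statements for $w$ proved in Lemma \ref{lemma: w meromorf}.

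\textbf{Analyticity on $\mathcal{R}_j$.} Since $w_j$ is analytic on $\mathcal{R}_j$, it suffices to show that $w_j\,(w_j^2+\gamma^3)$ does not vanish on $\mathcal{R}_j$. If $w_j(z)=0$, then the defining algebraic equation \eqref{eq: algebraic equation for w} forces $\gamma^6=0$, contradicting $\gamma\approx 1$. If $w_j(z)^2+\gamma^3=0$, then \eqref{eq: algebraic equation for w} gives $z w_j(z)^3=0$, and since $w_j(z)\neq 0$ we must have $z=0$. But $0$ lies on a cut of each sheet and in particular $0\notin \mathcal{R}_j$ for every $j$. Hence $\xi_j$ is analytic on $\mathcal{R}_j$.

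\textbf{Boundary relations and symmetry.} On $(-c,0)\cup(0,c)$ the relation $w_{j,+}(x)=\overline{w_{j,-}(x)}$ for $j=1,2$, together with $\overline{\xi(w)}=\xi(\bar w)$, yields $\xi_{j,+}(x)=\overline{\xi_{j,-}(x)}$; and $w_{1,\pm}=w_{2,\mp}$ gives $\xi_{1,\pm}=\xi_{2,\mp}$. On $i\R\setminus\{0\}$ the relation $w_{j,+}=-\overline{w_{j,-}}$ for $j=2,3$, combined with reality and oddness of $\xi$, gives $\xi_{j,+}=-\overline{\xi_{j,-}}$, while $w_{2,\pm}=w_{3,\mp}$ gives $\xi_{2,\pm}=\xi_{3,\mp}$. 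On $\R\setminus\{0\}$ the argument on $j=3,4$ is identical to the argument on $(-c,0)\cup(0,c)$ for $j=1,2$. Finally, the symmetry $w_j(-z)=-w_j(z)$ on $\C\setminus(\R\cup i\R)$ combined with $\xi(-w)=-\xi(w)$ yields $\xi_j(-z)=-\xi_j(z)$.

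\textbf{Meromorphic extension to $\mathcal{R}$.} The matching relations $\xi_{j,\pm}=\xi_{k,\mp}$ are precisely the gluing conditions that express analyticity of a single function on $\mathcal{R}$ across the corresponding cut between $\mathcal{R}_j$ and $\mathcal{R}_k$. Together with analyticity of each $\xi_j$ on $\mathcal{R}_j$, this produces a function $\xi:\mathcal{R}\to\overline{\C}$ that is holomorphic away from the finite set of points of $\mathcal{R}$ lying over $z=0$ (where the denominator $w(w^2+\gamma^3)$ may vanish) and possibly at the two points at infinity; at such exceptional points $\xi$ has at worst a pole, so it is meromorphic on $\mathcal{R}$.

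I expect no genuine obstacle in this proof: each assertion is a mechanical transfer from the corresponding statement for $w$, and the only mild subtlety is the bookkeeping needed to verify that the denominator does not vanish on any of the open sheets $\mathcal{R}_j$, which is handled uniformly by exploiting $(w^2+\gamma^3)^2=zw^3$ together with the fact that the origin lies on a cut of every sheet.
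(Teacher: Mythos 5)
Your proposal is correct and follows exactly the route the paper takes: the paper's proof simply states that the lemma is immediate from Lemma \ref{lemma: w meromorf} and \eqref{eq: xi in terms of w}, and your argument is precisely the fleshed-out version of that one-liner (the rational function of $w$ is odd with real coefficients, so all relations transfer from $w_j$ to $\xi_j$, and the denominator is checked to be nonvanishing on each open sheet). No gaps.
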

\begin{proof}
This is immediate from Lemma \ref{lemma: w meromorf} and \eqref{eq: xi in terms of w}.
\end{proof}

Combining \eqref{eq: xi in terms of w} with Lemma \ref{lemma: asymptotics w around 0} and Lemma \ref{lemma: asymptotics w around infinity} leads to the asymptotic behavior of the $\xi$-functions. 

\begin{lemma}\label{lem:asymptxinear0}
For $z$ in a neighborhood of zero, we have
\begin{equation} \label{eq: xi near zero}
\begin{aligned}
\xi_1(z) &= C z^{-1/2}+g(z)+z^{1/2}h(z) \qquad \text{ in } I \cup II,\\
\xi_2(z) &= \begin{cases}
-C z^{-1/2}+g(z)-z^{1/2}h(z) & \text{ in } III, \\
C z^{-1/2}+g(z)+z^{1/2}h(z) & \text{ in } IV,
\end{cases}\\
\xi_3(z) &= \begin{cases}
C z^{-1/2}+g(z)+z^{1/2}h(z) & \text{ in } III,\\
-C z^{-1/2}+g(z)-z^{1/2}h(z) & \text{ in } IV,
\end{cases}\\
\xi_4(z) &= -C z^{-1/2}+g(z)-z^{1/2}h(z) \qquad \text{ in } I \cup II,
\end{aligned}
\end{equation}
where
\begin{equation} \label{eq: C}
C=e^{3\pi i/4} \gamma^{1/4}\left( -2 \gamma^2 + \tfrac{1}{\gamma} + \tau^{4/3}\gamma \right),
\end{equation}
and $g$ and $h$ are analytic functions in a neighborhood of zero with
\begin{align*}
g(0) &= i \gamma^{-1/2}\left(\tfrac32 \gamma^2 - \tfrac{1}{4\gamma} -\tfrac54 \tau^{4/3} \gamma \right), \\
h(0) &= e^{\pi i/4} \gamma^{-5/4}\left(\tfrac34 \gamma^2 - \tfrac{1}{8\gamma} +\tfrac38 \tau^{4/3} \gamma \right).
\end{align*}
In the remaining quadrants the behavior is determined by \eqref{eq: symmetry xi} and \eqref{eq: xi near zero}.
\end{lemma}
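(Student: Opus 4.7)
The plan is to insert the expansions of $w_j$ from Lemma \ref{lemma: asymptotics w around 0} into the defining formula \eqref{eq: xi in terms of w} for $\xi_j$ and extract the resulting expansion in powers of $z^{1/2}$. The key structural feature that forces the $z^{-1/2}$ singularity is that $w_j(0)=i\gamma^{3/2}$ in every branch, so $w_j(0)^2+\gamma^3=0$ and the denominator $w_j(w_j^2+\gamma^3)$ of \eqref{eq: xi in terms of w} has a zero of order $1/2$ in $z$ at the origin, whereas the numerator evaluates at $z=0$ to
\begin{equation*}
(i\gamma^{3/2})^4 + (3\gamma^3-1)(i\gamma^{3/2})^2 + \tau^{4/3}\gamma^5 = \gamma^4\left(-2\gamma^2 + \tfrac{1}{\gamma} + \tau^{4/3}\gamma\right),
\end{equation*}
which is generically nonzero. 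This already produces the inverse square root and pins down the modulus of $C$.

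The eight Puiseux expansions of $w_j$ listed in Lemma \ref{lemma: asymptotics w around 0} can be written uniformly as
\begin{equation*}
w_j(z) = i\gamma^{3/2} + \epsilon_j\, A(z) + B(z),
\end{equation*}
where $A(z) = \tfrac{1}{2}e^{i\pi/4}\gamma^{3/4}z^{1/2} + \tfrac{7}{64}e^{-i\pi/4}\gamma^{-3/4}z^{3/2}+\cdots$ carries all the half-integer powers, $B(z)=\tfrac{1}{4}z+\cdots$ is analytic at the origin, and $\epsilon_j\in\{+1,-1\}$ is the branch sign read off from the lemma ($\epsilon_1=+1$ in $I\cup II$, $\epsilon_4=-1$ there, $\epsilon_2=-1$ in $III$ and $+1$ in $IV$, and $\epsilon_3$ opposite). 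Since $\epsilon_j^2=1$, every occurrence of $\epsilon_j$ in the numerator and denominator of \eqref{eq: xi in terms of w} appears as an odd monomial in $\epsilon_j A(z)$, and so after dividing, the expansion of $\xi_j$ splits into half-integer and integer powers of $z$ according to
\begin{equation*}
\xi_j(z) = \epsilon_j\!\left(C\, z^{-1/2} + z^{1/2} h(z)\right) + g(z),
\end{equation*}
with $g$ and $h$ analytic near $0$ and independent of $j$. This is exactly the claimed form, with the correct sign picked out by $\epsilon_j$ quadrant by quadrant.

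To identify $C$, I would expand the denominator to leading order. Using $w_j^2+\gamma^3 = 2i\gamma^{3/2}(\epsilon_j A+B) + (\epsilon_j A+B)^2$, the leading contribution is
\begin{equation*}
w_j(w_j^2+\gamma^3) = i\gamma^{3/2}\cdot 2i\gamma^{3/2}\,\epsilon_j\cdot\tfrac{1}{2}e^{i\pi/4}\gamma^{3/4}z^{1/2} + O(z) = -\epsilon_j e^{i\pi/4}\gamma^{15/4} z^{1/2} + O(z),
\end{equation*}
so dividing the constant part of the numerator by this yields $\epsilon_j e^{3i\pi/4}\gamma^{1/4}(-2\gamma^2+1/\gamma+\tau^{4/3}\gamma)\,z^{-1/2} = \epsilon_j C\, z^{-1/2}$, which matches \eqref{eq: C}. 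Pushing the expansion one order further in $z^{1/2}$ and $z$ and collecting the integer-power part gives $g(0)$, while the coefficient of $z^{1/2}$ in the odd part gives $h(0)$; these are the stated values. The expansions in the remaining four quadrants then follow at once from the symmetry \eqref{eq: symmetry xi}, which forces $g$ and $h$ to be genuinely analytic functions of $z$ (rather than of $z^{1/2}$).

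The main obstacle is organizational rather than conceptual: one must verify that the odd/even decomposition in $z^{1/2}$ is preserved under the ratio \eqref{eq: xi in terms of w}, so that only the singular and the $z^{1/2}h(z)$ terms acquire the sign $\epsilon_j=\pm 1$, and then carry out the bookkeeping in the power series of $w_j$ and $w_j^2+\gamma^3$ carefully enough to read off the constants $C$, $g(0)$, $h(0)$ in closed form. Once the $\epsilon_j$-structure is isolated, the rest is a routine (though slightly tedious) manipulation of the expansions in Lemma \ref{lemma: asymptotics w around 0}.
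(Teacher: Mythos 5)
Your proposal is correct and follows the same route the paper indicates: substituting the Puiseux expansions of $w_j$ from Lemma \ref{lemma: asymptotics w around 0} into the formula \eqref{eq: xi in terms of w} and expanding, with the $z^{-1/2}$ singularity coming from the vanishing of $w_j(0)^2+\gamma^3$ against a generically nonzero numerator. Your $\epsilon_j$-bookkeeping cleanly explains the sign pattern and why $g$, $h$ are branch-independent and analytic, and your leading-order computation of $C$ matches \eqref{eq: C}.
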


\smallskip

\begin{lemma} \label{lemma: xi at infinity}
The meromorphic $\xi$-function has the following asymptotic behavior as $z \to \infty$
\begin{align*}
\xi_1(z)&=z-\frac{1}{z}+\frac{\tau^{4/3}\gamma^5-\gamma^3-3\gamma^6}{z^3}+\mathcal O(z^{-5}), \qquad \text{ in }\C, \\
\xi_2(z)&=\begin{cases}
\tau^{4/3}z^{1/3}-\frac{1}{3}\alpha \tau^{2/3}  z^{-1/3}+\frac{1}{3}  z^{-1}+ \mathcal O(z^{-5/3}), & \text{ in } I\cup IV, \\
\tau^{4/3}\omega z^{1/3}-\frac{1}{3}\alpha \tau^{2/3} \omega^2 z^{-1/3}+\frac{1}{3}  z^{-1}+ \mathcal O(z^{-5/3}), & \text{ in } II, \\
\tau^{4/3}\omega^2 z^{1/3}-\frac{1}{3}\alpha \tau^{2/3} \omega z^{-1/3}+\frac{1}{3}  z^{-1}+ \mathcal O(z^{-5/3}), & \text{ in } III, 
\end{cases} \\
\xi_3(z)&=\begin{cases}
\tau^{4/3}\omega z^{1/3}-\frac{1}{3}\alpha \tau^{2/3} \omega^2 z^{-1/3}+\frac{1}{3}  z^{-1}+ \mathcal O(z^{-5/3}), & \text{ in } I, \\
\tau^{4/3}z^{1/3}-\frac{1}{3}\alpha \tau^{2/3}  z^{-1/3}+\frac{1}{3}  z^{-1}+ \mathcal O(z^{-5/3}), & \text{ in } II\cup III, \\
\tau^{4/3}\omega^2 z^{1/3}-\frac{1}{3}\alpha \tau^{2/3} \omega z^{-1/3}+\frac{1}{3}  z^{-1}+ \mathcal O(z^{-5/3}), & \text{ in } IV, 
\end{cases} \\
\xi_4(z)&=\begin{cases}
\tau^{4/3}\omega^2 z^{1/3}-\frac{1}{3}\alpha \tau^{2/3} \omega z^{-1/3}+\frac{1}{3}  z^{-1}+ \mathcal O(z^{-5/3}), & \text{ in } I \cup II, \\
\tau^{4/3}\omega z^{1/3}-\frac{1}{3}\alpha \tau^{2/3} \omega^2 z^{-1/3}+\frac{1}{3}  z^{-1}+ \mathcal O(z^{-5/3}), & \text{ in } III \cup IV.
\end{cases}
\end{align*}
\end{lemma}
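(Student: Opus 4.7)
The statement is an asymptotic expansion, and the plan is to derive it directly from the defining identity \eqref{eq: xi in terms of w} by substituting the four cases of the expansion of $w_j(z)$ at infinity provided by Lemma \ref{lemma: asymptotics w around infinity}. The meromorphy statement in Lemma \ref{lemma: xi meromorf} guarantees that the resulting expression is well-defined, so the whole argument is a (bookkeeping-intensive) computation on the rational function
\[
\xi(z) = \frac{w^4 + (3\gamma^3-1)w^2 + \tau^{4/3}\gamma^5}{w(w^2+\gamma^3)},
\]
with $w = w_j(z)$ as appropriate.

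\emph{Case $j=1$.} Here $w_1(z) = z - 2\gamma^3 z^{-1} - 5\gamma^6 z^{-3} + \mathcal O(z^{-5})$, so both numerator and denominator behave like polynomials in $z^{-2}$ times the leading power. I would square and cube $w_1$ keeping terms up to $\mathcal O(z^{-2})$ and $\mathcal O(z^{-3})$, insert into the formula, and perform a long division in powers of $z^{-1}$. The claim is that the expansion takes the form $z + az^{-1} + bz^{-3}+ \mathcal O(z^{-5})$ with $a=-1$ and $b = \tau^{4/3}\gamma^5 - \gamma^3 - 3\gamma^6$; these two coefficients are the only thing to verify. Parity of $w_1(-z)=-w_1(z)$ (Lemma \ref{lemma: w meromorf}) ensures only odd powers appear.

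\emph{Cases $j=2,3,4$.} Here $w_j$ is of order $z^{-1/3}$, so $w_j \to 0$ at infinity. The dominant terms in numerator and denominator are $\tau^{4/3}\gamma^5$ and $\gamma^3 w_j$ respectively, giving
\[
\xi_j(z) \sim \frac{\tau^{4/3}\gamma^5}{\gamma^3 w_j(z)} \sim \tau^{4/3}\,\omega^{-k} z^{1/3},
\]
with $\omega^k$ the root of unity dictating which quadrant/sheet one sits on. To get the next two orders I would write
\[
\xi_j = \frac{1}{w_j}\cdot\frac{w_j^4 + (3\gamma^3-1)w_j^2 + \tau^{4/3}\gamma^5}{w_j^2+\gamma^3},
\]
expand $1/(w_j^2+\gamma^3) = \gamma^{-3}\sum_{n\geq 0}(-w_j^2/\gamma^3)^n$, multiply out through order $w_j^4$, and then substitute the full three-term expansion of $w_j$ from Lemma \ref{lemma: asymptotics w around infinity}. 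This produces an expansion of the form $\tau^{4/3}\omega^{-k} z^{1/3} + A\, \omega^{k}z^{-1/3} + \tfrac13 z^{-1}+\mathcal O(z^{-5/3})$ for some constant $A$ expressible in $\gamma$ and $\tau$.

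\emph{Main obstacle.} The only nontrivial point, and the one worth highlighting, is that the identification $A = -\tfrac13\alpha\tau^{2/3}$ is not automatic from the definitions but uses the defining relation \eqref{eq: gamma}. Concretely, the computation produces $A = 3\gamma^{2} - \gamma^{-1} - \tfrac{5}{3}\tau^{4/3}\gamma$, and a direct rearrangement of \eqref{eq: gamma} gives exactly $-\tfrac13\alpha\tau^{2/3} = 3\gamma^{2} - \gamma^{-1} - \tfrac{5}{3}\tau^{4/3}\gamma$. This is in fact the reason the auxiliary parameter $\gamma$ was defined by \eqref{eq: gamma} in the first place: it makes the $z^{-1/3}$ coefficient agree with the external-field data $\alpha$, as required for the steepest descent analysis carried out later. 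The cubic root selection (the precise $\omega^{k}$) on each sheet/quadrant is read off by inspecting which branch of $w_j$ one uses, as specified in Lemma \ref{lemma: asymptotics w around infinity}, and applying the symmetry \eqref{eq: symmetry xi} in the remaining quadrants.
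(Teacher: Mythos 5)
Your proposal is correct and is exactly the route the paper takes (the paper gives no explicit proof, only the remark that combining \eqref{eq: xi in terms of w} with Lemma \ref{lemma: asymptotics w around infinity} yields the expansion); your computations check out, including the key identification of the $z^{-1/3}$ coefficient $3\gamma^2-\gamma^{-1}-\tfrac53\tau^{4/3}\gamma$ with $-\tfrac13\alpha\tau^{2/3}$ via \eqref{eq: gamma}, which is indeed the stated motivation for the definition of $\gamma$.
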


\begin{remark}
Note that this behavior coincides with that of the unmodified $\xi$-function in \cite[Lem. 4.10]{DKM} up to and including the $1/z$ term. Actually this motivates the definitions of $\xi$ \eqref{eq: xi in terms of w} and $\gamma$ \eqref{eq: gamma}.
\end{remark}

As a corollary to both lemmas we see that $\xi$ has simple poles in both points at infinity. $\xi$ also has at most simple poles in the two points at zero. Only if $C=C(\alpha,\tau)=0$, the $\xi$-functions are bounded in zero. Note that in the multi-critical point $\alpha=-1$, $\tau=1$ the function $\xi$ has simple zeros in the two points in zero. 

We will also need the following integrals of $\xi_j$.
\begin{lemma} \label{lemma: integrals}
We have the following definite integrals
\begin{align}
\Im \int_{-c}^c \xi_{1\pm}(x) \ud x &= \pm \pi, \label{eq: integral 1}  \\ 
\Im \int_{-c}^c \xi_{2\pm}(x) \ud x &= \mp \pi,    \label{eq: integral 2}
\end{align}
with $c$ as in \eqref{eq: c}.
\end{lemma}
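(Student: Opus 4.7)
The plan is to extract the two identities from a single residue computation for $\xi_1$ on its sheet $\mathcal R_1=\C\setminus[-c,c]$, and then obtain \eqref{eq: integral 2} cheaply from \eqref{eq: integral 1} via the sheet-exchange relation $\xi_{1,\pm}=\xi_{2,\mp}$ on $(-c,0)\cup(0,c)$ recorded in Lemma \ref{lemma: xi meromorf}.

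More concretely, I would apply Cauchy's theorem to $\xi_1(z)\,dz$ in the annular region bounded by a large circle $C_R$ of radius $R$ and a narrow loop $\ell$ enclosing the cut $[-c,c]$, both oriented counterclockwise. Since $\xi_1$ is analytic on $\mathcal R_1$ (Lemma \ref{lemma: xi meromorf}), the only contribution comes from the behavior at infinity. By Lemma \ref{lemma: xi at infinity}, $\xi_1(z)=z-\frac{1}{z}+\mathcal O(z^{-3})$, so letting $R\to\infty$ one gets $\oint_{C_R}\xi_1(z)\,dz=-2\pi i$. Collapsing $\ell$ to the two sides of $[-c,c]$ gives
\[
\int_{-c}^{c}\bigl(\xi_{1,+}(x)-\xi_{1,-}(x)\bigr)\,dx = 2\pi i.
\]

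Combined with the Schwarz-type symmetry $\xi_{1,+}(x)=\overline{\xi_{1,-}(x)}$ on $(-c,0)\cup(0,c)$ from Lemma \ref{lemma: xi meromorf}, the integrand equals $2i\,\Im\xi_{1,+}(x)$, and dividing by $2i$ yields $\Im\int_{-c}^{c}\xi_{1,+}(x)\,dx=\pi$; the minus sign for $\xi_{1,-}$ then follows by conjugation, which gives \eqref{eq: integral 1}. For \eqref{eq: integral 2} I would simply substitute $\xi_{2,\pm}(x)=\xi_{1,\mp}(x)$ on $(-c,0)\cup(0,c)$ and read off the imaginary parts.

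The only real obstacle, and it is a minor one, is justifying the collapse of the loop $\ell$ to the interval $[-c,c]$ in the presence of the branch-point singularities of $\xi_1$ at $z=0$ and $z=\pm c$. By Lemma \ref{lem:asymptxinear0}, $\xi_1(z)=Cz^{-1/2}+\mathcal O(1)$ near the origin, and one expects analogous $(z\mp c)^{-1/2}$ behavior near the endpoints (cf.\ the algebraic equation \eqref{eq: algebraic equation for w}, where the discriminant vanishes simply at $\pm c$). In all three cases the singularity is integrable and the associated small semicircular arcs of radius $\varepsilon$ contribute $\mathcal O(\varepsilon^{1/2})$, which vanishes as $\varepsilon\to 0$; this is the standard justification and I would just spell it out briefly. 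Everything else is bookkeeping with the contour orientation and the symmetry relations already established in Lemma \ref{lemma: xi meromorf}.
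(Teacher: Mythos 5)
Your argument is correct and follows essentially the same route as the paper: reduce \eqref{eq: integral 2} to \eqref{eq: integral 1} via $\xi_{1,\pm}=\xi_{2,\mp}$, use the Schwarz symmetry to convert the imaginary part into the jump $\xi_{1,+}-\xi_{1,-}$, deform to a closed loop around $[-c,c]$, and evaluate via the residue at infinity read off from Lemma \ref{lemma: xi at infinity}. Your extra remark about the integrable square-root singularities at $0,\pm c$ justifying the collapse of the loop is a correct detail the paper leaves implicit.
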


\begin{proof}
In view of \eqref{eq: sym xi 3} the expressions \eqref{eq: integral 1} and \eqref{eq: integral 2} are equivalent. Hence, we only prove the first one. Note that by \eqref{eq: sym xi 1}
\[
\Im \int_{-c}^c \xi_{1,\pm}(x) \ud x= \pm \frac{1}{2i}\int_{-c}^c (\xi_{1,+}-\xi_{1,-})(x) \ud x= \mp \frac{1}{2i}\int_\Gamma \xi_1(z) \ud z,
\]
where $\Gamma$ is a closed curve encircling $[-c,c]$ once in counterclockwise directing. Since $\xi_1$ is analytic at infinity, this integral can be evaluated by means of the residue theorem. Computing the residue of $\xi_1$ at infinity using Lemma \ref{lemma: xi at infinity} then leads to \eqref{eq: integral 1}.
\end{proof}

\begin{lemma} \label{lemma: xi on cuts}
In the multi-critical case $\alpha=-1$, $\tau=1$ the following inequalities hold
\begin{align}
\Im \left( \xi^*_{1,+}(x)-\xi^*_{2,+}(x) \right) &>0, && x \in (-c^*,0) \cup (0,c^*), \label{eq: est xi 1} \\
\Im \left( \xi^*_{1,-}(x)-\xi^*_{2,-}(x) \right) &<0, && x \in (-c^*,0) \cup (0,c^*), \nonumber \\
\Im \left( \xi^*_{3,-}(z)-\xi^*_{2,-}(z) \right) &<0, && z \in i\R \setminus \{0\}, \nonumber \\
\Im \left( \xi^*_{3,+}(z)-\xi^*_{2,+}(z) \right) &>0, && z \in i\R \setminus \{0\}, \nonumber \\
\Im \left( \xi^*_{3,+}(x)-\xi^*_{4,+}(x) \right) &>0, && x \in \R \setminus \{0\},  \nonumber \\
\Im \left( \xi^*_{3,-}(x)-\xi^*_{4,-}(x) \right) &<0, && x \in \R \setminus \{0\}.  \nonumber
\end{align}
\end{lemma}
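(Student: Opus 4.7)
The key observation I would exploit is that at the multi-critical values $\gamma^*=1$, $\tau^*=1$, the definition \eqref{eq: xi in terms of w} collapses to the rational parametrization
\[
\xi^*(z)=w^*(z)+\frac{1}{w^*(z)},
\]
with $w^*$ satisfying $(w^2+1)^2=zw^3$, equivalently the quartic $w^4-zw^3+2w^2+1=0$ whose four roots have product $1$. Each of the six inequalities is attached to one of the three cuts of $\mathcal R$, and on each cut substituting the appropriate conjugation law from Lemma \ref{lemma: w meromorf} into the identity above produces a short explicit expression for the difference. For example, on $(-c^*,c^*)\setminus\{0\}$ one has $w^*_{2,+}(x)=\overline{w^*_{1,+}(x)}$, and a direct calculation gives
\[
\xi^*_{1,+}(x)-\xi^*_{2,+}(x)=2i\,\Im w^*_{1,+}(x)\left(1-\frac{1}{|w^*_{1,+}(x)|^2}\right),
\]
and entirely analogous formulas follow on the cuts $i\R$ and $\R$ using $w^*_{3,+}=-\overline{w^*_{2,+}}$ and $w^*_{4,+}=\overline{w^*_{3,+}}$ respectively.

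To pin down the signs I would first use the endpoints of each cut. Lemma \ref{lemma: asymptotics w around 0} controls the behavior near $z=0$, where $w^*\to\pm i$; Lemma \ref{lemma: asymptotics w around infinity} together with the branch-point values $w^*=\pm\sqrt 3$ above $x=\pm c^*$ controls the signs at the far endpoints. To propagate these to the interior of each cut I would introduce a monotonic parametrization on the $w^*$-side. On the real cut, taking $W=|w^*_{1,+}(x)|^2$ and applying Vieta's identities to the two conjugate pairs of roots of the quartic (the pair $w^*_{1,+},\overline{w^*_{1,+}}$ with elementary symmetric functions $2\Re w^*_{1,+}$ and $W$, the other pair having product $1/W$) yields the explicit relations $4(\Re w^*_{1,+})^2=W(W-1)^2$ and
\[
x(W)=\frac{(W-1)^2(W+1)}{W^{3/2}},\qquad W\in[1,3].
\]
The first forces $(\Im w^*_{1,+})^2=W(3-W)(W+1)/4>0$ throughout the open cut, and the second has derivative with sign controlled by $3W^2+2W+3>0$, hence is strictly increasing; consequently $|w^*_{1,+}(x)|^2>1$ on $(0,c^*)$. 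Both factors in the formula above are then strictly positive, giving the first inequality. The statements on $(-c^*,0)$ and the $\xi^*_{j,-}$ variants follow from the symmetry $w^*_j(z)=-w^*_j(-z)$ of Lemma \ref{lemma: w meromorf} combined with complex conjugation.

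The cuts $i\R\setminus\{0\}$ and $\R\setminus\{0\}$ are treated by the same template: express the relevant $\xi^*$-difference via $\xi^*=w^*+1/w^*$ and the appropriate conjugation law; introduce a suitable single real parameter on the $w^*$-side playing the role of $W$ above, exploiting that $w_1w_2w_3w_4=1$ together with the $-\bar{\cdot}$ symmetry of the quartic on $i\R$; and check monotonicity of the induced map to the $z$-cut so that the sign determined at the endpoints cannot change in between. The main obstacle is precisely this parametric monotonicity step: producing, on each of the three cuts, a clean bijective parameterization whose monotonicity can be read off from the derivative. Once such a parameterization is in place, every one of the six inequalities reduces to an endpoint sign check that is immediate from Lemmas \ref{lemma: asymptotics w around 0} and \ref{lemma: asymptotics w around infinity}.
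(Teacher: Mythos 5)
Your proposal is correct in outline and, for the two inequalities on $(-c^*,0)\cup(0,c^*)$, essentially complete; but the mechanism by which you rule out a sign change in the interior of the cut is genuinely different from the paper's. Both arguments start from $\xi^*=w^*+1/w^*$, reduce the difference to (twice) an imaginary part via the conjugation symmetries, and fix the sign by an endpoint check near $z=0$. The paper, however, establishes nonvanishing by passing to the algebraic equation $\xi^4-z\xi^3+z^2=0$ satisfied by the $\xi_j^*$, computing its discriminant $z^6(256-27z^2)$, and observing that branches can only coalesce at $z=0,\pm c^*$; continuity then does the rest, uniformly for all three cuts. You instead stay on the $w$-side and build the explicit Vieta parametrization $W=|w^*_{1,+}(x)|^2$, $x(W)=(W-1)^2(W+1)W^{-3/2}$ on $W\in[1,3]$; your identities $4(\Re w^*_{1,+})^2=W(W-1)^2$, $(\Im w^*_{1,+})^2=W(3-W)(W+1)/4$ and $x'(W)=(W-1)(3W^2+2W+3)/(2W^{5/2})$ all check out. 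What your route buys is quantitative control (explicit formulas for both factors, and a direct proof that $|w^*_{1,+}|>1$ rather than a bare nonvanishing statement); what it costs is that the computation must be redone on each cut, and for $|x|>c^*$ with a different root configuration, since there $w_1^*,w_2^*$ are real and only $w_3^*,w_4^*$ form a conjugate pair, so the two-conjugate-pair Vieta setup you wrote down does not carry over verbatim. Note also that the sign of $\Im w^*_{1,+}$ itself (as opposed to its square) still comes from the endpoint-plus-continuity step, so you have localized, not eliminated, the continuity argument.

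One thing your own template would have exposed, and which you should not wave away as ``entirely analogous'': on $i\R\setminus\{0\}$ the relations $w^*_{3,+}=-\overline{w^*_{2,+}}$ (equivalently $\xi^*_{3,+}=-\overline{\xi^*_{2,+}}$ from Lemma \ref{lemma: xi meromorf}) force $\xi^*_{3,+}-\xi^*_{2,+}=-2\Re\xi^*_{2,+}$ to be \emph{real}, so its imaginary part vanishes identically and the two $i\R$ lines of the lemma cannot hold as literally stated; they are evidently intended with $\Re$ in place of $\Im$, which is also what the Cauchy--Riemann argument for the lens around $i\R$ in Lemma \ref{lemma: estimates L2L3star} actually uses. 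Your plan on $i\R$ must be reformulated accordingly before the endpoint-and-monotonicity step even makes sense. With that correction understood, leaving the remaining four cases at the level of a template is no worse than the paper's own ``the other inequalities can be proven likewise,'' so I would not count it as a gap — but the outer real cut and the imaginary-axis cut each require their own parametrization, not a verbatim copy of the one you built on $(0,c^*)$.
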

\begin{proof}
In the multi-critical case the $\xi$-function has the simple form
\begin{equation} \label{eq: xicriticalpoint}
\xi_j^*(z)=w_j^*(z)+\frac{1}{w_j^*(z)}, \qquad j=1,2,3,4.
\end{equation}
Exploiting this relation we obtain the following algebraic equation for the $\xi$-functions in the multi-critical point
\begin{equation} \label{eq: spectral curve}
\xi^4-z\xi^3+z^2=0.
\end{equation}
We look for values of $z$ for which this equation has coalescing solutions. Computing the discriminant
\[
z^6(256-27z^2)
\]
we find out that this can only be the case in the points $z=0$ and $z=\pm c^*$. For the case $z=0$ we clearly have $\xi^*_j(0)=0$, $j=1,2,3,4$. For the case $z=\pm c^*$ only two solutions of \eqref{eq: spectral curve} coincide. It follows from Lemma \ref{lemma: xi meromorf} that we must have $\xi_1^*(\pm c^*)=\xi_2^*(\pm c^*)$.

We will now prove \eqref{eq: est xi 1}. Observe from \eqref{eq: sym xi 1}--\eqref{eq: sym xi 3}
\[
\xi^*_{1,\pm}(x)=\overline{\xi^*_{1,\mp}(x)}=\overline{\xi^*_{2,\pm}(x)}, \qquad x \in (-c^*0) \cup (0,c^*),
\]
and hence
\begin{equation} \label{eq: xi proof 1}
\Im \left( \xi^*_{1,+}(x)-\xi^*_{2,+}(x) \right)= 2 \Im \xi^*_{1,+}(x) \neq 0,\qquad x \in (-c^*0) \cup (0,c^*), 
\end{equation}
where we used that $\xi_{1,+}^*(x)$ cannot coincide with $\xi_{2,+}^*$. Then \eqref{eq: est xi 1} follows from \eqref{eq: xi proof 1} and Lemma \ref{lem:asymptxinear0} by continuity. The other inequalities can be proven likewise.
\end{proof}

\subsection{Modified $\lambda$-functions} \label{subsec: lambda functions}

The modified $\lambda$-functions are defined as the following antiderivatives of the modified $\xi$-functions
\begin{equation} \label{eq: definition lambda 1}
\lambda_j(z)=\int_0^z \xi_j(s) \ud s,
\end{equation}
for $j\in \{1,4\}$ and $z \in I \cup II$, and for $j\in \{2,3\}$ and $z \in III \cup IV$. In the other quadrants, i.e. for $j\in \{1,4\}$ and $z \in III \cup IV$, and for $j\in \{2,3\}$ and $z \in I \cup II$,  $\lambda_j$ is defined by
\begin{equation} \label{eq: definition lambda 2}
\lambda_j(z)=\int_0^z \xi_j(s) \ud s +i\pi.
\end{equation}
In both cases, the complete path of integration must stay in the same quadrant. 

The following lemma describes the jump behavior of the $\lambda$-functions.
\begin{lemma} \label{lemma: lambda meromorf}
The functions $\lambda_j$, $j=1,2,3,4$, are analytic within each quadrant and satisfy these jumps
\begin{align}
\label{eq: jumps lambda 1} \lambda_{1,+}(x) &= \lambda_{1,-}(x)-2\pi i  && x<-c \\
\label{eq: jumps lambda 2} \lambda_{2,+}(x) &= \lambda_{2,-}(x)+2\pi i, && x<-c \\
\label{eq: jumps lambda 3} \lambda_{j,+}(x)&= \lambda_{j,-}(x), && x>c, \quad j=1,2, \\
\label{eq: jumps lambda 4} \lambda_{1,\pm}(x)&= \lambda_{2,\mp}(x), && -c<x<c, \\
\label{eq: jumps lambda 5} \lambda_{j,+}(z)&= \lambda_{j,-}(z), && z \in i\R\setminus \{0\}, \quad j=1,4, \\
\label{eq: jumps lambda 6} \lambda_{2,\pm}(z)&= \lambda_{3,\mp}(z), && z \in i\R\setminus \{0\},  \\
\label{eq: jumps lambda 7} \lambda_{3,\pm}(x)&= \lambda_{4,\mp}(x), && x \in \R\setminus \{0\}.
\end{align}
Moreover, we have the following symmetry relations
\begin{align}
\label{eq: sym lambda 1} \lambda_{1,+}(x) &= \overline{\lambda_{1,-}(x)}+\pi i  && x \in (-c,0) \cup (0,c), \\
\label{eq: sym lambda 2} \lambda_{2,+}(x) &= \overline{\lambda_{2,-}(x)}-\pi i  && x \in (-c,0) \cup (0,c), \\
\label{eq: sym lambda 3} \lambda_{2,+}(z) &= \overline{\lambda_{2,-}(z)}  && z \in i\R \setminus \{0\}, \\
\label{eq: sym lambda 4} \lambda_{3,+}(z) &= \overline{\lambda_{3,-}(z)}  && z \in i\R \setminus \{0\}, \\
\label{eq: sym lambda 5} \lambda_{3,+}(x) &= \overline{\lambda_{3,-}(x)}+\pi i  && x \in \R \setminus \{0\}, \\
\label{eq: sym lambda 6} \lambda_{4,+}(x) &= \overline{\lambda_{4,-}(x)}-\pi i  && x \in \R \setminus \{0\}.
\end{align}
\end{lemma}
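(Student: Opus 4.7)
The proof proceeds by unpacking the piecewise definition \eqref{eq: definition lambda 1}--\eqref{eq: definition lambda 2} and exploiting the properties of the $\xi_j$ collected in Lemmas \ref{lemma: xi meromorf}, \ref{lemma: integrals}, and \ref{lem:asymptxinear0}. Analyticity of $\lambda_j$ within each open quadrant is immediate, since $\xi_j$ is analytic there (all cuts of $\mathcal{R}_j$ lie on the coordinate axes or on $[-c,c]\subset\mathbb{R}$); the $z^{-1/2}$ singularity at the origin identified in Lemma \ref{lem:asymptxinear0} is integrable, and within a simply-connected subregion of a quadrant $\xi_j$ admits a single-valued antiderivative, so the particular path from $0$ to $z$ in the definition is immaterial.

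For the jump relations \eqref{eq: jumps lambda 1}--\eqref{eq: jumps lambda 7} I distinguish two cases. When $\lambda_{j,+}$ and $\lambda_{j,-}$ are compared across an axis that is \emph{not} a cut of $\xi_j$ -- for example \eqref{eq: jumps lambda 1} on $(-\infty,-c)$ -- the two boundary values are represented as integrals along paths in adjacent quadrants, one of them shifted by the $+i\pi$ from \eqref{eq: definition lambda 2}. Joining these two paths into a closed loop $\Gamma$ encircling the real segment $[-c,0]$ and shrinking $\Gamma$ onto the cut reduces $\oint_\Gamma\xi_j\,ds$ to $\int_{-c}^{0}(\xi_{j,+}-\xi_{j,-})\,dx$, which by Lemma \ref{lemma: integrals} combined with the symmetry $\xi_j(-z)=-\xi_j(z)$ from \eqref{eq: symmetry xi} equals $\pm i\pi$; this $i\pi$ combines with the definitional $i\pi$ to give the advertised $2\pi i$ (or $0$) jump. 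When instead $\lambda_j$ and $\lambda_k$ are compared across a shared cut of $\xi_j$ and $\xi_k$, as in \eqref{eq: jumps lambda 4}, \eqref{eq: jumps lambda 6} and \eqref{eq: jumps lambda 7}, the relation $\xi_{j,\pm}=\xi_{k,\mp}$ from Lemma \ref{lemma: xi meromorf} says that gluing $\xi_j$ on one side of the cut to $\xi_k$ on the other produces a single analytic function, so that the path integral can be deformed across the cut; the $+i\pi$ shifts on both sides cancel, yielding $\lambda_{j,\pm}(x)=\lambda_{k,\mp}(x)$.

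The symmetry identities \eqref{eq: sym lambda 1}--\eqref{eq: sym lambda 6} come from Schwarz reflection. The relations $\xi_{j,+}(x)=\overline{\xi_{j,-}(x)}$ from Lemma \ref{lemma: xi meromorf}, together with the real asymptotics at infinity from Lemma \ref{lemma: xi at infinity}, give $\overline{\xi_j(z)}=\xi_j(\bar z)$ on the appropriate cut complement, so that $\overline{\int_\gamma\xi_j\,ds}=\int_{\bar\gamma}\xi_j\,ds$. Since $\bar\gamma$ lies in the complex-conjugate quadrant of $\gamma$, substituting this into \eqref{eq: definition lambda 1}--\eqref{eq: definition lambda 2} and tracking which of the two definitional branches applies on each of the two quadrants produces exactly the $\pm i\pi$ terms in \eqref{eq: sym lambda 1}, \eqref{eq: sym lambda 2}, \eqref{eq: sym lambda 5}, and \eqref{eq: sym lambda 6}; the imaginary-axis identities \eqref{eq: sym lambda 3}--\eqref{eq: sym lambda 4} follow analogously using the reflection $z\mapsto-\bar z$ and the corresponding conjugation identity on $i\mathbb{R}$ from Lemma \ref{lemma: xi meromorf}. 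The main obstacle throughout is not analytic but combinatorial: one must keep careful track of which of the two definitional branches ($\int$ or $\int+i\pi$) applies in each of the eight (quadrant, index) pairs, and of how these branches transform under path concatenation and complex conjugation.
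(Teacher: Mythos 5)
Your proposal is correct and follows the same route as the paper, whose proof consists of exactly the two observations you elaborate: the gluing/conjugation relations for the $\xi_j$ from Lemma \ref{lemma: xi meromorf} give \eqref{eq: jumps lambda 4}--\eqref{eq: sym lambda 6}, while the residue computation of Lemma \ref{lemma: integrals} (split in half via the symmetry \eqref{eq: symmetry xi}) accounts for the $2\pi i$ jumps in \eqref{eq: jumps lambda 1}--\eqref{eq: jumps lambda 3}. The only caveat is the additive-constant bookkeeping you flag yourself: tracked literally, a few of the $\pm i\pi$ constants come out shifted by $2\pi i$ relative to the statement, which is immaterial since the lemma is only ever used through $e^{n\lambda_j}$ with $n$ an integer or through $\Re(\lambda_i-\lambda_j)$.
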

\begin{proof}
The analyticity within each quadrant and the equalities \eqref{eq: jumps lambda 4}--\eqref{eq: sym lambda 6} are immediate from \eqref{eq: definition lambda 1}--\eqref{eq: definition lambda 2} and Lemma \ref{lemma: xi meromorf}.  Equations \eqref{eq: jumps lambda 1}--\eqref{eq: jumps lambda 3} are a corollary of \eqref{eq: definition lambda 1}--\eqref{eq: definition lambda 2} and Lemma \ref{lemma: integrals}. 
\end{proof}

The behavior of the $\lambda$-functions as $z$ tends to $\infty$ follows by combining Lemma \ref{lemma: xi at infinity} with \eqref{eq: definition lambda 1}--\eqref{eq: definition lambda 2}. The asymptotics are expressed in terms of functions $\theta_j(z)$, $j=1,2,3$, introduced in \eqref{eq: def thetaj}.

\begin{lemma} \label{lemma: asymptotics lambda around infinity}
As $z \to \infty$, we have that
\begin{align*}
\lambda_1(z) &= 
\frac{z^2}{2}-\log z+ \ell_1 + \mathcal O(z^{-2}), \\
\lambda_2(z) &= \theta_1(z)+\begin{cases}
\frac13 \log z + \ell_2 + D z^{-2/3}+\mathcal O(z^{-4/3}) & \text{ in } I \cup IV, \\
\frac13 \log z + \ell_3 + D \omega z^{-2/3}+\mathcal O(z^{-4/3}) & \text{ in } II,\\
\frac13 \log z + \ell_4 + D \omega^2 z^{-2/3}+\mathcal O(z^{-4/3}) & \text{ in } III,
\end{cases}\\
\lambda_3(z) &= \theta_2(z)+\begin{cases}
\frac13 \log z + \ell_3 + D \omega z^{-2/3}+\mathcal O(z^{-4/3}) & \text{ in } I,\\
\frac13 \log z + \ell_2 + D z^{-2/3}+\mathcal O(z^{-4/3}) & \text{ in } II \cup III,\\
\frac13 \log z + \ell_4 + D \omega^2 z^{-2/3}+\mathcal O(z^{-4/3}) & \text{ in } IV,
\end{cases}\\
\lambda_4(z) &= \theta_3(z)+\begin{cases}
\frac13 \log z + \ell_4 + D \omega^2 z^{-2/3}+\mathcal O(z^{-4/3}) & \text{ in } I \cup II,\\
\frac13 \log z + \ell_3 + D \omega z^{-2/3}+\mathcal O(z^{-4/3}) & \text{ in } III \cup IV,
\end{cases}
\end{align*}
where $D \in \R$ and $\ell_j,$ $j=1,2,3,4$, are constants satisfying
\[
\ell_3-\ell_2=\ell_2-\ell_4=\frac23 \pi i.
\]
\end{lemma}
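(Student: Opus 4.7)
The plan is to integrate the asymptotic expansions of $\xi_j$ from Lemma \ref{lemma: xi at infinity} term by term along paths staying in a single quadrant, and then match the resulting expansions against those for $\theta_j$ recorded in \eqref{eq: asym theta 1}--\eqref{eq: asym theta 3}. The non-trivial book-keeping consists of (a) fixing the integration constants and (b) accounting for the $+i\pi$ offset built into the definition \eqref{eq: definition lambda 2}.

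First, within any fixed quadrant, $\xi_1(s)=s-s^{-1}+O(s^{-3})$ and, for $j\in\{2,3,4\}$, $\xi_j(s)=A_j s^{1/3}+B_j s^{-1/3}+\tfrac{1}{3s}+O(s^{-5/3})$ with $A_j,B_j$ given explicitly (and depending on which quadrant). Fix a large $R>0$ and choose a base point $z_0$ on the ray from $0$ through $z$ at radius $R$. Writing $\lambda_j(z)=\lambda_j(z_0)+\int_{z_0}^z\xi_j(s)\,\mathrm ds$, a termwise antiderivative yields
\[
\lambda_1(z)=\tfrac{z^2}{2}-\log z+c_1+O(z^{-2}),\qquad \lambda_j(z)=\tfrac{3A_j}{4}z^{4/3}+\tfrac{3B_j}{2}z^{2/3}+\tfrac13\log z+c_j+O(z^{-2/3}),
\]
where the branch of $\log z$ is fixed by the path of integration and the $c_j$ are finite constants determined by $\lambda_j(z_0)$ and the (convergent) tail integral from $z_0$ to $\infty$.

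Second, match term by term with Lemma 2.4 (or \eqref{eq: asym theta 1}--\eqref{eq: asym theta 3}). For instance $\tfrac{3\tau^{4/3}}{4}z^{4/3}=\tfrac{3}{4}(\tau z)^{4/3}$, and $-\tfrac{\alpha\tau^{2/3}}{2}z^{2/3}=-\tfrac{\alpha}{2}(\tau z)^{2/3}$, which identifies the leading and subleading terms of $\lambda_j$ with the corresponding $\theta_k$ in the appropriate quadrant, including the $\omega,\omega^2$ factors. The $\tfrac13 z^{-1}$ contribution to $\xi_j$ integrates to the $\tfrac13\log z$ term that is not absorbed into $\theta_k$, as stated. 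The remaining $O(z^{-5/3})$ part of $\xi_j$ integrates to $D_j\,z^{-2/3}+O(z^{-4/3})$, and a direct computation shows that $D_j$ equals $D$ times the same $1,\omega,\omega^2$ factor already dictated by the quadrant, with $D$ a real number independent of the quadrant.

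Third, determine the relation between $\ell_2,\ell_3,\ell_4$. Consider $\lambda_2$ in the three quadrants $I\cup IV$, $II$, and $III$. In $I\cup IV$, definition \eqref{eq: definition lambda 1} applies, while in $II$ and $III$ definition \eqref{eq: definition lambda 2} contributes an extra $+i\pi$. On the other hand, $\theta_1$ is obtained from a common formula evaluated with different branches of fractional powers in different quadrants, and the combination $\lambda_2-\theta_1-\tfrac13\log z$ is analytic in the relevant quadrant up to infinity. Taking $z\to\infty$ in each quadrant and tracking the $+i\pi$ contributions yields constants $\ell_2,\ell_3,\ell_4$ satisfying $\ell_3-\ell_2=\ell_2-\ell_4=\tfrac{2\pi i}{3}$; the same three constants arise (in permuted assignment to quadrants) from the analogous calculation for $\lambda_3$ and $\lambda_4$ because the three $\theta_k$ are related by the cyclic permutation $z\mapsto\omega z$ of the cube roots used in \eqref{eq: asym theta 1}--\eqref{eq: asym theta 3}.

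The main obstacle is the last paragraph: while the termwise integration is automatic, correctly aligning the $+i\pi$ shifts from \eqref{eq: definition lambda 2}, the branch of $\log z$ implicit in the path of integration, and the three different determinations of $\theta_k$ across quadrants is delicate. The cleanest way to organize it is to compute $\ell_3-\ell_2$ as the imaginary jump of $\lambda_2-\theta_1-\tfrac13\log z$ across $i\R^+$ using the jump relation \eqref{eq: jumps lambda 6} together with the known branch jumps of $\theta_1$; this reduces the verification to a single explicit contour residue, confirming the value $\tfrac{2\pi i}{3}$.
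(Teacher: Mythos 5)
Your proposal is correct and follows exactly the route the paper itself takes: the paper offers no proof beyond the remark that the lemma ``follows by combining Lemma \ref{lemma: xi at infinity} with \eqref{eq: definition lambda 1}--\eqref{eq: definition lambda 2}'', i.e.\ termwise integration of the $\xi_j$-expansions, matching against \eqref{eq: asym theta 1}--\eqref{eq: asym theta 3}, and bookkeeping of the $+i\pi$ shifts and branch choices. Your reduction of $\ell_3-\ell_2$ to a residue computation at $\infty_2$ (via \eqref{eq: jumps lambda 6} and the $\tfrac13\log z$ term, in the spirit of Lemma \ref{lemma: integrals}) is a legitimate and indeed the natural way to pin down the value $\tfrac23\pi i$.
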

Note that this behavior resembles very much the behavior of the original $\lambda$-functions in \cite[Lem. 4.14]{DKM}. This will be essential in Section \ref{subsec: transformation gfunctions} to perform the second transformation in the steepest descent analysis. In fact, the modified $\xi$-functions were  precisely defined in order to achieve this.

We will also need the behavior of the $\lambda$-functions around zero.
\begin{lemma}\label{lemma: asymptotics lambda around 0}
In a neighborhood of zero, we have
\begin{align}
\lambda_1(z) &= z^2K(z)+\begin{cases}
F(z)z^{1/2}+G(z)z+H(z)z^{3/2} & \text{ in } I \cup II,\\
F(z)(-z)^{1/2}-G(z)z+H(z)(-z)^{3/2}+\pi i & \text{ in } III \cup IV,
\end{cases}\label{eq:lambda1inI} \\
\lambda_2(z) &= z^2K(z)+\begin{cases}
-F(z)(-z)^{1/2}-G(z)z-H(z)(-z)^{3/2}+\pi i & \text{ in } I, \\
F(z)(-z)^{1/2}-G(z)z+H(z)(-z)^{3/2}+\pi i & \text{ in } II,\\
-F(z)z^{1/2}+G(z)z-H(z)z^{3/2} & \text{ in } III,\\
F(z)z^{1/2}+G(z)z+H(z)z^{3/2} & \text{ in } IV ,
\end{cases} \nonumber \\
\lambda_3(z) &= z^2K(z)+\begin{cases}
F(z)(-z)^{1/2}-G(z)z+H(z)(-z)^{3/2}+\pi i & \text{ in } I, \\
-F(z)(-z)^{1/2}-G(z)z-H(z)(-z)^{3/2}+\pi i & \text{ in } II, \\
F(z)z^{1/2}+G(z)z+H(z)z^{3/2} & \text{ in } III,\\
-F(z)z^{1/2}+G(z)z-H(z)z^{3/2} & \text{ in } IV,
\end{cases}\nonumber \\
\lambda_4(z) &= z^2K(z)+\begin{cases}
-F(z)z^{1/2}+G(z)z-H(z)z^{3/2} & \text{ in } I \cup II,\\
-F(z)(-z)^{1/2}-G(z)z-H(z)(-z)^{3/2}+\pi i & \text{ in } III \cup IV.
\end{cases} \nonumber
\end{align}
where $F,G,H,K$ are {even analytic functions} satisfying
\begin{align}
F(0) &= 2 e^{3\pi i/4} \gamma^{1/4} \left( -2 \gamma^2+\tfrac{1}{\gamma}+\tau^{4/3}\gamma \right),  &F^*(0)&=0, \label{eq: F} \\
G(0) &= i \gamma^{-1/2}\left( \tfrac32 \gamma^2 - \tfrac{1}{4\gamma}-\tfrac54 \tau^{4/3} \gamma \right), &G^*(0)&=0,\label{eq: G} \\
H(0) &= e^{\pi i/4} \gamma^{-5/4} \left( \tfrac12 \gamma^2- \tfrac{1}{12\gamma}+\tfrac14 \tau^{4/3} \gamma \right), &H^*(0)&=\tfrac23 e^{\pi i/4} \label{eq: H}.
\end{align}
\end{lemma}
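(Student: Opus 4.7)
The plan is to obtain the expansions by integrating the expansions of $\xi_j$ from Lemma \ref{lem:asymptxinear0} term by term, using the definitions \eqref{eq: definition lambda 1}--\eqref{eq: definition lambda 2}. Consider first $\lambda_1$ in $I \cup II$. Here $\xi_1(z) = C z^{-1/2} + g(z) + z^{1/2} h(z)$ with $g,h$ analytic at $0$, so integrating from $0$ along a path in $I \cup II$ gives
\[
\lambda_1(z) = 2C\, z^{1/2} + \int_0^z g(s)\, ds + \int_0^z s^{1/2} h(s)\, ds.
\]
Writing $g(s)=\sum g_k s^k$ and $h(s)=\sum h_k s^k$, and separating even and odd exponents, one finds
\[
\int_0^z g(s)\, ds = z\, G(z) + z^2\, K(z), \qquad \int_0^z s^{1/2} h(s)\, ds = z^{3/2}\, H(z) + z^{5/2}\, \widetilde H(z),
\]
where $G,K,H,\widetilde H$ are the even analytic functions whose Taylor coefficients are $g_{2k}/(2k+1)$, $g_{2k+1}/(2k+2)$, $2h_{2k}/(2k+3/2)^{-1}$\ldots (routine). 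Since $z^{5/2}=z^{1/2}\cdot z^2$, combining gives $\lambda_1(z)=z^{1/2}[2C+z^2\widetilde H(z)] + z\, G(z) + z^{3/2} H(z) + z^2 K(z)$, which is the claimed form upon setting $F(z):=2C+z^2\widetilde H(z)$, still even analytic.

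Reading off the leading coefficients yields $F(0)=2C$, $G(0)=g(0)$, $H(0)=\frac{2}{3}h(0)$, and substituting the explicit values from Lemma \ref{lem:asymptxinear0} gives precisely \eqref{eq: F}--\eqref{eq: H}. The critical values $F^*(0)=0$, $G^*(0)=0$, $H^*(0)=\frac{2}{3}e^{\pi i/4}$ follow from plugging in $\gamma=\tau=1$: e.g.\ $-2+1+1=0$, $\tfrac32-\tfrac14-\tfrac54=0$, and $\tfrac12-\tfrac{1}{12}+\tfrac14=\tfrac{2}{3}$.

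For the other regions one applies the symmetry $\xi_j(-z)=-\xi_j(z)$ from \eqref{eq: symmetry xi} together with the $i\pi$ offset in \eqref{eq: definition lambda 2}. For instance, for $z\in III$ one has $\xi_1(z)=-\xi_1(-z)=-[C(-z)^{-1/2}+g(-z)+(-z)^{1/2}h(-z)]$; integration with the substitution $u=-s$, the offset $i\pi$, and the evenness of $G,H,K$ (so that $G(-z)=G(z)$ etc.) gives
\[
\lambda_1(z) = F(z)(-z)^{1/2} - G(z)\,z + H(z)(-z)^{3/2} + z^2 K(z) + \pi i,
\]
as claimed. The same strategy handles $\lambda_2,\lambda_3,\lambda_4$: in each quadrant one reads off the correct sign pattern of $Cz^{-1/2},\pm g,\pm z^{1/2}h$ from Lemma \ref{lem:asymptxinear0} (or from \eqref{eq: symmetry xi}), integrates, and observes that the resulting expression is built out of the \emph{same} four even functions $F,G,H,K$, with the $\pi i$ constants appearing precisely where formulas \eqref{eq: definition lambda 1}--\eqref{eq: definition lambda 2} prescribe them.

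The main obstacle is therefore not analytic but combinatorial: keeping track of all sign choices and $\pi i$ shifts across the sixteen (quadrant, index) pairs, and verifying that a single quadruple $(F,G,H,K)$ works uniformly. Once the pattern has been confirmed on $\lambda_1$ in $I\cup II$ and in $III\cup IV$ as above, the remaining cases are obtained by systematic application of \eqref{eq: symmetry xi} and the even/odd splitting of Taylor series, with no additional input required.
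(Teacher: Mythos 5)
Your proof is correct, and all the numerical checks go through: $F(0)=2C$, $G(0)=g(0)$, $H(0)=\tfrac23 h(0)$ reproduce \eqref{eq: F}--\eqref{eq: H} exactly, and the starred values at $\gamma=\tau=1$ come out right. The route is a more computational variant of the paper's argument. The paper first establishes the \emph{form} of the expansion abstractly, by observing that $\lambda_1(z^2)$ continues to an analytic function vanishing at the origin, which forces the decomposition $z^{1/2}F+zG+z^{3/2}H+z^2K$ with $F,G,H,K$ even; it then obtains half of the sixteen formulas by analytic continuation across the cuts of the Riemann surface and the other half from the relation $\lambda_j(z)=\lambda_j(-z)+c_0$, and only at the end evaluates $F(0),G(0),H(0)$ by comparison with Lemma \ref{lem:asymptxinear0}. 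You instead \emph{construct} $F,G,H,K$ directly by term-by-term integration of the Puiseux expansion of $\xi_1$, and propagate to the other quadrants and indices via the substitution $u=-s$ together with \eqref{eq: symmetry xi} and the $i\pi$ offsets built into \eqref{eq: definition lambda 1}--\eqref{eq: definition lambda 2}. The inputs are identical (Lemma \ref{lem:asymptxinear0} plus the $\xi$-symmetry); what your version buys is that evenness of $F,G,H,K$ and the identity of these functions across all quadrants are manifest from the construction rather than inferred from continuation, at the cost of the combinatorial bookkeeping you acknowledge. One cosmetic slip: the displayed coefficient ``$2h_{2k}/(2k+3/2)^{-1}$'' should read $h_{2k}/(2k+\tfrac32)$; the value you actually use, $H(0)=\tfrac23 h(0)$, is the correct $k=0$ instance, so nothing downstream is affected.
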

\begin{proof} 
The function $\lambda_1(z^2)$ is clearly analytic in the first quadrant. In fact, following the analytic continuation in counterclockwise direction we see that $\lambda_1(z^2)$ can be extended to a function that is analytic in a neighborhood of the origin and vanishes at the origin. This means that there exists even analytic functions $F,G,H,K$ such that 
\begin{equation} \label{eq: decomposition lambda}
\lambda_1(z)=z^{1/2} F(z)+z G(z)+z^{3/2} H(z) + z^2 K(z)
\end{equation}
for $z$ in  $I\cup II$.  Hence we have the top equality in \eqref{eq:lambda1inI}. By analytic continuation we also obtain the expressions  for $\lambda_2(z)$ in $III$, $\lambda_3(z)$ in $IV$, $\lambda_4(z)$ in $I\cup II$, $\lambda_3$ in $III$, and finally $\lambda_2$ in $IV$. Hence we found half of the expressions in the lemma.

To get the other half of the expressions, we note that we must have $\lambda_j(z)=\lambda_j(-z)+c_0$ for some constant $c_0$, depending on $j \in \{1,2,3,4\}$ and the quadrant to which $z$ belongs. This follows from \eqref{eq: symmetry xi} and the fact that $\lambda_j$ is an anti-derivative of $\xi_j$. It is not difficult to evaluate the constants by taking the limit $z\to 0$. This gives the other half of the expressions.

Finally, the values $F(0)$, $G(0)$, and $H(0)$ can be found by comparing the asymptotics for $\xi_j(z)$ as $z\to 0$ as given in Lemma \ref{lem:asymptxinear0}.
\end{proof}

As $n\to \infty$, the functions $\lambda_j$  converge to $\lambda_j^*$, which are the $\lambda$-functions associated with the critical parameters $\alpha=-1$ and $\tau=1$. In the following lemma we give an estimate on the rate of convergence, which will be useful later on. 

\begin{lemma} \label{lemma: estimates on lambda minus lambdastar}
There exist constants $c_1,c_2>0$ such that  
\begin{align*}
|\lambda_1(z)-\lambda_1^*(z)| & \leq c_1 n^{-1/3} |z|^{1/2}, \\
|\lambda_j(z)-\lambda_j^*(z)| & \leq c_2 n^{-1/3} \max(|z|^{1/2},|z|^{4/3}), \qquad j=2,3,4,
\end{align*}
hold for $z \in \C$ and $n$ sufficiently large.
\end{lemma}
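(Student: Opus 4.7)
The plan is to bound $\xi_j - \xi_j^*$ with suitable weights near $0$ and $\infty$, then integrate along a path within a single quadrant (as in \eqref{eq: definition lambda 1}--\eqref{eq: definition lambda 2}) to obtain the stated bounds on $\lambda_j - \lambda_j^*$. By the implicit function theorem applied to \eqref{eq: gamma}, $\gamma$ is analytic in $(\alpha, \tau)$ near $(-1, 1)$, and \eqref{eq: scaling gamma} gives $|\gamma - 1| + |\tau - 1| = \mathcal O(n^{-1/3})$ in the triple scaling limit. Since \eqref{eq: algebraic equation for w} and \eqref{eq: xi in terms of w} define $w_j$ and $\xi_j$ as analytic functions of $(\gamma, \tau, z)$ away from the branch locus, any pointwise bound on $\xi_j - \xi_j^*$ reduces to a pointwise bound on the derivatives of $\xi_j$ with respect to $(\gamma, \tau)$ at the critical parameters, multiplied by $n^{-1/3}$.

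Near $z = 0$, Lemma \ref{lem:asymptxinear0} gives the expansion $\xi_j(z) = \pm C z^{-1/2} + g(z) \pm z^{1/2} h(z) + \cdots$, and direct substitution $\gamma = \tau = 1$ into \eqref{eq: C} and into the formula for $g(0)$ shows that both $C$ and $g(0)$ vanish at the critical parameters (one computes $-2 + 1 + 1 = 0$ and $\tfrac{3}{2} - \tfrac{1}{4} - \tfrac{5}{4} = 0$). Since $C$, $g$, $h$ depend smoothly on $(\gamma, \tau)$, this yields $|\xi_j(z) - \xi_j^*(z)| \leq c_1' n^{-1/3} |z|^{-1/2}$ in a fixed neighborhood of the origin. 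Near $z = \infty$, Lemma \ref{lemma: xi at infinity} shows that for $j \geq 2$ the leading coefficient $\tau^{4/3} \omega^k$ of the $z^{1/3}$ term produces $|\xi_j(z) - \xi_j^*(z)| \leq c_2' n^{-1/3} |z|^{1/3}$; for $j = 1$, the coefficients of $z$ and $z^{-1}$ in the same lemma are both parameter-independent, giving the stronger estimate $|\xi_1 - \xi_1^*| = \mathcal O(n^{-1/3} |z|^{-3})$ at infinity. On compact subsets bounded away from the branch cuts, $|\xi_j - \xi_j^*| = \mathcal O(n^{-1/3})$ by analyticity in $(\gamma, \tau)$.

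Integrating these weighted bounds along a path from $0$ to $z$ staying inside one quadrant, the weight $|z|^{-1/2}$ near the origin integrates to $|z|^{1/2}$ and the weight $|z|^{1/3}$ at infinity integrates to $|z|^{4/3}$. For $j = 1$, only the near-origin contribution matters asymptotically, and one obtains the first inequality. For $j = 2, 3, 4$, the two contributions combine into the $\max(|z|^{1/2}, |z|^{4/3})$ bound. The main obstacle will be ensuring that the patching of the near-zero and near-infinity asymptotics is uniform across the movable branch points $\pm c = \pm\tfrac{16}{3\sqrt{3}} \gamma^{3/2}$, whose position itself shifts by $\mathcal O(n^{-1/3})$. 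This is handled by passing to the rescaled variable $z/c$, which removes the parameter dependence from the branch locus of the underlying Riemann surface and absorbs the $(\gamma, \tau)$-dependence into smooth multiplicative factors, so that standard compactness combined with the explicit weighted bounds above yields the estimate uniformly on all of $\C$.
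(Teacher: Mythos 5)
Your proposal follows essentially the same route as the paper's proof: establish the pointwise bounds $|\xi_1-\xi_1^*|\le \tilde c_1 n^{-1/3}|z|^{-1/2}$ and $|\xi_j-\xi_j^*|\le \tilde c_2 n^{-1/3}\max(|z|^{-1/2},|z|^{1/3})$ by combining the $\mathcal O(n^{-1/3})$ parameter dependence of $\gamma$ and $\tau$ with Lemmas \ref{lem:asymptxinear0} and \ref{lemma: xi at infinity}, and then integrate within a quadrant using \eqref{eq: definition lambda 1}--\eqref{eq: definition lambda 2}. The extra observations you add (the vanishing of $C$ and $g(0)$ at the critical parameters, the behavior near the moving branch points $\pm c$) are consistent with, and slightly more explicit than, the argument given in the paper.
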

\begin{proof} We start by giving an estimate for the functions $\xi_j$. If we stay away from the singularities $z=0,\infty$ it is clear from \eqref{eq: scaling gamma}, \eqref{eq: algebraic equation for w}, and \eqref{eq: xi in terms of w} that $$|\xi_j(z)-\xi_j^*(z)|=\mathcal O(n^{-1/3})$$ uniformly as $n \to \infty$. Combining this with the asymptotic behavior of $\xi_j(z)$ as $z\to 0$ and $z\to \infty$ given in Lemma \ref{lem:asymptxinear0} and Lemma \ref{lemma: xi at infinity},  we find there exists constant $\tilde c_1$ and $\tilde c_2$ such that 
\[|\xi_1(z)-\xi_1^*(z)|\leq \tilde c_1 n^{-1/3} |z|^{-1/2},\]
and 
\[|\xi_j(z)-\xi_j^*(z)|\leq \tilde c_2 n^{-1/3} \max(|z|^{-1/2},|z|^{1/3})\]
for $z\in \C \setminus \{0\}$. The statement now follows integrating these expressions.
\end{proof}

\subsection{Modified equilibrium problem } \label{sec:modequiproblem}

Before we continue, we end this section with a discussion  on the interpretation of the modified functions. This discussion is only to clarify the motivation behind the definition and is not necessary for the upcoming analysis.

Inspired by \cite[eq. (4.27)]{DKM}, we define measures by
\begin{align} \label{eq:modeqmeasure}
\begin{cases}
\frac{\mathrm d \mu_1}{\mathrm d x}(x) = \frac{1}{2\pi i} \left(\xi_{1,+}(x)-\xi_{1,-}(x)\right), & x \in [-c,c], \\
\frac{\mathrm d \mu_2}{\mathrm d z}(z) = \frac{1}{2\pi i} \left(\xi_{2,+}(z)-\xi_{2,-}(z)\right)-\frac{1}{2\pi i} \left(\theta_{1,+}'(z)-\theta_{1,-}'(z)\right), & z \in i\R, \\
\frac{\mathrm d \mu_3}{\mathrm d x}(x) = \frac{1}{2\pi i} \left(\xi_{3,+}(x)-\xi_{3,-}(x)\right)-\frac{1}{2\pi i} \left(\theta_{2,+}'(x)-\theta_{2,-}'(x)\right), & x \in \R.
\end{cases}
\end{align}
Although \eqref{eq:modeqmeasure} is exactly the same as \cite[eq. (4.27)]{DKM}, the measures here are different because our definition of $\xi$ is different. Only in the critical case $\alpha=-1$ and $\tau=1$,  the definitions agree.  In that situation, the vector of measures $(\mu_1,\mu_2,\mu_3)$ is the unique solution to the equilibrium problem in Section \ref{sec: limiting mean density}. See Figure \ref{fig: densities critical equilibrium measures} for a plot of these critical measures. For other values of $\alpha$ and $\tau$, the measures defined by \eqref{eq:modeqmeasure} do not solve that equilibrium problem, but a modification of it that we will briefly discuss.

Let us first make some observations. It can be checked that the total masses of $\mu_j$, $j=1,2,3,$ are given by 
\begin{align*}
\mu_1([-c,c])=1, \qquad \mu_2(i\R)=\frac23, \qquad \textrm{ and } \mu_3(\R)=\frac13, 
\end{align*}
e.g. for $\mu_1$ this follows from Lemma \ref{lemma: integrals}. So the measures still satisfy conditions (b), (c), and (d) of the original equilibrium problem. On the other hand, these  measures are not necessarily positive. In a neighborhood of zero, and only there, they can have negative densities. The measure $\mu_2$ can also break the constraint locally around zero, violating condition (e) of the original equilibrium problem. Moreover, generically the densities in \eqref{eq:modeqmeasure} blow up around zero, to either plus or minus infinity. 

Without proof we claim that $(\mu_1,\mu_2,\mu_3)$ solves the following modified equilibrium problem. We seek to minimize the energy functional \eqref{eq:energyfunctional} under the conditions (a), (b), (c), and (d), but in a small neighborhood of the origin we allow the measures to be negative and $\mu_2$ to break the constraint $\sigma_2$.  These modifications lead to violations of the original equilibrium problem, and as a result they will cause trouble when opening the lenses in the steepest descent analysis. However, these problems only take place near the origin, around which we are going to make a special parametrix anyway.

\begin{figure}[t]
\centering
\begin{tikzpicture}[xscale=0.7,yscale=0.7]
\begin{scope}[shift={(9,0)}]
\draw (0,2.9) node {\includegraphics[width=112pt]{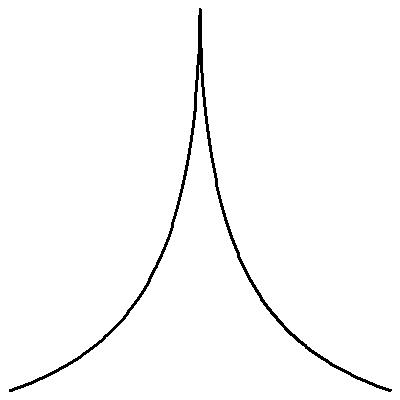}};
\draw[->] (-3,0)--(3,0) node[right]{$i\R$};
\draw[->] (0,0)--(0,6);
\draw (1,4) node {$\mu_2^*$};
\fill (0,5.5) circle (2pt) node[left]{$\frac{1}{\pi}$};
\end{scope}

\draw (0,2.5) node {\includegraphics[width=140pt]{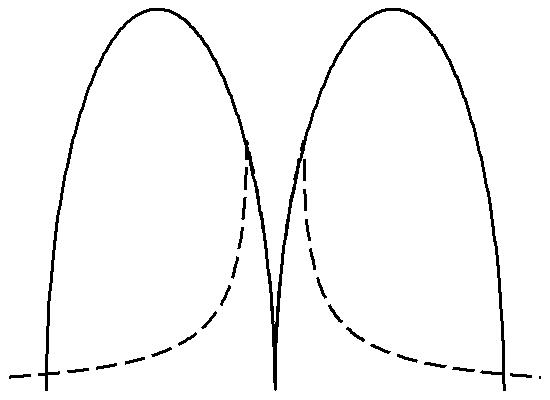}};
\draw[->] (-4,0)--(4,0) node[right]{$\R$};
\draw[->] (0,0)--(0,6);
\draw[help lines] (0.35,3)--(0.35,0)  (-0.35,3)--(-0.35,0);
\draw (3,4.5)node{$\mu_1^*$}
      (1,2.5)node{$\mu_3^*$};
\fill (0.35,0) circle (2pt) node[below]{$x^*$}
      (-0.35,0) circle (2pt) node[below]{$-x^*$}
      (2.9,0) circle (2pt) node[below]{$c^*$}
      (-2.9,0) circle (2pt) node[below]{$-c^*$};                   
\end{tikzpicture}
\caption{Densities of the equilibrium measures $\mu_1^*$, $\mu_2^*$ (solid), and $\mu_3^*$ (dashed) for critical values $\alpha=-1$, $\tau=1$.}
\label{fig: densities critical equilibrium measures}
\end{figure}

\section{Riemann-Hilbert steepest descent analysis}

In this section we prove Theorem \ref{th: main theorem}. The proof starts with a characterization of the kernel $K_n$ in terms  of a $4\times 4$ RH problem. We then perform a Deift/Zhou steepest descent analysis on the RH problem to obtain the asymptotic behavior near the critical point.  

Many definitions and transformations are analogous to the ones in \cite{DKM}. For this reason we will allow us to be brief at some points and simply refer to \cite{DKM} for more details. The most important differences between the analysis here and the one given in \cite{DKM}, are the use of the modified $\lambda$-functions from the previous section and the special parametrix near the origin based on RH problem \ref{rhp: tacnode rhp}.

\subsection{Riemann-Hilbert problem for $Y$}

We start by expressing the kernel $K_n$ in \eqref{eq:defKn} in terms of the solution to a RH problem.  This RH problem first appeared in \cite{KMcL} to characterize the biorthogonal polynomials in  \eqref{eq:biorthogonal}.

\begin{lemma} \label{rhp: Y}
Suppose $ n \equiv 0 \mod 3$. There exists a unique $4 \times 4$ matrix-valued function $Y(z)$, also depending on the parameters $\alpha$ and $\tau$, satisfying
\begin{itemize}
\item[\rm (1)] $Y(z)$ is analytic for  $z \in \C \setminus \R$;
\item[\rm (2)] for $x \in \R$ the function $Y$ makes the jump
\[
Y_+(x)=Y_-(x) \begin{pmatrix} 1 & w_{0,n}(x) & w_{1,n}(x) &w_{2,n}(x) \\ 0&1&0&0 \\ 0&0&1&0 \\ 0&0&0&1 \end{pmatrix}, \qquad x \in \R,
\]
where \[
w_{j,n}(x)= e^{-nV(x)}\int_{-\infty}^\infty y^j e^{-n(W(y)-\tau x y)} \ud y, \qquad j=0,1,2;
\]
\item[\rm (3)] as $z \to \infty$, we have
\[
Y(z)=\left( I+\mathcal O(1/z) \right) \diag \left(z^n,z^{-n/3},z^{-n/3},z^{-n/3}\right).
\]
\end{itemize}
 Moreover, the kernel $K_n$ is given by
\begin{equation} \label{eq: K in terms of Y}
K_n(x,y)=\frac{1}{2\pi i (x-y)}\begin{pmatrix} 0 & w_{0,n}(y)  & w_{1,n}(y) & w_{2,n}(y) \end{pmatrix} Y_+^{-1}(y)Y_+(x) \begin{pmatrix} 1\\0\\0\\0 \end{pmatrix}.
\end{equation}

\end{lemma}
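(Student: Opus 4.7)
The lemma is the Kuijlaars--McLaughlin Riemann--Hilbert characterization of the biorthogonal polynomials for the two-matrix model with quartic $W$, and my plan is to construct $Y$ explicitly following the Van Assche--Geronimo--Kuijlaars recipe for multiple orthogonal polynomials. The first ingredient is a reduction: using $\partial_y e^{-n(W(y)-\tau xy)} = -n(y^3+\alpha y-\tau x)\,e^{-n(W(y)-\tau xy)}$ and repeated integration by parts, any polynomial $q(y)$ integrated against $e^{-n(W(y)-\tau xy)}$ collapses to an $x$-polynomial linear combination of $w_{0,n}(x)$, $w_{1,n}(x)$, $w_{2,n}(x)$. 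Applied to $Q_{j,n}$, this translates the biorthogonality \eqref{eq:biorthogonal} into the assertion that $p_{n,n}$ is a type-II multiple orthogonal polynomial with multi-index $(n/3,n/3,n/3)$ with respect to the three weights $w_{j,n}$, $j=0,1,2$; this is where the hypothesis $n\equiv 0\bmod 3$ enters.

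Next I would define the top row of $Y$ by $Y_{1,1}=p_{n,n}$ and $Y_{1,k+1}(z) = \frac{1}{2\pi i}\int_{\R}\frac{p_{n,n}(x)w_{k-1,n}(x)}{x-z}\,\mathrm dx$ for $k=1,2,3$. For the remaining three rows, introduce three triples $(A_1^{(j)},A_2^{(j)},A_3^{(j)})_{j=1,2,3}$ of type-I multiple orthogonal polynomials satisfying $\sum_{k=1}^3\int x^\ell A_k^{(j)}(x)w_{k-1,n}(x)\,\mathrm dx = 0$ for all $0\le \ell\le n-2$, together with prescribed normalizations for the coefficient at $x^{n-1}$. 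The $(j{+}1)$-th row of $Y$ is then the appropriate linear combination of the three $A_k^{(j)}$ (in its first entry) and of their Cauchy transforms against $w_{k-1,n}$ (in the remaining entries). Existence and uniqueness of these auxiliary polynomials amounts to the \emph{normality} of the relevant multi-indices, which follows from the positivity of the underlying density $e^{-n(V(x)+W(y)-\tau xy)}$ on $\R^2$ (see \cite{EMc}).

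With $Y$ so constructed, the jump condition follows directly from Plemelj--Sokhotski: only the three Cauchy transforms in the last three columns see a discontinuity across $\R$, and they jump by precisely the weights appearing in the first row of the jump matrix, so that the first column is preserved. The asymptotic behavior is verified by expanding $(x-z)^{-1} = -\sum_{k\ge 0}x^k/z^{k+1}$ and using the type-II orthogonality of $p_{n,n}$ to cancel the first $n/3$ terms in the top row's Cauchy transforms, and the type-I orthogonality of the $A_k^{(j)}$ (together with their normalizations) to produce the desired $z^{-n/3}$ leading order in the lower rows with the correct $I+\mathcal O(z^{-1})$ prefactor. Uniqueness is the usual Liouville argument: $\det Y$ is entire and tends to $1$, and any quotient $Y\widetilde Y^{-1}$ of two solutions is entire and tends to $I$, hence equals $I$.

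Finally, the kernel identity \eqref{eq: K in terms of Y} is a Christoffel--Darboux-type computation. Observe that the row $(0,w_{0,n}(y),w_{1,n}(y),w_{2,n}(y))$ is precisely $(Y_{1,-}(y)-Y_{1,+}(y))\cdot(\text{jump matrix})^{-1}$ evaluated on the relevant entries, and $(1,0,0,0)^T$ extracts the first column of $Y_+(x)$. Expanding $Y_+^{-1}(y)Y_+(x)$ entry-by-entry in terms of the polynomials and Cauchy transforms built above, and using the biorthogonality \eqref{eq:biorthogonal} to recognize sums of the form $\sum_k p_{k,n}(x)Q_{k,n}(y)/h_{k,n}^2$, one arrives at the stated expression for $K_n$. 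The main obstacle in the whole proof is the construction of the type-I polynomials $A_k^{(j)}$ with prescribed normalization, i.e.\ establishing perfectness of the multi-index $(n/3,n/3,n/3)$; everything else is Plemelj--Sokhotski bookkeeping.
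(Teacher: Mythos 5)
Your proposal is correct and takes essentially the same route as the paper, which offers no argument of its own here beyond citing \cite[Sec.~1.7]{DKM}: that reference establishes the lemma precisely along the lines you describe (integration by parts to reduce biorthogonality to multiple orthogonality with respect to $w_{0,n},w_{1,n},w_{2,n}$ — the role of $n\equiv 0 \bmod 3$ — the $4\times 4$ type-I/type-II Riemann--Hilbert problem of Van Assche--Geronimo--Kuijlaars, Liouville for uniqueness, and the Christoffel--Darboux-type identity for the kernel). No changes needed.
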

\begin{proof}
For details see \cite[Sec. 1.7]{DKM} and the references therein.
\end{proof}

In the rest of this section we  perform a Deift/Zhou steepest descent analysis on $Y$, i.e.  a series of explicit transformations
\[
Y \mapsto X \mapsto U \mapsto T \mapsto S \mapsto R,
\]
such that the final RH problem for $R$ is simple in the sense  that all jump matrices tend to the identity matrix as $n \to \infty$, both uniformly and in $L^2$-sense, and the behavior at infinity is normalized.  Then, by standard considerations, we obtain an asymptotic expansion for $R$ as $n \to \infty$. By tracing back the transformations, we then obtain the asymptotic behavior of the kernel $K_n$. 

The successive transformations are: a preliminary transformation (exactly the same as in \cite{DKM}), normalization at infinity by means of the modified $\lambda$-functions that were established in Section \ref{sec: Riemann surface}, opening of unbounded lenses, opening of local lenses, and finally the matching with local and global parametrices. 

\subsection{First transformation: $Y \mapsto X$}

The first transformation  $Y \mapsto X$ is identical to the first transformation in \cite[Sec. 5]{DKM}. For that reason, we will not go into details here but only state the new RH problem for $X$ and the formula that expresses the kernel in terms of the solution to this RH problem.

\begin{lemma} \label{rhp: X}
Suppose that $n \equiv 0 \mod 3$. Let $\theta_j$ be as defined in Section \ref{sec: theta}. There exists a  unique $4 \times 4$ matrix-valued function $X(z)$, also depending on the parameters $\alpha<0$ and $\tau$, satisfying
\begin{itemize}
\item[\rm (1)] $X(z)$ is analytic for  $z \in \C \setminus (\R \cup i\R)$;
\item[\rm (2)] $X_+(z)=X_-(z) J_X$, for $z \in \R \cup i\R$, where the matrices $J_X$ are given as follows
\begin{itemize}
\item[$\bullet$] On the real line we have for $x \in (-\infty,-x^*(\alpha)]\cup [x^*(\alpha),\infty)$
\[
J_X= \begin{pmatrix} 1 & e^{-n(x^2/2-\theta_1(x))} & 0 & 0 \\ 0&1&0&0 \\ 0&0&e^{n(\theta_{2,+}-\theta_{3,+})(x)} &1 \\ 0&0&0&e^{n(\theta_{3,+}-\theta_{2,+})(x)} \end{pmatrix},
\]
and for $x \in (-x^*(\alpha),x^*(\alpha))$
\[
J_X= \begin{pmatrix} 1 & e^{-n(x^2/2-\theta_1(x))} & 0 & 0 \\ 0&1&0&0 \\ 0&0&1&e^{-n(\theta_{2}-\theta_{3})(x)}  \\ 0&0&0&1 \end{pmatrix}.
\]
See \eqref{eq: def xstar} for the definition of $x^*(\alpha)$.
\item[$\bullet$] On the imaginary axis we have
\[
J_X=\begin{pmatrix} 1 & 0 & 0 & 0 \\ 0&e^{n(\theta_{1,+}-\theta_{2,+})(z)}&0&0\\0&1&e^{n(\theta_{1,-}-\theta_{2,-})(z)}&0\\0&0&0&1 \end{pmatrix}, \qquad z \in i\R.
\]
\end{itemize}
\item[\rm (3)] As $z \to \infty$ in the $j$-th quadrant, we have
\[
X(z)=\left( I+\mathcal O(z^{-2/3})\right)\diag (z^n,z^{-\frac{n-1}{3}},z^{-\frac{n}{3}},z^{-\frac{n+1}{3}}) \begin{pmatrix} 1 & 0 \\ 0 & A_j \end{pmatrix},
\]
where the matrices $A_j$ are given by
\begin{align}
A_1 &=  \frac{-i}{\sqrt 3}\begin{pmatrix} -1 & \omega & \omega^2 \\ -1 & 1 & 1 \\ -1 & \omega^2 & \omega \end{pmatrix}, &
A_2 &=  \frac{-i}{\sqrt 3}\begin{pmatrix} \omega & 1 & \omega^2 \\ 1 & 1& 1 \\ \omega^2 & 1 & \omega \end{pmatrix}, \label{eq: A12}\\
A_3 &=  \frac{i}{\sqrt 3} \begin{pmatrix} -\omega^2 & -1 & \omega \\ -1 & -1& 1 \\ -\omega & -1 & \omega^2 \end{pmatrix}, &
A_4 &=  \frac{i}{\sqrt 3} \begin{pmatrix} 1 & -\omega^2 & \omega \\ 1&-1&1 \\ 1&-\omega&\omega^2 \end{pmatrix}, \label{eq: A34}
\end{align}
and $\omega= e^{2 \pi i /3}$.
\item[\rm (4)] $X(z)$ remains bounded as $z\to 0$. 
\end{itemize}
Moreover, the kernel \eqref{eq: K in terms of Y} is given by 
\begin{equation} \label{eq: K in terms of X}
K_n(x,y)=\frac{1}{2\pi i (x-y)}\begin{pmatrix} 0 & e^{-n(y^2/2-\theta_1(y))}  & 0 & 0 \end{pmatrix} X_+^{-1}(y)X_+(x) \begin{pmatrix} 1\\0\\0\\0 \end{pmatrix}.
\end{equation}
\end{lemma}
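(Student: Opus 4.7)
The plan is to produce $X$ as $X(z) = Y(z)\, M(z)$, where $M$ is a piecewise analytic $4\times 4$ matrix built from saddle-point contour deformations for the integrals defining the weights $w_{j,n}$, following the construction already carried out in \cite[Sec.~5]{DKM}. Lemma \ref{rhp: Y} is taken as input.

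The key observation is that each weight can be written as
\[
w_{j,n}(x) = e^{-nV(x)}\sum_{k=1}^3 \int_{\Gamma_k(x)} y^j e^{-n(W(y)-\tau x y)}\,\ud y,
\]
where $\Gamma_1(x),\Gamma_2(x),\Gamma_3(x)$ are steepest-descent contours through the three saddles $s_1(x),s_2(x),s_3(x)$ of $W(s)-\tau x s$; these are homologous together to $\R$. Each summand carries the exponential factor $e^{n\theta_k(x)}$ with $\theta_k$ as in \eqref{eq: def thetaj}. Let $\widetilde M(x)$ be the $3\times 3$ matrix whose $(k,j)$-entry is the $k$-th saddle contribution to $w_{j-1,n}(x)$ (with appropriate analytic continuation off $\R$), and set $M = \mathrm{diag}(1,1,1,1)$ with its lower-right $3\times 3$ block replaced by $\widetilde M(z)^{-1}$. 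Then $X = YM$ has no jump across those portions of $\R$ where all three saddles contribute smoothly; the jumps of $X$ are precisely where the saddle labelling jumps, namely along $\R \cup i\R$.

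Checking the stated jumps is then a matter of bookkeeping. Across $(-x^*(\alpha), x^*(\alpha))$ the local maximum and secondary local minimum of $W(s)-\tau x s$ trade places, yielding the entry $e^{-n(\theta_2-\theta_3)(x)}$; across $|x|>x^*(\alpha)$ two saddles form a complex conjugate pair and the resulting Stokes matrix is the one written in item~(2); across $i\R$ the saddles $s_1, s_2$ undergo the monodromy that produces the lower-triangular jump with $e^{n(\theta_{1,+}-\theta_{2,+})(z)}$. For item~(3), the large-$z$ expansion of the saddle integrals together with the asymptotics \eqref{eq: asym theta 1}--\eqref{eq: asym theta 3} diagonalises $\widetilde M(z)$ against the factors $e^{n\theta_k(z)}$; absorbing these into the $z^{-n/3}$ behaviour of $Y$ reproduces the stated normalisation with the matrices $A_j$ of \eqref{eq: A12}--\eqref{eq: A34}, which are precisely the discrete-Fourier matrices on the three saddle branches in each quadrant. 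The hypothesis $n\equiv 0 \bmod 3$ is used exactly here so that $z^{-n/3}$ is single-valued. Boundedness at the origin (item~(4)) follows because the potential singularity of $\widetilde M^{-1}$ at turning points does not reach $z=0$: locally at $z=0$ all three saddles remain distinct, so $\widetilde M$ is invertible there.

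Finally, the kernel formula \eqref{eq: K in terms of X} follows by substituting $Y = X M^{-1}$ into \eqref{eq: K in terms of Y}. Since only $\Gamma_1$ passes through the global minimum saddle $s_1$ and is the original real axis up to deformation, the row $(0, w_{0,n}(y), w_{1,n}(y), w_{2,n}(y))\, M(y)$ collapses to $(0,\, e^{-n(y^2/2-\theta_1(y))},\, 0,\, 0)$, while $M(x)(1,0,0,0)^T = (1,0,0,0)^T$ by construction. The main obstacle is the consistent choice of oriented steepest-descent contours across the turning points $\pm x^*(\alpha)$ where two saddles coalesce, and the verification that all Stokes matrices glue correctly; this delicate combinatorial argument is the content of \cite[Sec.~5]{DKM} and transfers verbatim to the present setting, since the transformation only uses the form $W(y) = y^4/4 + \alpha y^2/2$ of the second potential and is independent of $V$. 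Uniqueness follows in the standard way from the normalisation at infinity together with item~(4), using the fact that the jumps have unit determinant.
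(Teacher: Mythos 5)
Your proposal is correct and follows essentially the same route as the paper, which simply invokes the transformation $Y\mapsto X$ of \cite[Sec.~5]{DKM}: you reconstruct that construction in outline (decomposition of the weights $w_{j,n}$ into the three saddle/contour contributions carrying the factors $e^{n\theta_k}$, conjugation of $Y$ by the resulting block matrix, and the standard uniqueness argument), and like the paper you defer the detailed Stokes bookkeeping to \cite[Sec.~5]{DKM} and the kernel identity to \cite[eq.~(9.28)]{DKM}. No substantive difference in method.
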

\begin{proof}
The uniqueness for $X$ is standard and the solution can be constructed out of $Y$ by the transformation $Y\mapsto X$ in \cite[Sec. 5]{DKM}. For \eqref{eq: K in terms of X} see \cite[eq. (9.28)]{DKM}.
\end{proof}

\subsection{Second transformation: $X \mapsto U$} \label{subsec: transformation gfunctions}

In this transformation the modified $\lambda$-functions of Section 3 come into play to normalize the behavior at infinity. As in \cite[eq. (6.1)]{DKM}, but now with the modified $\lambda$-functions, we define
\begin{equation} \label{eq: X to U}
U(z)=(I-nDE_{2,4})e^{-nL(z)}X(z)e^{n\Lambda(z)}\begin{pmatrix} e^{-nz^2/2} & 0 \\ 0 & e^{-n \Theta(z)}\end{pmatrix}, \qquad z \in \C \setminus (\R \cup i\R),
\end{equation}
where
\begin{align*}
L(z) &= \begin{cases}
\diag(\ell_1,\ell_2,\ell_3,\ell_4) & \text{ for } z \text{ in } I,   \\
\diag(\ell_1,\ell_3,\ell_2,\ell_4) & \text{ for } z \text{ in } II,  \\
\diag(\ell_1,\ell_4,\ell_2,\ell_3) & \text{ for } z \text{ in } III, \\
\diag(\ell_1,\ell_2,\ell_4,\ell_3) & \text{ for } z \text{ in } IV,
\end{cases}, \\
\Lambda(z)&=\diag \left( \lambda_1(z),\lambda_2(z),\lambda_3(z),\lambda_4(z)\right),
\end{align*}
and
\[
\Theta(z)= \diag \left( \theta_1(z),\theta_2(z),\theta_3(z)\right).
\]
Here, $D$ is the constant from Lemma \ref{lemma: asymptotics lambda around infinity} and $E_{2,4}$ is the $4 \times 4$ matrix with a 1 in the $(2,4)$ entry and zeros in all other entries. 

Then $U$ satisfies the following RH problem.
\begin{lemma} \label{rhp: U}
Suppose that $n \equiv 0 \mod 3$. Then $U$ as defined in \eqref{eq: X to U}  has the following properties \begin{itemize}
\item[\rm (1)] $U(z)$ is analytic for  $z \in \C \setminus (\R \cup i\R)$;
\item[\rm (2)] $U_+(z)=U_-(z) J_U$, for $z \in \R \cup i\R$, where the matrices $J_U$ are specified as follows.
\begin{itemize}
\item[$\bullet$] On the real line we have for $x \in (-c,c)$
\[
J_U(x)=\begin{pmatrix} e^{n(\lambda_{1,+}-\lambda_{1,-})(x)} & 1 & 0 & 0 \\ 0 & e^{n(\lambda_{2,+}-\lambda_{2,-})(x)}&0&0\\0&0&e^{n(\lambda_{3,+}-\lambda_{3,-})(x)} & 1\\0&0&0& e^{n(\lambda_{4,+}-\lambda_{4,-})(x)} \end{pmatrix},
\]
and for $x \in (-\infty,-c] \cup [c,\infty)$
\[
J_U(x)=\begin{pmatrix} 1 & e^{n(\lambda_{2,+}-\lambda_{1,-})(x)} & 0 & 0 \\ 0&1&0&0\\ 0&0&e^{n(\lambda_{3,+}-\lambda_{3,-})(x)} & 1\\0&0&0 & e^{n(\lambda_{4,+}(x)-\lambda_{4,-})(x)}\end{pmatrix}.
\]
\item[$\bullet$] On the imaginary axis $z\in i \R$ we have
\[
J_U(z)=\begin{pmatrix} 1 & 0 & 0 & 0 \\ 0&e^{n(\lambda_{2,+}-\lambda_{2,-})(z)} & 0 &0\\0&1 & e^{n(\lambda_{3,+}-\lambda_{3,-})(z)}&0\\0&0&0&1 \end{pmatrix}.
\]
\end{itemize}
\item[\rm (3)] As $z \to \infty$ in the $j$-th quadrant, we have
\[
U(z)=\left( I+\mathcal O(z^{-1/3})\right)\diag (1,z^{1/3},1,z^{-1/3}) \begin{pmatrix} 1 & 0 \\ 0 & A_j \end{pmatrix},
\]
with $A_j$, $j=1,2,3,4$, as in \eqref{eq: A12}--\eqref{eq: A34}.
\end{itemize}
The jump contour and the jump matrices  are shown in Figure \ref{fig: contour for U}. 
\end{lemma}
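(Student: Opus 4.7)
The claim is that the explicit transformation \eqref{eq: X to U} converts the RH problem for $X$ (Lemma \ref{rhp: X}) into the stated RH problem for $U$. This is a direct verification, driven by the jump and asymptotic relations for the modified $\lambda$-functions collected in Lemmas \ref{lemma: lambda meromorf} and \ref{lemma: asymptotics lambda around infinity}. My plan is to check the three items separately.

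Analyticity (item 1) is immediate: $X$ is analytic off $\R\cup i\R$, and within each open quadrant both $\Lambda$ and $L$ are constant-in-index and analytic, so the product \eqref{eq: X to U} is analytic off $\R\cup i\R$. For the jumps (item 2) I will compute $U_-^{-1}U_+$ on each arc by plugging in the definition of $U$ and using the jumps of $\Lambda$ and $L$. On the real axis the conjugation by $e^{-nz^2/2}\oplus e^{-n\Theta}$ cancels exactly the off-diagonal exponentials $e^{-n(x^2/2-\theta_1)}$ and $e^{n(\theta_{2,\pm}-\theta_{3,\pm})}$ appearing in $J_X$; what remains is the $e^{n(\lambda_{j,+}-\lambda_{j,-})}$ structure coming from $\Lambda$. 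The constants $\ell_j$ in $L(z)$ are arranged so that the discontinuities of $L$ across $\R$ and $i\R$ precisely compensate the constant jumps that appear when $\lambda_j$ is analytically continued through an axis; combined with the identity $\lambda_{j,\pm}=\lambda_{k,\mp}$ along each cut from Lemma \ref{lemma: lambda meromorf}, this reduces every jump to the form stated. The apparent $2\pi i$ contributions on $(-\infty,-c]$ in \eqref{eq: jumps lambda 1}--\eqref{eq: jumps lambda 2} are harmless since $n$ is an integer, hence $e^{\pm 2\pi i n}=1$. The imaginary axis case is completely analogous, using \eqref{eq: jumps lambda 5}--\eqref{eq: jumps lambda 6}.

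The asymptotic statement (item 3) is the most delicate point. By Lemma \ref{lemma: asymptotics lambda around infinity},
\begin{equation*}
\lambda_1(z)=\tfrac{z^2}{2}-\log z+\ell_1+\mathcal O(z^{-2}),\qquad \lambda_j(z)=\theta_{j-1}(z)+\tfrac13\log z+\ell_{\sigma(j)}+D\,\omega^{\epsilon_j}z^{-2/3}+\mathcal O(z^{-4/3}),
\end{equation*}
for $j=2,3,4$, where the index $\sigma(j)$ and root-of-unity $\epsilon_j$ depend on the quadrant in exactly the pattern encoded by $L(z)$. Consequently, after conjugation by $e^{-nz^2/2}\oplus e^{-n\Theta}$ and multiplication by $e^{n\Lambda}$, the diagonal factor $\diag(z^n,z^{-(n-1)/3},z^{-n/3},z^{-(n+1)/3})$ in the asymptotics of $X$ collapses to $\diag(1,z^{1/3},1,z^{-1/3})$, and the permutation matrix implicitly contained in $e^{-nL(z)}$ is absorbed into the $A_j$'s, leaving them unchanged. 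What remains is the sub-leading term $nD\,\omega^{\epsilon_j}z^{-2/3}$ in each of rows $2,3,4$; this prevents normalization at $\infty$ and is exactly the defect corrected by the prefactor $(I-nDE_{2,4})$ in \eqref{eq: X to U}.

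The one point requiring some care — and the main obstacle in the verification — is showing that, after right-multiplication by $A_j$, the triple $\{nD,nD\omega,nD\omega^2\}$ of $z^{-2/3}$ contributions from rows $2,3,4$ assembles into an expression supported only in the $(2,4)$ entry of the resulting matrix, so that subtracting $nDE_{2,4}$ from the left removes it cleanly in every quadrant. This is a short linear-algebra check using the explicit form \eqref{eq: A12}--\eqref{eq: A34} of $A_j$ together with $1+\omega+\omega^2=0$; the quadrant-dependent cyclic permutation of $\{\omega^0,\omega,\omega^2\}$ built into $L(z)$ is precisely what makes the answer independent of the quadrant. Once this is checked, the asymptotic expansion \eqref{eq: X to U} reads $U(z)=(I+\mathcal O(z^{-1/3}))\diag(1,z^{1/3},1,z^{-1/3})\bigl(\begin{smallmatrix}1&0\\0&A_j\end{smallmatrix}\bigr)$ in each quadrant, completing the proof.
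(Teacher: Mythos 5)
Your overall route --- direct verification of the jumps from Lemma \ref{lemma: lambda meromorf} and of the asymptotics from Lemma \ref{lemma: asymptotics lambda around infinity} --- is exactly the paper's (whose proof is a one-line reference to the analogous computation in \cite{DKM}), and items (1)--(2) of your sketch are fine. But the step you yourself single out as the crux is misstated. The matrix $N_j=\bigl(\begin{smallmatrix}1&0\\0&A_j\end{smallmatrix}\bigr)\diag(0,\omega^{\epsilon_2},\omega^{\epsilon_3},\omega^{\epsilon_4})\bigl(\begin{smallmatrix}1&0\\0&A_j\end{smallmatrix}\bigr)^{-1}$ carrying the $z^{-2/3}$ contributions is \emph{not} supported in the $(2,4)$ entry alone: it is a conjugate of a diagonal matrix with three distinct nonzero eigenvalues, hence cannot be nilpotent, so no linear-algebra identity will make it a multiple of $E_{2,4}$. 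What actually saves the day is the subsequent commutation of $nDz^{-2/3}N_j$ past the factor $\diag(1,z^{1/3},1,z^{-1/3})$: this multiplies the $(i,k)$ entry by $z^{(d_i-d_k)/3}$ with $d=(0,1,0,-1)$, so only the $(2,4)$ entry picks up $z^{2/3}$ and survives at order $z^{0}$ (it is $O(1)$ in $z$ though $O(n)$ in $n$), while every other entry is damped to $O(z^{-1/3})$ and absorbed into the error term --- which, incidentally, is why the error degrades from the $O(z^{-2/3})$ of $X$ to the $O(z^{-1/3})$ claimed for $U$. The check that remains is that $(N_j)_{2,4}$ takes the same value in all four quadrants, which is what the quadrant-dependent assignment of $\{1,\omega,\omega^2\}$ in Lemma \ref{lemma: asymptotics lambda around infinity} guarantees.

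A second, smaller inaccuracy: $e^{-nL(z)}$ is not a permutation matrix but a diagonal matrix of exponentiated constants whose ordering varies with the quadrant, and the place where $n\equiv 0\bmod 3$ genuinely enters is that $\ell_3-\ell_2=\ell_2-\ell_4=\tfrac23\pi i$ forces $e^{n\ell_2}=e^{n\ell_3}=e^{n\ell_4}$. Consequently $e^{nL(z)}$ is the same matrix in every quadrant (so it introduces no new jumps) and its lower-right $3\times3$ block is scalar, which is what allows it to commute with $\diag(1,A_j)$ and cancel the constants $e^{n\ell_{\sigma(j)}}$ produced by $e^{n\Lambda}$. With these two repairs your verification goes through.
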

\begin{proof}
The proof follows by straightforward computations and is similar to the analysis in  \cite[Sec. 6]{DKM}.
\end{proof}
 
Note that by  Lemma \ref{lemma: lambda meromorf} we have that the diagonal entries of the jump matrices $J_U$ are either $1$ or highly oscillating. In the next transformations we will open lenses  to replace the jump matrices containing oscillating entries by jumps for which the jump matrices are either constant or exponentially close to the identity as $n\to \infty$.

\begin{figure}[t]
\centering
\small{
\begin{tikzpicture}
\draw[->] (-6,0)--(6,0) node[right]{$\R$};
\draw[->] (0,-4)--(0,4) node[above]{$i\R$};
\filldraw (-3,0) circle (1pt) node[below]{$-c$};
\filldraw (3,0) circle (1pt) node[below]{$c$};
\draw (5,2) node{$\begin{pmatrix} e^{n(\lambda_{1,+}-\lambda_{1,-})} & 1 & 0 & 0 \\ 0 & e^{n(\lambda_{2,+}-\lambda_{2,-})}&0&0\\0&0&e^{n(\lambda_{3,+}-\lambda_{3,-})} & 1\\0&0&0& e^{n(\lambda_{4,+}-\lambda_{4,-})} \end{pmatrix}$};
\draw[->] (2.3,1)--(1,0); \draw[->] (2.3,1)--(-1.5,0);
\draw (-3.5,2) node{$\begin{pmatrix} 1 & 0 & 0 & 0 \\ 0&e^{n(\lambda_{2,+}-\lambda_{2,-})} & 0 &0\\0&1 & e^{n(\lambda_{3,+}-\lambda_{3,-})}&0\\0&0&0&1 \end{pmatrix}$};
\draw[->] (-1,1)--(0,1); \draw[->] (-1,1)--(0,-1);
\draw (4.5,-2) node{$\begin{pmatrix} 1 & e^{n(\lambda_{2,+}-\lambda_{1,-})} & 0 & 0 \\ 0&1&0&0\\ 0&0&e^{n(\lambda_{3,+}-\lambda_{3,-})} & 1\\0&0&0 & e^{n(\lambda_{4,+}-\lambda_{4,-})}\end{pmatrix}$};
\draw[->] (4,-1)--(4,0); \draw[->] (0.2,-2)--(-5,0);
\end{tikzpicture}}
\caption{Jump contour for $U$, consisting of the axes $\R$ and $i\R$, shown together with their respective jumps.}
\label{fig: contour for U}
\end{figure}
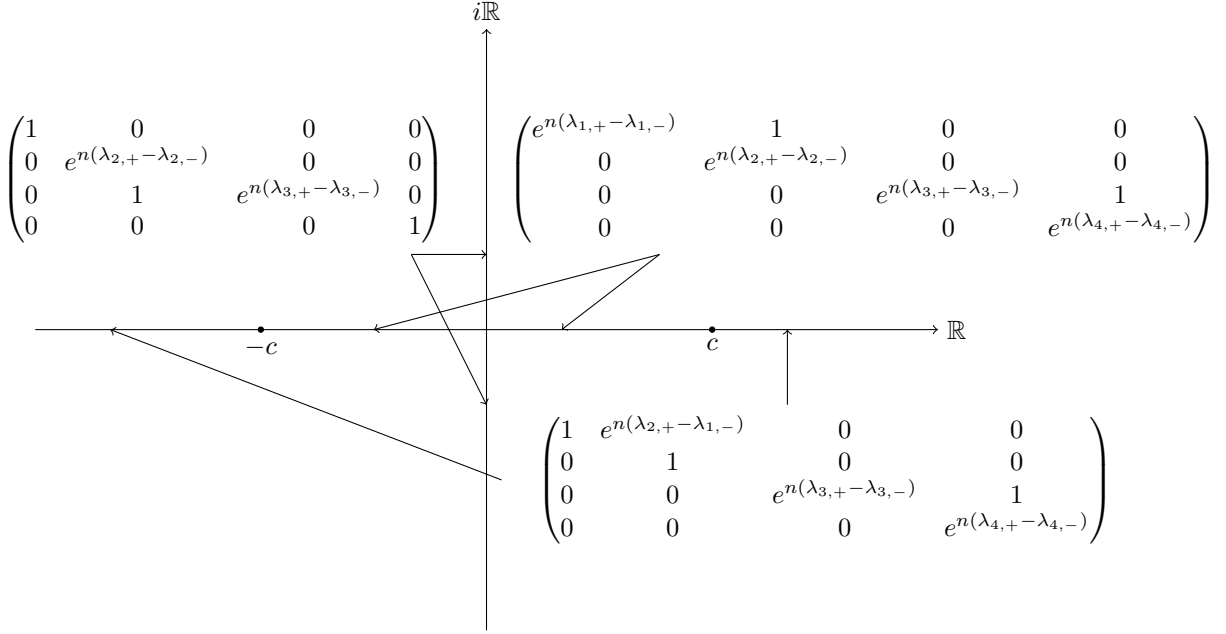

\subsection{Third transformation $U \mapsto T$: opening of unbounded lenses }

 In this section we deal with the oscillatory entries of the jump matrix on $i\R$ and in the lower right block of the jump matrix on $\R$. We postpone the handling of the jump in the upper left block of the jump matrix on the compact set $[-c,c]$ until the next section.

We open unbounded lenses around $i\R$ and $\R$ that we denote with $L_2$ and $L_3$ respectively.  The lips of both lenses emanate from zero and do not intersect elsewhere. The lenses are symmetric with respect to both the real and imaginary axes. The unbounded lenses are shown in Figure \ref{fig: global lenses}. We will be more precise on their definition later on. The transformation $U \mapsto T$ and the RH problem for $T$ are essentially the same as in \cite[Sec. 7.1]{DKM}, with the only difference that here we use the modified $\lambda$-functions. 

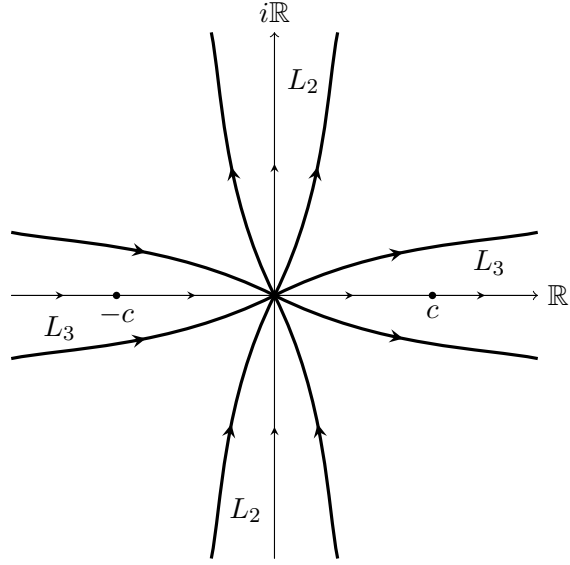
\begin{figure}
\centering
\begin{tikzpicture}[scale=0.7]
\draw[->] (-5,0)--(5,0) node[right]{$\R$};
\draw[->]	(0,-5)--(0,5) node[above]{$i\R$};
\begin{scope}[decoration={markings,mark= at position 0.5 with {\arrow{stealth}}}]
\draw[postaction={decorate}]   (-5,0)--(-3,0);
\draw[postaction={decorate}]   (-3,0)--(0,0);
\draw[postaction={decorate}]   (0,0)--(3,0);
\draw[postaction={decorate}]   (3,0)--(5,0);
\draw[postaction={decorate}]   (0,-5)--(0,0);  
\draw[postaction={decorate}]   (0,0)--(0,5);  
\draw[very thick, postaction={decorate}]   (0,0) .. node[left,near end]{$L_2$} controls (1,2) and (1,4) .. (1.2,5);
\draw[very thick, postaction={decorate}]   (0,0) .. controls (-1,2) and (-1,4) .. (-1.2,5);
\draw[very thick, postaction={decorate}]   (1.2,-5) .. controls (1,-4) and (1,-2) .. (0,0);
\draw[very thick, postaction={decorate}]   (-1.2,-5) ..node[right,near start]{$L_2$} controls (-1,-4) and (-1,-2) .. (0,0);
\draw[very thick, postaction={decorate}]   (0,0) .. node[below,near end]{$L_3$} controls (2,1) and (4,1) .. (5,1.2);
\draw[very thick, postaction={decorate}]   (0,0) .. controls (2,-1) and (4,-1) .. (5,-1.2);
\draw[very thick, postaction={decorate}]   (-5,1.2) .. controls (-4,1) and (-2,1) .. (0,0);
\draw[very thick, postaction={decorate}]   (-5,-1.2) ..node[above,near start]{$L_3$} controls (-4,-1) and (-2,-1) .. (0,0);
\end{scope}
\fill (3,0) circle (2pt) node[below]{$c$}
			(-3,0) circle (2pt) node[below]{$-c$};
\fill (0,0) circle (2pt);
\end{tikzpicture}
\caption{The jump contour $\Sigma_T$ after opening unbounded lenses around the real and imaginary axes.}
\label{fig: global lenses}
\end{figure}

The definition of $T$ is based on the diagonalization of blocks in $J_U$, as explained in \cite[Sec. 7.1]{DKM}, and given by
\begin{align}\label{eq:defT1}
T=U \times \begin{cases}
\begin{pmatrix}
1 &0&0&0 \\
0& 1 & -e^{n(\lambda_3-\lambda_2)} &0 \\
0&0&1&0\\
0&0&0&1
\end{pmatrix} & \textrm{\parbox[t]{0.3\textwidth}{in the left part of the lens around $i\R$,}} \\
\begin{pmatrix}
1 &0&0&0 \\
0& 1 & e^{n(\lambda_3-\lambda_2)} & 0\\
0&0&1&0\\
0&0&0&1
\end{pmatrix} & \textrm{\parbox[t]{0.3\textwidth}{in the right part of the lens around $i\R$,}} \\
\end{cases}
\end{align}
\begin{align} \label{eq:defT2}
T=U \times \begin{cases}
\begin{pmatrix}
1 &0&0&0 \\
0& 1 & 0 & 0\\
0&0&1&0\\
0&0&-e^{n(\lambda_3-\lambda_4)}&1
\end{pmatrix} & \textrm{\parbox[t]{0.3\textwidth}{in the upper part of the lens around $\R$,}} \\
\begin{pmatrix}
1 &0&0&0 \\
0& 1 & 0 & 0\\
0&0&1&0\\
0&0&e^{n(\lambda_3-\lambda_4)}&1
\end{pmatrix} & \textrm{\parbox[t]{0.3\textwidth}{in the lower part of the lens around $\R$,}} \\
\end{cases}
\end{align} and
\begin{align}\label{eq:defT3}
T=U \quad \textrm{ elsewhere.}
\end{align}
\begin{lemma}
Suppose that $n \equiv 0 \mod 3$. Then $T$ has the following properties \begin{itemize}
\item[\rm (1)] $T$ is analytic in $\C \setminus \Sigma_T$, where $\Sigma_T$ is the contour consisting of the real and imaginary axes and the lips  of the unbounded lenses ; as shown in Figure \ref{fig: global lenses};
\item[\rm (2)] $T_+=T_-J_T$ on $\Sigma_T$ where $J_T$ is specified in \eqref{eq: jump for T structure}--\eqref{eq: jump for T 3};
\item[\rm (3)] $\displaystyle{ T(z)=(I+\mathcal O(z^{-1/3})) \diag \left( 1 , z^{1/3} , 1 , z^{-1/3} \right) \begin{pmatrix} 1 & 0 \\ 0 & A_j \end{pmatrix}}$ as $z \to \infty$ in the $j$-th quadrant.
\end{itemize}
The jump matrices have the following structure
\begin{equation} \label{eq: jump for T structure}
J_T=\begin{cases}
\begin{pmatrix}
(J_T)_1 & 0 \\
0 & (J_T)_3
\end{pmatrix} & \textrm{\parbox[t]{0.4\textwidth}{on $\R$ and on the lips of the lens around $\R$,}} \\
\begin{pmatrix}
1 & 0 & 0 \\
0 & (J_T)_2 & 0 \\
0 & 0 & 1
\end{pmatrix} & \textrm{\parbox[t]{0.4\textwidth}{on $i\R$ and on the lips of the lens around $i\R$,}} \\
\end{cases}
\end{equation}
where the $2 \times 2$ blocks are given by
\begin{align}
(J_T)_1 & =\begin{cases}
\begin{pmatrix}e^{n(\lambda_{1,+}-\lambda_{1,-})} & e^{n(\lambda_{2,+}-\lambda_{1,-})} \\ 0 & e^{n(\lambda_{2,+}-\lambda_{2,-})}  \end{pmatrix} & \textrm{on $\R$,} \\
I_2 & \textrm{\parbox[t]{0.3\textwidth}{on the lips of the lens around $\R$,}}
\end{cases} \label{eq: jump for T 1}\end{align}
\begin{align}
(J_T)_2 & =\begin{cases}
\begin{pmatrix} 0 & -1 \\ 1 & 0 \end{pmatrix} & \textrm{on $i\R$,} \\
\begin{pmatrix} 1 & e^{n(\lambda_3-\lambda_2)} \\ 0 & 1 \end{pmatrix} & \textrm{\parbox[t]{0.5\textwidth}{on the lips of the lens around $i\R$,}}
\end{cases} \label{eq: jump for T 2}\end{align} \begin{align}
(J_T)_3 & =\begin{cases}
\begin{pmatrix} 0 & 1 \\ -1 & 0 \end{pmatrix} & \textrm{on $\R$,} \\
\begin{pmatrix} 1 & 0 \\ e^{n(\lambda_3-\lambda_4)} & 1 \end{pmatrix} & \textrm{\parbox[t]{0.5\textwidth}{on the lips of the lens around $\R$.}}
\end{cases} \label{eq: jump for T 3}
\end{align}
\end{lemma}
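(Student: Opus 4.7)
The proof is a routine verification of properties (1)--(3) from the definition \eqref{eq:defT1}--\eqref{eq:defT3} of $T$, following the standard ``opening of lenses'' pattern of \cite[Sec.~7.1]{DKM}; the modified $\lambda$-functions play exactly the same structural role as the unmodified ones there.

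\emph{Analyticity.} In each open component of $\C\setminus\Sigma_T$ the function $T$ is defined by a single formula among \eqref{eq:defT1}--\eqref{eq:defT3}. Each such component lies within a single open quadrant of the $z$-plane, so by Lemma \ref{lemma: lambda meromorf} the functions $\lambda_3-\lambda_2$ and $\lambda_3-\lambda_4$ appearing in the triangular factors are analytic there; combined with the analyticity of $U$ off $\R\cup i\R$ (Lemma \ref{rhp: U}), this gives (1).

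\emph{Jumps.} On a lip of the lens around $i\R$ the function $U$ is continuous, so $J_T$ equals the triangular factor from \eqref{eq:defT1}, producing $(J_T)_2$ on the lips in \eqref{eq: jump for T 2}; the analogous statement on the lips of the lens around $\R$ uses \eqref{eq:defT2} and gives $(J_T)_3$. On $i\R$ itself $T_+ = U_+ M_L$ and $T_- = U_- M_R$, where $M_L,M_R$ are the two matrices in \eqref{eq:defT1}, so $J_T = M_R^{-1} J_U M_L$ in the middle $2\times2$ block. Writing $f = n(\lambda_{3,+}-\lambda_{3,-})$ and using $\lambda_{2,\pm}=\lambda_{3,\mp}$ on $i\R$ (Lemma \ref{lemma: lambda meromorf}), all exponents reduce to $\pm f$ and a direct computation gives
\begin{equation*}
\begin{pmatrix} 1 & -e^{-f} \\ 0 & 1 \end{pmatrix}
\begin{pmatrix} e^{-f} & 0 \\ 1 & e^{f} \end{pmatrix}
\begin{pmatrix} 1 & -e^{f} \\ 0 & 1 \end{pmatrix}
= \begin{pmatrix} 0 & -1 \\ 1 & 0 \end{pmatrix},
\end{equation*}
exactly the value of $(J_T)_2$ on $i\R$ prescribed in \eqref{eq: jump for T 2}. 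A mirror calculation on $\R\setminus[-c,c]$ using $\lambda_{3,\pm}=\lambda_{4,\mp}$ collapses the lower-right $2\times2$ block to $\bigl(\begin{smallmatrix} 0 & 1 \\ -1 & 0 \end{smallmatrix}\bigr)$; since no lens is opened yet in the upper-left block, the $(J_T)_1$ entries are inherited directly from $J_U$, which yields the forms listed in \eqref{eq: jump for T 1} and \eqref{eq: jump for T 3}.

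\emph{Asymptotics.} Property (3) follows from the asymptotics of $U$ in Lemma \ref{rhp: U} provided the triangular factors in \eqref{eq:defT1}--\eqref{eq:defT3} tend to the identity as $z\to\infty$ inside each strip of the lenses. This requires $\Re(\lambda_3-\lambda_2)<0$ in each strip around $i\R$ and $\Re(\lambda_3-\lambda_4)<0$ in each strip around $\R$. By Lemma \ref{lemma: asymptotics lambda around infinity} the dominant contributions come from $\theta_2-\theta_1$ and $\theta_3-\theta_2$, whose leading $(\tau z)^{4/3}$ coefficients involve distinct cube roots of unity in each quadrant; exactly as in \cite[Sec.~7.1]{DKM} one can then shape the lips (as drawn in Figure \ref{fig: global lenses}) so that the real parts are negative and the decay is in fact exponential. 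The main technical point of the whole lemma is the $i\R$ factorization above; the rest is bookkeeping with orientations and continuity.
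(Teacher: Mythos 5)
Your proof is correct and follows exactly the route the paper intends: the paper's own proof is a one-line deferral to the ``straightforward computations'' of \cite[Sec.~7.1--7.3]{DKM}, and your explicit $2\times2$ factorization on $i\R$ (and its mirror on $\R$), using $\lambda_{2,\pm}=\lambda_{3,\mp}$ and $\lambda_{3,\pm}=\lambda_{4,\mp}$ from Lemma \ref{lemma: lambda meromorf} to collapse the products to $\bigl(\begin{smallmatrix}0&-1\\1&0\end{smallmatrix}\bigr)$ and $\bigl(\begin{smallmatrix}0&1\\-1&0\end{smallmatrix}\bigr)$, is precisely that computation carried out correctly. The remaining points (analyticity, the untouched upper-left block, and the decay of the triangular factors at infinity) are treated at the same level of rigor as the paper itself.
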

\begin{proof}
The proof follows by straightforward computations and is similar to \cite[Sec. 7.1--7.3]{DKM}. 
\end{proof}
The precise location of the lips of the lens is given in the following lemma. 
\begin{lemma} \label{lemma: estimates L2L3star}
We can and do choose the unbounded lenses $L_2$ around $i\R$ and $L_3$ around $\R$ such that
\begin{align}
\Re(\lambda^*_3-\lambda^*_2) & \leq -c_3\min \left(|z|^{3/2},|z|^{4/3}\right) && \text{for $z$ on the lips of $L_2$}, \label{eq: estimate lambda 32 star} \\
\Re(\lambda^*_3-\lambda^*_4) & \leq -c_3\min \left(|z|^{3/2},|z|^{4/3}\right) && \text{for $z$ on the lips of $L_3$}, \label{eq: estimate lambda 34 star}
\end{align}
for some constant $c_3>0$.
\end{lemma}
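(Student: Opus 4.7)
The plan is the standard steepest-descent picture. I first show that $\Re(\lambda_3^{*}-\lambda_2^{*})$ vanishes identically on $i\R$ (and $\Re(\lambda_3^{*}-\lambda_4^{*})$ on $\R$), and then, using the local expansions of Section~\ref{sec: Riemann surface}, establish that these real parts become strictly negative off the respective axes with a definite rate: $|z|^{3/2}$ near the origin (inherited from the square-root vanishing of the critical density) and $|z|^{4/3}$ at infinity (coming from the $\theta_j$-asymptotics). The lips of $L_2$ and $L_3$ are then taken inside the resulting sectors of negativity.

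Combining $\lambda_{3,+}^{*}=\lambda_{2,-}^{*}$ from \eqref{eq: jumps lambda 6} with the Schwarz relation $\lambda_{2,+}^{*}=\overline{\lambda_{2,-}^{*}}$ of \eqref{eq: sym lambda 3} yields
\[
\lambda_{3,+}^{*}(z)-\lambda_{2,+}^{*}(z)= \overline{\lambda_{2,+}^{*}(z)}-\lambda_{2,+}^{*}(z)= -2i\,\Im \lambda_{2,+}^{*}(z), \qquad z\in i\R,
\]
which is purely imaginary, so $\Re(\lambda_3^{*}-\lambda_2^{*})\equiv 0$ on $i\R$. The identical argument with \eqref{eq: jumps lambda 7} and \eqref{eq: sym lambda 5}--\eqref{eq: sym lambda 6} gives $\Re(\lambda_3^{*}-\lambda_4^{*})\equiv 0$ on $\R$. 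The task thus reduces to controlling the size and sign of these real parts in a neighborhood of each axis.

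For the near-origin behavior, the critical values $\alpha=-1,\tau=1$ together with \eqref{eq: F}--\eqref{eq: H} give $F^{*}(0)=G^{*}(0)=0$ and $H^{*}(0)=\tfrac{2}{3}e^{i\pi/4}$, so that the expansions in Lemma~\ref{lemma: asymptotics lambda around 0} subtract in quadrant $II$ to
\[
\lambda_3^{*}(z)-\lambda_2^{*}(z)= -2F^{*}(z)(-z)^{1/2}-2H^{*}(z)(-z)^{3/2}= -\tfrac{4}{3}\,e^{i\pi/4}(-z)^{3/2}+\mathcal O(|z|^{5/2}).
\]
Parametrizing $z=|z|e^{i\psi}$ with $\psi\in(\pi/2,\pi)$ and using the prescribed branch $(-z)^{3/2}=|z|^{3/2}e^{3i(\psi-\pi)/2}$, one obtains
\[
\Re(\lambda_3^{*}-\lambda_2^{*})= -\tfrac{4}{3}|z|^{3/2}\cos\bigl(\tfrac{3\psi}{2}-\tfrac{5\pi}{4}\bigr)+\mathcal O(|z|^{5/2}),
\]
whose cosine is bounded below by a positive constant on any sub-sector $[\tfrac{\pi}{2}+\delta,\pi-\delta]$. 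The three other quadrants around $i\R$ are covered by the symmetry \eqref{eq: symmetry xi} inherited by $\lambda^{*}$, and the analogous computation with $\lambda_3^{*}-\lambda_4^{*}$ around $\R$ produces the same $|z|^{3/2}$ rate.

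At infinity, Lemma~\ref{lemma: asymptotics lambda around infinity} together with \eqref{eq: asym theta 1}--\eqref{eq: asym theta 2} gives, in quadrant $II$,
\[
\lambda_3^{*}(z)-\lambda_2^{*}(z)= \theta_2(z)-\theta_1(z)+\mathcal O(\log z)=\tfrac{3}{4}(1-\omega)z^{4/3}+\mathcal O(z^{2/3}),
\]
and since $1-\omega=\sqrt 3\,e^{-i\pi/6}$ one finds $\Re(\lambda_3^{*}-\lambda_2^{*})=\tfrac{3\sqrt 3}{4}|z|^{4/3}\cos(\tfrac{4\psi}{3}-\tfrac{\pi}{6})+\mathcal O(|z|^{2/3})$, with the cosine strictly negative on any sub-sector $[\tfrac{\pi}{2}+\delta,\pi-\delta]$. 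Choose the lip of $L_2$ in quadrant $II$ to be any smooth curve that coincides with a ray in the first sub-sector near $0$ and with a ray in the second sub-sector near $\infty$; on any intermediate compact annulus $\{r_1\le|z|\le r_2\}$ the chosen lip stays away from $i\R$, so continuity together with the two one-sided expansions forces $\Re(\lambda_3^{*}-\lambda_2^{*})$ to be strictly negative there, and since $\min(|z|^{3/2},|z|^{4/3})$ is bounded on this annulus, the desired estimate follows after shrinking $c_3$. The construction in the remaining three quadrants around $i\R$ and for $L_3$ via $\lambda_3^{*}-\lambda_4^{*}$ around $\R$ is identical. The only delicate point is the matching of the $3/2$ rate coming from the critical vanishing of $F^{*},G^{*}$ with the $4/3$ rate at infinity coming from $\theta_2-\theta_1$, but this is exactly what the $\min(|z|^{3/2},|z|^{4/3})$ on the right-hand side of \eqref{eq: estimate lambda 32 star}--\eqref{eq: estimate lambda 34 star} absorbs.
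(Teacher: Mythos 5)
Your first step (that $\Re(\lambda_3^*-\lambda_2^*)$ vanishes identically on $i\R$ and $\Re(\lambda_3^*-\lambda_4^*)$ on $\R$) and your explicit sector expansions near $0$ and near $\infty$ are correct, and they make explicit the ingredients the paper only cites (Lemmas \ref{lemma: asymptotics lambda around 0} and \ref{lemma: asymptotics lambda around infinity} together with \eqref{eq: asym theta 1}--\eqref{eq: asym theta 3}). The gap is in how you bridge the two regimes. On the intermediate compact annulus $\{r_1\le|z|\le r_2\}$ you assert that ``continuity together with the two one-sided expansions forces $\Re(\lambda_3^*-\lambda_2^*)$ to be strictly negative there.'' That is a non sequitur: a continuous function that is negative near the two ends of a curve can perfectly well be positive on a middle portion of it, and the remark that the lip ``stays away from $i\R$'' gives no sign information whatsoever — the only place where you have any a priori sign control is \emph{near} the axis, where the boundary value is $0$.

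What is missing is precisely the mechanism the paper uses to get qualitative negativity everywhere off the axis: $\Re(\lambda_3^*-\lambda_2^*)$ is harmonic in each quadrant, vanishes on $i\R$, and by the Cauchy--Riemann equations its normal derivative at the axis is controlled by the sign of $\xi_{3,\pm}^*-\xi_{2,\pm}^*$ recorded in Lemma \ref{lemma: xi on cuts}. This shows that, for lips opened sufficiently close to the axis, $\Re(\lambda_3^*-\lambda_2^*)<0$ on all of $L_2\setminus i\R$; on the intermediate compact annulus, strict negativity plus compactness then yields a uniform bound $\le-\epsilon$, which is absorbed into $-c_3\min(|z|^{3/2},|z|^{4/3})$ because that minimum is bounded there. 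Your argument never establishes this global negativity, so the estimate on the intermediate part of the lips does not follow as written. If you insert the Cauchy--Riemann step (or an equivalent analysis showing the zero level set of $\Re(\lambda_3^*-\lambda_2^*)$ does not re-enter the region swept by the lips), the rest of your proof — in particular the correct matching of the $|z|^{3/2}$ rate at the origin with the $|z|^{4/3}$ rate at infinity via the $\min$ on the right-hand side — goes through.
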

\begin{proof}
We only prove the statement for the unbounded lens $L_3$ around the real line. The other estimate follows similarly.

For $x \in \R \setminus \{0\}$ we know, see \eqref{eq: sym lambda 5}--\eqref{eq: sym lambda 6} and \eqref{eq: jumps lambda 6}--\eqref{eq: jumps lambda 7}, that $\Re(\lambda^*_{3,\pm}(x)-\lambda^*_{4,\pm}(x))=0$. Then it is clear from the Cauchy-Riemann equations and Lemma \ref{lemma: xi on cuts} that the lens can be opened such that
\[
\Re(\lambda^*_3-\lambda^*_4)<0, \qquad z \in L_3 \setminus \R.
\]
The fact that also the stronger estimate \eqref{eq: estimate lambda 34 star} holds is due to the asymptotics in Lemma \ref{lemma: asymptotics lambda around infinity}, Lemma \ref{lemma: asymptotics lambda around 0}, and \eqref{eq: asym theta 1}--\eqref{eq: asym theta 3}.
\end{proof}

Concluding, we have chosen the lips of the lenses such that, for the critical case $\alpha=-1$ and $\tau=1$, the off-diagonal entries for the jump matrices $J_T$ on the lips $L_2$ and $L_3$ are exponentially small as $n$ tends to infinity. However, we also like  this to be true in the non-critical situation. This is too much to ask for, but it does hold  away from the origin as we will show. 

We exclude a shrinking  disk
\[
\D = \left\{z \in \C \mid |z|< n^{-\delta} \right\}
\] 
centered at 0 with radius $n^{-\delta}>0$ and 
\begin{equation} \label{eq: bounds kappa}
\frac16<\delta<\frac13.
\end{equation}
For the purpose of opening lenses, it will be crucial that the excluded disk $\D$ is large enough and for this reason we need $\delta <\tfrac13$. However, when we construct the local parametrix in $\D$ we will need that the disk is shrinking with a sufficiently large rate and therefore we also need $\delta>\tfrac16$. The precise value $\tfrac16$ of the lower bound will be motivated later on, however the inequality $\delta<\tfrac13$ yields the  following estimate. 
\begin{lemma} \label{lemma: estimate global lenses}
There exists a constant $c_4>0$ such that
\begin{align}
\Re(\lambda_3(z)-\lambda_2(z)) & \leq -c_4n^{-1/2} \max( 1,|z|^{4/3}), \qquad \text{\parbox[t]{.2 \textwidth}{for $z$ on the lips of $L_2$ but outside $\D$,}} \label{eq: estimate 2}\\
\Re(\lambda_3(z)-\lambda_4(z)) & \leq -c_4 n^{-1/2} \max(1,|z|^{4/3}), \qquad  \text{\parbox[t]{.2 \textwidth}{for $z$ on the lips of $L_3$ but outside $\D$,}} \label{eq: estimate 3}
\end{align}
for sufficiently large $n$. 
\end{lemma}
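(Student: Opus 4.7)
The idea is to compare $\lambda_j$ with $\lambda_j^*$ and combine the strict negativity estimates in the critical case (Lemma \ref{lemma: estimates L2L3star}) with the convergence rate (Lemma \ref{lemma: estimates on lambda minus lambdastar}). Write
\[
\Re(\lambda_3(z)-\lambda_2(z))=\Re(\lambda_3^*(z)-\lambda_2^*(z))+\Re\bigl((\lambda_3-\lambda_3^*)(z)-(\lambda_2-\lambda_2^*)(z)\bigr).
\]
By Lemma \ref{lemma: estimates L2L3star}, the first term is bounded above by $-c_3\min(|z|^{3/2},|z|^{4/3})$, provided $z$ lies on the lips of $L_2$; and by Lemma \ref{lemma: estimates on lambda minus lambdastar}, the absolute value of the second term is at most $2c_2 n^{-1/3}\max(|z|^{1/2},|z|^{4/3})$.

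I would then split into two regimes. For $|z|\geq 1$ on the lens, the dominant terms of both bounds scale as $|z|^{4/3}$, so
\[
\Re(\lambda_3-\lambda_2)\leq (-c_3+2c_2 n^{-1/3})|z|^{4/3}\leq -\tfrac{c_3}{2}|z|^{4/3}
\]
for $n$ large, which is stronger than $-c_4 n^{-1/2}|z|^{4/3}$ for any fixed $c_4$. For $n^{-\delta}\leq |z|\leq 1$, both bounds factorize through $|z|^{1/2}$, giving
\[
\Re(\lambda_3-\lambda_2)\leq |z|^{1/2}\bigl(-c_3|z|+2c_2 n^{-1/3}\bigr).
\]
This is the key step: since $|z|\geq n^{-\delta}$ and $\delta<\tfrac13$, we have $n^{-\delta}\gg n^{-1/3}$, so the critical term dominates the perturbation and the bracket is $\leq -\tfrac{c_3}{2}n^{-\delta}$ for $n$ large. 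Hence $\Re(\lambda_3-\lambda_2)\leq -\tfrac{c_3}{2}n^{-\delta}|z|^{1/2}\leq -\tfrac{c_3}{2}n^{-3\delta/2}$, which is stronger than $-c_4 n^{-1/2}$ precisely because $3\delta/2<1/2$. Putting both regimes together yields \eqref{eq: estimate 2}, and the argument for $L_3$ with $\lambda_3-\lambda_4$ is identical, using the corresponding inequality for $\Im(\xi_3^*-\xi_4^*)$ from Lemma \ref{lemma: xi on cuts}.

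The main (mild) obstacle is bookkeeping the balance between $n^{-\delta}$ and $n^{-1/3}$: the upper bound $\delta<\tfrac13$ is what guarantees the perturbation cannot overwhelm the critical gap at the inner boundary $|z|=n^{-\delta}$, while the factor $\max(1,|z|^{4/3})$ on the right of \eqref{eq: estimate 2} is exactly what both regimes produce once one optimizes the combined bound.
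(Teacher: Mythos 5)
Your proposal is correct and follows essentially the same route as the paper's own proof: decompose $\lambda_j=\lambda_j^*+(\lambda_j-\lambda_j^*)$, invoke Lemma \ref{lemma: estimates L2L3star} for the critical term and Lemma \ref{lemma: estimates on lambda minus lambdastar} for the perturbation, and split into the regimes $|z|\geq 1$ and $n^{-\delta}\leq|z|\leq 1$, using $\delta<\tfrac13$ exactly as the paper does. The bookkeeping in both regimes matches the paper's, so nothing further is needed.
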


\begin{proof}
We prove \eqref{eq: estimate 3}. Choose $z$ on the lips of $L_3$ but outside $\D$. First assume $|z|\geq 1$. Combining Lemma \ref{lemma: estimates on lambda minus lambdastar} with Lemma \ref{lemma: estimates L2L3star} gives
\[
\Re(\lambda_3(z)-\lambda_4(z)) \leq 2c_2n^{-1/3}|z|^{4/3}-c_3|z|^{4/3}.
\]
Hence, there exists a constant $\tilde c_1>0$ such that for sufficiently large $n$
\[
\Re(\lambda_3(z)-\lambda_4(z)) \leq -n^{-1/2}\tilde c_1 |z|^{4/3},
\]
which proves \eqref{eq: estimate 3} for $|z|\geq 1$. Next suppose that $n^{-\delta} \leq |z| \leq 1$. Then we obtain
\[
\Re(\lambda_3(z)-\lambda_4(z)) \leq |z|^{1/2}\left(2c_2n^{-1/3}-c_3|z|\right).
\]
Now using that $|z| \geq n^{-\delta}$ and $\delta<1/3$ we find that
\[
\Re(\lambda_3(z)-\lambda_4(z)) \leq -\tilde c_2 n^{-\delta}|z|^{1/2},
\]
for a certain constant $\tilde c_2>0$ and large $n$. Applying the inequality $|z| \geq n^{-\delta} $ once again finishes the proof of \eqref{eq: estimate 3}. The proof of \eqref{eq: estimate 2} goes analogously. 
\end{proof}

\subsection{Fourth transformation $T \mapsto S$: opening of local lenses}

In this section we open the local lens $L_1$ around $[-c,c]$. We let four lips emanate from zero, coincide with the lips of the unbounded lens $L_3$ in a small fixed disk $D(0,\rho_0)$ around the origin, and end in $\pm c$ going through the upper/lower half plane. This is shown in Figure \ref{fig: local lenses}. We will be more precise on the definition of $L_1$ later.

\begin{figure}[t]
\centering
\begin{tikzpicture}
\draw[->] (-5,0)--(5,0) node[right]{$\R$};
\draw[->]	(0,-5)--(0,5) node[above]{$i\R$};
\begin{scope}[decoration={markings,mark= at position 0.5 with {\arrow{stealth}}}]
\draw[postaction={decorate}]   (-5,0)--(-3,0);
\draw[postaction={decorate}]   (-3,0)--(0,0);
\draw[postaction={decorate}]   (0,0)--(3,0);
\draw[postaction={decorate}]   (3,0)--(5,0);
\draw[postaction={decorate}]   (0,-5)--(0,0);  
\draw[postaction={decorate}]   (0,0)--(0,5);  
\draw[postaction={decorate}]   (0,0) ..  controls (1,2) and (1,4) .. (1.2,5);
\draw[postaction={decorate}]   (0,0) .. controls (-1,2) and (-1,4) .. (-1.2,5);
\draw[postaction={decorate}]   (1.2,-5) .. controls (1,-4) and (1,-2) .. (0,0);
\draw[postaction={decorate}]   (-1.2,-5) .. controls (-1,-4) and (-1,-2) .. (0,0);
\draw[postaction={decorate}]   (0,0) ..  controls (2,1) and (4,1) .. (5,1.2);
\draw[postaction={decorate}]   (0,0) .. controls (2,-1) and (4,-1) .. (5,-1.2);
\draw[postaction={decorate}]   (-5,1.2) .. controls (-4,1) and (-2,1) .. (0,0);
\draw[postaction={decorate}]   (-5,-1.2) .. controls (-4,-1) and (-2,-1) .. (0,0);
\draw[very thick, postaction={decorate}] (0,0) ..node[midway, below]{$L_1$} controls (1,0.5) and (2,0.5) .. (3,0);
\draw[very thick, postaction={decorate}] (0,0) .. controls (1,-0.5) and (2,-0.5) .. (3,0);
\draw[very thick, postaction={decorate}] (-3,0) ..node[midway, below]{$L_1$} controls (-2,0.5) and (-1,0.5) .. (0,0);
\draw[very thick, postaction={decorate}] (-3,0) .. controls (-2,-0.5) and (-1,-0.5) .. (0,0);
\end{scope}
\draw[dashed] (0,0) circle (0.5);
\fill (3,0) circle (2pt) node[below]{$c$}
			(-3,0) circle (2pt) node[below]{$-c$};
\fill (0,0) circle (2pt);
\end{tikzpicture}
\caption{Jump contour $\Sigma_S$ for $S$ consisting of the lips of the local lenses around $[-c,c]$, the lips of the global lenses and the axes. Note that the lips of the local lens coincide with the lips $L_3$ of the unbounded lens in a small fixed disk $D(0,\rho_0)$ around zero (dashed).}
\label{fig: local lenses}
\end{figure}
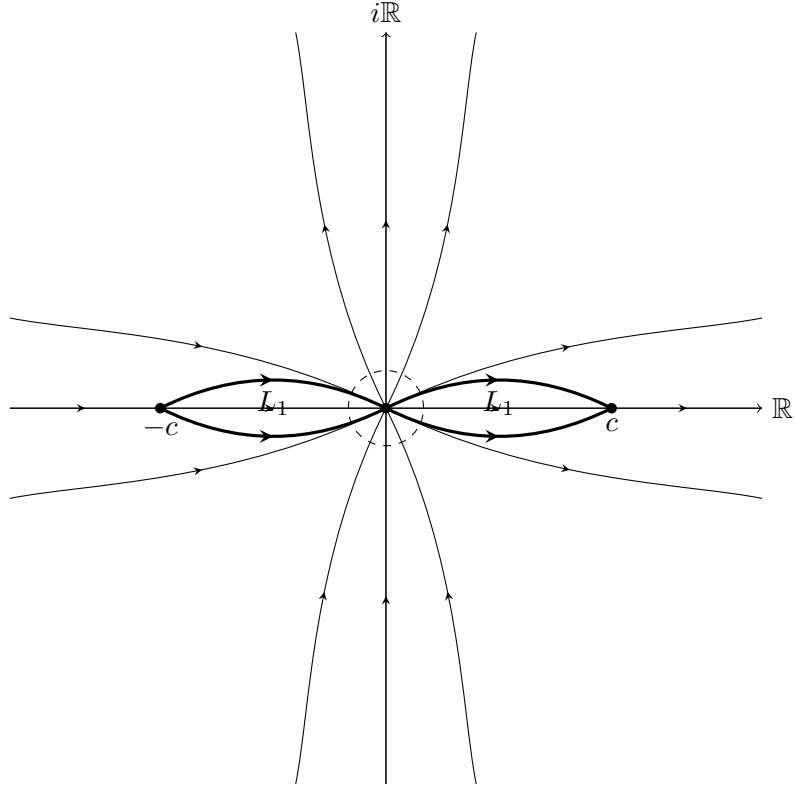

We define
\begin{align}\label{eq:defS}
S=T \times \begin{cases}
\begin{pmatrix}
1 &0&0&0 \\
-e^{n(\lambda_1-\lambda_2)} & 1 &0&0 \\
0&0&1&0\\
0&0&0&1 
\end{pmatrix}
& \textrm{\parbox[t]{0.4\textwidth}{ in the upper part of the lenses around $[-c,0]$ and $[0,c]$,}} \\
\begin{pmatrix}
1 &0&0&0 \\
e^{n(\lambda_1-\lambda_2)} & 1 &0&0 \\
0&0&1&0\\
0&0&0&1 
\end{pmatrix}
& \textrm{\parbox[t]{0.4\textwidth}{ in the lower part of the lenses around $[-c,0]$ and $[0,c]$,}} \end{cases}
\end{align}
and
\[
S=T, \qquad \textrm{ elsewhere.}
\]
Then $S$ is defined and analytic in $\C \setminus \Sigma_S$ where $\Sigma_S$ is the contour consisting of the previous contour $\Sigma_T$ and the lips of the lenses around $[-c,0]$ and $[0,c]$.

Then $S$ satisfies the following RH problem.
\begin{lemma} \label{rhp: S}
Suppose that $n \equiv 0 \mod 3$. Then $S$ has the following properties.
\begin{itemize}
\item[\rm (1)] $S$ is analytic in $\C \setminus \Sigma_S$;
\item[\rm (2)] $S_+=S_-J_S$ on $\Sigma_S$, where $J_S$ will be specified in \eqref{eq: jump for S structure}--\eqref{eq: jumps for S 2};
\item[\rm (3)] $\displaystyle{ S(z)=(I+\mathcal O(z^{-1/3})) \diag \left( 1 , z^{1/3} , 1 , z^{-1/3} \right) \begin{pmatrix} 1 & 0 \\ 0 & A_j \end{pmatrix}}$ as $z \to \infty$ in the $j$th quadrant.
\end{itemize}

The jump matrices $J_S$ are as follows.
\begin{itemize}
\item For $z \in i\R$ and $z$ on the lips of $L_2$ we have $J_S=J_T$.
\item For $z \in \R$ and $z$ on the lips of $L_1$ and $L_3$ the jump matrices have the structure
\begin{equation} \label{eq: jump for S structure}
J_S=
\begin{pmatrix}
(J_S)_1 & 0 \\
0 & (J_S)_3
\end{pmatrix},
\end{equation}
where the $2 \times 2$ blocks are given by
\begin{equation} \label{eq: jumps for S 1}
(J_S)_1  =\begin{cases}
\begin{pmatrix}
0 & 1 \\ -1 & 0
\end{pmatrix} & z \in [-c, c], \\
\begin{pmatrix}
1 & 0 \\ e^{n(\lambda_1-\lambda_2)} & 1 
\end{pmatrix} & \text{on the lips of $L_1$}, \\
(J_T)_1, & \text{\parbox[t]{0.5\textwidth}{on $\R \setminus [-c,c]$ and the remaining part of the lips of $L_3$.}}
\end{cases} 
\end{equation}
The block $(J_S)_3$ is given by
\begin{equation} \label{eq: jumps for S 2}
(J_S)_3  =\begin{cases}
I_2 & \textrm{\parbox[t]{0.5\textwidth}{ on the parts of the lips of $L_1$ that do not coincide with the lips of $L_3$,}} \\
(J_T)_3 & \textrm{\parbox[t]{0.5\textwidth}{on $\R$ and the lips of $L_3$.}}
\end{cases}
\end{equation}
\end{itemize}
\end{lemma}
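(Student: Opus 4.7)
The approach is direct verification. By construction, $S$ is defined as $T$ times a piecewise-analytic unimodular matrix. So the properties of $S$ are inherited from those of $T$ (established in the previous subsection) up to the extra jumps introduced on the newly opened lips of $L_1$ and the simplification of the jumps already living on $[-c,c]$. Since the lenses $L_1$ are compactly supported and coincide with $L_3$ inside a fixed disk $D(0,\rho_0)$, no change occurs near infinity, nor on $i\R$, nor on $L_2$, nor on the parts of $L_3$ lying outside $D(0,\rho_0)$.

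For analyticity, in each open region defined by $\Sigma_S$ the multiplicative factor in \eqref{eq:defS} is built out of $e^{n(\lambda_1-\lambda_2)}$. By Lemma \ref{lemma: lambda meromorf}, $\lambda_1$ and $\lambda_2$ are analytic in each open quadrant, and in particular in the upper and lower parts of the $L_1$-lens (which lie inside a single quadrant by construction). Hence the transformation matrices are analytic there, and $S$ inherits analyticity from $T$ away from $\Sigma_S$. The asymptotic behavior of item (3) is immediate: outside a compact set surrounding $[-c,c]$ one has $S\equiv T$, so the behavior at $\infty$ in each quadrant is the same as that already listed in Lemma \ref{rhp: U} for $T$.

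The main computation is the jump on each component of $\Sigma_S$. On the lips of $L_1$ which lie outside $D(0,\rho_0)$, the jump is purely the transformation matrix, giving the stated $(J_S)_1=\bigl(\begin{smallmatrix}1&0\\ e^{n(\lambda_1-\lambda_2)}&1\end{smallmatrix}\bigr)$ (after checking orientation) while $(J_S)_3=I_2$. On the coincident portion of $L_1\cap L_3$ inside $D(0,\rho_0)$, the $U\!\to\!T$ step acted only on the lower-right block and the $T\!\to\!S$ step only on the upper-left block, so the two factorizations commute and one obtains $(J_S)_1$ as above and $(J_S)_3=(J_T)_3$ simultaneously. The decisive computation is on $[-c,c]$: using the identities $\lambda_{1,\pm}=\lambda_{2,\mp}$ from \eqref{eq: jumps lambda 4} one finds $\lambda_{2,+}-\lambda_{1,-}=0$ and $\lambda_{2,+}-\lambda_{2,-}=-(\lambda_{1,+}-\lambda_{1,-})$. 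Writing $\phi = \lambda_{1,+}-\lambda_{1,-}$, this reduces the upper-left block $(J_T)_1$ to $\bigl(\begin{smallmatrix}e^{n\phi}&1\\ 0&e^{-n\phi}\end{smallmatrix}\bigr)$. One then verifies by matrix multiplication the algebraic identity
\begin{equation*}
\begin{pmatrix}1&0\\ e^{n\phi_L}&1\end{pmatrix}\begin{pmatrix}0&1\\ -1&0\end{pmatrix}\begin{pmatrix}1&0\\ -e^{n\phi_U}&1\end{pmatrix}=\begin{pmatrix}e^{n\phi}&1\\ 0&e^{-n\phi}\end{pmatrix},
\end{equation*}
where $\phi_U=\lambda_1-\lambda_2$ analytically continued into the upper half of the lens and $\phi_L$ into the lower half (which satisfy $\phi_U|_{[-c,c]}=\phi$, $\phi_L|_{[-c,c]}=-\phi$). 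This is precisely the standard lens-opening factorization, and it shows that the $(1,1)$ and $(2,2)$ oscillatory entries are absorbed by the opening, leaving $(J_S)_1=\bigl(\begin{smallmatrix}0&1\\ -1&0\end{smallmatrix}\bigr)$ on $[-c,c]$. On the remaining parts of $\Sigma_S$ (namely $\R\setminus[-c,c]$, the bits of $L_3$ outside the $L_1$ region, and $i\R\cup L_2$) the $T\!\to\!S$ factor is the identity and so $J_S=J_T$ there.

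The main ``obstacle'' is purely bookkeeping: correctly tracking orientations on the four lips of $L_1$ and on the segments where $L_1$ and $L_3$ overlap inside $D(0,\rho_0)$, and matching the signs in the transformation matrix of \eqref{eq:defS} against the orientation-induced signs of $S_\pm$ so that the final jumps \eqref{eq: jumps for S 1}--\eqref{eq: jumps for S 2} come out with the correct signs. Once this is done carefully on one lip, the others follow by the left/right symmetry of the contour and the evenness properties of the $\lambda$-functions from Lemma \ref{lemma: asymptotics lambda around 0}.
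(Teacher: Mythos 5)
Your overall strategy is exactly the paper's (the paper's proof is a one-line appeal to ``straightforward computations'' as in \cite[Sec.~7.4--7.5]{DKM}), and your bookkeeping of analyticity, the behavior at infinity, the new jumps on the lips of $L_1$, and the block-commutation on the overlap $L_1\cap L_3$ is all correct. However, the one computation you single out as decisive — the displayed factorization on $[-c,c]$ — is false as written. With $\phi_L|_{[-c,c]}=-\phi$ and $\phi_U|_{[-c,c]}=\phi$, your left-hand side evaluates to
\begin{equation*}
\begin{pmatrix}1&0\\ e^{-n\phi}&1\end{pmatrix}\begin{pmatrix}0&1\\ -1&0\end{pmatrix}\begin{pmatrix}1&0\\ -e^{n\phi}&1\end{pmatrix}=\begin{pmatrix}-e^{n\phi}&1\\ -2&e^{-n\phi}\end{pmatrix},
\end{equation*}
not $\bigl(\begin{smallmatrix}e^{n\phi}&1\\ 0&e^{-n\phi}\end{smallmatrix}\bigr)$. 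The source of the slip is that the jump relation reads $J_S=G_L^{-1}J_T\,G_U$, i.e.\ $J_T=G_L\,J_S\,G_U^{-1}$, where $G_U=\bigl(\begin{smallmatrix}1&0\\ -e^{n(\lambda_1-\lambda_2)}&1\end{smallmatrix}\bigr)$ and $G_L=\bigl(\begin{smallmatrix}1&0\\ e^{n(\lambda_1-\lambda_2)}&1\end{smallmatrix}\bigr)$ from \eqref{eq:defS}; your right-hand factor is $G_U$ where it should be $G_U^{-1}$. The correct identity is
\begin{equation*}
\begin{pmatrix}1&0\\ e^{-n\phi}&1\end{pmatrix}\begin{pmatrix}0&1\\ -1&0\end{pmatrix}\begin{pmatrix}1&0\\ e^{n\phi}&1\end{pmatrix}=\begin{pmatrix}e^{n\phi}&1\\ 0&e^{-n\phi}\end{pmatrix},
\end{equation*}
equivalently $G_L^{-1}(J_T)_1G_U=\bigl(\begin{smallmatrix}0&1\\ -1&0\end{smallmatrix}\bigr)$ directly. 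With that sign corrected the argument goes through and yields \eqref{eq: jumps for S 1}--\eqref{eq: jumps for S 2}; as it stands, the step you call the core of the proof does not verify.
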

\begin{proof}
Again the proof follows by straightforward computations and is similar to \cite[Sec. 7.4--7.5]{DKM}.
\end{proof}

We choose the lips of the lens $L_1$ such that we have pointwise exponential decay for the jumps as $n\to \infty$ in the critical situation $\alpha=-1$ and $\tau=1$.

\begin{lemma} \label{lemma: estimates L1star}
Let $\epsilon>0$ be a small constant. We can and do choose the  lenses $L_1$ around $[-c^*,c^*]$ such that their lips coincide with the lips of $L_3$ inside a disk $D(0,\rho_0)$, and such that
\begin{equation} \label{eq: estimate lambda12star}
\Re(\lambda_1^*(z)-\lambda_2^*(z))<-c_5|z|^{3/2},
\end{equation}
for $z\in L_1 \setminus \left( \R \cup D(-c^*,\epsilon) \cup D(c^*,\epsilon) \right)$ and some constant $c_5>0$.
\end{lemma}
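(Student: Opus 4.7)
The plan is to follow the strategy of Lemma \ref{lemma: estimates L2L3star}, now for $f(z) := \lambda_1^*(z) - \lambda_2^*(z)$. The first step will be to show that $\Re f$ vanishes on the cut $(-c^*, c^*) \setminus \{0\}$ and has strictly negative transversal derivative, from which a one-sided tubular neighborhood of the cut automatically lies in $\{\Re f < 0\}$. The only delicate point will then be that inside the disk $D(0, \rho_0)$ the lips of $L_1$ are prescribed to coincide with those of $L_3$, so the sharp $|z|^{3/2}$ estimate must hold on that specific curve.

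From the jump relation $\lambda_{1,\pm}^* = \lambda_{2,\mp}^*$ on $(-c^*, c^*)$ (see \eqref{eq: jumps lambda 4}) one gets $f_+(x) = -f_-(x)$. Moreover, Lemma \ref{lemma: xi meromorf} yields $\xi_{1,+}^* = \xi_{2,-}^* = \overline{\xi_{2,+}^*}$ on $(-c^*,0)\cup(0,c^*)$, so $\xi_{1,+}^* - \xi_{2,+}^*$ is purely imaginary there; integrating this from $0$, together with the boundary value $f_+(0^+) = -\pi i$ read off from Lemma \ref{lemma: asymptotics lambda around 0}, shows that $f_+(x)$ is purely imaginary on the cut, i.e.\ $\Re f_+(x) = 0$. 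The Cauchy-Riemann equations then give $\partial_y \Re f|_{x+i0} = -\Im(\xi_{1,+}^* - \xi_{2,+}^*)(x) < 0$ by Lemma \ref{lemma: xi on cuts}, so $\Re f < 0$ strictly in a one-sided tubular neighborhood of the cut above it (the lower side follows by complex-conjugation symmetry). Restricting to any closed set avoiding $\{0, \pm c^*\}$, compactness yields $\Re f \leq -c < 0$, which implies the desired estimate on the portion of $L_1$ outside $D(0, \rho_0) \cup D(\pm c^*, \epsilon)$.

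For the part of $L_1$ inside $D(0, \rho_0)$ I use the local expansion of $f$. At the critical parameters one has $F^*(0) = G^*(0) = 0$ and $H^*(0) = \tfrac{2}{3} e^{i\pi/4}$, so Lemma \ref{lemma: asymptotics lambda around 0} gives, for $z = r e^{i\theta}$ in the first quadrant,
\[
f(z) = \tfrac{2}{3} e^{i\pi/4}\bigl(z^{3/2} + (-z)^{3/2}\bigr) - \pi i + O(r^{5/2}) = \tfrac{2\sqrt{2}}{3}\, r^{3/2} e^{3 i\theta/2} - \pi i + O(r^{5/2}),
\]
hence $\Re f(z) = \tfrac{2\sqrt{2}}{3}\, r^{3/2} \cos(3\theta/2) + O(r^{5/2})$, which is smaller than $-c_5\, r^{3/2}$ as soon as $\theta$ stays in a compact subinterval of $(\pi/3, \pi/2)$. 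The parallel calculation applied to $\lambda_3^* - \lambda_4^*$ produces \emph{the same} leading term $\tfrac{2\sqrt{2}}{3}\, r^{3/2} e^{3i\theta/2}$ (only the additive imaginary constants differ in sign), so the two angular constraints coincide. Hence any choice of $L_3$ from Lemma \ref{lemma: estimates L2L3star} whose opening angle near the origin lies in a compact subinterval of $(\pi/3, \pi/2)$ also makes the present estimate hold, and we simply let the lips of $L_1$ coincide with those of $L_3$ inside $D(0, \rho_0)$. The other three sectors are handled by analogous computations.

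The main obstacle is precisely this compatibility condition: once $L_3$ is fixed, the angle of $L_1$ near the origin is not free, so it is a priori unclear that the same contour will make $\Re f$ negative. What makes the argument go through is the coincidence of the leading-order angular behavior of $\lambda_1^* - \lambda_2^*$ and $\lambda_3^* - \lambda_4^*$ at the multi-critical point. This is ultimately a structural consequence of the fact that all four sheets of the limiting spectral curve $\xi^4 - z\xi^3 + z^2 = 0$ from \eqref{eq: spectral curve} coalesce at $z = 0$ with the same local behavior.
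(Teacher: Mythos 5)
Your overall strategy is exactly the paper's: show $\Re(\lambda_{1,\pm}^*-\lambda_{2,\pm}^*)=0$ on $(-c^*,0)\cup(0,c^*)$, use Cauchy--Riemann together with Lemma \ref{lemma: xi on cuts} to open the lens into the region where $\Re(\lambda_1^*-\lambda_2^*)<0$, and then use Lemma \ref{lemma: asymptotics lambda around 0} both for the $|z|^{3/2}$ rate and for the compatibility with $L_3$ near the origin. Your observation that $\lambda_1^*-\lambda_2^*$ and $\lambda_3^*-\lambda_4^*$ have the \emph{same} leading term near $0$ is the right reason why the lips of $L_1$ can be taken to coincide with those of $L_3$ in $D(0,\rho_0)$; the paper leaves this implicit.

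However, your local computation contains a phase error. With the branch conventions of the paper, for $z=re^{i\theta}$ in the first quadrant one has $(-z)^{3/2}=r^{3/2}e^{3i(\theta-\pi)/2}$, so $z^{3/2}+(-z)^{3/2}=(1+i)\,r^{3/2}e^{3i\theta/2}$ and hence $\tfrac23 e^{i\pi/4}\bigl(z^{3/2}+(-z)^{3/2}\bigr)=\tfrac{2\sqrt2}{3}\,i\,r^{3/2}e^{3i\theta/2}$, not $\tfrac{2\sqrt2}{3}\,r^{3/2}e^{3i\theta/2}$. Therefore $\Re f(z)=-\tfrac{2\sqrt2}{3}r^{3/2}\sin(3\theta/2)+\mathcal O(r^{5/2})$, which is strictly negative for $\theta$ in any compact subinterval of $(0,2\pi/3)$ --- in particular for \emph{every} opening angle of $L_3$ in the first quadrant, so the compatibility is automatic. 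Your version, $\Re f\sim\tfrac{2\sqrt2}{3}r^{3/2}\cos(3\theta/2)$, is inconsistent with the first part of your own argument: at $\theta\to0^+$ it gives $\Re f_+(x)\sim\tfrac{2\sqrt2}{3}x^{3/2}\neq0$, contradicting $\Re f_+\equiv0$ on the cut (and it would force the lips of $L_3$ near $0$ to make an angle greater than $\pi/3$ with $\R$, which is not how $L_3$ is drawn). The same correction applies to your parallel computation for $\lambda_3^*-\lambda_4^*$; since the error is made consistently in both, your conclusion that the two angular constraints coincide survives, and after fixing the phase the proof is complete and agrees with the paper's.
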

\begin{proof}
The proof is similar to that of Lemma \ref{lemma: estimates L2L3star}.

For $x \in (-c^*,0) \cup (0,c^*)$ we know, see \eqref{eq: jumps lambda 1}--\eqref{eq: jumps lambda 2} and \eqref{eq: sym lambda 1}--\eqref{eq: sym lambda 2}, that $\Re(\lambda^*_{1,\pm}(x)-\lambda_{2^*,\pm}(x))=0$. Then it is clear from the Cauchy-Riemann equations and Lemma \ref{lemma: xi on cuts} that the lens can be opened such that 
\[
\Re(\lambda_1^*(z)-\lambda_2^*(z))<0, \qquad z \in L_1 \setminus \R.
\]
Using Lemma \ref{lemma: asymptotics lambda around 0} we see that we can choose $L_1$ such that we have \eqref{eq: estimate lambda12star}. The fact that we can let the lips of the local lens coincide with the lips of $L_3$ in a neighborhood $D(0,\rho_0)$ of zero is also due to the asymptotics in Lemma \ref{lemma: asymptotics lambda around 0}. Here we have to choose $\rho_0$ sufficiently small.
\end{proof}

The following lemma ensures that the jumps on the lenses decay uniformly outside the disks $D(c^*,\epsilon)$, $D(-c^*,\epsilon)$, and $\D$ around $c^*$, $-c^*$, and $0$, even for the non-critical situation.

\begin{lemma} \label{lemma: estimate on local lenses}
There is a constant $c_6>0$ such that for every sufficiently large $n$ we have
\begin{equation} \label{eq: estimate 1}
\Re (\lambda_1(z)-\lambda_2(z)) \leq -c_6 n^{-1/2},  
\end{equation}
for $z$ on the parts of the lips of $L_1$ lying outside the disks $D(c^*,\epsilon)$, $D(-c^*,\epsilon)$, and $\D$.
\end{lemma}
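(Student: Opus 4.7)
The strategy parallels the proof of Lemma \ref{lemma: estimate global lenses}. I would split the difference
\[
\Re(\lambda_1(z)-\lambda_2(z)) = \Re(\lambda_1^*(z)-\lambda_2^*(z)) + \Re\bigl[(\lambda_1-\lambda_1^*)(z) - (\lambda_2-\lambda_2^*)(z)\bigr],
\]
and handle the two pieces separately. For the critical part, Lemma \ref{lemma: estimates L1star} gives
$\Re(\lambda_1^*(z)-\lambda_2^*(z)) \le -c_5 |z|^{3/2}$ on the relevant portion of the lens (the exclusion of the small fixed disks $D(\pm c^*,\epsilon)$ in the hypothesis of that lemma matches exactly the disks excluded in the present statement, possibly after enlarging $\epsilon$ slightly so that $\pm c$ stays inside $D(\pm c^*,\epsilon)$ for large~$n$). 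For the perturbation, Lemma \ref{lemma: estimates on lambda minus lambdastar} gives $|\lambda_j - \lambda_j^*|(z) \le C n^{-1/3}\max(|z|^{1/2},|z|^{4/3})$; since $L_1$ is contained in a fixed bounded region, the $\max$ is controlled by $C' |z|^{1/2}$, so the perturbation contributes at most $C'' n^{-1/3} |z|^{1/2}$.

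Combining these,
\[
\Re(\lambda_1(z)-\lambda_2(z)) \le |z|^{1/2}\bigl(-c_5 |z| + C'' n^{-1/3}\bigr).
\]
Now I would use the assumption that $z$ lies outside $\D$, i.e.\ $|z|\ge n^{-\delta}$ with $\delta<1/3$. For $n$ large enough, $C'' n^{-1/3} \le \tfrac{c_5}{2} n^{-\delta} \le \tfrac{c_5}{2}|z|$, so the bracket is dominated by its first term and
\[
\Re(\lambda_1(z)-\lambda_2(z)) \le -\tfrac{c_5}{2} |z|^{3/2} \le -\tfrac{c_5}{2} n^{-3\delta/2}.
\]
Finally, the conclusion with exponent $-1/2$ is a direct consequence of $\delta<1/3$: indeed $3\delta/2<1/2$ implies $n^{-3\delta/2}\ge n^{-1/2}$ for $n\ge 1$, so the right hand side is bounded above by $-\tfrac{c_5}{2} n^{-1/2}$, and we can take $c_6 = c_5/2$.

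The main obstacle, as in the companion Lemma \ref{lemma: estimate global lenses}, is the interaction between the two scales: the critical decay $|z|^{3/2}$ degenerates at the origin while the perturbation $n^{-1/3}|z|^{1/2}$ grows relatively. Controlling this competition is precisely what forces the restriction $\delta<1/3$ and simultaneously fixes the rate $n^{-1/2}$ that appears in the statement. The worst case is the point $|z|\simeq n^{-\delta}$ on the edge of $\D$; the condition $\delta<1/3$ is exactly what is needed to keep $|z|^{3/2}\ge n^{-1/2}$ there, making the rate $n^{-1/2}$ the sharp one achievable by this argument. All remaining steps are routine estimates; no vanishing-type or analytic continuation subtleties beyond those already encoded in the preceding lemmas are required.
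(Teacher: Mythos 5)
Your proposal is correct and follows essentially the same route as the paper: both decompose $\Re(\lambda_1-\lambda_2)$ into the critical part bounded via Lemma \ref{lemma: estimates L1star} plus the perturbation bounded via Lemma \ref{lemma: estimates on lambda minus lambdastar}, arrive at $\Re(\lambda_1(z)-\lambda_2(z))\le |z|^{1/2}\bigl(\tilde c_1 n^{-1/3}-c_5|z|\bigr)$, and then exploit $|z|\ge n^{-\delta}$ with $\delta<1/3$ to conclude. Your explicit remark that boundedness of $L_1$ reduces $\max(|z|^{1/2},|z|^{4/3})$ to a multiple of $|z|^{1/2}$ is a detail the paper leaves implicit, but the argument is the same.
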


\begin{proof}
The proof is similar to that of Lemma \ref{lemma: estimate global lenses}. Because of Lemma \ref{lemma: estimates on lambda minus lambdastar} and Lemma \ref{lemma: estimates L1star} there exists a constant $\tilde c_1>0$ such that for $n$ large enough
\[
\Re (\lambda_1(z)-\lambda_2(z)) \leq |z|^{1/2}\left(\tilde c_1 n^{-1/3}-c_5|z|\right).
\]
Now we use that $|z|\geq  n^{-\delta}$ to get
\[
\Re (\lambda_1(z)-\lambda_2(z)) \leq -\tilde c_2 |z|^{1/2}n^{-\delta} \leq -\tilde c_2 n^{-1/2},
\]
for a certain constant $\tilde c_2>0$ and sufficiently large $n$. This proves the lemma.
\end{proof}

Now let us stop for a moment and discuss the asymptotic behavior of the jump matrices $J_S$ as $n\to \infty$. From \eqref{eq: jumps for S 1}, \eqref{eq: jumps for S 2},  \eqref{eq: jump for T 1}--\eqref{eq: jump for T 3},  and Lemmas \ref{lemma: estimate global lenses} and  \ref{lemma: estimate on local lenses} it is clear that the jump matrices on the lips of the lenses $L_1,L_2$ and $L_3$ are exponentially close to the identity. The same holds true for the jump matrices on $(-\infty,-c)$ and $(c,\infty)$. We leave it to the reader to check this. In fact, if we stay away from the points $0,\pm c$ these approximations are uniform and in $L^2$ sense. This implies that these jumps can be ignored in the asymptotic analysis.  The final steps in the steepest descent therefore are the following. First, we ignore all exponentially small entries in the jump matrices and solve the RH problem that we thus obtain.  This \emph{global parametrix} will be a good approximation to the solution to the RH problem for $S$ away from the endpoints $0,\pm c$.  Near these endpoints we define \emph{local parametrices} separately. By combining the local and global parametrices we thus construct a function that is a  good approximation to $S$ everywhere.  
\subsection{Global parametrix $P^{(\infty)}$}

If we ignore all exponentially decaying entries of $J_S$ we obtain the following RH problem:
\begin{rhp} \label{rhp:pinfty}
We search for a $4 \times 4$ matrix-valued function $\Pinf(z)$, also depending on the parameters $\alpha$ and $\tau$, satisfying
\begin{itemize}
\item[\rm (1)] $\Pinf$ is analytic in $\C \setminus (\R \cup i \R)$;
\item[\rm (2)] $\Pinf_+=\Pinf_-J_\infty$ on $\R \cup i\R$, where $J_\infty$ is specified in \eqref{eq: jump global parametrix};
\item[\rm (3)] $\displaystyle{ \Pinf(z)=\left( I+\mathcal O\left(z^{-1}\right)\right) \diag \left( 1 , z^{1/3} , 1 , z^{-1/3} \right) \begin{pmatrix} 1 & 0 \\ 0 & A_j \end{pmatrix}}$ as $z \to \infty$ in the $j$-th quadrant;
\item[\rm (4)] $\Pinf(z)=\mathcal O\left(z^{-1/4} \right)$ as $z \to 0$.
\end{itemize}
The jump matrices are specified as follows
\begin{equation} \label{eq: jump global parametrix}
J_\infty = \begin{cases}
\begin{pmatrix}
 0 & 1 &0&0 \\ -1 & 0 &0&0 \\
0&0& 0 & 1 \\ 0&0&-1 & 0 
\end{pmatrix}, & \text{ on }[-c, c], \\
 \begin{pmatrix}
1& 0 &0&0 \\
0 &1& 0 & 0\\
0& 0 & 0 & 1 \\ 
0&0&-1 & 0 
\end{pmatrix}, & \text{ on } \R \setminus [- c, c], \\
 \begin{pmatrix}
1&0&0 & 0\\
0 &  0 & -1 &0 \\ 
0 &1 & 0 & 0  \\
0 & 0 &0 &1 \end{pmatrix},
& \text{ on } i\R.
\end{cases}
\end{equation}
\end{rhp}
This RH problem can be explicitly solved as we show in the next lemma. As the explicit formulas are not relevant for our purposes, we will give a short proof only.
\begin{lemma} \label{lemma: global parametrix}
RH problem \ref{rhp:pinfty} has the solution
\begin{multline} \label{eq: solution global parametrix}
\Pinf =\begin{pmatrix}
w_1^2 & w_2^2 & w_3^2 & w_4^2\\
\gamma^2 w_1+\gamma^5 \frac{1}{w_1} & \gamma^2 w_2+\gamma^5\frac{1}{w_2} & \gamma^2 w_3+\gamma^5 \frac{1}{w_3} & \gamma^2 w_4+\gamma^5 \frac{1}{w_4}\\
\gamma^3 & \gamma^3 & \gamma^3 & \gamma^3 \\
\gamma w_1 & \gamma w_2 & \gamma w_3 & \gamma w_4
\end{pmatrix}
\\ \times \diag \left( \kappa(w_1),\kappa(w_2),\kappa(w_3),\kappa(w_4)\right),
\end{multline}
where $\kappa(w) = \left((w^2+\gamma^3)(w^2-3\gamma^3)\right)^{-1/2}$ and the square root is taken such that $\eta$ is analytic in the complex $w$ plane with a cut along ${w_1}_+([-c, c]) \cup {w_2}_-(i\R) \cup {w_3}_+(\R)$, shown in Figure \ref{fig: cuts kappa}  and such that $$\kappa(w)=w^{-2}(1+\mathcal O(1/w))$$ as $w \to \infty$. Here, $w_1,\ldots,w_4$ are the functions of $z$ defined by \eqref{eq: algebraic equation for w} and \eqref{eq: order}. 
\end{lemma}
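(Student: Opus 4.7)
The plan is to verify in turn the four defining conditions of RH problem \ref{rhp:pinfty} for the explicit ansatz \eqref{eq: solution global parametrix}; everything reduces to the properties of the $w_j$ listed in Lemma \ref{lemma: w meromorf} and the asymptotics of Lemmas \ref{lemma: asymptotics w around infinity} and \ref{lemma: asymptotics w around 0}, combined with the prescribed branch cut for $\kappa$.

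For condition (1), each $w_j$ is analytic on $\mathcal R_j$, and the branch cut of $\kappa$ is taken along $w_{1,+}([-c,c]) \cup w_{2,-}(i\R) \cup w_{3,+}(\R)$, i.e., along the images under the $w_j$ of the very cuts across which the sheets $\mathcal R_j$ are glued. Hence $\kappa(w_j)$ introduces no spurious branching and each column of the ansatz is analytic in $\C \setminus (\R \cup i\R)$. For condition (2), fix first $x \in (-c, c)$: by Lemma \ref{lemma: w meromorf} we have $w_{1,\pm}(x) = w_{2,\mp}(x)$ and $w_{3,\pm}(x) = w_{4,\mp}(x)$, which already swaps the columns in pairs $(1,2)$ and $(3,4)$; the cut of $\kappa$ is met on $w_{1,+}([-c,c])$ and on $w_{3,+}(\R)$, yielding the sign flips $\kappa(w_{1,+}) = -\kappa(w_{2,-})$ and $\kappa(w_{3,+}) = -\kappa(w_{4,-})$, which produce the prescribed jump $J_\infty$ on $[-c, c]$. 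The jumps on $\R \setminus [-c, c]$ (where $w_1, w_2$ are analytic while $w_3, w_4$ swap with a sign from the $w_{3,+}(\R)$-piece of the cut) and on $i\R$ (where $w_1, w_4$ are analytic while $w_2, w_3$ swap with a sign from the $w_{2,-}(i\R)$-piece) are verified analogously, giving the lower and middle blocks of $J_\infty$ respectively.

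For condition (3), substitute the expansions of Lemma \ref{lemma: asymptotics w around infinity}. Since $w_1(z) \to \infty$ but $w_j(z) \to 0$ for $j \geq 2$, one has $\kappa(w_1) \sim w_1^{-2} \sim z^{-2}$ while $\kappa(w_j)$ remains bounded (with a fixed sign on each sheet, determined by the branch). The four rows then exhibit the expected orders $1, z^{1/3}, 1, z^{-1/3}$ in $z$, with the $z^{1/3}$ in row $2$ coming from $\gamma^5/w_j \sim \gamma^3 u_j^{-1} z^{1/3}$; collecting the cube-root unit prefactors $u_j \in \{1, \omega, \omega^2\}$ quadrant by quadrant reproduces exactly the matrices $A_j$ of \eqref{eq: A12}--\eqref{eq: A34}, and the subleading corrections fit into the required $O(z^{-1})$ term. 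Finally, condition (4) follows from Lemma \ref{lemma: asymptotics w around 0}: $w_j(0) = \pm i\gamma^{3/2}$ is a zero of $w^2 + \gamma^3$ and the expansion $w_j(z)^2 + \gamma^3 = O(z^{1/2})$ forces $\kappa(w_j(z)) = O(z^{-1/4})$, while all other factors in \eqref{eq: solution global parametrix} remain bounded as $z \to 0$.

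The main technical step is the sign-tracking in condition (2): one must confirm that the prescribed branch cut of $\kappa$ in the $w$-plane is crossed exactly once, with the correct orientation, by each of the boundary-value pairs coming from Lemma \ref{lemma: w meromorf}, while the other two $w_l$ miss the cut entirely. This is a topological matter concerning the identification of the branch points $\pm i\gamma^{3/2}$ and $\pm \sqrt 3 \gamma^{3/2}$ of $\kappa$ with the branch points $0$ and $\pm c$ of the Riemann surface $\mathcal R$; once this is in place, the diagonal entries of the jumps reduce to the identity (because the unaffected $w_l$ pairs contribute no sign) and the off-diagonal entries pick up precisely the signs listed in $J_\infty$.
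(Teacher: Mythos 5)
Your proposal is correct and follows essentially the same route as the paper's own (very brief) proof: the jump conditions are read off from Lemma \ref{lemma: w meromorf} together with the sign changes of $\kappa$ across its prescribed cut, and conditions (3) and (4) come from the asymptotics of the $w_j$ at infinity and at the origin (Lemmas \ref{lemma: asymptotics w around infinity} and \ref{lemma: asymptotics w around 0}). You simply carry out in detail the sign-tracking and the expansions that the paper leaves as "a careful asymptotic analysis."
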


\begin{proof}
The jump behavior follows from Lemma \ref{lemma: w meromorf} and the jumps of $\kappa$. A careful asymptotic analysis of \eqref{eq: solution global parametrix}, based on Lemmas \ref{lemma: asymptotics w around 0} and \ref{lemma: asymptotics w around infinity} respectively leads to conditions (3) and (4) in the RH problem.
\end{proof}

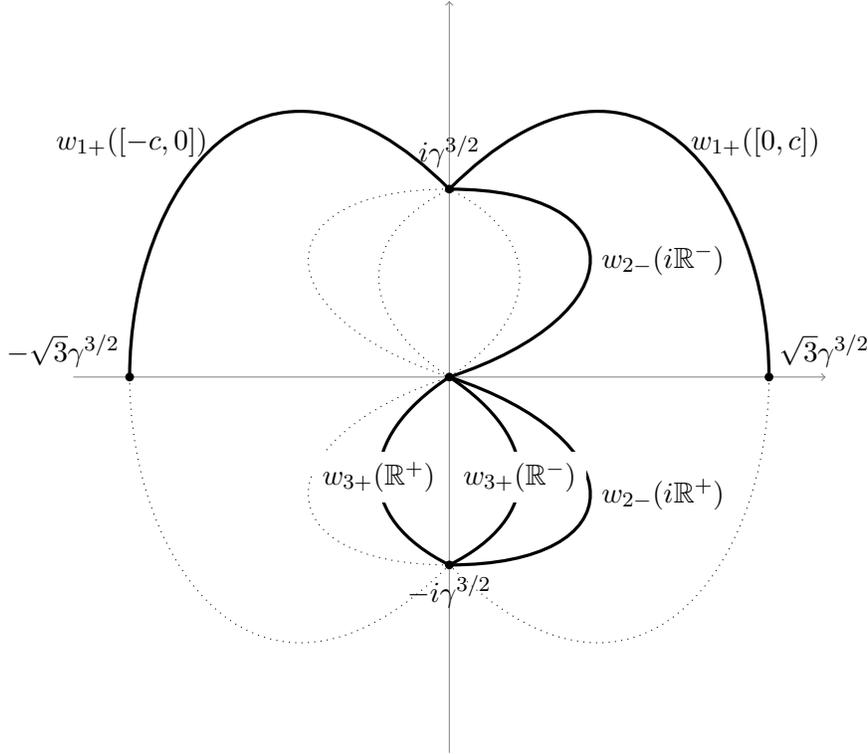
\begin{figure}[t]
\centering
\begin{tikzpicture}[scale=2.5]
\draw[help lines, ->] (-2,0)--(2,0); \draw[help lines, ->] (0,-2)--(0,2);
\draw[very thick] (-1.7,0) .. controls (-1.7,1) and (-1,2).. (0,1) node[midway, above, left]{$w_{1+}([-c,0])$}
      (1.7,0) .. controls (1.7,1) and (1,2).. (0,1) node[midway, above, right]{$w_{1+}([0,c])$};
\draw[dotted] (-1.7,0) .. controls (-1.7,-1) and (-1,-2).. (0,-1)
      (1.7,0) .. controls (1.7,-1) and (1,-2).. (0,-1);
\draw[very thick] (0,1).. controls (1,1) and (1,0.33)..(0,0) node[midway,right]{$w_{2-}(i\R^-)$}
                  (0,-1).. controls (1,-1) and (1,-0.33)..(0,0) node[midway,right]{$w_{2-}(i\R^+)$}; 
\draw[dotted] (-0,1).. controls (-1,1) and (-1,0.33)..(0,0)
                  (0,-1).. controls (-1,-1) and (-1,-0.33)..(0,0);
\draw[dotted] (0,0) .. controls (0.5,0.33) and (0.5,0.75).. (0,1)
              (0,0) .. controls (-0.5,0.33) and (-0.5,0.75).. (0,1); 
\draw[very thick] (0,0) .. controls (0.5,-0.33) and (0.5,-0.75).. (0,-1) node[midway,fill=white]{$w_{3+}(\R^-)$}
                  (0,0) .. controls (-0.5,-0.33) and (-0.5,-0.75).. (0,-1)node[midway,fill=white]{$w_{3+}(\R^+)$};
\fill (0,0) circle (0.66pt)
      (-1.7,0) circle (0.66pt) node[above left]{$-\sqrt 3 \gamma^{3/2}$} 
      (1.7,0) circle (0.66pt) node[above right]{$\sqrt 3 \gamma^{3/2}$}
      (0,1) circle (0.66pt) 
      (0,-1) circle (0.66pt) node[below]{$-i \gamma^{3/2}$};
\draw (0,1.2) node{$i \gamma^{3/2}$};
\end{tikzpicture}
\caption{Image of the map $w: \mathcal R \to \overline{\C}$ in the $w$ plane. The indicated lines (plain and dotted) are the images of the cuts in the Riemann surface $\mathcal R$ under this map. The solid lines are the contours where the function $\kappa: \C \to \C: w \mapsto \kappa(w)$ changes sign.}
\label{fig: cuts kappa}
\end{figure}

The global parametrix $\Pinf$ fails to be a good approximation near $z=-c^*,0,c^*$. Near these points we need to construct local parametrices, which we will do next.

\subsection{Local parametrices around $\pm c^*$}

We now briefly discuss the construction of local parametrices $P^{(-c^*)}$ and $P^{(c^*)}$ in  disks $D(-c^*,\epsilon)$ and $D(c^*,\epsilon)$ around $\pm c^*$ for sufficiently small $\epsilon>0$. The idea is to construct $P^{(\pm c^*)}$ such that it makes exactly the same jumps in the disk $D(\pm c^*,\epsilon)$ as $S$, and such that
\begin{equation} \label{eq: matching on fixed disks}
P^{(\pm c^*)}(z)= \left( I+ \mathcal O \left( n^{-1} \right) \right) P^{(\infty)}(z), \qquad \text{uniformly for $|z \mp c^*|=\epsilon$,}
\end{equation}
as $n \to \infty$. This construction can be done in exactly the same way as in \cite[Sec. 9]{DKM} using Airy functions. As this construction is standard and the explicit formulas are irrelevant to the proof of Theorem \ref{th: main theorem}, we will omit it here.

 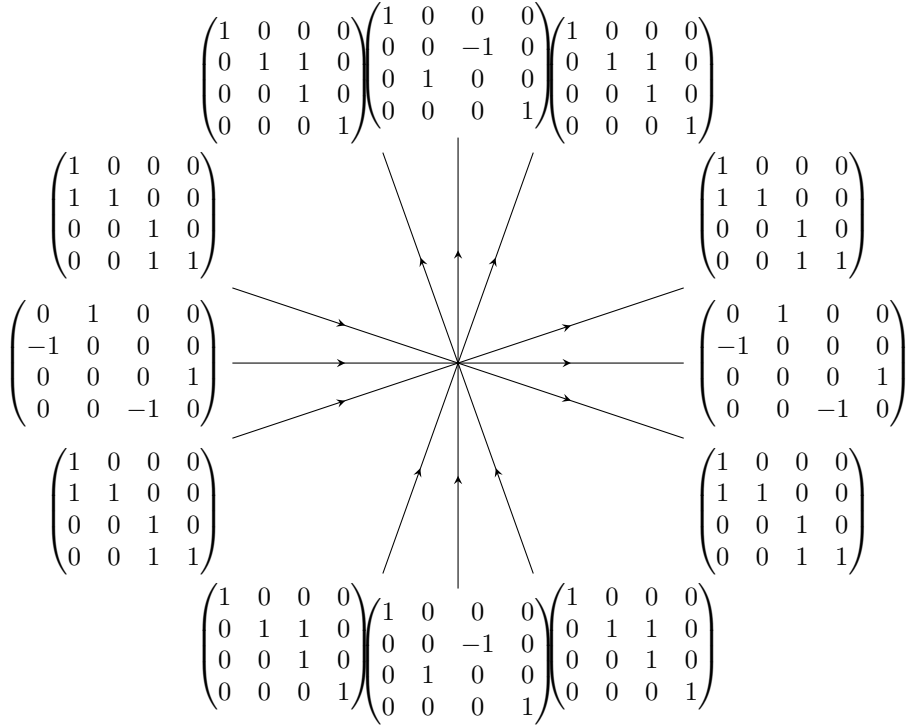
\begin{figure}
 \centering{
\small{
\begin{tikzpicture}[scale=1]
\begin{scope}[decoration={markings,mark= at position 0.5 with {\arrow{stealth}}}]
\draw[postaction={decorate}]      (0,0)--(3,0) node[right]{$\begin{pmatrix} 0 & 1 & 0&0\\ -1 & 0 &0&0 \\ 0&0& 0 & 1 \\0&0 &-1 & 0 \end{pmatrix}$};
\draw[postaction={decorate}]      (0,0)--(3,1) node[above right]{$\begin{pmatrix}  1& 0 &0&0 \\ 1 & 1 &0&0 \\ 0&0& 1& 0 \\0&0& 1 & 1 \end{pmatrix}$};
\draw[postaction={decorate}]      (0,0)--(1,2.8) node[above right]{$\begin{pmatrix} 1&0&0 &0 \\0& 1 & 1&0 \\0& 0 & 1 &0  \\0&0&0&1 \end{pmatrix}$};
\draw[postaction={decorate}]      (0,0)--(0,3) node[above]{$\begin{pmatrix} 1& 0&0&0 \\0& 0 & -1&0 \\0& 1 & 0 &0 \\0&0&0&1 \end{pmatrix}$};
\draw[postaction={decorate}]      (0,0)--(-1,2.8) node[above left]{$\begin{pmatrix} 1&0&0 &0 \\0& 1 & 1&0 \\0& 0 & 1 &0  \\0&0&0&1 \end{pmatrix}$};
\draw[postaction={decorate}]      (-3,1)node[above left]{$\begin{pmatrix}  1& 0 &0&0 \\ 1 & 1 &0&0 \\ 0&0& 1& 0 \\0&0& 1 & 1 \end{pmatrix}$}--(0,0);
\draw[postaction={decorate}]      (-3,0) node[left]{$\begin{pmatrix} 0&1&0&0\\-1&0&0&0 \\ 0&0&0 & 1 \\0&0 &-1 &0\end{pmatrix}$}--(0,0);
\draw[postaction={decorate}]      (-3,-1)node[below left]{$\begin{pmatrix}  1& 0 &0&0 \\ 1 & 1 &0&0 \\ 0&0& 1& 0 \\0&0& 1 & 1 \end{pmatrix}$}--(0,0);
\draw[postaction={decorate}]      (-1,-2.8) node[below left]{$\begin{pmatrix} 1&0&0 &0 \\0& 1 & 1&0 \\0& 0 & 1 &0  \\0&0&0&1 \end{pmatrix}$}--(0,0);
\draw[postaction={decorate}]      (0,-3) node[below]{$\begin{pmatrix} 1& 0&0&0 \\0& 0 & -1&0 \\0& 1 & 0 &0 \\0&0&0&1 \end{pmatrix}$}--(0,0);
\draw[postaction={decorate}]      (1,-2.8) node[below right]{$\begin{pmatrix} 1&0&0 &0 \\0& 1 & 1&0 \\0& 0 & 1 &0  \\0&0&0&1 \end{pmatrix}$}--(0,0);
\draw[postaction={decorate}]      (0,0)--(3,-1)node[below right]{$\begin{pmatrix}  1& 0 &0&0 \\ 1 & 1 &0&0 \\ 0&0& 1& 0 \\0&0& 1 & 1 \end{pmatrix}$};
\end{scope}
\end{tikzpicture}}
\caption{Contour $\Sigma_{\widetilde M}$ together with the jumps on each ray.}
\label{fig: contour Mbar}}
\end{figure}

\subsection{Model RH problem for the local parametrix around 0}
\label{sec:modelRHP}

In Section \ref{sec:P0} we will construct the local parametrix around $0$ based on the solution $M$ to RH problem \ref{rhp: tacnode rhp}. We formulated RH problem \ref{rhp: tacnode rhp} in the  natural form as it appeared also in \cite{DKZ}. For our purposes, however, we need to transform $M$ into $\widetilde M$ as in \eqref{eq: Mbar definition}.

Then, $\widetilde{M}(\zeta)$ is the unique solution to the following RH problem.
\begin{lemma}\label{lem:transformRH}
The function $\widetilde{M}(\zeta)$ has the following properties
\begin{itemize}
\item[\rm (1)] $\widetilde M$ is analytic in $\C \setminus \Sigma_{\widetilde M}$, where $\Sigma_{\widetilde M}$ is shown in Figure \ref{fig: contour Mbar};
\item[\rm (2)] $\widetilde M_+=\widetilde M_-J_{\widetilde M}$ on $\Sigma_{\widetilde M}$, where $J_{\widetilde M}$ is specified in Figure \ref{fig: contour Mbar};
\item[\rm (3)] as $\zeta \to \infty$ we have
\begin{align*}
\widetilde M(\zeta)=\widetilde B^{-1}(\zeta) A^{-1} C_\pm  \left( I+ \mathcal O(\zeta^{-1/2}) \right)D(\zeta), \quad \pm \Im \zeta>0,
\end{align*}
where  $\widetilde B(\zeta)=B(i \zeta)$  and $A,B$ as in RH problem \ref{rhp: tacnode rhp}, and 
\begin{align} \label{eq:defD}
D(\zeta)=\begin{cases}
\diag \left(e^{\eta_2(\zeta)-it\zeta},e^{\eta_1(\zeta)+it\zeta},e^{-\eta_1(\zeta)+it\zeta},e^{-\eta_2(\zeta)-it\zeta}\right), & \zeta \in I,\\
\diag \left(e^{\eta_2(\zeta)-it\zeta},e^{-\eta_1(\zeta)+it\zeta},e^{\eta_1(\zeta)+it\zeta},e^{-\eta_2(\zeta)-it\zeta}\right), & \zeta \in II,\\
\diag \left(e^{\eta_1(\zeta)+it\zeta},e^{\eta_2(\zeta)-it\zeta},e^{-\eta_2(\zeta)-it\zeta},e^{-\eta_1(\zeta)+it\zeta}\right), & \zeta \in III,\\
\diag \left(e^{\eta_1(\zeta)+it\zeta},e^{-\eta_2(\zeta)-it\zeta},e^{\eta_2(\zeta)-it\zeta},e^{-\eta_1(\zeta)+it\zeta}\right), & \zeta \in IV,
\end{cases}
\end{align}
with 
\begin{align}
\label{eq:defeta}
\begin{split}
\eta_1(\zeta )&=\tfrac{2}{3} r_1 e^{3\pi i /4} \zeta ^{3/2}+2 s_1 e^{\pi i /4} \zeta ^{1/2},\\
\eta_2(\zeta )&=\tfrac{2}{3} r_2 e^{3\pi i /4} (-\zeta )^{3/2}+2 s_2 e^{\pi i /4} (-\zeta )^{1/2};
\end{split}
\end{align}
The fractional powers are taken in the same way as in RH problem \ref{rhp: tacnode rhp}.
\item[\rm (4)] $\widetilde M$ is bounded near zero.
\end{itemize}
\end{lemma}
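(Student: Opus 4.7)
The plan is to verify the four properties by direct computation, propagating the RH data for $M$ through the three operations composing $\widetilde M(\zeta)=M(i\zeta)^{-T}C_\pm$: the rotation $\zeta\mapsto i\zeta$, the transpose-inverse, and the piecewise constant right-multiplication by $C_\pm$. First I observe that $\det M\equiv 1$: each jump matrix $J_k$ in Figure \ref{fig: contour M} has unit determinant, a direct block computation gives $\det A=1$, while $\det B=1$ and $\det D_M=1$ (the four exponents in the diagonal factor of RH problem \ref{rhp: tacnode rhp}(3) sum to zero). Consequently $M^{-T}$ is analytic wherever $M$ is, which yields property (1) with $\Sigma_{\widetilde M}=(-i\Sigma_M)\cup\R$ (the rotated image of $\Sigma_M$, together with the real line across which $C_+$ and $C_-$ disagree), and makes property (4) immediate from the boundedness of $M$ at the origin.

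For property (2) I will track the jumps through the three operations. The rotation $\zeta\mapsto i\zeta$ is orientation preserving and simply moves the jumps of $M$ onto $-i\Sigma_M$ without changing them. The transpose-inverse then converts each $J_k$ into $J_k^{-T}$. Multiplication on the right by the constant $C_\pm$ conjugates this to $C_\pm^{-1}J_k^{-T}C_\pm$ on every rotated ray lying in the open half-plane to which $C_\pm$ applies. On $\R^\pm$, which are not images of any $\Gamma_k$, $M(i\zeta)^{-T}$ is continuous and the only jump comes from the mismatch between $C_+$ and $C_-$, giving $\widetilde M_+=\widetilde M_-\,C_-^{-1}C_+$ on $\R^+$ and similarly on $\R^-$. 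Carrying out the ten conjugations and the two products of $C_\pm$'s, I will match the twelve matrices displayed in Figure \ref{fig: contour Mbar}.

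For property (3), starting from $M(\zeta)=(I+\mathcal O(\zeta^{-1}))B(\zeta)AD_M(\zeta)$ and using that $A$ is symmetric (visible from \eqref{eq: A}) while $B,D_M$ are diagonal, I obtain $M(\zeta)^{-T}=(I+\mathcal O(\zeta^{-1}))B(\zeta)^{-1}A^{-1}D_M(\zeta)^{-1}$. Commuting the error factor past $B(i\zeta)^{-1}A^{-1}$ costs at most a factor $\zeta^{1/2}$ because the entries of $B$ scale as $|\zeta|^{\pm 1/4}$, so
\begin{equation*}
\widetilde M(\zeta)=\widetilde B(\zeta)^{-1}A^{-1}\bigl(I+\mathcal O(\zeta^{-1/2})\bigr)D_M(i\zeta)^{-1}C_\pm.
\end{equation*}
What remains is the algebraic identity $D_M(i\zeta)^{-1}C_\pm=C_\pm D(\zeta)$: the functions $\eta_j$ in \eqref{eq:defeta} are simply $\psi_j(i\zeta)$ rewritten using the branch conventions of RH problem \ref{rhp: tacnode rhp} (e.g.\ $(i\zeta)^{1/2}=e^{i\pi/4}\zeta^{1/2}$ for $\zeta$ in the right half-plane), and the signed permutations $C_\pm$ reorder the diagonal entries of $D_M(i\zeta)^{-1}$ into the quadrant-dependent pattern of $\eta_1$'s and $\eta_2$'s recorded in \eqref{eq:defD}.

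The main bookkeeping obstacle will be this branch matching in Step 3, which must be performed separately in each of the four open quadrants because the cuts of $\psi_j(i\zeta)$ and those of $\eta_j(\zeta)$ are rotations of one another and because $C_\pm$ itself switches across $\R$. Uniqueness of $\widetilde M$ follows from the standard argument that the ratio of two candidate solutions extends to an entire matrix-valued function tending to the identity at infinity, which is available because $\widetilde M$ inherits unit determinant from $M$.
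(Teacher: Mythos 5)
Your proposal is correct and follows essentially the same route as the paper: for the asymptotic condition you use $A^T=A$ and $B^T=B$, the quadrant-by-quadrant branch identities relating $\psi_j(i\zeta)$ to $\pm\eta_j(\zeta)$, the reordering of the diagonal exponential factor by the signed permutations $C_\pm$, and the observation that moving the $\mathcal O(\zeta^{-1})$ error past $\widetilde B^{-1}A^{-1}$ degrades it to $\mathcal O(\zeta^{-1/2})$ --- which is exactly the paper's computation. The jump and analyticity statements, which the paper dismisses as straightforward, you handle correctly via $\det M\equiv 1$ and the conjugations $C_\pm^{-1}J_k^{-T}C_\pm$ together with $C_-^{-1}C_+$ on $\R$; just note when matching Figure \ref{fig: contour Mbar} that several of its rays are oriented toward the origin, so the matrices displayed there are the inverses of these conjugates.
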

\begin{proof} 
The jump contours and the jumps are straightforward. For the asymptotic condition, we first compute $\psi_1(i \zeta)$ and $\psi_2(i\zeta)$ for $\zeta$ in the different quadrants and $\psi_1,\psi_2$ as in \eqref{eq: def psi1}--\eqref{eq: def psi2}
\begin{align*}
\psi_1(i \zeta)&= \begin{cases}\eta_1(\zeta), &\zeta \in I \cup III \cup IV,\\
-\eta_1(\zeta), & \zeta \in II,
\end{cases} \\
\psi_2(i \zeta)&= \begin{cases}\eta_2(\zeta), &\zeta  \in I \cup II \cup III,\\
-\eta_2(\zeta), & \zeta \in IV.
\end{cases}
\end{align*}
Hence we compute  that $\widetilde M$ satisfies
\[
\widetilde M(\zeta)= \left( I+ \mathcal O(\zeta^{-1}) \right) \widetilde B^{-T}(\zeta) A^{-T} C_\pm D(\zeta),  \quad \pm \Im \zeta>0,
\]
as $\zeta \to \infty$. Note that $A^T=A$ and $\widetilde B^T=\widetilde B$. Moreover, by putting $\widetilde B^{-1}A^{-1}C_\pm$ in front, the order term turns into $\mathcal O(\zeta^{-1/2})$. This gives the asymptotic condition.
 \end{proof}

\subsection{Local parametrix $\P0$ around 0} \label{sec:P0}

In this section we construct a local parametrix $\P0$ in the shrinking disk $\D$ where we recall that $\delta$ satisfies the inequality in \eqref{eq: bounds kappa}. We want the local parametrix to be a good approximation to $S$ in this disk, i.e. we want
\begin{itemize}
\item[(C1)] $\P0$ is analytic in $\D \setminus \Sigma_S$;
\item[(C2)] $\P0$ makes the same jumps as $S$ on $\D \cap \Sigma_S$;
\item[(C3)] $\P0$ matches with the global parametrix $\Pinf$ on the boundary of the disk, i.e. 
\[
\P0(z)=\left( I+\mathcal O\left(n^{\delta-1/3} \right) \right) \Pinf(z), \qquad |z|= n^{-\delta},
\]
as $n\to \infty$.
\end{itemize}

We search for such a $\P0$ in the form
\begin{multline} \label{eq: structure local parametrix}
\P0(z)=E_n(z) \widetilde M(n^{2/3} f(z); r_1,r_2,n^{2/3} s_1,n^{2/3}s_2,n^{1/3} t)\\ \times \diag \left( e^{n \lambda_1(z)},e^{n \lambda_2(z)},e^{n \lambda_3(z)},e^{n \lambda_4(z)}\right),
\end{multline}
where $\widetilde M$ is the solution to the model RH problem and $f$ is a conformal map on a neighborhood of the origin. The function $E_n $ is an analytic $4 \times 4$ matrix valued function. Defined in this way, $\P0$ can already be seen to satisfy condition (C1) and (C2) above. Indeed, (C1) is immediate, possibly after slightly deforming some contours. The jumps follow from Lemma \ref{lemma: lambda meromorf}, \eqref{eq: structure local parametrix}, and the jumps for $\widetilde M$ indicated in Figure \ref{fig: contour Mbar}. Finally, we have the conformal map $f(z)$, the analytic prefactor $E_n$, and the parameters $r_j$, $s_j$ and $t$ at our disposal to make $\P0$ also satisfy the matching condition (C3). It will be necessary to let the parameters depend on $z$ and $n$. To retain the analyticity of $\P0$ the dependence on $z$ will have to be analytic. We will argue later that also for these complex parameters RH problem \ref{rhp: tacnode rhp} is still solvable as long as $n$ is sufficiently large. 

We now come to the precise definitions, starting with the conformal map $f$. We recall that $F$, $G,$ $H,$ and $K$ defined in Lemma  \ref{lemma: asymptotics lambda around 0} are even functions that are analytic in a neighborhood of the origin.  We define 
\begin{align}
f(z)&=z,\label{eq: def f} \\
r_1(z)&=\frac32 e^{-\pi i/4}\left(H(z)-(F(z)-F(0))z^{-1} \right), \label{eq: def r1} \\
r_2(z)&=\frac32 e^{-\pi i/4}\left(H(z)+(F(z)-F(0))z^{-1} \right), \label{eq: def r2} \\
s(z)  &=s_1(z)=s_2(z)=-\tfrac12 e^{\pi i/4}F(0), \label{eq: def s1} \\
t(z)  &=-iG(z). \label{eq: def t}
\end{align}
Note that $r_{1,2}$, $s$, and $t$ are analytic in a sufficiently small neighborhood of the origin.
In the construction of the parametrix, we will use that the functions $r_j,s$, and $t$ are analytic in $\D$  and have the asymptotic behavior as $n\to \infty$ indicated in the following lemma. Note that $r_j$, $s$ and $t$ depend on $n$, although we have suppressed that dependence in the notation. For this lemma the lower bound in \eqref{eq: bounds kappa} is essential.

\begin{lemma}\label{lemma: rst} 
Let $\frac16 <\delta <\frac13$. The functions $r_j(z)$, $s(z)$, and $t(z)$ are analytic for $z\in \D$ and $n$ sufficiently large. Moreover, we have the following limits
\begin{align}\label{eq:limitr0s0t0}
\begin{cases}
\lim_{n \to \infty} r_j(n^{-\delta}z)=1,\\
\lim_{n \to \infty} n^{2/3} s( n^{-\delta}z)=\frac14(a^2-5b),\\
\lim_{n \to \infty}n^{1/3} t( n^{-\delta}z)=-a,
\end{cases}\end{align}
for $|z|< 1$ and $j=1,2$. \end{lemma}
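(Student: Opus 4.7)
The plan is first to verify analyticity from the structural properties of $F$, $G$, $H$, and then to reduce the three limit statements to asymptotic expansions of the constants $F(0)$, $G(0)$, $H(0)$ under the triple scaling \eqref{eq: scaling t tau} combined with \eqref{eq: scaling gamma}.

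For analyticity, I would use that Lemma \ref{lemma: asymptotics lambda around 0} gives $F$, $G$, $H$ as analytic in a neighborhood of $z=0$ whose radius can be chosen uniformly for $(\alpha,\tau)$ in a small fixed neighborhood of $(-1,1)$. Since $F$ is even, $F(z)-F(0)$ vanishes to second order at $0$, so $z\mapsto z^{-1}(F(z)-F(0))$ extends analytically to $0$. It follows that $r_1$, $r_2$, $s$ (which is in fact constant in $z$), and $t$ are analytic in that neighborhood, and hence in $\D$ as soon as $n^{-\delta}$ is smaller than its radius; by \eqref{eq: scaling t tau} this holds for all sufficiently large $n$.

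Next I would reduce the three limit statements to values at the origin. Since $F$, $G$, $H$ are even and analytic, Taylor expansion gives, uniformly in $|z|<1$,
\begin{align*}
H(n^{-\delta}z) &= H(0) + O(n^{-2\delta}), \\
G(n^{-\delta}z) &= G(0) + O(n^{-2\delta}), \\
(n^{-\delta}z)^{-1}\bigl(F(n^{-\delta}z)-F(0)\bigr) &= O(n^{-\delta}),
\end{align*}
so that
\begin{align*}
r_j(n^{-\delta}z) &= \tfrac32 e^{-\pi i/4}\,H(0) + O(n^{-\delta}), \\
n^{2/3}\,s(n^{-\delta}z) &= -\tfrac12 e^{\pi i/4}\,n^{2/3}F(0), \\
n^{1/3}\,t(n^{-\delta}z) &= -i\,n^{1/3}G(0) + O(n^{1/3-2\delta}).
\end{align*}
The error term in the last line vanishes precisely when $\delta>1/6$, which is exactly the role of the lower bound in \eqref{eq: bounds kappa}.

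It then remains to expand $F(0)$, $G(0)$, $H(0)$ under the scaling. Plugging $\tau=1 + a n^{-1/3}+2b n^{-2/3}$ and \eqref{eq: scaling gamma} into the explicit formulas \eqref{eq: F}--\eqref{eq: H}, continuity at the critical values gives $H(0)\to H^*(0)=\tfrac23 e^{\pi i/4}$ and hence $r_j(n^{-\delta}z)\to \tfrac32 e^{-\pi i/4}\cdot\tfrac23 e^{\pi i/4}=1$. For $F(0)$, the factor $-2\gamma^2+\gamma^{-1}+\tau^{4/3}\gamma$ vanishes at $(\gamma,\tau)=(1,1)$ and a direct expansion shows that its $n^{-1/3}$ coefficient vanishes as well; the $n^{-2/3}$ coefficient, which picks up the subleading term of \eqref{eq: scaling gamma}, evaluates to $\tfrac14(a^2-5b)$, so $n^{2/3}F(0)\to\tfrac12 e^{3\pi i/4}(a^2-5b)$ and $n^{2/3}s\to\tfrac14(a^2-5b)$. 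For $G(0)$, the factor $\tfrac32\gamma^2-\tfrac{1}{4\gamma}-\tfrac54\tau^{4/3}\gamma$ has $n^{-1/3}$ coefficient equal to $-a$, yielding $n^{1/3}G(0)\to -ia$ and $n^{1/3}t\to -a$. The main obstacle is exactly these two leading cancellations: the entire definition \eqref{eq: gamma} of $\gamma$ was engineered so that they take place, and once they are verified the rest of the argument is mechanical bookkeeping.
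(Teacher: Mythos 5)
Your proposal is correct and follows essentially the same route as the paper: reduce each limit to the value of $F$, $G$, $H$ at the origin via evenness and Taylor expansion (with the lower bound $\delta>\tfrac16$ entering exactly where you place it, in the $n^{1/3}\,\mathcal O(n^{-2\delta})$ error for $t$), and then expand the explicit constants \eqref{eq: F}--\eqref{eq: H} using \eqref{eq: scaling gamma} and \eqref{eq: scaling t tau}; your cancellation checks for the constant and $n^{-1/3}$ terms of $F(0)$ and $G(0)$ and the value $\tfrac14(a^2-5b)$ of the $n^{-2/3}$ coefficient are all accurate. The only difference is that you spell out the error bookkeeping that the paper leaves implicit.
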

\begin{proof}
It was discussed above that these functions are analytic in $\D$ for $n$ sufficiently large. As $F(z)$ is even, \eqref{eq: def r1}--\eqref{eq: def r2} and \eqref{eq: H} lead to
\[
r_j(z)=\tfrac32 e^{-\pi i/4}H(0)+\mathcal O(z)=\tfrac32 \gamma^{-5/4} \left( \tfrac12 \gamma^2- \tfrac{1}{12\gamma}+\tfrac14 \tau^{4/3} \gamma \right)+\mathcal O(z),
\] 
as $z \to 0$. The limiting behavior for $r_j$ then follows recalling that $\gamma,\tau \to 1$ as $n \to \infty$.

For $s(z)$ we follow the same strategy. Using \eqref{eq: def s1} and \eqref{eq: F} we obtain 
\[ 
n^{2/3} s( n^{-\delta}z)=n^{2/3} \gamma^{1/4} \left( -2 \gamma^2+ \tfrac{1}{\gamma} + \tau^{4/3}\gamma \right),
\]
which in combination with \eqref{eq: scaling gamma} and  \eqref{eq: scaling t tau} yields the statement.

The limiting behavior for $t$ can be found by realizing that 
\[
\lim_{n \to \infty}n^{1/3} t( n^{-\delta}z)=-i\lim_{n \to \infty}n^{1/3} G( n^{-\delta}z)=\lim_{n \to \infty}n^{1/3} \gamma^{-1/2}\left( \tfrac32 \gamma^2-\tfrac{1}{4 \gamma}-\tfrac54 \tau^{4/3}\gamma \right),
\]
which follows from \eqref{eq: def t}, \eqref{eq: G}, and the first inequality in \eqref{eq: bounds kappa}. Inserting the limiting behavior for $\gamma$ and $\tau$ as given in \eqref{eq: scaling gamma} and  \eqref{eq: scaling t tau}, finishes the proof. 
\end{proof}

From Lemma \ref{lemma: rst} it follows that \eqref{eq: structure local parametrix} is well-defined (postponing the definition of $E_n$ for a moment). Indeed, it follows from standard arguments that if the solution to the RH problem  in Lemma \ref{lem:transformRH} exists, it depends analytically on the parameters $r_j,s$ and $t$. Combining this observation with Lemma \ref{lemma: rst} and Theorem \ref{th: tacnode rhp} we see that for the choice of $r_j$, $s$, and $t$ that we made, the solution to RH problem exists and hence \eqref{eq: structure local parametrix} is well-defined for sufficiently large $n$.  Moreover, we defined $r_j$, $s$, and $t$ in order to have the following equalities, which are the key to obtain (C3). 

\begin{lemma}
We have that
\begin{multline}\label{eq:etaf}
\eta_1\left(n^{2/3}z;r_1(z),n^{2/3}s(z)\right)\pm i n t(z) z \\ =\begin{cases}
n\left((-z)^{3/2} H(z)+(-z)^{1/2} F(z)\pm z G(z)\right), & z\in I,II,\\
n\left(-(-z)^{3/2} H(z)-(-z)^{1/2} F(z)\pm z G(z)\right), & z\in III, IV,
\end{cases}
\end{multline}
and
\begin{multline}\label{eq:etaminusgf}
\eta_2\left(n^{2/3}z;r_2(z),n^{2/3} s(z)\right)\pm i n t(z) z \\ =\begin{cases}
n\left(-z^{3/2} H(z)-z^{1/2} F(z)\pm z G(z)\right), & z\in I,II,\\
n\left(z^{3/2} H(z)+z^{1/2} F(z) \pm z G(z)\right), & z\in III, IV,
\end{cases}
\end{multline}
for $z\in \D$.
\end{lemma}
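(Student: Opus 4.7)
The proof is a direct algebraic verification: the two identities follow by substituting the explicit formulas \eqref{eq: def r1}--\eqref{eq: def t} for $r_1,r_2,s,t$ into the definitions \eqref{eq:defeta} of $\eta_1,\eta_2$ and carefully bookkeeping the principal branches.

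First I would extract the overall factor $n$. Because $n^{2/3}>0$ and the branches of $\zeta^{k/2}, (-\zeta)^{k/2}$ fixed in RH problem \ref{rhp: tacnode rhp} are positively homogeneous, $(n^{2/3}z)^{k/2}=n^{k/3}z^{k/2}$ and $(-n^{2/3}z)^{k/2}=n^{k/3}(-z)^{k/2}$ for $k\in\{1,3\}$. Hence
\begin{align*}
\eta_1(n^{2/3}z;r_1(z),n^{2/3}s(z))&=n\bigl(\tfrac{2}{3}r_1(z)e^{3\pi i/4}z^{3/2}+2s(z)e^{\pi i/4}z^{1/2}\bigr),\\
\eta_2(n^{2/3}z;r_2(z),n^{2/3}s(z))&=n\bigl(\tfrac{2}{3}r_2(z)e^{3\pi i/4}(-z)^{3/2}+2s(z)e^{\pi i/4}(-z)^{1/2}\bigr).
\end{align*}

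Next I would simplify the phase factors. Using $e^{3\pi i/4}\cdot e^{-\pi i/4}=i$ and $(e^{\pi i/4})^2=i$ in \eqref{eq: def r1}--\eqref{eq: def s1} gives $\tfrac{2}{3}r_{1,2}(z)e^{3\pi i/4}=i(H(z)\mp(F(z)-F(0))z^{-1})$ and $2s(z)e^{\pi i/4}=-iF(0)$, while \eqref{eq: def t} gives $int(z)z=nG(z)z$. Combined with the elementary identity $z^{-1}(-z)^{3/2}=-(-z)^{1/2}$, which is valid on $\C\setminus\R$ because $(-z)/z=-1$, the $F(0)$-contributions telescope and yield the two compact uniform expressions
\[
\eta_1=in\bigl(H(z)z^{3/2}-F(z)z^{1/2}\bigr),\qquad \eta_2=in\bigl(H(z)(-z)^{3/2}-F(z)(-z)^{1/2}\bigr).
\]

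Finally, to recover the piecewise right-hand sides of \eqref{eq:etaf}--\eqref{eq:etaminusgf}, I would apply the half-plane branch relations $(-z)^{3/2}=\pm iz^{3/2}$ and $(-z)^{1/2}=\mp iz^{1/2}$ for $\pm\Im z>0$ (which follow directly from the principal-branch conventions) to convert between the $z$- and $(-z)$-branches in each half plane, and add $\pm int(z)z=\pm nG(z)z$. A short inspection in each of the four quadrants then produces the claimed identities verbatim. There is no genuine obstacle: the proof is pure branch bookkeeping, and the quadrant-dependence of the right-hand sides is merely an artifact of expressing one and the same uniform formula in either the $z$- or the $(-z)$-branch of the fractional powers.
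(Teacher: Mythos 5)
Your proposal is correct and follows essentially the same route as the paper: both reduce the left-hand sides, after substituting \eqref{eq: def r1}--\eqref{eq: def t} and using homogeneity of the fractional powers, to the uniform expressions $in\bigl(H(z)z^{3/2}-F(z)z^{1/2}\bigr)$ and $in\bigl(H(z)(-z)^{3/2}-F(z)(-z)^{1/2}\bigr)$, and then pass to the quadrant-dependent form via the half-plane branch relations. Your write-up merely makes explicit the scaling step and the cancellation of the $F(0)$ terms, which the paper leaves as "a straightforward check."
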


\begin{proof} 
We use  the definitions of $\eta_j$, $r_j(z)$, $s(z)$, and $t(z)$ as given in \eqref{eq:defeta}, and  \eqref{eq: def r1}--\eqref{eq: def t}. The first step in the proof is to realize that 
\begin{multline*}
\frac23 r_1 e^{3\pi i/4}z^{3/2} +2  e^{\pi i/4} s(z) z^{1/2}=  i z^{3/2} H(z) - i z^{1/2}F(z)\\= \begin{cases}
(-z)^{3/2} H(z)+(-z)^{1/2} F(z), & z\in I,II,\\
-(-z)^{3/2} H(z)-(-z)^{1/2} F(z), & z\in III, IV,\end{cases}\end{multline*}
which follows from the definitions by a straightforward check. Then \eqref{eq:etaf} easily follows combining this with \eqref{eq:defeta}.

The statements in \eqref{eq:etaminusgf} follow in the same way. Indeed,
\begin{multline*}
\frac23 r_2(z) e^{3\pi i/4}(-z)^{3/2}+ 2  e^{\pi i/4} s(z) (-z)^{1/2}=  i (-z)^{3/2}\left( H(z) +z^{-1} F(z) \right)\\= \begin{cases}
-z^{3/2} H(z)-z^{1/2} F(z), & z\in I,II,\\
z^{3/2} H(z)+z^{1/2} F(z), & z\in III, IV. 
\end{cases}\end{multline*}
\end{proof}

Finally, we define the analytic prefactor $E_n$ by
\begin{equation} \label{eq: analytic prefactor}
E_n(z)=e^{-nz^2K(z)}\Pinf(z) C_\pm^{-1} A \widetilde B \left(n^{2/3}z \right), \qquad \pm \Im z >0.
\end{equation}

\begin{lemma}
The prefactor $E_n$ is analytic in a neighborhood of the origin.
\end{lemma}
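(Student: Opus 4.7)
My plan is to verify, in turn, that the apparent jumps of $E_n$ on the real axis and on the imaginary axis cancel exactly, and then to show that the isolated singularity at the origin is removable via a growth estimate. Since $K$ is analytic near $0$, the scalar factor $e^{-nz^2K(z)}$ is entire on $\D$ and may be ignored. The remaining factor $\Pinf(z)\,C_\pm^{-1}A\,\widetilde B(n^{2/3}z)$ is clearly analytic in each of the four open quadrants of $\D\setminus\{0\}$, because $\Pinf$ is analytic off $\R\cup i\R$, $\widetilde B(n^{2/3}\cdot)$ is analytic off $i\R$, and $C_\pm$ is constant within each half-plane. The content is thus the matching of boundary values across $\R\cap\D$ and $i\R\cap\D$, plus the behavior at $0$.

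Jump across $(-c,c)$: since $\widetilde B(n^{2/3}\cdot)$ is analytic across $\R$, the identity $E_{n,+}=E_{n,-}$ reduces to $\Pinf_+C_+^{-1}=\Pinf_-C_-^{-1}$, i.e.\ $C_-^{-1}C_+=J_\infty|_{[-c,c]}$ with $J_\infty$ as in \eqref{eq: jump global parametrix}. From \eqref{eq: def C+-} one checks $C_-^{-1}=C_-$, and a direct $4\times 4$ multiplication yields $C_-C_+=J_\infty|_{[-c,c]}$. Note that for $n$ large one has $\D\cap\R\subset(-c,c)$, so no jumps of $\Pinf$ on $\R\setminus[-c,c]$ are encountered inside $\D$.

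Jump across $i\R$: both sides of $i\R$ lie in the same half-plane, so $C_\pm$ is constant across $i\R$, and the cancellation reduces to the identity
\[
A\,\widetilde B_+(n^{2/3}z)\,\widetilde B_-^{-1}(n^{2/3}z)\,A^{-1}=C_\pm J_\infty^{-1}|_{i\R}\,C_\pm^{-1}.
\]
Using the prescribed branches in RH problem \ref{rhp: tacnode rhp} for $(i\zeta)^{\pm1/4}$ and $(-i\zeta)^{\pm1/4}$ (recalling that $\widetilde B(\zeta)=B(i\zeta)$ has cuts exactly on $i\R$), one computes $\widetilde B_+\widetilde B_-^{-1}=\diag(1,i,1,-i)$ on $i\R^+$ and $\diag(-i,1,i,1)$ on $i\R^-$. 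A conjugation by $A$ (using the explicit form \eqref{eq: A}) produces in each case a permutation-sign matrix whose further conjugation by $C_\pm$ matches $J_\infty^{-1}|_{i\R}$ on the corresponding half. This is the only step that genuinely requires a calculation, and it is the main obstacle.

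Removability at $0$: with both jump cancellations established, $E_n$ is single-valued and analytic on the punctured disk $\D\setminus\{0\}$. Item (4) of RH problem \ref{rhp:pinfty} gives $\Pinf(z)=\mathcal O(z^{-1/4})$, and the definition \eqref{eq: B} of $B$ gives $\widetilde B(n^{2/3}z)=\mathcal O(|z|^{-1/4})$ entrywise, so $E_n(z)=\mathcal O(|z|^{-1/2})$ and hence $zE_n(z)\to 0$ as $z\to 0$. Therefore $zE_n$ extends analytically to $0$ with value $0$ by Riemann's removable-singularity theorem, and dividing again by $z$ shows that $E_n$ itself extends analytically to the origin, completing the proof.
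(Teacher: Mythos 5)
Your proof is correct and follows essentially the same route as the paper's: analyticity within each quadrant, cancellation of the jumps across $\R$ and $i\R$ (the paper phrases this as ``$\widetilde B^{-1}A^{-1}C_\pm$ satisfies the same jump conditions as $\Pinf$ near the origin,'' which is precisely the identity you verify), and the $\mathcal O(z^{-1/2})$ bound to remove the singularity at the origin. The only difference is that you carry out the ``elementary computations'' the paper merely asserts, and your identities $C_-C_+=J_\infty|_{[-c,c]}$ and $A\,\widetilde B_+\widetilde B_-^{-1}A^{-1}=C_\pm J_\infty^{-1}C_\pm^{-1}$ on $i\R$ do check out with the stated branch conventions.
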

\begin{proof}
Trivially, $E_n$ is analytic within each quadrant. Moreover, it follows from the behavior of $\Pinf$ around zero, stated in Lemma \ref{lemma: global parametrix} and \eqref{eq: B} that $E_n(z)=\mathcal O \left(z^{-1/2} \right)$ as $z \to 0$. Therefore, it is sufficient to prove that $E_n$ has no jumps on the real and imaginary axes. To this end, we note that  by some elementary computations one can check that near the origin $\widetilde B(z)^{-1} A^{-1} C_\pm$ satisfies the same jump conditions as $\Pinf$. Therefore, $E_n$ has indeed no jumps and hence it is analytic.
\end{proof}
We have the following result on the local parametrix $\P0$. 

\begin{proposition}
If $n\equiv 0 \mod 2$, then the function $\P0$ as defined in \eqref{eq: structure local parametrix} satisfies (C1)-(C3).\end{proposition}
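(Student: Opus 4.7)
The conditions (C1) and (C2) have essentially been dispatched during the construction: the prefactor $E_n$ was shown to be analytic on a neighborhood of the origin, the parameters $r_1,r_2,s,t$ are analytic on $\D$ by Lemma \ref{lemma: rst}, and by standard RH theory $\widetilde M$ depends analytically on them, so $\P0$ is analytic on $\D\setminus\Sigma_S$. A direct comparison of the jump contour $\Sigma_{\widetilde M}$ (Figure \ref{fig: contour Mbar}) with $\Sigma_S\cap\D$ — after the small deformation of the rays chosen in the lens construction — together with the jump relations for the $\lambda_j$ (Lemma \ref{lemma: lambda meromorf}) shows that the product $\widetilde M(n^{2/3}f(z))\diag(e^{n\lambda_j(z)})$ produces exactly the jumps of $S$ listed in Lemma \ref{rhp: S}. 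So (C1) and (C2) are purely bookkeeping.

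The substance is the matching (C3). The plan is to substitute the large-$\zeta$ expansion of $\widetilde M$ from Lemma \ref{lem:transformRH} at $\zeta=n^{2/3}z$ with $|z|=n^{-\delta}$ (so that $|\zeta|=n^{2/3-\delta}\to\infty$ since $\delta<\tfrac13$):
\[
\widetilde M(n^{2/3}z)=\widetilde B^{-1}(n^{2/3}z)A^{-1}C_{\pm}\bigl(I+\mathcal O(n^{-1/3}|z|^{-1/2})\bigr)D(n^{2/3}z).
\]
Multiplying on the left by $E_n$ in \eqref{eq: analytic prefactor} produces the telescoping $E_n\widetilde B^{-1}A^{-1}C_{\pm}=e^{-nz^2K(z)}\Pinf$, so
\[
\P0(z)=e^{-nz^2K(z)}\Pinf(z)\bigl(I+\mathcal O(n^{-1/3}|z|^{-1/2})\bigr)D(n^{2/3}z)\diag\bigl(e^{n\lambda_j(z)}\bigr).
\]

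The key algebraic step, on which (C3) hinges, is the identity
\[
D(n^{2/3}z)\diag\bigl(e^{n\lambda_j(z)}\bigr)=e^{nz^2K(z)}\,I,\qquad z\in\D,
\]
valid when $n$ is even. One verifies it quadrant by quadrant, combining the definitions \eqref{eq:defD}, \eqref{eq:defeta} and the tailor-made choices \eqref{eq: def r1}–\eqref{eq: def t} of $r_1,r_2,s,t$: the identities \eqref{eq:etaf}–\eqref{eq:etaminusgf} cancel the half-integer powers $z^{1/2},z^{3/2}$ and the $zG(z)$ piece in Lemma \ref{lemma: asymptotics lambda around 0} against the $\eta_j\mp it\zeta$ exponents in $D$, leaving on each diagonal entry only $nz^2K(z)$ plus an integer multiple of $\pi i$. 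The residual integer-$\pi i$ terms produce factors $(\pm 1)^n$, which collapse to $1$ precisely when $n\equiv 0\pmod 2$; this is where the parity hypothesis enters.

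Given this identity the two scalars $e^{\mp nz^2K(z)}$ cancel and one obtains $\P0(z)=\Pinf(z)\bigl(I+\mathcal O(n^{-1/3}|z|^{-1/2})\bigr)$. Then
\[
\P0(z)\Pinf(z)^{-1}=I+\Pinf(z)\,\mathcal O(n^{-1/3}|z|^{-1/2})\,\Pinf(z)^{-1},
\]
and since $\Pinf(z),\Pinf(z)^{-1}=\mathcal O(|z|^{-1/4})$ near the origin (the bound for the inverse follows from the explicit formula in Lemma \ref{lemma: global parametrix}), conjugation costs at most an extra factor $|z|^{-1/2}$. With $|z|=n^{-\delta}$ this yields the required
\[
\P0(z)\Pinf(z)^{-1}=I+\mathcal O(n^{-1/3+\delta})
\]
uniformly on $|z|=n^{-\delta}$. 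The upper bound $\delta<\tfrac13$ from \eqref{eq: bounds kappa} is exactly what makes this $o(1)$, and the lower bound $\delta>\tfrac16$ was spent earlier to give the limits \eqref{eq:limitr0s0t0}. The main obstacle I foresee is the case-by-case combinatorics of the algebraic identity: one must track carefully the branches of $z^{1/2},(-z)^{1/2},z^{3/2},(-z)^{3/2}$ in the four quadrants so that every diagonal entry of $D\cdot\diag(e^{n\lambda_j})$ reduces to $e^{nz^2K(z)}$ (up to the $(-1)^n$ factors that the parity kills). Everything else is routine estimation.
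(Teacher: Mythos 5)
Your proposal is correct and follows essentially the same route as the paper: (C1)--(C2) are treated as automatic from the construction, and (C3) rests on the identity $D(n^{2/3}f(z))\,\diag\bigl(e^{n\lambda_1(z)},\dots,e^{n\lambda_4(z)}\bigr)=e^{nz^2K(z)}I$ (the paper states the equivalent form $D(n^{2/3}f(z))=e^{nz^2K(z)}\diag(e^{-n\lambda_j(z)})$), obtained from \eqref{eq:etaf}--\eqref{eq:etaminusgf} and Lemma \ref{lemma: asymptotics lambda around 0}, with the parity of $n$ killing the residual $\pi i$ discrepancies and the conjugation by $\Pinf=\mathcal O(z^{-1/4})$ yielding the $\mathcal O(n^{\delta-1/3})$ matching on $|z|=n^{-\delta}$. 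No gaps.
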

\begin{proof}
In the discussion right below the conditions (C1)--(C3) we already argued that the conditions (C1) and (C2) are automatically satisfied, so it remains to show that  we have  (C3).

By inserting \eqref{eq:etaf} and \eqref{eq:etaminusgf} in \eqref{eq:defD} and using Lemma \ref{lemma: asymptotics lambda around 0} and $n\equiv 0 \mod 2$, we obtain 
\begin{align}
D(n^{2/3}f(z))=e^{nz^2K(z)}\diag\left(e^{-n\lambda_1(z)},e^{-n\lambda_2(z)},e^{-n\lambda_3(z)},e^{-n\lambda_4(z)}\right).
\end{align}
Note that we need $n\equiv 0 \mod 2$ because \eqref{eq:lambda1inI} and \eqref{eq:etaf}--\eqref{eq:etaminusgf} differ  by  a term $\pi i$ at some places. Recall that $\Pinf(z)=\mathcal O(z^{-1/4})$ and $\Pinf(z)^{-1}=\mathcal O(z^{-1/4})$ as $z \to 0$. Then we have 
\[\P0(z)\Pinf(z)^{-1}=\Pinf(z)\left(1+\mathcal O(n^{-1/3}z^{-1/2})\right)\Pinf(z)^{-1}=I+\mathcal O(n^{\delta-1/3}),\]
where the order is uniform for $|z|= n^{-\delta}$. 
\end{proof}

\subsection{Final transformation: $S \mapsto R$}

We define the final transformation as
\begin{equation} \label{eq: S to R}
R(z)= \begin{cases}
S(z) \left(\P0\right)^{-1}(z), & \text{for $z \in \D$, } \\
S(z) \left(P^{(\pm c^*)}\right)^{-1}(z), & 
\text{\parbox[t]{0.4\textwidth}{for $z\in D(\pm c^*,\epsilon)$,} } \\
S(z) {\Pinf}^{-1}(z), & \text{elsewhere. } 
\end{cases}
\end{equation}
Then $R$ is analytic in $\C \setminus \Sigma_R$ where $\Sigma_R$ is the contour consisting of the three circles around 0, $-c^*$, and $c^*$, and the parts of the real and imaginary axes and the lips of the lenses that lie outside these circles. The circles are oriented clockwise.

Then $R$ satisfies the following RH problem.
\begin{lemma}
Suppose that $n \equiv 0 \mod 6$. Then
\begin{itemize}
\item[\rm (1)] $R$ is analytic in $\C \setminus \Sigma_{R}$;
\item[\rm (2)] $R_+=R_-J_R$ on $\Sigma_R$ with jump matrices
\[
J_R= \begin{cases}
\P0 \left(\Pinf \right)^{-1} & \text{for $|z|=n^{-\delta}$,}\\
P^{(\pm c^*)} \left(\Pinf \right)^{-1} & \text{for $|z\mp c^*|=\epsilon$,}\\
\left( \Pinf \right)_- J_S \left(\Pinf \right)_+^{-1} & \text{elsewhere on $\Sigma_R$;}
\end{cases}
\]
\item[\rm (3)] as $z \to \infty$ we have
\[
R(z)=I+\mathcal O(1/z).
\]
\end{itemize}
\end{lemma}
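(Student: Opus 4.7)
The plan is a direct verification of the three properties by tracking how the piecewise multiplication by $\left(\P0\right)^{-1}$, $\left(P^{(\pm c^*)}\right)^{-1}$, or $\left(\Pinf\right)^{-1}$ alters the jump structure and asymptotics of $S$. There is no new analysis required, only careful bookkeeping.

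For analyticity (1), I would argue contour-by-contour. Inside the shrinking disk $\D$, the parametrix $\P0$ was constructed precisely so that it satisfies condition (C2), i.e.\ the same jumps as $S$ on $\D\cap \Sigma_S$; hence $S\left(\P0\right)^{-1}$ has no jumps on these contours and they are removed from $\Sigma_R$. The same reasoning applies inside $D(\pm c^*,\epsilon)$ by the defining property of the Airy parametrices $P^{(\pm c^*)}$. Outside the three disks, on the remaining parts of $[-c^*,c^*]$ (respectively on $i\R$), the jump of $S$ coincides with the constant swap (respectively permutation) block in $J_\infty$, so $R=S\left(\Pinf\right)^{-1}$ extends analytically across these segments. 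What survives as $\Sigma_R$ is therefore exactly: the three circles $|z|=n^{-\delta}$, $|z\mp c^*|=\epsilon$, the two halves of $\R\setminus([-c^*,c^*]\cup \text{disks})$, and the parts of the lens lips $L_1,L_2,L_3$ lying outside the disks.

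For the jumps (2), I would simply substitute the definition \eqref{eq: S to R} into $R_+=R_-J_R$ on each component. On the circle $|z|=n^{-\delta}$, with the clockwise orientation, $R_-=S\left(\Pinf\right)^{-1}$ on the outside and $R_+=S\left(\P0\right)^{-1}$ on the inside, yielding $J_R=\P0\left(\Pinf\right)^{-1}$; the same calculation at $|z\mp c^*|=\epsilon$ gives the Airy-type ratios. On the remaining parts of $\Sigma_R$, which lie in the region where $\Pinf$ is analytic, the identity $R_-^{-1}R_+ = \left(\Pinf\right)_-J_S\left(\Pinf\right)_+^{-1}$ is immediate. Condition (3) is also immediate: the asymptotic conditions in Lemma 4.3 for $S$ and in RH problem \ref{rhp:pinfty} for $\Pinf$ have the same leading behavior $\diag(1,z^{1/3},1,z^{-1/3})\begin{pmatrix}1&0\\0&A_j\end{pmatrix}$ in each quadrant, so $R=S\left(\Pinf\right)^{-1}=I+\mathcal O(1/z)$ at infinity.

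The one subtle point, which motivates the hypothesis $n\equiv 0\mod 6$, is the compatibility of \eqref{eq: structure local parametrix} with the $\pi i$ shifts in Lemma \ref{lemma: asymptotics lambda around 0}: these are exactly cancelled by $e^{n\pi i}=1$ when $n$ is even, so combining with $n\equiv 0\mod 3$ (used in Lemma \ref{rhp: Y} and throughout) yields $n\equiv 0\mod 6$. I expect no genuine obstacle here: the whole content of the lemma is bookkeeping built into the preceding sections, and the estimate $J_R\to I$ implicit in the statement (to be used subsequently) will follow from the matching condition (C3) together with the decay estimates of Lemmas \ref{lemma: estimate global lenses} and \ref{lemma: estimate on local lenses}.
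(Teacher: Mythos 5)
Your verification strategy is exactly the (unwritten) content behind this lemma --- the paper states it without proof precisely because it is the bookkeeping you describe: (C1)--(C2) and the analogous defining property of the Airy parametrices kill the jumps inside the three disks, the identity $\left(\Pinf\right)_-J_\infty\left(\Pinf\right)_+^{-1}=I$ kills the jumps on $(-c,c)$ and $i\R$ outside the disks, and the circle jumps are read off from \eqref{eq: S to R}. Two small corrections, though. First, your orientation bookkeeping on the circles is inverted: with the clockwise orientation and the paper's convention that $+$ is the left side, the $+$ boundary value on $|z|=n^{-\delta}$ is taken from the \emph{exterior}, so $R_+=S\left(\Pinf\right)^{-1}$ and $R_-=S\left(\P0\right)^{-1}$, whence $J_R=R_-^{-1}R_+=\P0\left(\Pinf\right)^{-1}$; with your assignment $R_-^{-1}R_+$ would come out as $\Pinf\left(\P0\right)^{-1}$, so you reached the right formula only by compensating errors.

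Second, condition (3) is not ``immediate'' from matching the leading behaviors: $S$ carries an error term $\mathcal O(z^{-1/3})$ in its asymptotic condition while $\Pinf$ carries $\mathcal O(z^{-1})$, so cancelling the common factor $\diag(1,z^{1/3},1,z^{-1/3})\begin{pmatrix}1&0\\0&A_j\end{pmatrix}$ only yields $R(z)=I+\mathcal O(z^{-1/3})$. To upgrade this to $I+\mathcal O(1/z)$ you need the standard supplementary argument: on the unbounded arcs of $\Sigma_R$ the jump satisfies $J_R-I=\mathcal O\bigl(e^{-dn^{1/2}|z|^{4/3}}\bigr)$, so $R-I$ admits a Cauchy-integral representation over $\Sigma_R$ with an integrable density (equivalently, after peeling off these exponentially small jumps $R$ is analytic at $\infty$ and a function analytic at $\infty$ that is $o(1)$ there has a Laurent expansion beginning at $z^{-1}$). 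This is routine but it is a genuine extra step, and it is the same mechanism that later produces \eqref{eq: conclusion steepest descent}. Your explanation of the parity hypothesis ($n\equiv 0\bmod 2$ for $\P0$ combined with $n\equiv0\bmod 3$ from the earlier transformations) is correct and matches the paper.
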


Now, all jump matrices are close to the identity matrix as $n$ gets large. On the segments of the real and imaginary axes belonging to $\Sigma_R$ the jump vanishes. On the shrinking circle around 0  the matching condition (C3) in the construction of $\P0$ in Section \ref{sec:P0} yields
\[
J_R=I+\mathcal O \left( n^{\delta-1/3}\right), \qquad \text{as $n \to \infty$,}
\]
uniformly for $|z|=n^{-\delta}$. From \eqref{eq: matching on fixed disks} and \eqref{eq: estimate 1} we find
\[
J_R=I+\mathcal O \left( n^{-1}\right), \qquad \text{as $n \to \infty$,}
\]
for $z$ on the other bounded parts of the contour, i.e. on the circles around $-c^*$ and $c^*$ and on the remainders of the lips of the bounded lens $L_1$. Moreover, in view of \eqref{eq: estimate 2} and \eqref{eq: estimate 3}, there exists a constant $d >0$ such that on the unbounded parts of the contour $\Sigma_R$ the jump matrices satisfy
\[
J_R= I + \mathcal O \left( e^{-dn^{1/2}\max(1,|z|^{4/3})} \right), \qquad \text{as $n \to \infty$,}
\]
uniformly for $z$ on the unbounded arcs of $\Sigma_R$.

We have now achieved the goal of the steepest descent analysis for the RH problem for $Y$. $R(z)$ tends to the identity matrix as $z \to \infty$ and the jump matrices for $R$ tend to the identity matrix as $n \to \infty$, both uniformly on $\Sigma_R$ and in $L^2(\Sigma_R)$. By standard arguments, see \cite{Deift} and in particular \cite{BK} for the case of a moving contour, this leads to the conclusion of our steepest descent analysis.
\begin{lemma}
Let $R$ be as in \eqref{eq: S to R} and $n\equiv 0 \mod 6.$ Then 
\begin{equation} \label{eq: conclusion steepest descent}
R(z)=I+\mathcal O \left( \frac{n^{\delta-1/3}}{(1+|z|)}\right),
\end{equation}
as $n \to \infty$, uniformly for $z \in \C \setminus \Sigma_R$.
\end{lemma}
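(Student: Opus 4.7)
The plan is to apply the standard small-norm Riemann-Hilbert theory (see \cite{Deift}), with a rescaling argument borrowed from \cite{BK} to accommodate the shrinking circle $\{|z|=n^{-\delta}\}$ which is part of $\Sigma_R$. First I would consolidate the pointwise estimates for $V_R:=J_R-I$ that are already displayed just before the lemma: on $\{|z|=n^{-\delta}\}$ the matching condition (C3) for the local parametrix $P^{(0)}$ gives $\|V_R\|_{L^\infty}=\mathcal O(n^{\delta-1/3})$; on the two circles $\{|z\mp c^*|=\epsilon\}$ and on the remainders of the lips of $L_1$ one has $\|V_R\|_{L^\infty}=\mathcal O(n^{-1})$; and on the unbounded arcs one has $\|V_R\|_{L^\infty}\leq e^{-dn^{1/2}\max(1,|z|^{4/3})}$. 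Together these yield
\[
\|V_R\|_{L^\infty(\Sigma_R)}=\mathcal O(n^{\delta-1/3}),\qquad \|V_R\|_{L^2(\Sigma_R)\cap L^1(\Sigma_R)}=\mathcal O(n^{\delta-1/3}),
\]
where for the shrinking circle one uses that its length is $2\pi n^{-\delta}$.

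Next, I would rewrite the RH problem for $R$ as the singular integral equation
\[
R_-=I+\mathcal C_-\bigl((R_--I)V_R\bigr)+\mathcal C_-(V_R)\qquad\text{on }L^2(\Sigma_R),
\]
where $\mathcal C_-$ denotes the minus Cauchy projection on $\Sigma_R$. Since $\delta<1/3$, the operator norm of $f\mapsto\mathcal C_-(fV_R)$ is bounded by $\|\mathcal C_-\|_{L^2\to L^2}\,\|V_R\|_{L^\infty}=o(1)$, so $I-\mathcal C_-(\,\cdot\, V_R)$ is invertible for large $n$. This yields a unique solution with $\|R_--I\|_{L^2}\leq C\|V_R\|_{L^2}=\mathcal O(n^{\delta-1/3})$, and $R$ is recovered by
\[
R(z)=I+\frac{1}{2\pi i}\int_{\Sigma_R}\frac{R_-(s)V_R(s)}{s-z}\,ds,\qquad z\in\mathbb C\setminus\Sigma_R.
\]
Splitting the numerator as $V_R+(R_--I)V_R$ and applying Cauchy-Schwarz to the second piece, together with the estimate $|s-z|^{-1}\lesssim (1+|z|)^{-1}$ valid uniformly in $s$ ranging over $\Sigma_R$ (once the exponential decay on the unbounded arcs is absorbed), I would obtain
\[
R(z)-I=\mathcal O\!\left(\frac{\|V_R\|_{L^1}+\|R_--I\|_{L^2}\|V_R\|_{L^2}}{1+|z|}\right)=\mathcal O\!\left(\frac{n^{\delta-1/3}}{1+|z|}\right),
\]
uniformly for $z$ bounded away from $\Sigma_R$.

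The main technical obstacle is precisely the shrinking circle: one must verify that $\|\mathcal C_-\|_{L^2(\Sigma_R)\to L^2(\Sigma_R)}$ stays bounded as $n\to\infty$, despite the fact that the contour itself is $n$-dependent. This is handled exactly as in \cite{BK}: one rescales locally via $z\mapsto n^{-\delta}z$ so that the shrinking circle becomes a fixed unit circle on which the $L^2$-boundedness of the Cauchy operator is classical, and then transports the bound back. Combined with the uniform boundedness of $\mathcal C_-$ on the remaining (fixed or exponentially damped) pieces of $\Sigma_R$, this yields the required uniform operator bound and thus \eqref{eq: conclusion steepest descent}.
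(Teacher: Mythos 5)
Your proposal is correct and is precisely the argument the paper invokes: the paper gives no written proof, simply appealing to the standard small-norm theory of \cite{Deift} together with the moving-contour rescaling of \cite{BK} for the shrinking circle, which is exactly the singular-integral-equation scheme and $L^\infty$/$L^2$/$L^1$ bookkeeping you carry out. Your version just supplies the details the authors left implicit (including the length factor $2\pi n^{-\delta}$ and the uniform bound on $\mathcal C_-$ under rescaling), so no further comparison is needed.
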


\subsection{Proof of Theorem \ref{th: main theorem}} \label{sec: proof of main theorem}

\begin{proof}[Proof of Theorem \ref{th: main theorem}.]
We start with the formula \eqref{eq: K in terms of X} that expresses $K_n$ in terms of $X$.
The idea is to write this expression in terms of $R$ instead of $X$ by applying all respective transformations
\[
X \mapsto U \mapsto T \mapsto S \mapsto R
\] introduced in the steepest descent analysis. Meanwhile, we let $n$ tend to infinity, so that we can exploit the conclusion of the steepest descent analysis \eqref{eq: conclusion steepest descent}. 

Let $x,y\in \R$. First, unfolding the transformation $X \mapsto U$, given in \eqref{eq: X to U} yields
\[
K_n(x,y)=\frac{e^{-n(y^2/2-x^2/2-\lambda_{2,+}(y)+\lambda_{1,+}(x))}}{2\pi i (x-y)}\begin{pmatrix} 0 & 1  & 0 & 0 \end{pmatrix} U_+^{-1}(y)U_+(x) \begin{pmatrix} 1\\0\\0\\0 \end{pmatrix}.
\]
The opening of global lenses $U \mapsto T$ in \eqref{eq:defT1}--\eqref{eq:defT3} does not effect the above expression. Indeed, as $x$ and $y$ are real, they fall outside the global lenses around the imaginary axis. Also, the opening of the global lens around the real axis does not affect this expression. Thus,
\[
K_n(x,y)=\frac{e^{-n(y^2/2-x^2/2-\lambda_{2,+}(y)+\lambda_{1,+}(x))}}{2\pi i (x-y)}\begin{pmatrix} 0 & 1  & 0 & 0 \end{pmatrix} T_+^{-1}(y)T_+(x) \begin{pmatrix} 1\\0\\0\\0 \end{pmatrix}.
\]
 The opening of the local lens $T \mapsto S$ in \eqref{eq:defS}, however, does have impact:
\begin{equation} \label{eq: K in terms of S}
K_n(x,y)=\frac{e^{-n(y^2/2-x^2/2)}}{2\pi i (x-y)}\begin{pmatrix} -e^{n \lambda_{1,+}(y)} & e^{n \lambda_{2,+}(y)}  & 0 & 0 \end{pmatrix} S_+^{-1}(y)S_+(x) \begin{pmatrix} e^{-n \lambda_{1,+}(x)}\\e^{-n \lambda_{2,+}(x)}\\0\\0 \end{pmatrix},
\end{equation}
if we assume $-c<x,y<c$.
 Moreover, for $x,y\in \D$, we have 
\[
S(z)=R(z) \P0(z), \qquad z=x,y,
\]
by \eqref{eq: S to R}.
It then follows that
\begin{multline} \label{eq: K in terms of R}
K_n(x,y)=\frac{e^{-n(y^2/2-x^2/2)}}{2\pi i (x-y)}  \begin{pmatrix} -1 & 1  & 0 & 0 \end{pmatrix} \\  \widetilde M^{-1}_+\left(n^{2/3}y; r_1(y),r_2(y),n^{2/3}s(y),n^{2/3}s(y),n^{1/3}t(y)\right)  E_n^{-1}(y)  R^{-1}(y)R(x) E_n(x) \\ \widetilde M_+\left(n^{2/3}x; r_1(x),r_2(x),n^{2/3}s(x),n^{2/3}s(x),n^{1/3}t(x)\right)\begin{pmatrix} 1 \\ 1 \\ 0 \\ 0 \end{pmatrix}.
\end{multline} 
Now we scale $x$ and $y$ with $n$ such that
\[
x=\frac{u}{n^{2/3}} \qquad \textrm{ and } \qquad y=\frac{v}{n^{2/3}},
\]
and $u,v \in  \R$. Then for large $n$, $x$ and $y$ belong the the disk $\D$, so that \eqref{eq: K in terms of R} holds. We want to take the limit as $n \to \infty$. Note that under these conditions
\[
\lim_{n \to \infty} e^{-n(y^2/2-x^2/2)}=1,
\]
and by \eqref{eq:limitr0s0t0} we have
\begin{align*}
r_j(z) &\to 1, && j=1,2, \\
n^{2/3}s_j(z) &\to \tfrac14 (a^2-5b), && j=1,2, \\
n^{1/3}t(z) & \to -a, 
\end{align*}
as $n \to \infty$ and $z=x,y$.
Furthermore, it follows from \eqref{eq: conclusion steepest descent} and Cauchy's formula that
\begin{equation}\label{eq:cauchyonR}
\begin{aligned}  
R^{-1}(y)(R(y)-R(x))&= \frac{(y-x)}{2\pi i} \oint \frac{R^{-1}(y)(R(w)-I)}{(w-x)(w-y)} \ud w \\ 
                    &= \mathcal O \left( \frac{x-y}{n^{1/3-\delta}}\right)=\mathcal O \left(n^{\delta-1}\right).
\end{aligned}
\end{equation}
as $n\to \infty$ where the constant is uniform for $u,v$ in compact subsets of $\R$. 

We estimate the factor $E_n$ given in \eqref{eq: analytic prefactor} in the same way. First note that we have $E_n(n^{-2/3} u)=\mathcal O\left(n^{1/6} \right)$ and $E_n(n^{-2/3} v)=\mathcal O\left(n^{1/6}\right)$ as $n\to \infty$ uniformly for $u,v$ in compact subsets. Then we obtain
\begin{align*} 
E_{n}^{-1}(y)(E_n(y)-E_{n}(x)) & = \frac{(y-x)}{2\pi i} \oint \left(\frac{E_n^{-1}(y)(E_n(w)-I)}{(w-x)(w-y)}\right) \ud w \\
                               & = \mathcal O \left( (x-y)n^{1/3}\right)=\mathcal O \left( n^{-1/3} \right),
\end{align*}
as $n \to \infty$ where the order term is uniform for $u,v$ in compact subsets. 
This results in
\[
\lim_{n \to \infty} E_n^{-1}(y) R_+^{-1}(y)R_+(x) E_n(x)=I,
\]
uniformly for $u,v$ in compact subsets of $\R$.
Inserting the estimates for $R$ and $E_n$  into \eqref{eq: K in terms of R} yields
\begin{multline*}
\lim_{n \to \infty} \frac{1}{n^{2/3}} K_n\left( \frac{u}{n^{2/3}},\frac{v}{n^{2/3}}\right) = \frac{1}{2 \pi i(u-v)} \begin{pmatrix} -1 & 1  & 0 & 0 \end{pmatrix}  \\  \times \widetilde M^{-1}_+\left(v;1,1,\tfrac14 (a^2-5b),\tfrac14 (a^2-5b),-a\right) \widetilde M_+\left(u; 1,1,\tfrac14 (a^2-5b),\tfrac14 (a^2-5b),-a\right)\begin{pmatrix} 1 \\ 1 \\ 0 \\ 0 \end{pmatrix}.
\end{multline*}
This proves  the statement.\end{proof}

 \section{Solvability for RH problem \ref{rhp: tacnode rhp}}\label{sec:existence}

The aim of this section is to prove Theorem \ref{th: tacnode rhp}. Note that by a simple rescaling it is sufficient to consider the case $r_1=r_2=1$ only. We will assume this throughout this section. 

\subsection{Lax pair} \label{sec: lax pair}
Consider first the system of equations
\begin{equation}\label{eq:laxpair}
\begin{cases}
\frac{\partial }{\partial \zeta} M(\zeta,t)=U(\zeta,t) M(\zeta,t),\\
\frac{\partial }{\partial t} M(\zeta,t)=W(\zeta,t) M(\zeta,t).
\end{cases}
\end{equation}
where the matrices $U$ and $W$ are explicitly defined as follows. We need  the Hastings-McLeod solution $q(\nu)$ to the Painlev\'e II  equation 
as defined in \eqref{eq:PIIeq} and \eqref{eq:HScharach}.
We will also use the Hamiltonian
\begin{equation} \label{eq: def u}
u(\sigma)=(q'(\sigma))^2-\sigma q^2(\sigma)-q^4(\sigma).
\end{equation}
Note that $u'(\sigma)=-q^2(\sigma)$. Then we define $U$ and $W$ as 
\begin{align}\label{eq: defW}
W(\zeta,t)=
\begin{pmatrix}
\zeta & -2 b & 0 & -2 i d\\
-2b&-\zeta & 2 i d & 0 \\
0 & -2 if & \zeta & -2h\\
2i f & 0 & -2h & -\zeta
 \end{pmatrix},
\end{align}
and
\begin{align}
\label{eq: defU}
U(\zeta,t)=
\begin{pmatrix}
t-c & d & i & 0\\ 
-d &c-t & 0 & i\\
i(\zeta-s)+i k & -i (h+b) & t+c& d\\
- i (h+b) & -i (\zeta+s)+i k & - d & -(t+c)
\end{pmatrix}.
\end{align}
Here
\begin{equation} \label{eq: def dc}
\begin{cases}
d=2^{-1/3} q\left(2^{2/3}(2s-t^2)\right),\\
c=-2^{-1/3} u\left(2^{2/3}(2s-t^2)\right) +s^2,\\
b=\frac{1}{4t}\frac{\partial d}{\partial  t} +dc +td,\\
f=2dt^2-c\frac{1}{2t}\frac{\partial  d}{\partial  t}-dc^2-d^3-2d s,\\
h=\frac{1}{4t}\frac{\partial  d}{\partial  t}+dc-td,\\
k=c^2-d^2-s.
\end{cases}
\end{equation}
Since $q$ has no poles on the real axis \cite{HMcL}, $U$ and $W$ are defined for all $t\in \R$.

Note that the Lax pair is an overdetermined system of equations for $\Psi$. However, the compatibility condition is satisfied.
\begin{lemma}
With $U$ and $W$ defined as in \eqref{eq: defW} and \eqref{eq: defU} we have
\begin{equation}\label{eq:comp}
\frac{\partial }{\partial  \zeta} W=\frac{\partial }{\partial  t}U + [U,W].
\end{equation} 
In other words, $U$ and $W$ solve the compatibility condition for \eqref{eq:laxpair}.
\end{lemma}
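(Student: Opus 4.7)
The plan is to verify \eqref{eq:comp} by direct computation, exploiting the fact that both $U$ and $W$ are linear in $\zeta$. Since the only $\zeta$-dependence of $W$ is through its diagonal, $\partial_\zeta W = \diag(1,-1,1,-1)$. Write $U = \zeta\,U_1 + U_0$ and $W = \zeta\,W_1 + W_0$, where $W_1 = \diag(1,-1,1,-1)$ and $U_1$ has entries $i$ at position $(3,1)$, $-i$ at $(4,2)$, and zeros elsewhere. A direct check gives $W_1 U_1 = U_1 W_1$, so $[U_1,W_1] = 0$, and consequently $[U,W]$ is at most linear in $\zeta$. Since $\partial_t U$ is $\zeta$-independent, matching powers of $\zeta$ on the two sides of \eqref{eq:comp} splits the problem into an identity linear in $\zeta$,
\[
[U_1,W_0] + [U_0,W_1] = 0,
\]
and an identity constant in $\zeta$,
\[
\partial_t U + [U_0,W_0] = \diag(1,-1,1,-1).
\]

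The first identity is purely algebraic in $b,c,d,f,h,k,s,t$: no information about $q$ beyond the definitions in \eqref{eq: def dc} is required. Substituting those definitions, the cancellations are driven by the combinations $b - h = 2td$, $b + h = \tfrac{1}{2t}\partial_t d + 2dc$, $k = c^2 - d^2 - s$, and the explicit formula for $f$. I would verify this by computing the four $2\times 2$ blocks of $[U_1,W_0] + [U_0,W_1]$ separately.

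For the second identity, introduce $\sigma = 2^{2/3}(2s - t^2)$, so that $\partial_t = -4\cdot 2^{2/3}\,t\,\partial_\sigma$ on functions of $\sigma$. Then $\partial_t d = -4t\cdot 2^{1/3}\,q'(\sigma)$, while $\partial_t c = -8td^2$ follows from the relation $u'(\sigma) = -q^2(\sigma)$ noted below \eqref{eq: def u}. The Painlev\'e~II equation $q''(\sigma) = 2q^3(\sigma) + \sigma q(\sigma)$ enters precisely once, to eliminate the $q''(\sigma)$ that appears when $\partial_t^2 d$ is generated inside $\partial_t b$ and $\partial_t h$. The main obstacle is bookkeeping: sixteen entries of a $4\times 4$ matrix equation have to be matched, and the intermediate expressions are long. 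However, the formulas in \eqref{eq: def dc} have been tailored so that all off-diagonal contributions cancel and the four diagonal contributions collapse to $\pm 1$; I would carry this out block by block, always reducing with the algebraic identities from the previous paragraph before invoking Painlev\'e~II.
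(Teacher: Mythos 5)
Your overall strategy---direct verification, organized by matching powers of $\zeta$ after writing $U=\zeta U_1+U_0$, $W=\zeta W_1+W_0$ and noting $[U_1,W_1]=0$---is sound and is essentially the paper's proof, which also proceeds by brute-force substitution and reduces the matrix identity to a short list of scalar identities in $b,c,d,f,h,k$ and their $t$-derivatives (e.g.\ $c'=2(h-b)d$, $b-h=2dt$, $k'=2(h^2-b^2)$, $h'+b'=-4ft+2(h-b)(k-s)$), all consequences of \eqref{eq: def dc}, with Painlev\'e~II invoked exactly once, as you correctly observe.

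Two corrections, one of which you must fix before the computation will close. First, a minor point: the coefficient identity $[U_1,W_0]+[U_0,W_1]=0$ holds \emph{identically} in $b,d,f,h$ (the off-diagonal entries of $[U_0,W_1]$ are $\mp 2(U_0)_{ij}$ and cancel the entries of $[U_1,W_0]$ entry by entry), so no relations from \eqref{eq: def dc} are needed there; attributing its cancellations to $b-h=2td$ etc.\ is misleading---those relations are consumed entirely by the constant-in-$\zeta$ identity. Second, and more seriously, your chain rule carries a factor-of-$2$ error: with $\sigma=2^{2/3}(2s-t^2)$ one has $\partial\sigma/\partial t=-2^{5/3}t$, hence $\partial_t=-2\cdot 2^{2/3}\,t\,\partial_\sigma$ rather than $-4\cdot 2^{2/3}\,t\,\partial_\sigma$, so $\partial_t d=-2\cdot 2^{1/3}\,t\,q'(\sigma)$ and $\partial_t c=-4td^2$, not $-8td^2$. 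The $(1,1)$ entry of your constant-in-$\zeta$ identity forces $\partial_t c=2(h-b)d=-4td^2$, so with the values as you state them the verification would fail. Once the constant is corrected, the rest of the plan goes through.
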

\begin{proof}
This is intrinsic to the construction of $U$ and $W$ and follows by direct verification. Plugging \eqref{eq: defW} and \eqref{eq: defU} into \eqref{eq:comp} it remains to check the following equalities
\begin{align*}
c' &= 2(h-b)d, \\
d' &= 4b(t-c)-2ds+2dk-2f, \\
   &= 4h(t+c)+2ds-2dk+2f, \\
b-h &= 2dt, \\
k' &= 2(h^2-b^2), \\
h'+b' &= -4ft+2(h-b)(k-s).
\end{align*}
These are all valid by virtue of \eqref{eq: def dc}.
\end{proof}

The main idea of our proof of Theorem \ref{th: tacnode rhp} is the following. By standard arguments of isomonodromic deformation theory, we construct a RH problem out of the Lax pair \eqref{eq:laxpair} for which the jump matrices do not depend on the parameters.  It is crucial that the Hastings-McLeod solution to the Painlev\'e II equation has no real poles, so that we get a solution for this RH problem for every $s,t\in \R$. We then prove that this solution  also solves the RH problem \ref{rhp: tacnode rhp}, which finishes the proof of Theorem \ref{th: tacnode rhp}.  To this end we need to check that both the jump conditions and the asymptotic condition are satisfied. The asymptotic condition follows by construction and more specifically the top equation in \eqref{eq:laxpair}. For the jump conditions we use the fact that they are independent of $t$  which allows us to simply  refer to   \cite{DKZ} for the special case $t=0$.

\subsection{Asymptotics}

We will now show that the top equation of \eqref{eq:laxpair} has a unique formal solution near the irregular singularity $\infty $ that has the asymptotic behavior given in RH problem \ref{rhp: tacnode rhp}. It turns out that this equation exhibits a Stokes phenomenon. This means that to find a genuine solution of the equation with the same asymptotic behavior as the formal solution, we need to restrict ourselves to certain sectors in the complex plane. To ensure existence of such solutions, these sectors should be sufficiently small. On the other hand, to get uniqueness the sector needs to be sufficiently large.

Motivated by the asymptotic behavior in RH problem \ref{rhp: tacnode rhp} we define the function $\mathcal A$ by
\begin{equation} \label{eq: cal A}
\mathcal A(\zeta):=B(\zeta) A \diag \left( e^{-\psi(-\zeta)+t\zeta},e^{-\psi(\zeta)-t \zeta}, e^{\psi(-\zeta)+t \zeta},e^{\psi(\zeta)-t\zeta} \right),
\end{equation}
where $A$ and $B(\zeta)$ are given by \eqref{eq: A} and \eqref{eq: B} and
\[
\psi(\zeta)=\tfrac23 \zeta^{3/2}+2s \zeta^{1/2}.
\]
Note that $\mathcal A$ has branch cuts along the positive and the negative real line.  

Now consider the sets
\[
\ell_{i,j}=\{ \zeta \in \C \mid \Re (\widetilde \psi_i(\zeta)-\widetilde \psi_j(\zeta))=0\}, \qquad 1 \leq i < j \leq 4,
\]
where 
\[
\widetilde \psi_1(\zeta)=-(-\zeta)^{3/2}, \quad \widetilde \psi_2(\zeta)=-\zeta^{3/2}, \quad \widetilde \psi_3(\zeta)=(-\zeta)^{3/2}, \quad \widetilde \psi_4(\zeta)=\zeta^{3/2}.
\]
Then each $\ell_{i,j}$ is a union of rays that connect the origin with $\infty$, see also Figure \ref{fig: stokes rays}. These rays are called the Stokes lines associated with $\A(\zeta)$. They separate the complex plane into sectors of angle $\pi/6$. The scheme of dominance is indicated by the numbers between successive Stokes lines, e.g. the sequence $4,1,3,2$ in the sector $0<\arg \zeta < \pi/6$ means that $\Re \widetilde \psi_4(\zeta)>\Re\widetilde \psi_1(\zeta)>\Re\widetilde \psi_3(\zeta)>\Re\widetilde \psi_2(\zeta)$ for $\zeta$ in that sector. Note that the Stokes lines on $\R$ are special in the sense that the order of dominance does not change when crossing these lines. This is due to the branch cut of $\A(\zeta)$ along the real line.

We now define sectors $\Omega_k$, $k=0,\ldots 5$ in the following way
\begin{equation} \label{eq: sectors}
\Omega_k =\left\{\zeta \in \C \setminus \{0\} \mid -\frac{\pi}{12}+k\frac{\pi}{3}<\arg \zeta < \frac{7\pi}{12}+k\frac{\pi}{3} \right\}.
\end{equation}
Note that these sectors cover $\C \setminus \{0\}$ and that there is overlap between $\Omega_k$ and $\Omega_{k+1}$. See also Figure \ref{fig: stokes rays}. Note also that the sectors $\Omega_j$, $j=0,2,3,5$ contain the positive or negative real line on which $\A(\zeta)$ has a branch cut.

In order to deal with the technicalities coming from the branch cuts in the definition of $\mathcal A$ we introduce some extra definitions. First,  we define $\A^+(\zeta)$ as the analytic continuation of $\mathcal A(\zeta)$ in the upper half plane to $\C \setminus [0,-i\infty)$ and $\mathcal A^-(\zeta)$ as the analytic continuation of $\mathcal A(\zeta)$ in the lower half plane to $\C \setminus [0,i\infty)$. In the same way as we did before for $\A$ we can construct Stokes lines and a scheme of dominance.  For $\A^+(\zeta)$ we find
\[
\ell^+_{i,j}=\{ \zeta \in \C \mid \Re (\widetilde \psi^+_i(\zeta)-\widetilde \psi^+_j(\zeta))=0\}, \qquad 1 \leq i < j \leq 4,
\]
where $\widetilde \psi_j^+(\zeta)$ is the analytic continuation of $\widetilde \psi_j(\zeta)$ in the upper half plane to $\C \setminus [0,-i \infty)$. The scheme of dominance for $\A^+(\zeta)$ is shown in Figure \ref{fig: stokes rays +}. We leave the details for $\A^-(\zeta)$ to the reader.  Note that the sectors $\Omega_0$, $\Omega_1$, and $\Omega_2$ contain precisely one Stokes ray from $\ell^+_{i,j}$ for all pairs $i<j$. We refer to such sectors as Stokes sectors associated with $\A^+$. Analogously $\Omega_3$, $\Omega_4$, and $\Omega_5$ contain precisely one Stokes ray from $\ell^-_{i,j}$ for all pairs $i<j$. 

We then have the following result on the existence and uniqueness of solutions of the top equation in \eqref{eq:laxpair} with the desired asymptotic behavior.

\begin{lemma} \label{lemma: asymptotics Y}
Fix $t\in \R$ and let $U$ be as defined in \eqref{eq: defU}. Moreover, let $\Omega_k$  be one of the sectors defined in \eqref{eq: sectors}. Then the equation
\begin{equation}\label{eq: ode U}
\frac{{\partial  }N}{\partial  \zeta}(\zeta)=U(\zeta,t) N(\zeta),
\end{equation}
has a unique fundamental solution $N$ in the sector $\Omega_k$ such that $N(\zeta)$ has the following uniform asymptotics as $\zeta \to \infty$ within this sector
\begin{equation} \label{eq: asymptotics N}
N(\zeta)=\begin{cases} \left( I+\frac{N_1}{\zeta}+\mathcal O(\zeta^{-2})\right) \A^+(\zeta) & k=0,1,2, \\
                       \left( I+\frac{N_1}{\zeta}+\mathcal O(\zeta^{-2})\right) \A^-(\zeta) & k=3,4,5.
\end{cases} \end{equation}                            
Here, $N_1$ is given by
\begin{equation} \label{eq: N1}
N_1= \begin{pmatrix} * & b & * & id \\
-b & * & id & * \\
* & if & * & h \\
if & * & -h & * \end{pmatrix},
\end{equation}
where $*$ denotes (for our purposes) unimportant entries.  
\end{lemma}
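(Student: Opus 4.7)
The strategy is the classical one for systems of linear ODEs at an irregular singular point of rank $3/2$. First I would look for a formal fundamental solution of the form $\hat N(\zeta)=\hat F(\zeta)\A^{\pm}(\zeta)$ with $\hat F(\zeta)=I+\sum_{k\geq 1}F_k\,\zeta^{-k}$. Substituting into \eqref{eq: ode U} and using $(\A^{\pm})'=V^{\pm}\A^{\pm}$ converts the equation for $N$ into
\begin{equation*}
\hat F'(\zeta)=U(\zeta,t)\hat F(\zeta)-\hat F(\zeta)V^{\pm}(\zeta),\qquad V^{\pm}(\zeta):=(\A^{\pm})'(\A^{\pm})^{-1}.
\end{equation*}
Writing $V^{\pm}=B'B^{-1}+BA\,(D'D^{-1})\,A^{-1}B^{-1}$, where $D$ is the diagonal exponential factor in \eqref{eq: cal A}, the entries of $D'D^{-1}$ are $O(\zeta^{1/2})$, and the conjugation by $A$ (which mixes columns $1,3$ and columns $2,4$) turns these $\zeta^{1/2}$ entries into polynomial-in-$\zeta$ off-diagonal entries. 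A direct computation shows that the polynomial part of $V^{\pm}$ coincides with that of $U$; this is precisely what the form \eqref{eq: defU} was designed to achieve and what fixes the leading identifications in \eqref{eq: def dc}. Consequently $U-V^{\pm}=O(\zeta^{-1/2})$, and matching Laurent coefficients in $\hat F'=U\hat F-\hat FV^{\pm}$ gives a triangular recurrence that determines the $F_k$ uniquely.

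The coefficient $F_1=N_1$ in \eqref{eq: N1} is read off at order $\zeta^{-1}$. Since the principal symbol of $V^{\pm}$ is a conjugate of a diagonal matrix with pairwise distinct entries, the off-diagonal part of each $F_k$ is algebraically determined by the subleading Laurent coefficient of $U-V^{\pm}$, while its diagonal part is determined at the next step of the recursion. Carrying out this identification for $k=1$, using the explicit expressions for $U$, $A$ and $B$ together with \eqref{eq: def dc}, yields exactly the entries $\pm b,\pm id,\pm h,\pm if$ in the positions indicated in \eqref{eq: N1}; the diagonal entries marked by $*$ play no role in the later analysis.

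With the formal solution in hand, the passage to a genuine holomorphic solution is provided by the standard theory of formal asymptotic expansions for systems at an irregular singularity: on any open sector whose opening does not exceed $2\pi/r$ (with $r=3/2$ the Poincar\'e rank), and which contains exactly one Stokes ray from each family $\ell^{\pm}_{i,j}$, there is a unique holomorphic matrix function realising the given formal series as its Poincar\'e asymptotic expansion uniformly in the sector. By construction each $\Omega_k$ has opening $2\pi/3=2\pi/r$, and by the counting of Stokes rays in Figure \ref{fig: stokes rays +} it contains precisely one ray from $\ell^{\pm}_{i,j}$ for every pair $i<j$, so the existence-uniqueness hypothesis applies directly and yields the solution $N$ claimed in the lemma.

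The principal obstacle in executing this plan is the verification that the polynomial part of $U$ agrees exactly with the polynomial part of $V^{\pm}$ and, at one lower order, the explicit identification of the off-diagonal entries of $F_1$. Both amount to lengthy but elementary algebraic manipulations in which the definitions \eqref{eq: def dc} are used in an essential way: the Painlev\'e II equation for $q$ and the first integral $u$ enter only through the consistency of the recurrence (and through the regularity of $c,d$ on $\R$, ensuring that $U(\zeta,t)$ is defined for all $t\in\R$); all other features of the Hastings-McLeod transcendent will only become important in the next subsection, when the solution $N$ is tied back to $\widetilde M$.
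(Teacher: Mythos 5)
Your overall strategy (construct a formal fundamental solution, then realise it holomorphically in each Stokes sector by the standard theory of irregular singular points) is the same as the paper's, but the execution of the formal-solution step has a genuine gap. You treat the recurrence for the coefficients $F_k$ as if the leading coefficient of the system had four distinct eigenvalues, so that off-diagonal parts of $F_k$ are obtained by dividing by eigenvalue differences. That is not the situation here: the coefficient of $\zeta$ in $U(\zeta,t)$ (and likewise in $V^\pm=(\A^\pm)'(\A^\pm)^{-1}$) is $ir_2E_{3,1}-ir_1E_{4,2}$, which is \emph{nilpotent}. Your remark that the principal symbol of $V^{\pm}$ is ``a conjugate of a diagonal matrix with pairwise distinct entries'' conflates the $\zeta$-dependent conjugation $BA\,(D'D^{-1})\,A^{-1}B^{-1}$ with a constant one; the relevant operator in the recurrence is $\mathrm{ad}$ of the constant leading coefficient, which is not invertible, so the order-by-order equations $[U_1,F_1]=V_0-U_0,\dots$ are neither automatically solvable nor uniquely solvable. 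Relatedly, your claim that the polynomial parts of $U$ and $V^\pm$ coincide is false: comparing \eqref{eq: defU} with the matrix $R(\zeta)=Q'Q^{-1}$ computed in Appendix A (which is exactly $V^\pm$ for $r_1=r_2=1$, $s_1=s_2=s$), the constant terms differ by $-c$, $d$, $ik$, $-i(h+b)$ in various entries, so $U-V^\pm=\mathcal O(1)$, not $\mathcal O(\zeta^{-1/2})$.

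Because the singularity is ramified (the exponents involve $\zeta^{3/2}$ and $\zeta^{1/2}$), the formal solution a priori lives in powers of $\zeta^{-1/2}$; positing the integer-power ansatz $\hat F=I+\sum_kF_k\zeta^{-k}$ from the outset assumes part of what must be proved, since the lemma's error term $\mathcal O(\zeta^{-2})$ after $N_1/\zeta$ requires the $\zeta^{-1/2}$ and $\zeta^{-3/2}$ coefficients to vanish. The paper resolves both difficulties by the shearing transformation $N=B^+A\tilde Z$, $\zeta=z^2$, which produces a system $z^{-2}Z'=E(z)Z$ whose leading coefficient $E_0=\diag(-2i,-2,2i,2)$ has four distinct eigenvalues; the standard recursion of \cite[Prop.~1.1]{FIKN} then applies, and a separate parity argument (the matrices $AE_kA^{-1}$ are block diagonal for odd $k$ and block anti-diagonal for even $k$) shows that all half-integer coefficients $N^{(2l+1)}$ vanish. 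Your proposal contains no substitute for the shearing or for this parity argument. The final step — sectorial realisation and uniqueness from the choice of the $\Omega_k$, each containing one Stokes ray from every family $\ell^\pm_{i,j}$ — is in the right spirit and agrees with the paper's use of \cite[Th.~1.4]{FIKN}.
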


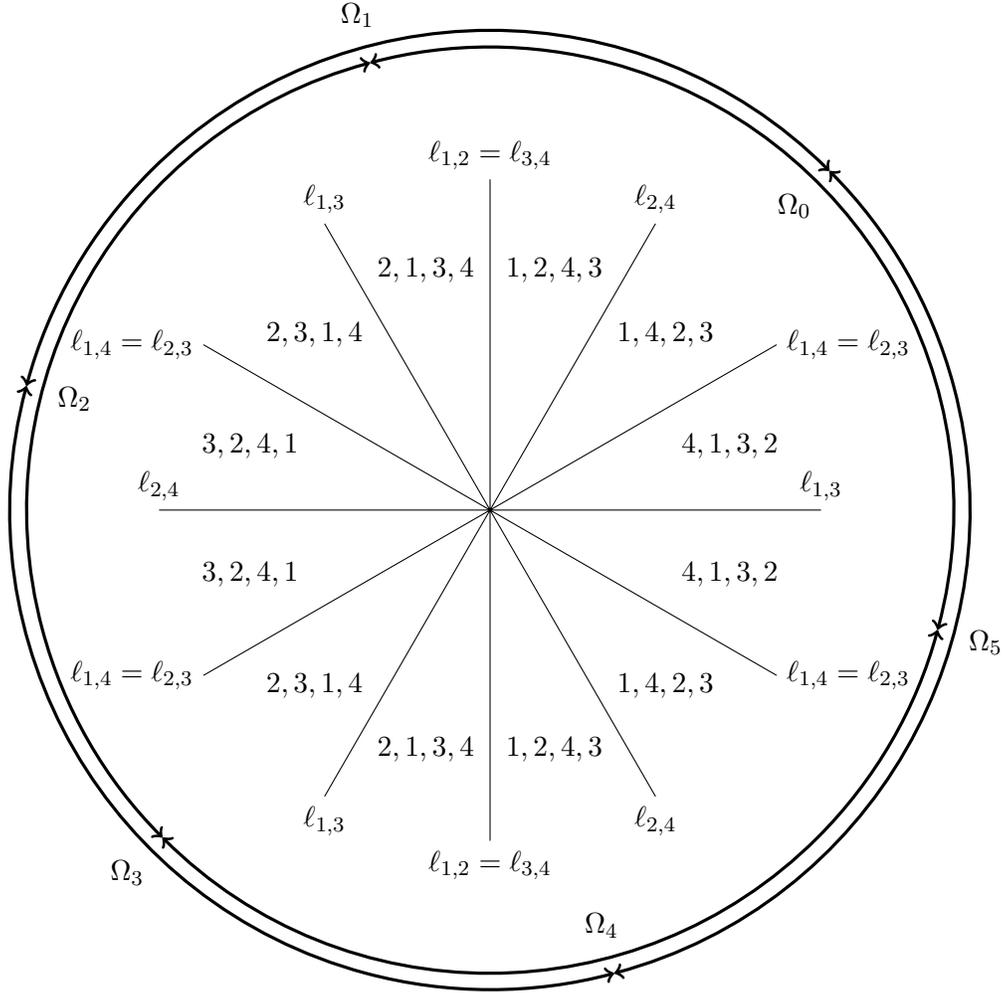
\begin{figure}
\begin{center}
\begin{tikzpicture}[scale=2.2]
\draw[<->,very thick] (-15:2.8) arc (-15:105:2.8);
\draw[<->,very thick] (105:2.8) arc (105:225:2.8);
\draw[<->,very thick] (225:2.8) arc (225:345:2.8);
\draw[<->,very thick] (45:2.9) arc (45:165:2.9);
\draw[<->,very thick] (165:2.9) arc (165:285:2.9);
\draw[<->,very thick] (-75:2.9) arc (-75:45:2.9); 
\draw (0,0) (15:1.5) node[fill=white]{$4, 1,3,2$}
      (0,0) (45:1.5) node[fill=white]{$1,4,2,3$}
      (0,0) (75:1.5) node[fill=white]{$1,2,4,3$} 
      (0,0) (105:1.5) node[fill=white]{$2,1,3,4$} 
      (0,0) (135:1.5) node[fill=white]{$2,3,1,4$} 
      (0,0) (165:1.5) node[fill=white]{$3,2,4,1$}
      (0,0) (-15:1.5) node[fill=white]{$4, 1,3,2$}
      (0,0) (-45:1.5) node[fill=white]{$1,4,2,3$}
      (0,0) (-75:1.5) node[fill=white]{$1,2,4,3$} 
      (0,0) (-105:1.5) node[fill=white]{$2,1,3,4$} 
      (0,0) (-135:1.5) node[fill=white]{$2,3,1,4$} 
      (0,0) (-165:1.5) node[fill=white]{$3,2,4,1$};

\draw (0,0)--(2,0)  node[above]{$\ell_{1,3}$}
      (0,0)--(30:2) node[right]{$\ell_{1,4}=\ell_{2,3}$}
      (0,0)--(60:2) node[above]{$\ell_{2,4}$}
      (0,0)--(90:2) node[above]{$\ell_{1,2}=\ell_{3,4}$}
      (0,0)--(120:2) node[above]{$\ell_{1,3}$}
      (0,0)--(150:2) node[left]{$\ell_{1,4}=\ell_{2,3}$}
      (0,0)--(180:2) node[above]{$\ell_{2,4}$}
      (0,0)--(-30:2) node[right]{$\ell_{1,4}=\ell_{2,3}$}
      (0,0)--(-60:2) node[below]{$\ell_{2,4}$}
      (0,0)--(-90:2) node[below]{$\ell_{1,2}=\ell_{3,4}$}
      (0,0)--(-120:2) node[below]{$\ell_{1,3}$}
      (0,0)--(-150:2) node[left]{$\ell_{1,4}=\ell_{2,3}$};
\draw (45:2.6) node{$\Omega_0$}
      (165:2.6) node{$\Omega_2$}
      (285:2.6) node{$\Omega_4$}
      (105:3.1) node{$\Omega_1$}
      (225:3.1) node{$\Omega_3$}
      (-15:3.1) node{$\Omega_5$};
\end{tikzpicture}
\end{center}
\caption{Stokes diagram for the irregular singular point $\infty$ of the differential equation \eqref{eq: ode U} associated with $\A(\zeta)$. The stokes lines $\ell_{i,j}$ separate the complex plane into sectors of angle $\pi/6$. The scheme of dominance is indicated by the numbers between successive Stokes lines, e.g. the sequence $4,1,3,2$ in the sector $0<\arg \zeta < \pi/6$ means that $\Re\widetilde \psi_4(\zeta)>\Re\widetilde \psi_1(\zeta)>\Re\widetilde \psi_3(\zeta)>\Re \widetilde \psi_2(\zeta)$ for $\zeta$ in that sector.}
\label{fig: stokes rays}
\end{figure}

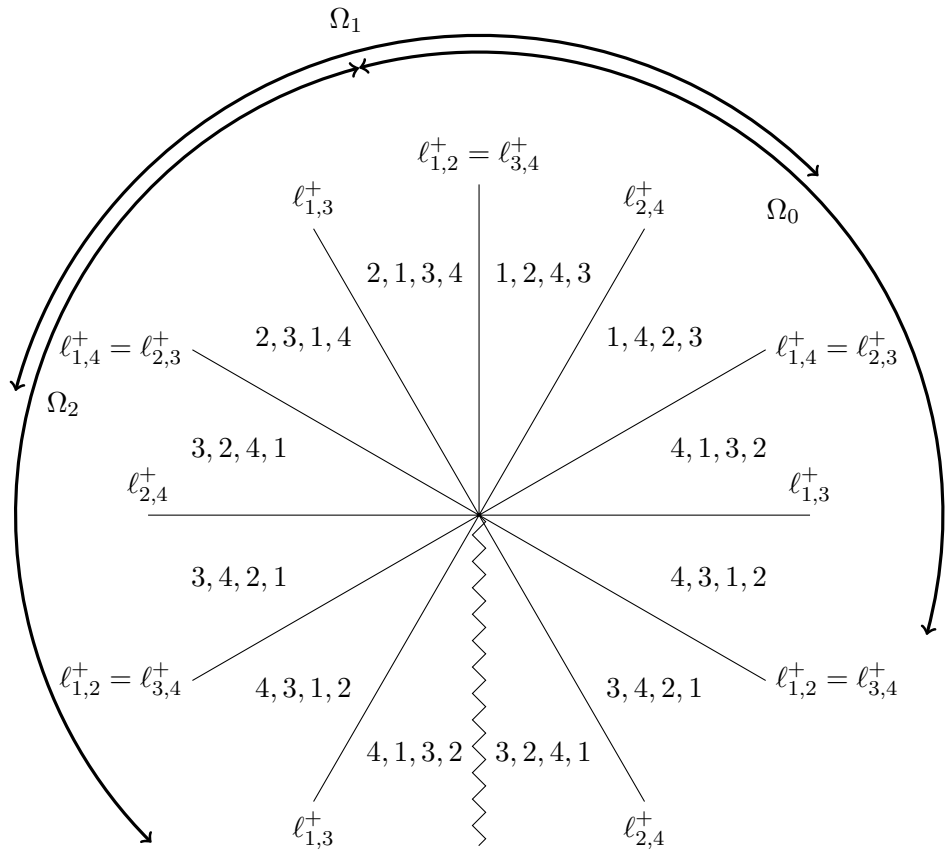
\begin{figure}
\begin{center}
\begin{tikzpicture}[scale=2.2]
\draw[<->,very thick] (-15:2.8) arc (-15:105:2.8);
\draw[<->,very thick] (105:2.8) arc (105:225:2.8);
\draw[<->,very thick] (45:2.9) arc (45:165:2.9);
\draw (0,0) (15:1.5) node[fill=white]{$4, 1,3,2$}
      (0,0) (45:1.5) node[fill=white]{$1,4,2,3$}
      (0,0) (75:1.5) node[fill=white]{$1,2,4,3$} 
      (0,0) (105:1.5) node[fill=white]{$2,1,3,4$} 
      (0,0) (135:1.5) node[fill=white]{$2,3,1,4$} 
      (0,0) (165:1.5) node[fill=white]{$3,2,4,1$}
      (0,0) (-15:1.5) node[fill=white]{$4, 3,1,2$}
      (0,0) (-45:1.5) node[fill=white]{$3,4,2,1$}
      (0,0) (-75:1.5) node[fill=white]{$3,2,4,1$} 
      (0,0) (-105:1.5) node[fill=white]{$4,1,3,2$} 
      (0,0) (-135:1.5) node[fill=white]{$4,3,1,2$} 
      (0,0) (-165:1.5) node[fill=white]{$3,4,2,1$};
\draw (0,0)--(2,0)  node[above]{$\ell^+_{1,3}$}
      (0,0)--(30:2) node[right]{$\ell^+_{1,4}=\ell^+_{2,3}$}
      (0,0)--(60:2) node[above]{$\ell^+_{2,4}$}
      (0,0)--(90:2) node[above]{$\ell^+_{1,2}=\ell^+_{3,4}$}
      (0,0)--(120:2) node[above]{$\ell^+_{1,3}$}
      (0,0)--(150:2) node[left]{$\ell^+_{1,4}=\ell^+_{2,3}$}
      (0,0)--(180:2) node[above]{$\ell^+_{2,4}$}
      (0,0)--(-30:2) node[right]{$\ell^+_{1,2}=\ell^+_{3,4}$}
      (0,0)--(-60:2) node[below]{$\ell^+_{2,4}$}
      (0,0)--(-120:2) node[below]{$\ell^+_{1,3}$}
      (0,0)--(-150:2) node[left]{$\ell^+_{1,2}=\ell^+_{3,4}$};
\draw[decorate,decoration=zigzag](0,0)--(-90:2);
\draw (45:2.6) node{$\Omega_0$}
      (165:2.6) node{$\Omega_2$}
      (105:3.1) node{$\Omega_1$};
\end{tikzpicture}
\end{center}
\caption{Stokes diagram for the irregular singular point $\infty$ of the differential equation \eqref{eq: ode U} associated with $\A^+(\zeta)$. The stokes lines $\ell^+_{i,j}$ separate the complex plane into sectors of angle $\pi/6$. The scheme of dominance is indicated by the numbers between successive Stokes lines. The zigzag line indicates the branch cut of $\A^+(\zeta)$.}
\label{fig: stokes rays +}
\end{figure}

\begin{proof}
We prove the lemma for $k=0,1,2$ and leave the other case to the reader.

Infinity is an irregular singular point of the ODE \eqref{eq: ode U}. Asymptotics of solutions to equations of this type can always be computed following an algorithm presented in the monograph \cite{Wa}. When we consider $U$ as a linear polynomial in $\zeta$, its leading coefficient matrix is nilpotent. First we will transform this system in order to obtain a leading coefficient matrix which has four different eigenvalues. As soon as we are in that situation we can invoke a general result, see e.g. \cite{Wa} or \cite[Th. 1.4]{FIKN}, to obtain asymptotics of the transformed system. 

Thus, we start with the transformation 
\begin{equation} \label{eq: transfo NZ}
N(\zeta)= B^+(\zeta) A \tilde Z(\zeta),
\end{equation}
where $B^+(\zeta)$ is the analytic continuation of $B(\zeta)$, see \eqref{eq: B}, in the upper half plane to $\C \setminus [0,-i\infty)$.
We supplement the transformation by the change of variables $\zeta=z^2$, where we choose $z$ such that $-\pi/4<\arg z < 3\pi/4$, and call $Z(z)=\tilde Z(\zeta)$. In this way we obtain the equivalent equation
\begin{equation} \label{eq: ode E}
\frac{1}{z^2} \frac{\partial Z}{\partial z}=E(z)Z(z),
\end{equation}
where
\[
E(z)=E_0+\frac{1}{z}E_1+\frac{1}{z^2}E_2+\frac{1}{z^3}E_3. 
\]
The  matrices $E_j$, $j=0,1,2,3$ can be explicitly found by a lengthy computation and are constant
\begin{align*}
 E_0 &=\diag \left( -2 i, -2, 2i,2\right)=\diag \left( \lambda_1,\lambda_2,\lambda_3,\lambda_4 \right), \end{align*}\begin{align*}
 E_1 &= \begin{pmatrix}
2t &  -\sqrt 2 id &  2ic & \sqrt 2 id \\
-\sqrt 2 id & -2t &  \sqrt 2 id & 2ic \\
-2ic & \sqrt 2 id &  2t &  \sqrt 2 id \\
\sqrt 2 id &-2ic &  \sqrt 2 id & -2t \end{pmatrix},  \end{align*}\begin{align*}
 E_2 &= \begin{pmatrix}
 i(s-k) & \exp( \pi i/4) (b+h) &  s-k &  i\exp( \pi i/4) (b+h) \\
-\exp( \pi i/4) (b+h) & k-s &  i\exp( \pi i/4) (b+h) & i(k-s) \\
s-k &  -i\exp( \pi i/4) (b+h) &  i(k-s) &  \exp( \pi i/4) (b+h) \\
-i\exp( \pi i/4) (b+h) & i(k-s) & -\exp( \pi i/4) (b+h) &  s-k
\end{pmatrix},  \end{align*}\begin{align*}
 E_3 &= \begin{pmatrix}
0 & 0& -i/2 & 0\\
0&0&0&i/2 \\
i/2 & 0 & 0 & 0 \\
0 & -i/2 & 0 & 0
\end{pmatrix}.
\end{align*}
Hence $E_0$ indeed has four different eigenvalues, which was precisely the goal of the transformation.

The ODE \eqref{eq: ode E} is of the correct form to apply \cite[Prop. 1.1]{FIKN}. This proposition states that \eqref{eq: ode E} admits a unique formal fundamental solution
\begin{equation} \label{eq: formal solution Z}
Z_f(z)=\left( \sum_{k=0}^\infty Z_k z^{-k} \right) z^{\Lambda_0}e^{\Lambda_3 \frac{z^3}{3}+\Lambda_2 \frac{z^2}{2}+\Lambda_1 z},
\end{equation}
where $Z_0=I$, $\Lambda_k$, $k=0,1,2,3$, are diagonal, and $\Lambda_3=E_0$. Moreover, all coefficients $\Lambda_k$, $k=0,1,2$, and $Z_k$, $k=1,2,3,\ldots$, can be determined recursively using the explicit form of $E_j$, $j=1,\ldots,4$. 

If we unfold the transformation \eqref{eq: transfo NZ} this means that 
\eqref{eq: ode U} has a unique formal solution for $\zeta \in \C \setminus [0,-i \infty)$ of the form
\begin{equation} \label{eq: formal solution N}
N_f(\zeta)=\left(\sum_{k=0}^\infty N^{(k)} \zeta^{-k/2}\right) B^+(\zeta) A \zeta^{\Lambda_0/2}
e^{\Lambda_3 \frac{\zeta^{3/2}}{3}+\Lambda_2 \frac{\zeta}{2}+\Lambda_1 \zeta^{1/2}}.
\end{equation}
Here the $N_k$ are constant matrices, that can be recursively found from the matrices $Z_k$.

To determine all coefficients,  we rewrite (following \cite{FIKN}) \eqref{eq: formal solution Z} in the equivalent form
\begin{equation} \label{eq: formal solution 2}
Z_f(z)=\left( \sum_{k=0}^\infty Y_k z^{-k} \right) e^{\Delta(z)}, \qquad Y_0=I,
\end{equation}
where all matrices $Y_k$, $k=1,2,3,\ldots$ are off-diagonal and the diagonal matrix $\Delta(z)$ is given by the series
\[
\Lambda_3 \frac{z^3}{3}+\Lambda_2 \frac{z^2}{2}+\Lambda_1 z+\log \Lambda_0(z)+\sum_{k=1}^\infty \Lambda_{-k}\frac{z^{-k}}{-k}.
\]
Now substituting \eqref{eq: formal solution 2} into \eqref{eq: ode E} and equating coefficients of the same powers of $z$ yields a system of equations for the diagonal matrices $\Lambda_k$ and the off-diagonal matrices $Y_k$. This system can be recursively solved, which leads to
\begin{equation} \label{eq: Lambda Y}
\Lambda_{3-k}= \diag F_{3-k}, \qquad \left( Y_k \right)_{i,j}=\frac{\left(F_{3-k}\right)_{i,j}}{\lambda_j-\lambda_i}, \quad i \neq j,
\end{equation}
where $F_{3-k}$ are defined by the equations
\begin{equation} \label{eq: FF}
F_{3-k}=E_k+\sum_{m=1}^{k-1}\left(E_{k-m}Y_m-Y_m\Lambda_{3-k+m}\right)-(k-3)Y_{k-3}, \qquad k=0,1,2,\ldots.
\end{equation}
Here we have put $E_k=0$, $k>3$, and $Y_k=0$, $k<0$. In this way we find e.g.
\begin{align*}
\Lambda_3 &= E_0, \\ 
\Lambda_2 &= \diag \left(2t,-2t,2t,-2t\right),  \\
\Lambda_1 &=  \diag \left( 2 i s,-2s,2 i s,2s \right), \\
\Lambda_0 &= 0,
\end{align*}
where we used the last line of \eqref{eq: def dc}.

By  continuing this approach we can in principle obtain all coefficients in \eqref{eq: formal solution N}. We claim that all coefficients $N^{(2l+1)}$, $l=0,1,2,\ldots$ vanish. To see this, observe that the matrices $AE_kA^{-1}$ are block diagonal (with blocks of size $2 \times 2$ ) when $k$ is odd and block anti-diagonal for even $k$. It then follows by \eqref{eq: Lambda Y}--\eqref{eq: FF} that $AY_kA^{-1}$ and $A\Lambda_kA^{-1}$ are block diagonal for even $k$ and block anti-diagonal for odd $k$. In the context of \eqref{eq: formal solution Z} this means that also $Z_k$ is block diagonal for even $k$ and block anti-diagonal for odd $k$. Taking this into account when unfolding the transformation \eqref{eq: transfo NZ} proves the claim.

Now we proved that \eqref{eq: ode U} admits a formal solution of the correct form. The fact that in each sector $\Omega_k$, $k=0,1,2$ this formal solution actually has the asymptotic meaning stated in the lemma follows from the choice of the sectors $\Omega_k$, see  \cite[Th. 1.4]{FIKN}.

It remains to prove that $N^{(2)}=N_1$ as in \eqref{eq: N1}. This follows from this approach by detailed but straightforward calculations in which we also use \eqref{eq: def dc}. Alternatively, \eqref{eq: N1} can be checked by substituting the asymptotic formula \eqref{eq: asymptotics N} into \eqref{eq: ode U}.
\end{proof}

\subsection{Proof of Theorem \ref{th: tacnode rhp}}

We are now ready to prove Theorem \ref{th: tacnode rhp}. 
\begin{proof}[Proof of Theorem \ref{th: tacnode rhp}]
The proof follows by standard isomonodromy deformation arguments and the results in \cite{DKZ} for the case $t=0$.  For more details on the general theory we refer to \cite{FIKN}.

First we remark that without loss of generality we may assume that the angles in RH problem \ref{rhp: tacnode rhp} coincide $\varphi_1=\varphi_2=\pi/3$ such that $\Sigma_M=\cup_{k=0}^5 e^{k\pi i/3}[0,\infty)$ with constant jump matrices on each of these rays. The rays separate six sectors
\[
\Delta_k=\{\zeta \in \C \setminus \{0\} \mid k\tfrac{\pi}{3} < \arg \zeta < (k+1)\tfrac{\pi}{3}\}.
\]

From Lemma \ref{lemma: asymptotics Y} we learn that in each sector $\Omega_k$  we can find  a unique solution $\Psi_k(\zeta;t)$ to the top equation of \eqref{eq:laxpair} that has the asymptotics given in \eqref{eq: asymptotics N} as $\zeta \to \infty$ within $\Omega_k$. Note that if we let $\zeta \to \infty$ in the smaller sector $\Delta_k$ this coincides with the (uniform) asymptotics in the RH problem. Moreover these functions can be analytically continued on the full complex plane. The solutions associated with different sectors differ by a matrix $S_k(t)=\Psi_{k}^{-1} \Psi_{k+1}$, $k=0,\ldots,5$, that is independent of $\zeta$. Finally, we define a function $M(\zeta;t)$ by
\[
M(\zeta;t)=\Psi_k(\zeta;t), \qquad \zeta \in \Delta_k.
\]
Summarizing, we have constructed a function $M(\zeta;t)$ with a jump contour on $\Sigma_M$ that satisfies the asymptotic condition as $\zeta \to \infty$.  lt remains to show that the jump matrices $S_k(t)$ on $\Sigma_M$ coincide with the jump matrices in the RH problem. 

For $t=0$ this follows from the results in \cite{DKZ}. We know from $\cite{DKZ}$ that  RH problem \ref{rhp: tacnode rhp} has a solution $\widehat M(\zeta)$ for $t=0$ and that this solution satisfies \eqref{eq: ode U}. Now define $\widehat{\Psi}_k(\zeta)$ as the analytic continuation of $\widehat{M}(\zeta)$ in the sector $\Delta_k$ to the full complex plane. It can then be checked that $\widehat{\Psi}_k(\zeta)$ uniformly satisfies the asymptotic condition stated in \eqref{eq: asymptotics N} as $\zeta \to \infty$ within $\Omega_k$, and as a result we have $\widehat \Psi_k(\zeta)=\Psi_k(\zeta;0)$ and hence $\widehat M(\zeta)=M(\zeta;0)$. This means that for $t=0$, we indeed have that $S_k(0)$ are the jump matrices in the RH problem \ref{rhp: tacnode rhp}. 

By using standard arguments from the theory of isomonodromy deformations we will now show that the Stokes matrices $S_k$ are independent of $t$ and that finishes the proof for general $t\in \R$. To this end we note that from  \eqref{eq:comp} it follows that $\tilde \Psi_k=\partial \Psi_k/\partial t -W \Psi_k$ also solves the top equation of \eqref{eq:laxpair}. This implies that there exists a matrix $C(t)$, not depending on $\zeta$, such that $\tilde \Psi_k=\Psi_k C(t)$ and hence 
\begin{align*}
C(t)=\Psi^{-1}_k \partial \Psi_k /\partial t-\Psi_k^{-1} W \Psi_k.
\end{align*}
We then show that $C(t)$ vanishes. We assume here $k=0,1,2$. The adaptations to cover $k=3,4,5$ are straightforward.
Computing the asymptotic behavior for $\zeta \to \infty$ of the right-hand side, using also \eqref{eq: N1}, yields
\begin{multline*}
C(t)=\diag \left( e^{\psi(-\zeta)-t\zeta},e^{\psi(\zeta)+t \zeta}, e^{-\psi(-\zeta)-t \zeta},e^{-\psi(\zeta)+t\zeta} \right)^+ \mathcal O \left( \zeta^{-1/2} \right)\\\times \diag \left( e^{-\psi(-\zeta)+t\zeta},e^{-\psi(\zeta)-t \zeta}, e^{\psi(-\zeta)+t \zeta},e^{\psi(\zeta)-t\zeta} \right)^+, \qquad \text{as $\zeta \to \infty$ in $\Omega_k$,}
\end{multline*}
where (again) the superscript $+$ indicates the analytic continuation of the function in the upper half plane to $\C \setminus [0,-i \infty)$.
Letting $\zeta \to \infty$ within $\Omega_k$ this already shows that $C(t)$ has zero diagonal. To see that also the $(i,j)$ off-diagonal entry of $C(t)$ is zero it is crucial that $\Omega_k$ has a Stokes ray in $\ell^+_{i,j}$ in its interior, see Figure \ref{fig: stokes rays}. Therefore $\Omega_k$ contains a subsector in which $\Re(\widetilde \psi_j^+(\zeta)-\widetilde \psi_i^+(\zeta))<0$. Letting $\zeta$ approach infinity in this sector proves that $C(t)_{i,j}=0$. Hence $C=0$.

We  thus proved that $\Psi_k$ also solves the bottom equation of \eqref{eq:laxpair}. Therefore  the Stokes matrices are independent of $t$. Indeed
\begin{align}
\begin{split}
\partial \Psi_k /\partial t \Psi_{k}^{-1}&=W(\zeta,t)= \partial \Psi_{k+1} /\partial t \Psi_{k+1}^{-1}\\&= \partial \Psi_k /\partial t \Psi_{k}^{-1}+\Psi_k \partial S_k  /\partial t S_k^{-1} \Psi_{k}^{-1},
\end{split}
\end{align}
and hence $\partial S_k/\partial t=0$. This finishes the proof.
\end{proof}

\section{A double scaling limit for $K_{\rm cr}$}

In this section we prove Theorem \ref{th: PII}. The proof is again based on a Deift/Zhou steepest descent analysis on the RH problem \ref{rhp: tacnode rhp} for $M$.  Since the analysis is  much simpler than the one in Section 4, we will allow ourselves to be brief. We will give all the transformations, but we will  be short about the proofs.

The steepest descent analysis consists of  a sequence of transformations
\begin{align*}
M \mapsto M^{(1)}\mapsto M^{(2)} \mapsto M^{(3)} \mapsto M^{(4)}.
\end{align*}
In the end, the asymptotics for  $M^{(4)}$ (as $a\to \infty$) can be easily found. The proof of the statement then follows by updating the kernel with each transformation. The core of the proof is that at a certain point, we need to construct a local parametrix, which is done using RH problem \ref{rhp: PII rhp}. 

\subsection{First transformation: $M\mapsto M^{(1)}$}

In Theorem \ref{th: PII} the solution  $M(z)=M(z;s,t)$ to RH problem \ref{rhp: tacnode rhp} appears in the critical kernel for the values of parameters
\begin{align*}
\begin{cases}
r_1=r_2=1, \\ t=-a\left(1-\frac{\sigma}{a^2}\right),\\  s=s_1=s_2=\frac{a^2}{2},
\end{cases}
\end{align*}
where $\sigma\in \R$ is fixed.

We define the first transformation as
\begin{equation}\label{eq:defM(1)}
M^{(1)}(z;a)=\diag \left( a^{1/2},a^{1/2},a^{-1/2},a^{-1/2}\right) M\left(a^2z;\frac{a^2}{2},-a\left(1-\frac{\sigma}{a^2}\right)\right).
\end{equation}
Then from RH problem \ref{rhp: tacnode rhp} for $M$, we straightforwardly check that $M^{(1)}$ satisfies the following RH problem.
\begin{lemma}
The function  $M^{(1)}$ defined in \eqref{eq:defM(1)} has the following properties
\begin{itemize}
\item[\rm (1)] $M^{(1)}$ is analytic for  $\zeta \in \C \setminus \Sigma_M$;
\item[\rm (2)] $M^{(1)}_+(z)=M^{(1)}_-(z)J_k$, for $z \in \Gamma_k,$ $k=0,\ldots,9$;
\item[\rm (3)] As $z\to \infty$ we have
\begin{multline*} 
M^{(1)}(z;a)=\left(I+\mathcal O \left(z^{-1}\right) \right) \diag \left((-z)^{-1/4},z^{-1/4},(-z)^{1/4},z^{1/4} \right)A \\
\times  \diag \left( e^{a^3\left(- \widetilde \psi (-z)-(1-\sigma/a^2)z \right)},e^{a^3\left(- \widetilde \psi (z)+(1-\sigma/a^2)z \right)},e^{ a^3\left(\widetilde \psi (-z)-(1-\sigma/a^2)z \right)},e^{a^3\left(\widetilde \psi (z)+(1-\sigma/a^2)z \right)}\right),
\end{multline*}
where
\begin{align*}
\widetilde \psi(z)=\frac23z^{3/2}+ z^{1/2};
\end{align*}
\item[\rm (4)] $M^{(1)}(z)$ is bounded near $z=0$.
\end{itemize}
\end{lemma}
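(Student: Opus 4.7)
The plan is to verify all four properties of $M^{(1)}$ directly from the corresponding properties of $M$ stated in RH problem \ref{rhp: tacnode rhp}. Since $M^{(1)}$ is obtained from $M$ by scaling the spectral variable $\zeta = a^2 z$ (with $a^2>0$) and multiplying on the left by a constant diagonal matrix $D := \diag(a^{1/2},a^{1/2},a^{-1/2},a^{-1/2})$, properties (1), (2) and (4) are essentially immediate. For (1), the scaling $z \mapsto a^2 z$ sends each ray $\Gamma_k$ to itself (preserving orientation), so $M^{(1)}$ inherits analyticity on $\C \setminus \Sigma_M$. For (2), since the jump matrices $J_k$ act on the right, left-multiplication by $D$ preserves them:
\[
M^{(1)}_-(z) J_k = D\,M_-(a^2 z) J_k = D\,M_+(a^2 z) = M^{(1)}_+(z), \qquad z\in\Gamma_k.
\]
Property (4) holds because $M(\zeta)$ is bounded near $\zeta=0$, and $D$ is a constant matrix.

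The substantive step is verifying the asymptotic condition (3). I would substitute $\zeta=a^2 z$, $r_1=r_2=1$, $s_1=s_2=a^2/2$ and $t=-a(1-\sigma/a^2)$ into the definitions \eqref{eq: def psi1}--\eqref{eq: def psi2} of $\psi_1,\psi_2$. A direct computation gives
\[
\psi_1(a^2 z)=\tfrac{2}{3}(a^2 z)^{3/2}+a^2(a^2 z)^{1/2}=a^3\widetilde\psi(z),\qquad \psi_2(a^2 z)=a^3\widetilde\psi(-z),
\]
and
\[
t\cdot(a^2 z)=-a^3\!\left(1-\tfrac{\sigma}{a^2}\right)z.
\]
Moreover, the factor $B(a^2 z)$ in \eqref{eq: B} extracts exactly the powers of $a$ cancelled by $D$:
\[
B(a^2 z)=\diag\!\left(a^{-1/2}(-z)^{-1/4},a^{-1/2}z^{-1/4},a^{1/2}(-z)^{1/4},a^{1/2}z^{1/4}\right),
\]
so that $D\,B(a^2 z)=\diag\bigl((-z)^{-1/4},z^{-1/4},(-z)^{1/4},z^{1/4}\bigr)$, which is the matrix appearing in the claimed asymptotic formula.

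Finally, I would handle the error term: writing $M(\zeta)=(I+R(\zeta))B(\zeta)A\,\mathcal{E}(\zeta)$ with $R(\zeta)=O(\zeta^{-1})$, one has
\[
D(I+R(a^2 z))=\bigl(I+D\,R(a^2 z)\,D^{-1}\bigr)D.
\]
Since $D$ and $D^{-1}$ have entries of size at most $a^{\pm 1/2}$, conjugation inflates $R$ by at most a factor $a^{2}$, hence $D\,R(a^2 z)\,D^{-1}=O(a^{2}/(a^2 z))=O(z^{-1})$ uniformly in $z$. Combining this with the computations above yields property (3). There is no real obstacle: the transformation is designed precisely so that the scaling $\zeta=a^2 z$ absorbs the large parameter $a$ into the exponents, and the diagonal prefactor $D$ is chosen to cancel the powers of $a$ generated by $B(a^2 z)$; the proof is a direct verification.
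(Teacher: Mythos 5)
Your proposal is correct and matches the paper's approach: the paper simply asserts that the properties follow "straightforwardly" from RH problem \ref{rhp: tacnode rhp}, and your direct verification (scaling invariance of the rays, left-multiplication commuting with the right jump matrices, the identities $\psi_{1}(a^2z)=a^3\widetilde\psi(z)$, $\psi_2(a^2z)=a^3\widetilde\psi(-z)$, and $DB(a^2z)=\diag((-z)^{-1/4},z^{-1/4},(-z)^{1/4},z^{1/4})$) is exactly the intended computation. The only nitpick is that the conjugation $D R D^{-1}$ inflates entries by at most a factor $a$ (not $a^2$), but either bound gives the required $\mathcal O(z^{-1})$.
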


\subsection{Second transformation: $M^{(1)}\mapsto M^{(2)}$}

In the second transformation we deform the  contours. We shift the rays $\Gamma_1$ and $\Gamma_9$ of the contour $\Sigma_M$ to the right by one, while shifting $\Gamma_4$ and $\Gamma_6$ to the left by one. More precisely, we introduce the new rays 
\[
\widetilde \Gamma_{1,9}=\Gamma_{1,9}+1, \text{ and } \widetilde \Gamma_{4,6} =\Gamma_{4,6}-1.\]
Moreover, for $k=0,2,3,5,7,8$ we set $\widetilde \Gamma_k=\Gamma_k$ and define
\begin{align*}
\Sigma_{M^{(2)}}=  \bigcup_{k=0}^9 \widetilde \Gamma_k. 
\end{align*}
We define the function $M^{(2)}$ as follows. For $k=1,4,6,9$ it is defined by
\begin{align}\label{eq:defM(2)}
M^{(2)}(z)=\begin{cases}
M^{(1)}(z) J_k & k=1,6, \\
M^{(1)}(z) J_k^{-1} & k=4,9, \end{cases}
\end{align}
 for $z$ in the region enclosed by $\Gamma_k$, $\widetilde \Gamma_k$ and $\R$. In the other regions we simply set $M^{(2)}=M^{(1)}$. 
 
 \begin{lemma}
The function  $M^{(2)}$ as defined \eqref{eq:defM(2)} has the following properties
\begin{itemize}
\item[\rm (1)] $M^{(2)}$ is analytic for  $\zeta \in \C \setminus \Sigma_{M^{(2)}}$;
\item[\rm (2)] $M^{(2)}_+(z)=M^{(2)}_-(z)J_{M^{(2)}}$, for $z \in\Sigma_{M^{(2)}}$;
\item[\rm (3)] $M^{(2)}(z)$ has the same asymptotics as $M^{(1)}$ as $z\to \infty$;
\item[\rm (4)] $M^{(2)}(z)$ is bounded near $z=0$.
\end{itemize}
See  Figure \ref{fig: contour D} for the jump matrices $J_{M^{(2)}}$. Note that we reversed the orientation of some rays.
\end{lemma}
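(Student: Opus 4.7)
The verification is entirely routine once one recognises that the transformation $M^{(1)} \mapsto M^{(2)}$ is a standard contour deformation by constant multipliers, so my plan is to handle the four conditions in sequence without any serious calculation.

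First, analyticity of $M^{(2)}$ in $\C \setminus \Sigma_{M^{(2)}}$ follows immediately from the corresponding property of $M^{(1)}$: within each of the wedges bounded by $\Gamma_k$ and $\widetilde\Gamma_k$ (for $k=1,4,6,9$) we have set $M^{(2)} = M^{(1)} J_k^{\pm 1}$ with $J_k$ constant, and the original rays $\Gamma_1,\Gamma_4,\Gamma_6,\Gamma_9$ lie in the interior of regions where $M^{(2)}$ is defined by two different formulas on the two sides. The absence of a jump across such a $\Gamma_k$ is built into the definition: on $\Gamma_1$, for example, the $-$ side uses $M^{(2)}=M^{(1)}$ while the $+$ side uses $M^{(2)}=M^{(1)}J_1$, so the identity $M^{(1)}_+ = M^{(1)}_- J_1$ gives $M^{(2)}_+ = M^{(2)}_-$. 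The analogous cancellation handles $\Gamma_4,\Gamma_6,\Gamma_9$.

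Second, the jump matrices on $\Sigma_{M^{(2)}}$ are read off directly. On the unchanged rays $\widetilde\Gamma_k = \Gamma_k$ for $k=0,2,3,5,7,8$ the jumps coincide with those of $M^{(1)}$, hence with those of $M$. On each shifted ray $\widetilde\Gamma_k$ for $k=1,4,6,9$, the two sides differ only by the constant factor $J_k^{\pm 1}$ introduced in \eqref{eq:defM(2)}, which produces a constant jump matrix (up to the announced orientation reversals). A direct match against Figure \ref{fig: contour D} then completes this step.

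Third, the asymptotics at infinity stated in (3) survive because the constant right-multipliers are, in each wedge, absorbed into the $\mathcal O(z^{-1})$ error in the right-normalisation of $M^{(1)}$. Indeed, the nontrivial entries of $J_1,J_4,J_6,J_9$ pair a dominant diagonal exponential with a subdominant off-diagonal one: for $J_1$ the extra contribution to the $(3,1)$ position carries the factor $e^{a^3(\widetilde\psi(-z)-(1-\sigma/a^2)z)}$, which in the wedge between $\Gamma_1$ and $\widetilde\Gamma_1$ (centred around the positive real axis) is exponentially small relative to the diagonal factor $e^{a^3(-\widetilde\psi(-z)-(1-\sigma/a^2)z)}$ because $\Re \widetilde\psi(-z) < 0$ there for $|z|$ sufficiently large. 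The same pattern, with obvious sign changes determined by the geometry of each wedge, handles the multipliers $J_4,J_6,J_9$. Consequently the leading expansion of $M^{(2)}$ near infinity is identical to that of $M^{(1)}$. Boundedness near $z=0$ in (4) is immediate, since $M^{(1)}$ is bounded there and each $J_k$ is constant.

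The only point that requires a little attention is the third one, verifying that $\Re\widetilde\psi(\pm z)$ has the correct sign in each of the four wedges cut out by the shift; this will be the main (but modest) obstacle. Once these four sign inequalities are checked from the explicit expression $\widetilde\psi(z) = \tfrac23 z^{3/2}+z^{1/2}$, the lemma follows.
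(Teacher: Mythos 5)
Your overall strategy (constant multipliers in the wedges, telescoping across the old rays, exponential subdominance at infinity) is the intended routine verification --- the paper offers no proof of this lemma at all --- but two of your steps contain genuine errors. The serious one is in step (2): it is \emph{not} true that the jumps on the unchanged rays $\Gamma_0=\R^+$ and $\Gamma_5=\R^-$ survive intact. On the segments $(0,1)$ and $(-1,0)$ the regions adjacent to the real axis are precisely the deformation strips, so the jump there is a conjugated/composed matrix rather than $J_0$ or $J_5$: for instance on $(0,1)$ one has $M^{(2)}_+=M^{(1)}_+J_1=M^{(1)}_-J_0J_1=M^{(2)}_-\,J_9J_0J_1$, and a short computation gives $J_9J_0J_1=I+E_{1,3}$, which is the matrix shown in Figure \ref{fig: contour D} (similarly $J_6^{-1}J_5^{-1}J_4^{-1}=I+E_{2,4}$ on $(-1,0)$). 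This is not a cosmetic omission: producing these strictly upper-triangular jumps on $(-1,1)$, which become exponentially small after the $g$-function conjugation, is the entire purpose of the deformation; with the original jumps $J_0,J_5$ (which contain a $-1$ and permute entries) the subsequent analysis would fail. As written, your claim that "the jumps coincide with those of $M^{(1)}$" on $\Gamma_0$ and $\Gamma_5$ is false and would not "match against Figure \ref{fig: contour D}".

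Two smaller points. In step (1) you have the sides of $\Gamma_1$ interchanged: the enclosed strip lies on the $-$ side of $\Gamma_1$ (and on the $+$ side of $\Gamma_4$, which is exactly why the definition uses $J_k$ for $k=1,6$ but $J_k^{-1}$ for $k=4,9$); with your labelling the computation gives $M^{(2)}_+=M^{(2)}_-J_1^2\neq M^{(2)}_-$, so the cancellation you invoke does not follow from what you wrote, although the correct assignment makes it work. In step (3) the wedge between $\Gamma_1$ and $\widetilde\Gamma_1$ is not "centred around the positive real axis"; it is a half-strip escaping to infinity in the direction $e^{i\varphi_1}$. This matters, because $\Re\widetilde\psi(-z)=0$ on $\R^+$ itself (there $(-z)^{3/2}$ and $(-z)^{1/2}$ are purely imaginary), so the subdominance you need would fail on the axis; it holds precisely because $\arg z\to\varphi_1>0$ in the strip, where $\Re(-z)^{3/2}\sim -|z|^{3/2}\sin(\tfrac32\varphi_1)<0$. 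With these corrections your argument goes through.
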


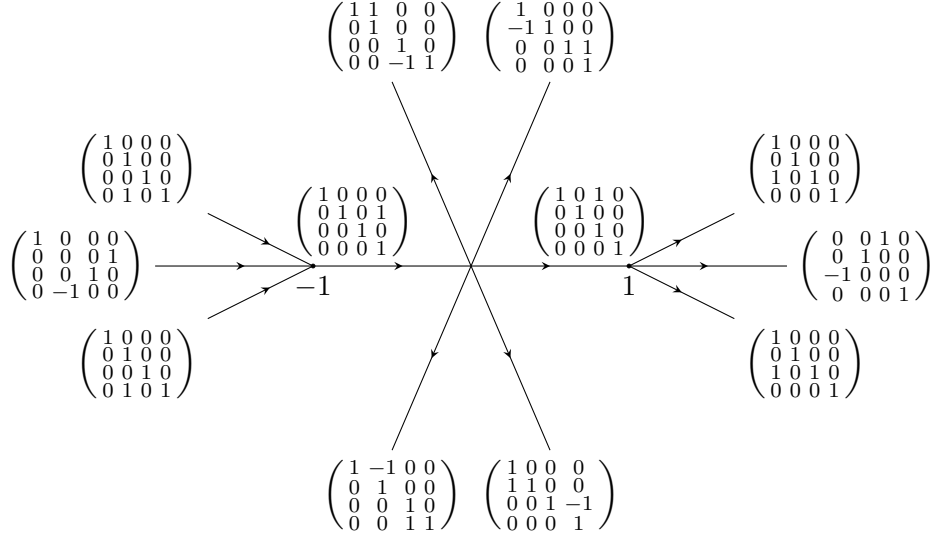
\begin{figure}[t]
\centering
\begin{tikzpicture}[scale=0.7]
\begin{scope}[decoration={markings,mark= at position 0.5 with {\arrow{stealth}}}]
\draw[postaction={decorate}]      (0,0)--node[near end, above]{$\left(\begin{smallmatrix} 1&0&1&0 \\0&1&0&0\\0&0&1&0\\0&0&0&1 \end{smallmatrix}\right)$}(3,0);
\draw[postaction={decorate}]      (3,0)--(6,0) node[right]{$\left(\begin{smallmatrix} 0&0&1&0\\0&1&0&0\\-1&0&0&0\\0&0&0&1 \end{smallmatrix}\right)$};
\draw[postaction={decorate}]      (3,0)--(5,1) node[above right]{$\left(\begin{smallmatrix} 1&0&0&0\\0&1&0&0\\1&0&1&0\\0&0&0&1 \end{smallmatrix}\right)$};
\draw[postaction={decorate}]      (0,0)--(1.5,3.5) node[above]{$\left(\begin{smallmatrix} 1&0&0&0\\-1&1&0&0\\0&0&1&1\\0&0&0&1 \end{smallmatrix}\right)$};
\draw[postaction={decorate}]      (0,0)--(-1.5,3.5) node[above]{$\left(\begin{smallmatrix} 1&1&0&0\\0&1&0&0\\0&0&1&0\\0&0&-1&1 \end{smallmatrix}\right)$};
\draw[postaction={decorate}]      (0,0)--(-1.5,-3.5)node[below]{$\left(\begin{smallmatrix} 1&-1&0&0\\0&1&0&0\\0&0&1&0\\0&0&1&1 \end{smallmatrix}\right)$};
\draw[postaction={decorate}]      (0,0)--(1.5,-3.5)node[below]{$\left(\begin{smallmatrix} 1&0&0&0\\1&1&0&0\\0&0&1&-1\\0&0&0&1 \end{smallmatrix}\right)$};
\draw[postaction={decorate}]      (3,0)--(5,-1)node[below right]{$\left(\begin{smallmatrix} 1&0&0&0\\0&1&0&0\\1&0&1&0\\0&0&0&1 \end{smallmatrix}\right)$};
\end{scope}
\begin{scope}[decoration={markings,mark= at position 0.5 with {\arrowreversed{stealth}}}]
\draw[postaction={decorate}]      (-3,0)--(-5,1) node[above left]{$\left(\begin{smallmatrix} 1&0&0&0\\0&1&0&0\\0&0&1&0\\0&1&0&1 \end{smallmatrix}\right)$} ;
\draw[postaction={decorate}]      (0,0)--node[near end, above]{$\left(\begin{smallmatrix} 1&0&0&0 \\0&1&0&1\\0&0&1&0\\0&0&0&1 \end{smallmatrix}\right)$}(-3,0);
\draw[postaction={decorate}]      (-3,0)--(-6,0) node[left]{$\left(\begin{smallmatrix} 1&0&0&0\\0&0&0&1\\0&0&1&0\\0&-1&0&0 \end{smallmatrix}\right)$};
\draw[postaction={decorate}]      (-3,0)--(-5,-1) node[below left]{$\left(\begin{smallmatrix} 1&0&0&0\\0&1&0&0\\0&0&1&0\\0&1&0&1 \end{smallmatrix}\right)$} ;
\end{scope}
\filldraw (3,0) circle (1pt) node[below]{$1$} (-3,0) circle (1pt) node[below]{$-1$};
\end{tikzpicture}
\caption{The jump contour and jump matrices for $M^{(2)}(z;a)$.}
\label{fig: contour D}
\end{figure}

\subsection{Third transformation: $M^{(2)}\mapsto M^{(3)}$}
In the third transformation we use $g$-functions to normalize the behavior at infinity. The $g$-functions are defined by
\begin{equation}\label{eq:defgjsteep2}
\begin{aligned}
g_1(z)&=-\frac23(1-z)^{3/2}-(1-\sigma/a^2) z, & g_2(z)&=-\frac23(1+z)^{3/2}+(1-\sigma/a^2) z, \\ 
g_3(z)&=\frac23(1-z)^{3/2}-(1-\sigma/a^2) z, & g_4(z)&=\frac23(1+z)^{3/2}+(1-\sigma/a^2) z.
\end{aligned}
\end{equation}
These functions satisfy the following properties;
\begin{itemize}
\item $g_1$ and $g_3$ are analytic on $\C \setminus [1,+\infty)$ and $g_{1\pm}(z)=g_{3\mp}(z)$ for $z \in [1,+\infty)$;
\item $g_2$ and $g_4$ are analytic on $\C \setminus (-\infty,-1]$ and $g_{2\pm}(z)=g_{4\mp}(z)$ for $z \in (-\infty,-1]$;
\item as $z \to \infty$,
      \begin{align*}
      g_1(z)&=-\widetilde \psi(-z)-( 1-\sigma/a^2)z-\frac14(-z)^{-1/2}+\mathcal O \left( z^{-3/2} \right), \\
      g_2(z)&=-\widetilde \psi(z)+( 1-\sigma/a^2)z-\frac14z^{-1/2}+\mathcal O \left( z^{-3/2} \right), \\
      g_3(z)&=\widetilde \psi(-z)-( 1-\sigma/a^2)z+\frac14(-z)^{-1/2}+\mathcal O \left( z^{-3/2} \right), \\
      g_4(z)&=\widetilde \psi(z)+( 1-\sigma/a^2)z+\frac14z^{-1/2}+\mathcal O \left( z^{-3/2} \right).
      \end{align*}
\end{itemize}
We define $M^{(3)}$ by 
\begin{equation}\label{eq:defM(3)}
M^{(3)}(z;a)=E_0 M^{(2)}(z;a) \diag \left( e^{-a^3g_1(z)},e^{-a^3g_2(z)},e^{-a^3g_3(z)},e^{-a^3g_4(z)}\right),
\end{equation}
where $E_0$ is the constant matrix
\[
E_0=I_4+\frac{i}{4}E_{3,1}-\frac{i}{4}E_{4,2}.
\]

Then $M^{(3)}$ satisfies the following RH problem with $\Sigma_{M^{(3)}}:=\Sigma_{M^{(2)}}$.

\begin{lemma} The function  $M^{(3)}$ as defined in \eqref{eq:defM(3)} has the following properties 
\begin{itemize}
\item[\rm (1)] $M^{(3)}$ is analytic for  $z\in \C \setminus \Sigma_{M^{(3)}}$;
\item[\rm (2)] $M^{(3)}_+(z)=M^{(3)}_-(z)J_{M^{(3)}}$, for $z \in \Sigma_{M^{(3)}}$;
\item[\rm (3)] As $z \to \infty$ we have
\begin{align*}
M^{(3)}(z;a)=\left(I+\mathcal O \left(z^{-1}\right) \right) \diag \left((-z)^{-1/4},z^{-1/4},(-z)^{1/4},z^{1/4} \right)A.
\end{align*}
\end{itemize}
The jump matrices $J_{M^{(3)}}$ are given by 
\begin{multline*}
J_{M^{(3)}}(z)=\diag \left( e^{a^3g_{1-}(z)},e^{a^3g_{2-}(z)},e^{a^3g_{3-}(z)},e^{a^3g_{4-}(z)}\right)\\
\times J_{M^{(2)}}\diag \left( e^{-a^3g_{1+}(z)},e^{-a^3g_{2+}(z)},e^{-a^3g_{3+}(z)},e^{-a^3g_{4+}(z)}\right).
\end{multline*}
And more explicitly,
\begin{align*}
J_{M^{(3)}}(z)=\begin{pmatrix}
1 & 0 & 0 & 0\\
0 & 0 & 0 & 1\\
0 & 0 & 1 & 0\\
0 & -1 & 0 & 0
\end{pmatrix}, & \quad z \in (-\infty,-1), &
J_{M^{(3)}}(z)=\begin{pmatrix}
0 & 0 & 1 & 0\\
0 & 1 & 0 & 0\\
-1 & 0 & 0 & 0\\
0 & 0 & 0 & 1
\end{pmatrix},  \quad z\in (1,\infty), \end{align*}
\begin{align*}
J_{M^{(3)}}(z)=\begin{pmatrix}
1 & 0 & 0 & 0\\
0 & 1 & 0 & e^{a^3(g_2-g_4)}\\
0 & 0 & 1 & 0\\
0 & 0 & 0 & 1
\end{pmatrix}, &  \quad z\in (-1,0),  \end{align*}
\begin{align*}
J_{M^{(3)}}(z)=\begin{pmatrix}
1 & 0 & e^{a^3 (g_1-g_3)} & 0\\
0 & 1 & 0 & 0\\
0 & 0 & 1 & 0\\
0 & 0 & 0 & 1
\end{pmatrix}, &  \quad z\in (0,1),  \end{align*}
\begin{align*}
J_{M^{(3)}}(z)=\begin{pmatrix}
1 & 0 & 0 & 0\\
0 & 1 & 0 & 0\\
0 & 0 & 1 & 0\\
0 & e^{a^3(g_4-g_2)} & 0 & 1
\end{pmatrix}, &  \quad \arg (z+1)=\pm (\pi-\varphi_1),  \end{align*}
\begin{align*}
J_{M^{(3)}}(z)=\begin{pmatrix}
1 & 0 & 0 & 0\\
0 & 1 & 0 & 0\\
e^{a^3(g_3-g_1)}  & 0 & 1 & 0\\
0 & 0 & 0 & 1
\end{pmatrix}, &  \quad \arg (z-1)=\pm \varphi_1, \end{align*}
\begin{align*}
J_{M^{(3)}}(z)=\begin{pmatrix}
1 & 0 & 0 & 0\\
\mp e^{a^3 (g_2-g_1)} & 1 & 0 & 0\\
0 & 0 & 1 &  \pm e^{a^3 (g_3-g_4)}\\
0 & 0 & 0 & 1
\end{pmatrix}, &  \quad \arg z=\pm \varphi_2, \end{align*}
and, finally,
\begin{align*}
J_{M^{(3)}}(z)\begin{pmatrix}
1 & \pm e^{a^3 (g_1-g_2)}  & 0 & 0\\
0 & 1 & 0 & \\
0 & 0 & 1 &  0 \\
0 & 0 & \mp e^{a^3 (g_4-g_3)} & 1
\end{pmatrix}, &  \quad \arg z=\pm(\pi-\varphi_2).
\end{align*}

\end{lemma}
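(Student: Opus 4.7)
The plan is to verify the three stated properties of $M^{(3)}$ in turn, and then to compute the jump matrices explicitly by working through the boundary identities for the $g_j$. The transformation \eqref{eq:defM(3)} is a left multiplication by a constant matrix followed by a right multiplication by a diagonal matrix of exponentials, so analyticity of $M^{(3)}$ on $\mathbb{C}\setminus\Sigma_{M^{(3)}}$ reduces to checking that the exponentials $e^{-a^3 g_j(z)}$ are analytic off $\Sigma_{M^{(2)}}$. By the properties listed just after \eqref{eq:defgjsteep2}, $g_1$ and $g_3$ are analytic off $[1,\infty)$ and $g_2,g_4$ are analytic off $(-\infty,-1]$. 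Both of these rays are contained in $\Sigma_{M^{(2)}}$, so no new jump contour is created.

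For the jump matrices, I would apply the identity
\[
J_{M^{(3)}}(z)=D_{-}(z)\,J_{M^{(2)}}(z)\,D_{+}(z)^{-1}, \qquad D(z)=\diag\bigl(e^{-a^{3}g_{j}(z)}\bigr)_{j=1}^{4},
\]
ray by ray. On the rays $(-\infty,-1)$ and $(1,\infty)$ the boundary identities $g_{2\pm}=g_{4\mp}$ and $g_{1\pm}=g_{3\mp}$ swap the corresponding diagonal entries of $D$, which combined with the $J_{M^{(2)}}$ listed in Figure \ref{fig: contour D} produces the two block anti-diagonal jumps stated. On the interior cut $(-1,0)$ (resp.\ $(0,1)$) all $g_j$ are continuous, so the off-diagonal entry of $J_{M^{(2)}}$ simply gets multiplied by $e^{a^{3}(g_{2}-g_{4})}$ (resp.\ $e^{a^{3}(g_{1}-g_{3})}$). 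The remaining rays (the Stokes-type rays meeting $0$ and the rays meeting $\pm 1$) are all in domains of analyticity of every $g_j$, so the triangular structure of $J_{M^{(2)}}$ is preserved and each nonzero off-diagonal entry acquires a factor $e^{a^{3}(g_{i}-g_{j})}$ with the indices dictated by the position. Collecting cases reproduces the list in the lemma.

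For the large-$z$ asymptotics, I would insert the expansions of $g_j$ stated before \eqref{eq:defM(3)} into condition (3) of the RH problem for $M^{(2)}$. The leading terms $\mp\widetilde\psi(\mp z)\pm(1-\sigma/a^{2})z$ cancel exactly the exponentials in that asymptotic condition, leaving
\[
M^{(3)}(z)=E_{0}\bigl(I+\mathcal{O}(z^{-1})\bigr)B(z)A\,\diag\bigl(e^{\mp a^{3}(\pm z)^{-1/2}/4 + \mathcal{O}(z^{-3/2})}\bigr).
\]
Expanding the residual diagonal to order $z^{-1/2}$ gives $I+\tfrac{a^{3}}{4}\diag\bigl(-(-z)^{-1/2},-z^{-1/2},(-z)^{-1/2},z^{-1/2}\bigr)+\mathcal{O}(z^{-1})$, and after conjugation by $B(z)A$ the resulting $z^{-1/2}$ correction is a constant matrix supported on the $(3,1)$ and $(4,2)$ entries. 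The constant $E_{0}=I_{4}+\tfrac{i}{4}E_{3,1}-\tfrac{i}{4}E_{4,2}$ is precisely what cancels this correction; what remains is $(I+\mathcal{O}(z^{-1}))B(z)A$, as required.

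The only delicate point is the last step: the matching of $E_{0}$ to the $\mathcal{O}(z^{-1/2})$ correction. This is a short computation using the explicit form of $A$ and $B$ and the pairing between sheets $1,3$ (cut on $[0,\infty)$) and sheets $2,4$ (cut on $(-\infty,0]$), but one must track signs and fractional powers carefully; everything else in the lemma is a direct bookkeeping of the transformation \eqref{eq:defM(3)}.
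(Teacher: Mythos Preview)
Your approach is correct and is exactly the routine verification one would carry out; the paper itself gives no proof for this lemma, leaving the computation implicit. Two small points are worth flagging.

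First, the signs in your residual diagonal are flipped. From the expansions of the $g_j$ one gets, after cancellation of the leading exponentials,
\[
\diag\Bigl(e^{\frac{a^{3}}{4}(-z)^{-1/2}},\,e^{\frac{a^{3}}{4}z^{-1/2}},\,e^{-\frac{a^{3}}{4}(-z)^{-1/2}},\,e^{-\frac{a^{3}}{4}z^{-1/2}}\Bigr)\bigl(1+\mathcal O(z^{-3/2})\bigr),
\]
so the first-order term is $\tfrac{a^{3}}{4}\diag\bigl((-z)^{-1/2},z^{-1/2},-(-z)^{-1/2},-z^{-1/2}\bigr)$, the negative of what you wrote. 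This does not change the mechanism: after conjugation by $B(z)A$ the $z^{-1/2}$ piece indeed lands as a constant in the $(3,1)$ and $(4,2)$ slots (the $(1,3)$ and $(2,4)$ contributions are $\mathcal O(z^{-1})$ and get absorbed).

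Second, carrying the computation through gives a correction $-\tfrac{i a^{3}}{4}E_{3,1}+\tfrac{i a^{3}}{4}E_{4,2}$, which is cancelled by $E_{0}=I+\tfrac{i a^{3}}{4}E_{3,1}-\tfrac{i a^{3}}{4}E_{4,2}$. The $E_{0}$ printed in the paper omits the factor $a^{3}$; your sketch simply asserts that $E_{0}$ ``is precisely what cancels this correction'' without tracking the constant, so you did not catch this. It is worth noting explicitly, since with the $E_{0}$ as printed the asymptotic condition (3) would fail by a bounded term.

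Apart from these two bookkeeping points your argument is complete and matches what the paper intends.
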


Let us discuss the limiting behavior of the jump matrices $J_{M^{(3)}}$  as $a\to +\infty$. First assume that $\sigma=0$. In that case, it is not hard to check from \eqref{eq:defgjsteep2} that in the limit $a\to +\infty$, the jump matrices on the contours other than $(-\infty,-1)$ and $(1,\infty)$ are exponentially close to the identity. To have exponential decay in the upper left block of the jump matrices on $\widetilde \Gamma_k$, $k=2,3,7,8$ we also need $\varphi_2 \geq \pi/6$. This can always be achieved deforming these rays if necessary.  The exponential decay holds uniformly, when we stay away from $0$ and $\pm 1$, suggesting the construction of a global parametrix that solves the limiting RH problem and local parametrices near $0$ and $\pm1$. Now, if $\sigma\neq 0$ but $a$ large, then near the origin it may happen that we have exponential growth of some of the entries of $J_{M^{(3)}}$ instead of decay. However, this only happens very close to the origin where we are going to construct a local parametrix anyway. 

Summarizing, in the next steps we will construct first a global parametrix and then local parametrices around $0$ and $\pm 1$.  For our purposes, the local parametrix at the origin is most important and  it is precisely here that RH problem \ref{rhp: PII rhp} for the Hastings-McLeod solution to the Painlev\'e II equation appears.

\subsection{Construction of the global parametrix $P^{(\infty)}$}

In this part we will construct an explicit solution to the RH problem for $M^{(3)}$ after taking the limit $a \to \infty$. That is, we want to solve the following RH problem.

We look for a $4 \times 4$ matrix-valued function $P^{(\infty)}(z)$ satisfying
\begin{itemize}
\item[\rm (1)] $P^{(\infty)}(z)$ is analytic for  $z \in \C \setminus \left( (-\infty,-1] \cup [1,\infty) \right) $;
\item[\rm (2)] $P^{(\infty)}$ has the following jumps
\begin{align}
P^{(\infty)}_+(x)=P^{(\infty)}_-(x)\begin{pmatrix} 0&0&1&0\\0&1&0&0\\-1&0&0&0\\0&0&0&1 \end{pmatrix}, & \qquad x \in (1,\infty), \\
P^{(\infty)}_+(x)=P^{(\infty)}_-(x)\begin{pmatrix} 1&0&0&0\\0&0&0&1\\0&0&1&0\\0&-1&0&0 \end{pmatrix}, & \qquad x \in (-\infty,-1), 
\end{align}
where the contour is oriented from left to right;
\item[\rm (3)] As $z \to \infty$ we have 
\[
P^{(\infty)}(z)= \left(I+\mathcal O \left(z^{-1}\right) \right)\diag \left((-z)^{-1/4},z^{-1/4},(-z)^{1/4},z^{1/4} \right)A, \qquad \text{as }z \to \infty.
\]
\end{itemize}

It is straightforward to check that
\[
P^{(\infty)}(z)=\diag \left( (1-z)^{-1/4},(1+z)^{-1/4},(1-z)^{1/4},(1+z)^{1/4}\right)A
\]
solves the RH problem.
\subsection{Local parametrices $P^{(\pm 1)}$ around $\pm 1$}

Local parametrices  $P^{(\pm 1)}$ in a neighborhood $U_1$ of $1$ and a neighborhood $U_{-1}$ of $-1$ can be constructed using Airy functions. This construction is well-known in the literature and will therefore be left to the reader. Also, we will not need the exact form of the solution for our analysis.

\subsection{Local parametrix $P^{(0)}$ around 0}

In this part we will construct a local parametrix around zero out of solutions for two $2 \times 2$ size model RH problems. As can be expected from the structure of the Riemann surface on the critical curve $\tau=\sqrt{\alpha+2}$, one of these will be related to the PII model RH problem (RH problem \ref{rhp: PII rhp}), while the other one is constructed from the sine function. More precisely we use the following model RH problems.

\paragraph{Rotated PII model RH problem}
The first model RH problem will be nothing more than a rotated version of RH problem \ref{rhp: PII rhp}. It is immediate that 
\[
\widetilde \Psi (\zeta)=\Psi(i\zeta),
\]
satisfies the following RH problem.

\begin{rhp} \label{rhp: PII rhp rotated}
We look for a $2 \times 2$ matrix-valued function $\widetilde \Psi(\zeta)$ satisfying
\begin{itemize}
\item[\rm (1)] $\widetilde \Psi(\zeta)$ is analytic for  $\zeta \in \C \setminus \Sigma_{\widetilde \Psi}$;
\item[\rm (2)] $\widetilde \Psi_+(\zeta)=\widetilde \Psi_-(\zeta)J_{\widetilde \Psi}$, for $\zeta \in \Sigma_{\widetilde \Psi}$;
\item[\rm (3)] As $\zeta \to \infty$ we have
\[
\widetilde \Psi(\zeta)=\left( I+ \mathcal O(\zeta^{-1}) \right) \begin{pmatrix} e^{- \frac43 \zeta^3+\nu \zeta } & 0 \\ 0& e^{ \frac43 \zeta^3-\nu \zeta} \end{pmatrix};
\]
\item[\rm (4)] $\Psi(\zeta)$ is bounded near $\zeta=0$.
\end{itemize}
\end{rhp}

The contour $\Sigma_{\widetilde \Psi}$ consists of the rays $\arg(\zeta)=\pm \pi /3$ and $\arg(\zeta)=\pm 2\pi /3$. The jumps on these rays are presented in Figure \ref{fig: contour rotated Psi}.

\begin{figure}[t]
\centering
\begin{tikzpicture}[scale=1]
\begin{scope}[decoration={markings,mark= at position 0.5 with {\arrow{stealth}}}]
\draw[postaction={decorate}]      (0,0)--(1,-1.73) node[right]{$\begin{pmatrix} 1&0 \\ 1&1 \end{pmatrix}$};
\draw[postaction={decorate}]      (0,0)--(1,1.73) node[right]{$\begin{pmatrix} 1&0 \\ -1&1 \end{pmatrix}$};
\draw[postaction={decorate}]      (0,0)--(-1,1.73) node[left]{$\begin{pmatrix} 1&1 \\ 0&1 \end{pmatrix}$};
\draw[postaction={decorate}]      (0,0)--(-1,-1.73) node[left]{$\begin{pmatrix} 1&-1 \\ 0&1 \end{pmatrix}$};
\end{scope}
\end{tikzpicture}
\caption{The jump contour $\Sigma_{\widetilde \Psi}$ in the complex $\zeta$-plane and the constant jump matrices $J_{\widetilde \Psi}$ on each of the rays.}
\label{fig: contour rotated Psi}
\end{figure}
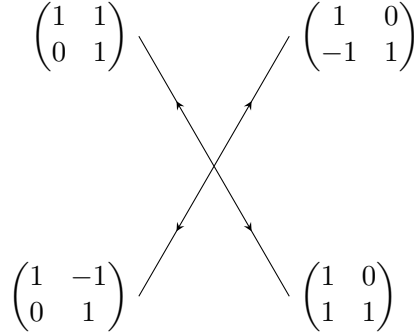

\paragraph{Rotated sine model RH problem}

The second model RH problem that we need is almost a trivial one. It is related to the sine kernel, but since also here we use a rotated version exponential functions arise. 

\begin{rhp} \label{rhp: sine rhp}
We look for a $2 \times 2$ matrix-valued function $\Theta(\zeta)$ satisfying
\begin{itemize}
\item[\rm (1)] $\Theta(\zeta)$ is analytic for  $\zeta \in \C \setminus \Sigma_{\Theta}$;
\item[\rm (2)] $\Theta_+(\zeta)=\Theta_-(\zeta) J_\Theta$, for $\zeta \in \Sigma_{\Theta}$;
\item[\rm (3)] As $\zeta \to \infty$ we have
\[
\Theta(\zeta)=\left( I+ \mathcal O(\zeta^{-1}) \right) \begin{pmatrix} e^{-\zeta } & 0 \\ 0& e^{\zeta} \end{pmatrix};
\]
\item[\rm (4)] $\Theta(\zeta)$ is bounded near $\zeta=0$.
\end{itemize}
The contour $\Sigma_{\Theta}$ consists of the rays $\arg(\zeta)=\pm \pi /3$ and $\arg(\zeta)=\pm 2\pi /3$ and the jumps $J_\Theta$ on these rays are shown in Figure \ref{fig: contour Theta}.
\end{rhp}

\begin{figure}[t]
\centering
\begin{tikzpicture}[scale=1]
\begin{scope}[decoration={markings,mark= at position 0.5 with {\arrow{stealth}}}]
\draw[postaction={decorate}]      (0,0)--(1,-1.73) node[right]{$\begin{pmatrix} 1&-1 \\ 0&1 \end{pmatrix}$};
\draw[postaction={decorate}]      (0,0)--(1,1.73) node[right]{$\begin{pmatrix} 1&1 \\ 0&1 \end{pmatrix}$};
\draw[postaction={decorate}]      (0,0)--(-1,1.73) node[left]{$\begin{pmatrix} 1&0 \\ -1&1 \end{pmatrix}$};
\draw[postaction={decorate}]      (0,0)--(-1,-1.73) node[left]{$\begin{pmatrix} 1&0 \\ 1&1 \end{pmatrix}$};
\end{scope}
\end{tikzpicture}
\caption{The jump contour $\Sigma_{\Theta}$ in the complex $\zeta$-plane and the constant jump matrices on each of the rays. }
\label{fig: contour Theta}
\end{figure}
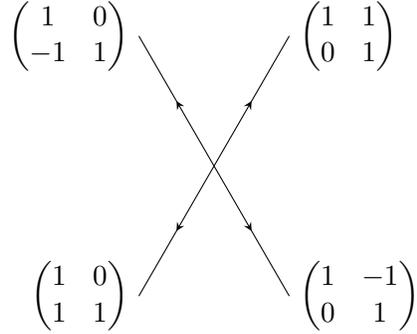

This RH problem has the solution
\[
\Theta(\zeta)=\begin{cases}
\begin{pmatrix} e^{-\zeta} & -e^{-\zeta} \\ 0 & e^{\zeta} \end{pmatrix}, &  |\arg \zeta| < \frac{\pi }{3}, \\
\begin{pmatrix} e^{-\zeta} & 0 \\ 0 & e^{\zeta} \end{pmatrix}, & \frac{\pi }{3} < |\arg \zeta |<\frac{2\pi }{3}, \\
\begin{pmatrix} e^{-\zeta} & 0 \\ -e^{\zeta} & e^{\zeta} \end{pmatrix}, & \frac{2\pi }{3}< |\arg \zeta | \leq \pi , 
\end{cases}
\]
as can be seen by direct verification.

\paragraph{Structure of the local parametrix}

We will define the local parametrix $P^{(0)}(z)$ on a disk $U_0$ of fixed radius $\epsilon$ centered at the origin such that it is analytic inside this disk, except on the contour shown in Figure \ref{fig: contour local parametrix}. On this contour the parametrix needs to have the jumps indicated in the same figure. Moreover, on the circle of radius $\epsilon$ we impose the matching condition
\begin{equation} \label{eq: matching condition}
P^{(0)}(z)\left({P^{(\infty)}}\right)^{-1}(z)=I+ \mathcal O(1/a), \qquad a \to \infty.
\end{equation}

\begin{figure}[t]
\centering
\begin{tikzpicture}[scale=1]
\begin{scope}[decoration={markings,mark= at position 0.5 with {\arrow{stealth}}}]
\draw[postaction={decorate}]      (0,0)--(1,-1.73) node[below right ]{$\begin{pmatrix} 1&0&0&0 \\ e^{a^3(g_2-g_1)}&1&0&0 \\ 0&0&1& -e^{a^3(g_3-g_4)} \\ 0&0&0&1 \end{pmatrix}$};
\draw[postaction={decorate}]      (0,0)--(1,1.73) node[above right ]{$\begin{pmatrix} 1&0&0&0 \\ -e^{a^3(g_2-g_1)}&1&0&0 \\ 0&0&1& e^{a^3(g_3-g_4)} \\ 0&0&0&1 \end{pmatrix}$};
\draw[postaction={decorate}]      (0,0)--(-1,1.73) node[above left ]{$\begin{pmatrix} 1&e^{a^3(g_1-g_2)}&0&0 \\0&1&0&0\\ 0 &0&1&0 \\ 0&0&-e^{a^3(g_4-g_3)} &1 \end{pmatrix}$};
\draw[postaction={decorate}]      (0,0)--(-1,-1.73) node[below left ]{$\begin{pmatrix} 1&-e^{a^3(g_1-g_2)}&0&0 \\0&1&0&0\\ 0 &0&1&0 \\ 0&0&e^{a^3(g_4-g_3)} &1 \end{pmatrix}$};
\draw (0,0) circle (2);
\end{scope}
\end{tikzpicture}
\caption{The jump contour and jump matrices for the local parametrix $P^{(0)}(z)$ around zero. On the circle the matching condition \eqref{eq: matching condition} is imposed.}
\label{fig: contour local parametrix}
\end{figure}
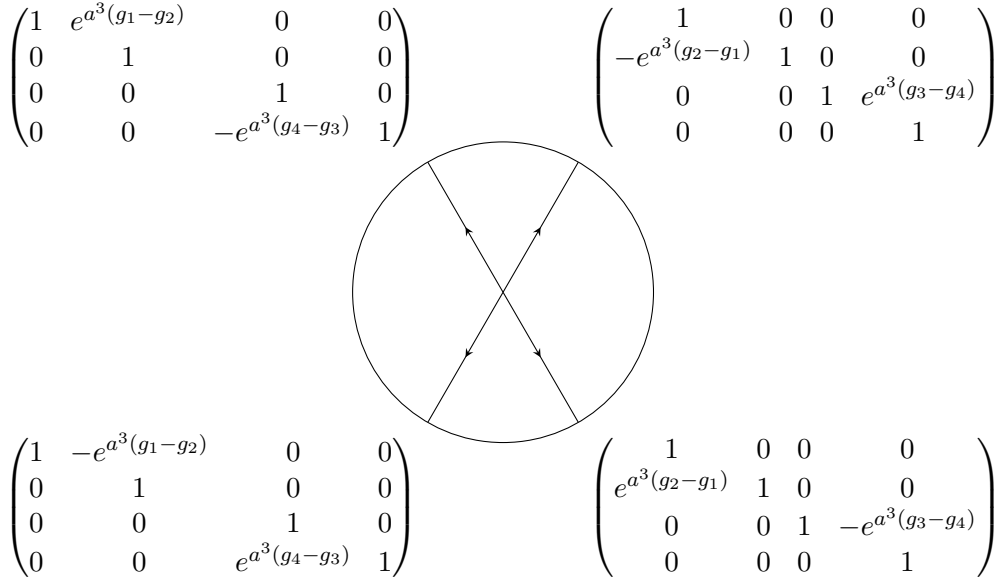

We look for a solution to this problem of the form
\begin{multline*}
P^{(0)}(z)=P^{(\infty)}(z) \begin{pmatrix} \widetilde \Psi(af_1(z);\nu(z)) & 0 \\ 0 & \Theta(a^3f_2(z)) \end{pmatrix}\\ \times \diag \left( e^{-a^3\frac{g_1-g_2}{2}},e^{a^3\frac{g_1-g_2}{2}},e^{-a^3\frac{g_3-g_4}{2}},e^{a^3\frac{g_3-g_4}{2}}\right).
\end{multline*}
We now define the conformal maps $f_1$ and $f_2$ and the map $\nu$, starting with the first. We define
\begin{align*}
f_1(z)&=\left(\frac14\left((1-z)^{3/2}-(1+z)^{3/2}\right)+\frac34z \right)^{1/3},
\end{align*}
for $z$ in a small neighborhood $U_0$ of the origin. A simple Taylor expansion shows that the cubic root can be taken such that $f_1$ is a conformal map satisfying
\begin{align*}
f_1(z)&=2^{-5/3} z+\mathcal O(z^3), \qquad \text{as }z \to 0.
\end{align*}
Then we define $\nu$ by
\begin{align*}
\nu(z)= \frac{\sigma z}{f_1(z)}.
\end{align*}
It is clear that we can choose $U_0$ sufficiently small such that $z\mapsto \nu(z)$ is analytic in $U_0$. Note that $\nu(0)=2^{5/3}\sigma$. By the definition of $g_j$ in \eqref{eq:defgjsteep2} and this choice of $f_1$ and $\nu$, we have
\begin{align} \label{eq: match1}
a^3\frac43 f_1(z)^3- a \nu(z) f_1(z)=a^{3}(g_2(z)-g_1(z))/2.
\end{align}
Finally, we define $f_2$ by
\begin{align*}
f_2(z) &=\frac12 (g_4(z)-g_3(z)).
\end{align*}
Note that a simple Taylor expansions shows that 
\begin{align} \label{eq: match2}
g_3(z)-g_4(z)&=-4z+\frac{1}{12}z^3+\mathcal O(z^5),\qquad \text{as }z \to 0.
\end{align}
Defined in this way, and possibly after some slight deformation of contours in the rotated PII and rotated sine RH problems, $P^{(0)}(z)$ already satisfies the jump conditions. It then only remains to check the matching condition \eqref{eq: matching condition}.

\paragraph{Matching condition}

Let us fix $z$ on the boundary $\partial U_0$ of the disk with radius $\epsilon$. Then $af_1(z)$ and $a^3f_2(z)$ both tend to infinity as $a \to \infty$. We may therefore plug in the asymptotic formula for $\widetilde \Psi$ and $\Theta$. After a quick calculation using \eqref{eq: match1} it turns out that
\begin{align*}
\widetilde \Psi(af_1(z);\nu(z))\diag \left( e^{-a^3\frac{g_1-g_2}{2}},e^{a^3\frac{g_1-g_2}{2}}\right)& =I_2+\mathcal O(1/a), \quad \text{as }a \to +\infty, \\
\Theta(a^3f_2(z))\diag \left(e^{-a^3\frac{g_3-g_4}{2}},e^{a^3\frac{g_3-g_4}{2}}\right)&= I_2+\mathcal O(1/a), \quad \text{as }a \to +\infty.
\end{align*}
It is then clear that also the matching condition  \eqref{eq: matching condition} is satisfied.

\subsection{Final transformation: $M^{(3)}\mapsto M^{(4)}$}
Finally we define a last matrix function $M^{(4)}(z;a)$ as
\begin{align}\label{eq:defM(4)}
M^{(4)}(z;a)=\begin{cases}
M^{(3)}(z;a)\left(P^{(\infty)}\right)^{-1} & \text{for }z \in \C \setminus (\overline{U_0} \cup \overline{U_1} \cup \overline{U_{-1}} \cup \Sigma_{M^{(3)}}), \\
M^{(3)}(z;a)\left(P^{(0)}\right)^{-1} & \text{for }z \in U_0 \setminus  \Sigma_{M^{(3)}}, \\
M^{(3)}(z;a)\left(P^{(1)}\right)^{-1} & \text{for }z \in U_1 \setminus  \Sigma_{M^{(3)}}, \\
M^{(3)}(z;a)\left(P^{(-1)}\right)^{-1} & \text{for }z \in U_{-1} \setminus  \Sigma_{M^{(3)}}.
\end{cases}
\end{align}
This function is analytic in $\C \setminus \Sigma_{M^{(4)}}$, where $\Sigma_{M^{(4)}}$ is obtained from $\Sigma_{M^{(3)}}$ by removing the rays $(-\infty,-1]$ and $[1,\infty)$, adding three small circles around $-1$, 0, and 1 and finally deleting the parts of $\Sigma_{M^{(3)}}$ within these circles. Then all jumps on $ \Sigma_{M^{(4)}}$ tend to the identity matrix as $a \to +\infty$, both uniformly and in $L^2$-sense. Indeed, for the jump on $\partial U_0$ this is a consequence of the matching condition  \eqref{eq: matching condition}. Likewise the jumps on $\partial U_1$ and $\partial U_{-1}$ tend to identity if one constructs the right local parametrices around $\pm 1$. Moreover $M^{(4)}(z;a)$ is normalized at infinity, meaning that
\[
M^{(4)}(z;a)=I+\mathcal O(1/z), \qquad \text{as }z \to \infty.
\]
Then the following result follows by standard methods.
\begin{lemma} As $a\to \infty$ we have
\[
M^{(4)}(z;a)=I+\mathcal O \left( \frac{1}{a(1+|z|)}\right),
\]
uniformly for $z \in \C \setminus \Sigma_{M^{(4)}}$.
\end{lemma}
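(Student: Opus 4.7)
The plan is to apply the standard small-norm theorem for Riemann--Hilbert problems of Deift--Zhou type. By construction, $M^{(4)}$ is analytic on $\C \setminus \Sigma_{M^{(4)}}$, normalized to $I$ at infinity, and its jump matrix $J_{M^{(4)}}$ is close to $I$ as $a \to +\infty$. Once I have verified that $\|J_{M^{(4)}} - I\|_{L^\infty(\Sigma_{M^{(4)}}) \cap L^2(\Sigma_{M^{(4)}})} = \mathcal O(1/a)$, the conclusion follows by a routine application of Cauchy operator estimates.

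The first step is to quantify the smallness of the jump on each component of $\Sigma_{M^{(4)}}$. On $\partial U_0$ the definition of $P^{(0)}$ and the matching condition \eqref{eq: matching condition} already yield $J_{M^{(4)}} - I = P^{(0)}(P^{(\infty)})^{-1} - I = \mathcal O(1/a)$, uniformly in $z \in \partial U_0$. On the two small circles $\partial U_{\pm 1}$, the analogous matching property of the Airy parametrices gives an $\mathcal O(1/a^3)$ bound (hence certainly $\mathcal O(1/a)$). On the remaining arcs, which are the parts of the lips of the lenses and the rotated rays lying outside $U_0 \cup U_1 \cup U_{-1}$, the jump $J_{M^{(4)}} - I$ contains entries of the form $e^{a^3(g_i - g_j)}$ with $(i,j)$ chosen so that $\Re(g_i - g_j) < 0$. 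I would verify from the explicit expressions \eqref{eq:defgjsteep2} that, when the angles $\varphi_1, \varphi_2$ of the lenses are chosen appropriately (in particular $\varphi_2 \geq \pi/6$ as mentioned in the text), each relevant $\Re(g_i - g_j)$ is bounded above by $-\delta \max(1,|z|^{3/2})$ for some $\delta > 0$ uniformly outside the disks. This yields pointwise exponential decay of order $e^{-c a^3 \max(1,|z|^{3/2})}$, which is much stronger than $1/a$ and, in particular, gives $L^2$-smallness on the unbounded arcs.

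Given these estimates, let $w = J_{M^{(4)}} - I$ and let $\mathcal{C}_-$ denote the minus boundary value of the Cauchy operator on $\Sigma_{M^{(4)}}$. Define $\mathcal C_w \colon L^2(\Sigma_{M^{(4)}}) \to L^2(\Sigma_{M^{(4)}})$ by $\mathcal C_w f = \mathcal C_-(fw)$. Since $\|w\|_{L^\infty} + \|w\|_{L^2} = \mathcal O(1/a)$, the operator $\mathcal C_w$ has norm $\mathcal O(1/a)$, so $I - \mathcal C_w$ is invertible for large $a$ by Neumann series and the associated density $\mu = (I - \mathcal C_w)^{-1}I$ satisfies $\|\mu - I\|_{L^2} = \mathcal O(1/a)$. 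The standard representation
\[
M^{(4)}(z;a) = I + \frac{1}{2\pi i}\int_{\Sigma_{M^{(4)}}}\frac{\mu(s)\, w(s)}{s - z}\,\ud s
\]
combined with Cauchy--Schwarz and the simple estimate $\|(s-z)^{-1}\|_{L^2(\Sigma_{M^{(4)}})} = \mathcal O(1/(1+|z|))$ (which uses that $\Sigma_{M^{(4)}}$ has exactly four unbounded rays heading to infinity in well-separated directions) gives the claimed bound $\mathcal O(1/(a(1+|z|)))$.

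The main technical obstacle is the uniformity of the small-norm estimates with respect to the parameter $\sigma$, which enters the local parametrix through $\nu(z) = \sigma z/f_1(z)$. For large $|\sigma|$ (still finite) the matching on $\partial U_0$ still holds, but one must keep track of the fact that $\widetilde{\Psi}$ depends on $\nu$, and ensure that the implied constants in the matching condition remain controlled. A secondary point is to verify that the various $\Re(g_i - g_j)$ inequalities hold on all the deformed contours simultaneously; this is a bookkeeping exercise on the four functions \eqref{eq:defgjsteep2}, but it is where the particular choice of opening angles $\varphi_1, \varphi_2$ becomes essential.
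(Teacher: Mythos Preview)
Your proposal is correct and is precisely the ``standard methods'' the paper invokes without further detail: verifying $\|J_{M^{(4)}}-I\|_{L^\infty\cap L^2}=\mathcal O(1/a)$ on each piece of $\Sigma_{M^{(4)}}$ and then applying small-norm Cauchy-operator theory. One minor technical slip: the claimed bound $\|(s-z)^{-1}\|_{L^2(\Sigma_{M^{(4)}})}=\mathcal O(1/(1+|z|))$ is false both for $z$ near the contour and for $|z|\to\infty$ (on an unbounded ray one only gets $\mathcal O(|z|^{-1/2})$); the $1/(1+|z|)$ decay instead comes from the fact that $\mu w\in L^1(\Sigma_{M^{(4)}})$ with norm $\mathcal O(1/a)$ (since $w\in L^1\cap L^2$ and $\mu-I\in L^2$), together with contour deformation using analyticity of the jumps when $z$ is close to $\Sigma_{M^{(4)}}$.
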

This is the final goal of the steepest descent analysis and the starting point of the proof of the main theorem.

\subsection{Proof of Theorem \ref{th: PII}}\label{subsec: proof PII}

\begin{proof}[Proof of Theorem \ref{th: PII}] 
We only give the proof for the case $u,v>0$. Other cases can be proved similarly. 
First note that by \eqref{eq:defKcr}, \eqref{eq: Mbar definition}, and taking the transpose  we obtain
\begin{align*}
\Kcr (u,v;s,t)=\frac{1}{2 \pi i (u-v)} \begin{pmatrix} -1 & 1 & 0 & 0 \end{pmatrix}
 M^{-1}\left(iu; s,t\right) M\left(iv; s,t\right) \begin{pmatrix} 1\\1\\0\\0 \end{pmatrix}.
\end{align*}
Fix $u,v>0$ and let $a$ be a large positive number.  Using \eqref{eq:defM(1)} we write
\begin{multline*}
a^2 \Kcr \left(a^2u,a^2v;\tfrac12{a^2},-a(1-\sigma/a^2)\right)=\frac{1}{2 \pi i (u-v)} \begin{pmatrix} -1 & 1 & 0 & 0 \end{pmatrix}\\
\times {M^{(1)}\left(ia^2u; a\right)}^{-1} M^{(1)}\left(ia^2v; a\right) \begin{pmatrix} 1\\1\\0\\0 \end{pmatrix}.
\end{multline*}
By the transformations $M^{(1)}\mapsto M^{(2)}\mapsto M^{(3)}$ in \eqref{eq:defM(2)} and \eqref{eq:defM(3)} we obtain
\begin{multline*}
a^2 \Kcr \left(a^2u,a^2v;a,\tfrac12{a^2},-a(1-\sigma/a^2)\right)=\frac{1}{2 \pi i (u-v)} \begin{pmatrix} -1 & 1 & 0 & 0 \end{pmatrix}\\
\times \diag \left( e^{-a^3g_{1}(iu)},e^{-a^3g_{2}(iu)},e^{-a^3g_{3}(iu)},e^{-a^3g_{4}(iu)}\right)\left(M^{(3)}\right)^{-1}(iu;a) \\ \times M^{(3)}(iv;a) \diag \left( e^{a^3g_{1}(iv)},e^{a^3g_{2}(iv)},e^{a^3g_{3}(iv)},e^{a^3g_{4}(iv)}\right)\begin{pmatrix} 1\\1\\0\\0 \end{pmatrix}.
\end{multline*}
In the next step we unfold the transformation $M^{(3)}\mapsto M^{(4)}$ in \eqref{eq:defM(4)}. 
Assume that $u,v \in U_0$. Then we obtain 
\begin{multline*}
a^2 \frac{e^{a^3(g_1(iu)+g_2(iu))/2}}{e^{a^3(g_1(iv)+g_2(iv))/2}}\Kcr \left(a^2u,a^2v;\tfrac12 a^2,-a(1-\sigma/a^2)\right)=\frac{1}{2 \pi i (u-v)}  \\ 
\times \begin{pmatrix} -1 & 1 & 0 & 0 \end{pmatrix}\begin{pmatrix} \widetilde \Psi(af_1(iu);\nu(iu))^{-1} & 0 \\ 0&0 \end{pmatrix} {P^{(\infty)}}(iu)^{-1}M^{(4)}(iu;a)^{-1} \\ \times M^{(4)}(iv;a)P^{(\infty)}(iv)\begin{pmatrix} \widetilde \Psi(af_1(iv);\nu(iv)) & 0 \\ 0&0 \end{pmatrix}
 \begin{pmatrix} 1\\1\\0\\0 \end{pmatrix}.
\end{multline*}
Now we make the change of variables $u=2^{5/3}x/a$ and $v=2^{5/3}y/a$. We consider $x$ and $y$ as fixed so that $u,v \to 0$ as $a \to +\infty$. Note that after this change of variables
\begin{align*}
af_1(iu)\to ix, \qquad \text{as }a \to +\infty, \\
af_1(iv)\to iy, \qquad \text{as }a \to +\infty.
\end{align*}
Also note that $\nu(iu)\to \nu(0)=2^{5/3} \sigma$.

It follows from standard considerations  (see also \eqref{eq:cauchyonR}) that 
\[
M^{(4)}(iu;a)^{-1} M^{(4)}(iv;a)=I+\mathcal O \left( \frac{u-v}{a}\right)=I+\mathcal O \left( \frac{x-y}{a^2}\right), \qquad \text{as } a \to +\infty,
\]
uniformly for $x$ and $y$ in a compact subsets of $\R$. Observe also that 
\begin{align*}
P^{(\infty)}(iv)=(I+\mathcal O(1/a))A, \qquad \text{as }a \to +\infty, \\
P^{(\infty)}(iu)^{-1}=A^{-1}(I+\mathcal O(1/a)), \qquad \text{as }a \to +\infty.
\end{align*}
Therefore,
\begin{align}\label{eq:laatste}
\begin{aligned}
&\lim_{a\to +\infty} 2^{5/3}a\frac{e^{a^3(g_1(iu)+g_2(iu))/2}}{e^{a^3(g_1(iv)+g_2(iv))/2}}\Kcr \left(2^{5/3}ax,2^{5/3}ay;\tfrac12 a^2,-a(1-\sigma/a)\right) \\  
&\qquad = \frac{1}{2 \pi i (x-y)}   \begin{pmatrix} -1 & 1  \end{pmatrix}\widetilde \Psi(ix;2^{5/3}\sigma)^{-1}  \widetilde \Psi(iy;2^{5/3}\sigma)  \begin{pmatrix} 1\\1 \end{pmatrix} \\ 
&\qquad =\frac{1}{2 \pi i (x-y)}   \begin{pmatrix} -1 & 1  \end{pmatrix} \Psi(-x;2^{5/3}\sigma)^{-1}   \Psi(-y;2^{5/3}\sigma) \begin{pmatrix} 1\\1 \end{pmatrix}.
\end{aligned} 
\end{align}
The convergence is uniform for $x$ and $y$ in compact subsets of $\R$. Now observe that  $\Psi(z)$  satisfies the symmetry
\[
\begin{pmatrix} 0&1 \\ 1&0 \end{pmatrix} \Psi(z;2^{5/3}\sigma) \begin{pmatrix} 0&1 \\ 1&0 \end{pmatrix}=\Psi(-z;2^{5/3}\sigma).
\]
This can be easily verified by checking that both sides solve the same RH problem and noting that the RH problem has a unique solution. Finally, after inserting this symmetry in \eqref{eq:laatste} and using  \eqref{eq: PII kernel} we obtain 
\begin{align*}
\lim_{a\to +\infty} 2^{5/3}a\frac{e^{a^3(g_1(iu)+g_2(iu))/2}}{e^{a^3(g_1(iv)+g_2(iv))/2}}\Kcr \left(2^{5/3}ax,2^{5/3}ay;\tfrac12{a^2},-a(1-\sigma/a)\right) =
 K_{\rm PII}(x,y;2^{5/3}\sigma).
\end{align*} 
This completes the proof of Theorem \ref{th: PII} for $u,v>0$.  
\end{proof}

\appendix

\section{Asymptotic behavior of $K_{\rm cr}$ and $K_{\rm tac}$ along the diagonal}

Here we prove the asymptotic expansions \eqref{eq:tacnodekernelexp} and \eqref{eq:Kcrexp} for the diagonal of  $K_{\rm cr}$ as defined in \eqref{eq:defKcr} and  $K_{\rm tac}$ in \eqref{eq:tacnodekernel} . First we note that that the diagonal terms can be obtained by taking the limit $u\to v$ in \eqref{eq:defKcr} and  \eqref{eq:tacnodekernel}, which leads to \begin{align}
K_{\rm cr}(u,u)&= \frac{1}{2\pi} \begin{pmatrix} 1&-1&0&0 \end{pmatrix} M^{-1}(iu)\frac{\partial M}{\partial \zeta}(iu) \begin{pmatrix} 1\\1\\0\\0 \end{pmatrix}, \\
K_{\rm tac}(u,u)&= \frac{1}{2\pi i} \begin{pmatrix} -1&0&1&0 \end{pmatrix} M_+^{-1}(u)\frac{\partial M_+}{\partial \zeta}(u) \begin{pmatrix} 1\\0\\1\\0 \end{pmatrix}, 
\end{align}
for $u>0$.
Next we write
\[
M(\zeta)=Y(\zeta)Q(\zeta),
\]
where $Y$ has a full and uniform asymptotic expansion in inverse powers of $\zeta$
\begin{equation} \label{eq: asymptotics Y}
Y(\zeta)=I_4+\frac{M_1}{\zeta}+\mathcal O \left( \zeta^{-2} \right), \qquad \text{as } \zeta \to \infty, \quad \zeta \in \C,
\end{equation}
and
\begin{multline*}
Q(\zeta)=\diag ((-\zeta)^{-1/4},\zeta^{-1/4},(-\zeta)^{1/4},\zeta^{1/4}) \\ \times A
\diag \left( e^{-\psi_2(\zeta)+t\zeta},e^{-\psi_1(\zeta)-t\zeta}, e^{\psi_2(\zeta)+t\zeta},e^{\psi_1(\zeta)-t\zeta} \right).
\end{multline*}
Here $\psi_1$ and $\psi_2$ are as defined in \eqref{eq: def psi1}--\eqref{eq: def psi2}.
A straightforward computation shows
\[
Q'(\zeta)=R(\zeta)Q(\zeta),
\]
where
\[
R(\zeta)=\begin{pmatrix}
-\frac{1}{4\zeta}+t & 0 & ir_2-\frac{is_2}{\zeta} & 0 \\ 0 & -\frac{1}{4\zeta}-t & 0 & ir_1+\frac{is_1}{\zeta} \\
ir_2\zeta-is_2 & 0  & \frac{1}{4\zeta}+t & 0 \\ 0 & -ir_1\zeta-is & 0 & \frac{1}{4\zeta}-t 
\end{pmatrix}.
\]
Then
\[
M^{-1}(\zeta)\frac{\partial M}{\partial \zeta}(\zeta)=Q^{-1}(\zeta)\left(Y^{-1}(\zeta)Y'(\zeta)+R(\zeta)\right)Q(\zeta).
\]
When we let $\zeta$ tend to $\infty$ the contribution of the first term is of order $\mathcal O (\zeta^{-2})$ so we will concentrate on the second term. 
For the tacnode one-point correlation function we find
\begin{align}
\begin{split}
K_{\rm tac}(u,u;r_1,r_2,s_1,s_2)&=\frac{1}{2\pi i} \begin{pmatrix} -1&0&1&0 \end{pmatrix} Q^{-1}_+(u)R(u)Q_+(u) \begin{pmatrix} 1\\0\\1\\0 \end{pmatrix}+\mathcal O(u^{-3/2}) \\
&= -\frac{1}{2\pi i}\left(2r_2(-u)^{1/2}+2s_2(-u)^{-1/2}+\frac{i}{4u}\left( e^{2\psi_2(u)}+e^{-2\psi_2(u)}\right) \right)+\mathcal O(u^{-3/2}) \\
&= \frac{r_2\sqrt u}{\pi}-\frac{s_2}{\pi \sqrt u}-\frac{1}{4\pi u}\Re e^{2\psi_2(u)}+\mathcal O(u^{-3/2}),
\end{split}
\end{align}
as $u\to +\infty$.
Note that there is no constant term in the expansion and that the $1/u$ has a highly oscillatory coefficient of modulus 1.
For the two-matrix model we can compute
\begin{align}
\begin{split}
K_{\rm cr}(u,u;s,t)&=\frac{1}{2\pi} \begin{pmatrix} 1&-1&0&0 \end{pmatrix} Q^{-1}(iu)R(iu)Q(iu) \begin{pmatrix} 1\\1\\0\\0 \end{pmatrix}+\mathcal O(u^{-3/2}) \\
&= \frac{1}{2\pi}\left((-iu)^{1/2}+(iu)^{1/2}+2t+s(-iu)^{-1/2}+s(iu)^{-1/2} \right)+\mathcal O(u^{-3/2}) \\
&= \frac{\sqrt u}{\sqrt 2 \pi}+\frac{t}{\pi}+\frac{s}{\sqrt 2 \pi \sqrt u}+\mathcal O(u^{-3/2}),
\end{split}
\end{align}
as $u\to +\infty$. Here we do have a constant contribution, but there is no $1/u$ term. Therefore, these two expansions cannot be the same and hence the kernels must be different.

\end{document}